\def\scri{\mathscr{I}}
\def\wphi{\chi}
\DeclareSymbolFont{symbols2}{LS1}{stixfrak} {m} {n}
\DeclareMathSymbol{\operp}{\mathbin}{symbols2}{"A8}
 	\definecolor{carblue}{rgb}{0.30, 0.50, 1}
\def\aa{\sigma}
\def\bb{\tau}
\def\cc{\sigma}
\def\q{q}
\newcommand{\nn}{\nonumber}
\newcommand{\dif}{\mathrm{d}}
\newcommand{\lr}[1]{\left( #1 \right)}
\newcommand{\lrbrace}[1]{\left\lbrace #1 \right\rbrace}
\newcommand{\lrprod}[1]{\left< #1 \right>}
\newcommand{\lrbrkt}[1]{\left[ #1 \right]}
\newcommand{\restr}[2]{\left.{#1}\right\rvert_{#2}}
\newcommand{\skwend}[1]{{\mathrm{SkewEnd}\lr{#1}}}
\newcommand{\rank}{\mathrm{rank}~}
\newcommand{\sign}{\mathrm{sign}}
\newcommand{\Tr}{\mathrm{Tr}}
\newcommand{\Lor}{O}
\newcommand{\X}{\mathrm{X}}
\def\la{\langle}
\def\ra{\rangle}
\newcommand{\conf}[1]{\mathrm{Conf}\lr{#1}}
\newcommand{\spn}[1]{\mathrm{span}\{ #1 \} }
\newtheorem{theorem}{Theorem}[section]
\newtheorem{proposition}{Proposition}[section]
\newtheorem{corollary}{Corollary}[theorem]
\newtheorem{lemma}[theorem]{Lemma}
\newtheorem{remark}{Remark}[section]
\newtheorem{definition}{Definition}[section]
\def\x{x}
\def\y{y}
 \title{Skew-symmetric endomorphisms in $\mathbb{M}^{1,n}$: A unified canonical form with applications to conformal geometry}
\author{
  Marc Mars and Carlos Pe\'on-Nieto\\
%Dept. of Fundamental Physics and\\
  Instituto de F\'{\i}sica  Fundamental y Matem\'aticas, Universidad de Salamanca \\
Plaza de la Merced s/n 37008, Salamanca, Spain}
\begin{document}

\maketitle

\begin{abstract} 
We show the existence of families of orthonormal, future directed bases which allow to cast every skew-symmetric endomorphism of $\mathbb{M}^{1,n}$ ($\skwend{\mathbb{M}^{1,n}}$) in a single canonical form depending on a minimal number of parameters. This canonical form is shared by every pair of elements in $\skwend{\mathbb{M}^{1,n}}$ differing by an orthochronous Lorentz transformation, i.e. it defines the orbits of the orthochronous Lorentz group under the adjoint action on its algebra.
Using this form, we obtain the quotient topology of $\skwend{\mathbb{M}^{1,n}}/O^+(1,n)$. From known relations between $\skwend{\mathbb{M}^{1,n}}$ and the conformal Killing vector fields (CKVFs) of the sphere $\mathbb{S}^n$, a canonical form for CKVFs follows immediately. This form is used to find adapted coordinates to an arbitrary CKVF that covers all cases at the same time. We do the calculation for even $n$ and obtain the case of odd $n$ as a consequence. Finally, we employ the adapted coordinates to obtain a wide class of TT-tensors for $n=3$, which provide Cauchy data at conformally flat null infinity $\mathscr{I}$. Specifically, this class of data is characterized for generating $\Lambda>0$-vacuum spacetimes with  two-symmetries, one of which axial, admitting a conformally flat $\mathscr{I}$. The class of data is infinite dimensional, depending on two arbitrary functions of one variable as well as a number of constants. Moreover, it contains  the data for the Kerr-de Sitter spacetime, which we explicitly identify within.
    \end{abstract}

\tableofcontents

 \section{Introduction}

 Having a Lie group $G$ acting on a space $X$ respresenting a set of physical quantities is always a desirable feature in a physical problem, as Lie groups represent symmetries (either global or gauge) and their presence is often translated into a simplification of the formal aspects of the problem. Roughly speaking, in this situation the ``relevant'' part for the physics effectively happens in the quotient space $X/G$. For the study of these quotient spaces, one may be interested in obtaining a unified form to give a representative for every orbit in  $X/G$, i.e.  a {\it canonical form} (also known as normal form). A particularly relevant case  is when $X$ is a Lie algebra $\mathfrak{g}$ and $G$ its Lie group acting by the adjoint action, in which case the orbits are also called conjugacy classes of $G$ (see e.g. \cite{burgoyne77}). In the first part of this paper, we study the conjugacy classes of the pseudo-orthogonal group $O(1,n)$ (or Lorentz group), for which we will obtain a canonical form. Our main interest on this, addressed in the second part of the paper, lies in its relation with the Cauchy problem of general relativity (GR) (cf. \cite{bruhat1952} and e.g. \cite{choquetGR}, \cite{friedrend00}) and more precisely, its formulation at null infinity $\mathscr{I}$ for the case of positive cosmological constant $\Lambda$ (cf. \cite{Fried86initvalue}, \cite{Friedrich2002}). In the remainder of this introduction we summarize our ideas and results, and we will also briefly review some results on conjugacy classes of Lie groups related to our case, as well as the Cauchy problem of GR with positive $\Lambda$.

%  In the problem of determining the orbits generated by a group $G$ acting on an algebraic variety $X$ one may be interested in obtaining a unified form which represents every orbit in the quotient space $X/G$. This representative  is called {\it canonical form} (also known as normal form sometimes). The study of these quotient spaces is known as geometric invariant theory (see the foundational work \cite{mumford94} and other texts \cite{wallach17git}) and it is a current field of research in algebraic geometry. The so-called classical theory of invariants (e.g. \cite{goodmanwallach}) deals with the case where $X$ is a vector space and $G$ is a finite or Lie group, and this obviously includes the case where $X$ is the Lie algebra $\mathfrak{g}$ and $G$ its Lie group acting by the adjoint action. In this case the orbits are also called conjugacy classes of $G$ (see \cite{burgoyne77}). \car{These mathematical theories are highly motivated by different fields of physics, from the very foundations of quantum mechanics (citas), to the string theories (cites) or in what concerns us here, the general theory of relativity (citas).}

 A typical example of a canonical form in the context described above, is the well-known Jordan form, which represents the conjugacy classes of $GL(n,\mathbb{K})$ (where $\mathbb{K}$ is usually $\mathbb{R}, \mathbb{C}$ or the quaternions $\mathbb{H}$). Besides this example, the problem of finding a canonical representative for the conjugacy classes of a Lie group has been adressed numerous times in the literature. The reader may find a list of canonical forms for algebras whose groups leave invariant a non-degenerate bilinear form in \cite{djokovic83} (this includes symmetric, skew-symmetric and simplectic algebras over $\mathbb{R}, \mathbb{C}$ and $\mathbb{H}$)
 as well as the study of the affine orthogonal group (or Poincaré group) in \cite{cushman06} or \cite{ida20}. Notice that these works deal, either directly or indirectly, with our case of interest $O(1,n)$, whose algebra $\mathfrak{o}(1,n)$ will be represented in this paper as skew-symmetric endomorphisms of Minkowski spacetime $\mathbb{M}^{1,n}$.  When giving a canonical form, it is usual to base it on criteria of irreducibility rather than uniformity (e.g. \cite{cushman06}, \cite{djokovic83}, \cite{ida20}).
%  in the literature It is remarkable that in the canonical forms of \cite{cushman06},
%  \carnote{revisar cushman ida sobre irreducibilidad o que criterio usan..} 
%  \cite{djokovic83}, \cite{ida20} the authors priorize irreducibility of the blocks composing their canonical forms rather than uniformity.
 This is similar to what is done when the Darboux decomposition is applied to two-forms (i.e. elements of $\mathfrak{o}(1,n)$), for example in \cite{Kdslike} or for the low dimensional case $n=3$ (e.g.  \cite{hall2004symmetries}, \cite{syngeSr}).  As a consequence, all canonical forms found for the case of $\mathfrak{o}(1,n)$ require two different types of matrices to represent all orbits, one and only one fitting a given element. 
  Our first aim in this paper is to give a unique matrix form which represents each element $F\in \mathfrak{o}(1,n)$, depending on a minimal number of parameters that allows one to easily determine its orbit under the adjoint action of $O(1,n)$. This is obviously achieved by loosing explicit irreducibility in the form. However, this canonical form will be proven to be fruitful by giving several applications.  This same issue has also been adressed in \cite{marspeon20} for the case of low dimensions, i.e. $O(1,2)$ and $O(1,3)$, where in addition, several applications are worked out. The present work constitutes a natural generalization of the results in \cite{marspeon20} to arbitrary dimensions.
  
 The Lorentz group is well-known to be of particular interest in physics, as for example, it is the group of isotropies of the special theory of relativity and the Lorentz-Maxwell electrodynamics (e.g. \cite{LichnerowiczTheoRelGravEM}, \cite{Rainich1924}, \cite{syngeSr}). Its study in arbitrary dimensions have received renewed interest with theories of high energy physics such as conformal field theories \cite{IntroCFTschBook} or string theories \cite{ida20}. Related to the former and for our purposes here, a fact of special relevance is that the orthochronous component $O^+(1,n) \subset O(1,n)$ is homomorphic to the group of conformal transformations of the $n$-sphere, $\conf{\mathbb{S}^n}$. The conformal structure of $\mathscr{I}$ happens to be fundamental for the Cauchy problem at null infinity of GR for spacetimes with positive $\Lambda$ , as it is the gauge group for the set of initial data. Such a set consists of 
a manifold $\Sigma$ endowed with a (riemannian) conformal structure $[\gamma]$, representing the geometry of null infinity $\mathscr{I} := (\Sigma,[\gamma])$, together with the conformal class $[D]$ of a transverse (i.e. zero divergence), traceless, symmetric tensor $D$ (TT-tensor) of $\mathscr{I}$. If the spacetime generated by the data is to have a Killing vector field, the TT-tensor must satisfy a conformally covariant equation depending on a conformal Killing vector field (CKVF) of $\mathscr{I}$, the so-called Killing Initial Data (KID) equation \cite{KIDPaetz}.
 
 The class of data in which $[\gamma]$ contains a constant curvature metric (or alternatively the locally conformally flat case) includes the family of Kerr-de Sitter black holes and its study could be a possible route towards a characterization result for this family of spacetimes. Even in this particular (conformally flat) case, it is difficult to give a complete list of TT-tensors. An example can be found in \cite{Beig1997}, where the author gives a class of solutions with a direct and elegant method, but the solution is restricted in the sense that global topological conditions are imposed on $\mathscr{I}$. Namely, the solutions obtained by this method must be globally regular on $\mathbb{S}^3$ and hence cannot contain the family of Kerr-de Sitter, which is known (see e.g. \cite{Kdslike}) to have $\mathscr{I}$ with topology $\mathbb{S}^3$ minus two points, which correspond with the loci where the Killing horizons ``touch'' $\mathscr{I}$. The local problem for TT tensors is much more difficult to solve with generality, so our idea is to simplify it by imposing two KID equations to the data, so that the corresponding spacetimes have at least two symmetries. Using the homomorphism between $O^+(1,n+1)$ and $\conf{\mathbb{S}^n}$, we induce a canonical form for CKVFs
 from the canonical form obtained for $\mathfrak{o}(1,n+1)$.  Since this form covers all orbits of CKVFs under the adjoint action of $\conf{\mathbb{S}^n}$, our adapted coordinates fit every CKVF and in addition,  since the KID equation is conformally covariant, we can choose a conformal gauge where this CKVF is a Killing vector field, which makes the KID equation trivial.  Hence, a remarkable feature from our method is that by solving one simple equation, we are solving many cases at once. 
 This has already been done in the case of $\mathbb{S}^2$  in \cite{marspeon20} and here we extend it to the more interesting and difficult case of (open domains of) $\mathbb{S}^3$. 
 
 Specifically, we obtain the most general class of TT-tensors on a conformally flat $\scri$ such that the $\Lambda>0$-vacuum four-dimensional spacetime generated by these data admits two local isometries, one of them axial. It is worth highlighting that this is a broad class  (of infinite dimensions as it depends on functions) of TT-tensors and it contains the Kerr-de Sitter Cauchy data at $\mathscr{I}$.  This provides a potentially interesting "sandbox`` to try the consistence of possible definitions of (global) mass and angular momentum (see \cite{todszab19} for a review on the state of the art). Recall that symmetries are well-known to be related to conserved quantities, in particular, axial symmetry is related to conservation of angular momentum and time symmetry to conservation of energy. Moreover, for a spacetime to have constant mass, one may require no radiation escaping from or coming within the spacetime, a condition which, following the criterion of \cite{FASeno20}, is guaranteed by conformal the flatness of $\mathscr{I}$. Finally, the presence of the Kerr-de Sitter data within the set of TT-tensors contributes to its physical relevance and furnishes the possibility of looking for new characterization results for this family of spacetimes.

 As an additional sidenote concerning our results, notice that both the canonical form of CKVFs as well as the adapted coordinates are obtained in arbitrary dimensions, so similar applications may be worked out in arbitrary dimensions which, needless to say, is a considerable harder problem. On the possible extension to more dimensions of this type of TT-tensors, one should mention that the Cauchy problem at $\mathscr{I}$ for positive cosmological constant is known to be well-posed in arbitrary even dimensions \cite{Anderson2005}. However the KID equations are only known to be a necessary consequence of having symmetries, but sufficiency is an open problem in spacetime dimensions higher than four.

 The paper is organized as follows. In order to properly define the canonical form, in Section \ref{appclassification} we rederive a classification result for skew-symmetric endomorphisms (cf. Theorem \ref{theoremclasif}), employing only elementary linear algebra methods. The results of this section are known (see e.g. \cite{goodmanwallach}, \cite{hall2003lie}, \cite{knapp}), but the method is original and we believe more direct than other approaches in the literature. We also include the derivations in order to make the paper self-contained. 
 Section 2 leads to the definition of canonical form in Section \ref{seccanonform}.  Section \ref{secsimpleend} deals with a particular type of skew-symmetric endomorphisms (the so-called {\it simple}, i.e. of minimal matrix rank), which will be useful for future sections. In Section \ref{seclorclass} we work out some applications of our canonical form: identifying invariants which characterize the conjugacy classes of the orthochronous Lorentz group (cf. Theorem \ref{theoLorendo}) and obtaining the topological structure of this quotient space (cf. Section \ref{secstructure}). 

 In Section \ref{secconfvecs} we use the homomorphism between $O^+(1,n+1)$ and $\conf{\mathbb{S}^n}$ and apply the canonical form obtained for skew-symmetric endomorphisms to give a canonical form for CKVFs, together with a decomposed form (cf. Proposition \ref{proprecan}) which analogous to the one given for skew-symmetric endomorphisms. In Section \ref{secadapted}, we adapt coordinates to CKVFs in canonical form, first in the even dimensional case, from which the odd dimensional case is obtained as a consequence. Finally, in Section \ref{secTTtens} we employ the adapted coordinates to find
 the most general class of data at $\mathscr{I}$ corresponding to spacetime dimension four, such that $\mathscr{I}$ is conformally flat and the $(\Lambda>0)$-vacuum spacetime they generate admits at least two symmetries, one of which is axial. It is remarkable how easily are these equations solved with all the tools developed so far. With this solution at hand, we are able to identify the Kerr-de Sitter family within.

\section{Classification of Skew-symmetric endomorphisms}\label{appclassification}
 
  In this section we derive a classification result for skew-symmetric endomorphisms of Lorentzian vector spaces.
%  This problem has already been addressed before (see []) and the main result (Theorem \ref{theoremclasif}) can be inferred from the references.    
%  The results are known (e.g. citas) and the most important result is summarized in Theorem \ref{theoremclasif}.
%  However, we believe that our proof is instructive, since it is quite direct and only requires elementary knowledge of linear algebra. Moreover, this section also serves as an introduction to some aspects of Lorentzian vector spaces that we will employ all along the paper.
%   
 Let $V$ be a $d$-dimensional vector space endowed with a pseudo-Riemannian metric $g$. If $g$ is of signature $(-,+,\cdots,+)$, then $\lr{V,g}$ is said to be Lorentzian. Scalar
product with $g$ is denoted by $\la\,,\,\ra$. An endomorphism
$F: V \longrightarrow V$ is skew-symmetric when it satisfies
\begin{align}\label{defskwend}
  \la x, F(y) \ra = - \la F(x), y \ra \quad \quad \forall x,y \in V.
\end{align}
We denote this set by $\skwend{V} \subset \mathrm{End}\lr{V}$.
% Another property that will be used often is that 
% a skew-symmetric endomorphism $F \in \mbox{End}(V)$
% that restricts to a vector subspace $U \subset V$ (i.e. $F(U) \subset U$)
% also restricts to the orthogonal space $U^{\perp}$. Indeed, for
% $\x \in U, \y \in U^{\perp}$ 
% \begin{align*}
% \la F(\y),\x \ra = - \la \y, F(\x) \ra =0
% \end{align*}
% since $F(\x) \in U$. But since this is true for all $\x \in U$, this equation states precisely that $F(\y) \in U^{\perp}$.
We take, by definition, that eigenvectors of an endomorphism are always non-zero. {We use the standard notation for spacelike and timelike vectors
  as well as for spacelike, timelike and degenerate vector subspaces. 
 In our convention all vectors with vanishing norm are null} 
 (in particular, the zero vector is null).
  We denote $\ker F$
and $\mathrm{Im}~F$, respectively, to the kernel and image of
$F \in \mathrm{End}\lr{V}$. \begin{lemma}\label{lemmabasics}[Basic facts about skew-symmetric endomorphisms]
  Let $F$ be a skew-symmetric endomorphism in a pseudo-riemannian vector space
  $V$. Then
  \begin{itemize} 
  \item[a)] $\forall w \in V$, $F(w)$ is perpendicular to $w$, i.e. $\lrprod{F(w),w} = 0$.
        \item[b)] $\mbox{Im}~F \subset (\ker F)^{\perp}$ and
      $\ker F \subset (\mathrm{Im}~F)^{\perp}$. 
\item[c)] If $w \in \ker F \cap \mathrm{Im}~F$ then $w$ is null.
        \item[d)] If $w \in V$ is a non-null eigenvector of $F$, then its eigenvalue is zero.
        \item[e)] If  $w$ is an eigenvector of $F$ with zero eigenvalue, then
          all vectors in $\mathrm{Im}~F$ are orthogonal to $w$, i.e.
                    $\mathrm{Im}~F \subset w^{\perp}$.
        \item[f)] If $F$ restricts to a subspace $U \subset V$ (i.e. $F(U)\subset U$), then  it also restricts to $U^\perp$.            
        \end{itemize}
            \end{lemma}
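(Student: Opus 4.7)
The plan is to verify each of the six assertions in turn, using only the defining skew-symmetry property \eqref{defskwend} together with elementary orthogonal-complement arguments. All six are short, so I would prove them in sequence, exploiting each to bootstrap the next.

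First, for (a), I would substitute $x=y=w$ directly into \eqref{defskwend} to obtain $\lrprod{w,F(w)}=-\lrprod{F(w),w}$; symmetry of $g$ then forces $\lrprod{F(w),w}=0$. For (b), the two inclusions are in fact equivalent (they both say $\lrprod{\mathrm{Im}\,F,\ker F}=0$), so I would prove just one: given $v=F(u)\in\mathrm{Im}\,F$ and $w\in\ker F$, applying \eqref{defskwend} gives $\lrprod{v,w}=\lrprod{F(u),w}=-\lrprod{u,F(w)}=0$. Assertion (c) is then an immediate corollary: any $w\in\ker F\cap\mathrm{Im}\,F$ lies in $(\mathrm{Im}\,F)^\perp$ by (b), so $\lrprod{w,w}=0$, i.e.\ $w$ is null.

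Next, for (d), if $F(w)=\lambda w$ with $\lrprod{w,w}\neq 0$, then part (a) gives $0=\lrprod{w,F(w)}=\lambda\lrprod{w,w}$, forcing $\lambda=0$. For (e), the hypothesis says $w\in\ker F$, and part (b) yields $\mathrm{Im}\,F\subset(\ker F)^\perp\subset w^\perp$. Finally, for (f), I would pick $v\in U^\perp$ and any $u\in U$; then $\lrprod{F(v),u}=-\lrprod{v,F(u)}=0$ because $F(u)\in U$ by hypothesis, so $F(v)\in U^\perp$, proving that $F$ restricts to $U^\perp$.

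There is no real obstacle anywhere: the entire lemma is a formal consequence of \eqref{defskwend}, and the only ``conceptual'' step is recognizing that (b) is a single statement written in two equivalent ways, which then feeds into (c) and (e). The signature of $g$ plays no role whatsoever, so the same proof works in arbitrary pseudo-Riemannian signature.
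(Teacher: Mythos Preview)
Your proof is correct and follows essentially the same approach as the paper's own proof, with each item verified by the same direct manipulation of the skew-symmetry identity. Your observation that the two inclusions in (b) are equivalent and your remark that the signature plays no role are minor stylistic additions, but the logical content is identical.
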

           \begin{proof}
              a) is immediate from $\la w, F(w) \ra = - \la F(w),w \ra$.
              For $b)$, let $v\in
              \ker F$ and 
              $w$ be of the form $w = F(u)$ for some $u \in V$, then
                            \begin{align*}
                \la w,v \ra = \la F(u),v \ra = - \la u ,F(v) \ra =0
              \end{align*}
              the last equality following because $F(v)=0$.
                            $c)$ is a consequence
              of $b)$ because $w$ belongs both to $\ker F$ and to its orthogonal, so in particular it must be orthogonal to itself, hence null.
                            $d)$ is immediate from
              \begin{align*}
                0 = \la w,F(w) \ra = \lambda \la w,w\ra
              \end{align*}
              so if $w$ is non-null, its eigenvalue $\lambda$ must be zero. $e)$ is a corollary of $b)$ because by hypothesis $w \in \ker F$ so
              \begin{align*}
                \mathrm{Im}~F \subset (\ker F)^{\perp} \subset w^{\perp}
              \end{align*}
              the last inclusion being a consequence of the general fact $U_1 \subset U_2 \Rightarrow U_2^{\perp} \subset U_1^{\perp}$. Finally, $f)$ is true because for any $u$ in a $F$-invariant subspace $U$ and $w \in U^\perp$
              \begin{align*}
               0 = \lrprod{F(u),w} = -\lrprod{u,F(w)}.
              \end{align*}
           \end{proof}

%   We start discussing some basic properties of skew-symmetric endomorphisms. First, by definition \eqref{defskwend}, it is obvious that $\lrprod{F(w),w} = 0,~\forall w \in V$. On the other hand, an element of $\mathrm{Im}(F)$ can always be written as $F(w)$ for some  $w \in V$ non-zero. Then, for every $e \in \ker F$ one gets 
%   \begin{equation*}
%    \lrprod{F(w),e} = -\lrprod{w, F(e)} = 0\iff \mbox{Im}  F \subset \ker F^{\perp} \iff \mbox{ker} F \subset \mathrm{Im} F^{\perp}.
%   \end{equation*}
% Let $U \subset V$ be some subset to which $F$ restricts, i.e. $F(U) \subset U$. Then using \eqref{defskwend} we have that $\forall u \in U$ and $\forall v \in U^\perp$ 
% \begin{equation*}
%  0 = \lrprod{F(u),v} = -\lrprod{u,F(v)}. 
% \end{equation*}
% Hence, if $F$ restricts to some subspace $U$, it also restricts to its orthogonal $U^\perp$. 
Another well-known property of skew-symmetric endomorphisms that we will use is that $\dim \mathrm{Im}~F$ is always even. Equivalently, $\dim \ker F$ has the same parity than $\dim V$. To see this, consider the 2-form ${\boldsymbol F}$ assigned to every $F\in \skwend{V}$ by the standard relation
 \begin{equation}\label{Fflat}
  \boldsymbol{F}(e,e') = \lrprod{e, F(e')},\quad\quad \forall e,e' \in V.
 \end{equation}
The matrix representing $\boldsymbol F$ is skew in the usual sense, hence the dimension of $\mathrm{Im}~\boldsymbol F \subset {V}^\star$ (the dual of $V$) is the rank of that matrix, which is known to be even (see e.g. \cite{gantmacher1960theomat}) and clearly $\dim \mbox{Im}~\boldsymbol F = \dim \mbox{Im}~F$.

The strategy that we will follow to classify skew-symmetric endomorphisms of $V$ with $g$ Lorentzian is via $F$-invariant spacelike planes. Conditions for $F$-invariance of spacelike planes are stated in the following lemma:

\begin{lemma}\label{lemmasplkeigenplane}
 Let $F \in \skwend{V}$. Then $F$ has a $F$-invariant spacelike plane $\Pi_s$ if and only if 
  \begin{equation}\label{eigeneqs}
  F(u)  = \mu v,\quad \quad
  F(v)  = -\mu u,
\end{equation}
for $\Pi_s = \spn{u,v}$ with $u,v \in V$ spacelike, orthogonal, unit and $\mu \in \mathbb{R}$. Moreover, \eqref{eigeneqs} is satisfied for $ \mu \neq 0$ if and only if $\pm i \mu$ are eigenvalues of $F$ with (null) eigenvectors $u \pm i v$, for $u,v \in V$ spacelike, orthogonal with the same square norm.
\end{lemma}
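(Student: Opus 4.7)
The plan is to prove both equivalences by direct computation, exploiting the basic properties of skew-symmetric endomorphisms from Lemma~\ref{lemmabasics} and, for the second equivalence, passing to the complexification $V_{\mathbb{C}}$.

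For the first equivalence, the reverse implication is immediate: the equations \eqref{eigeneqs} explicitly exhibit $\Pi_s = \spn{u,v}$ as $F$-invariant, and the assumption that $u,v$ are orthogonal spacelike unit vectors makes $\Pi_s$ spacelike. For the forward implication, given an $F$-invariant spacelike plane $\Pi_s$, I would choose any orthonormal basis $\{u,v\}$ (which exists since the induced metric is positive definite) and expand $F(u) = au + bv$, $F(v) = cu + dv$. Lemma~\ref{lemmabasics}(a) forces $a = \lrprod{u,F(u)} = 0$ and $d = \lrprod{v,F(v)} = 0$, while skew-symmetry gives $b = \lrprod{F(u),v} = -\lrprod{u,F(v)} = -c$. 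Setting $\mu := b$ yields \eqref{eigeneqs}.

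For the second equivalence, assume $\mu \neq 0$. The forward direction is the direct computation on $V_{\mathbb{C}}$
\[
F(u \pm i v) = F(u) \pm i F(v) = \mu v \mp i \mu u = \mp i\mu\,(u \pm i v),
\]
so that $u \pm iv$ are eigenvectors with eigenvalues $\mp i\mu$ (equivalently, $\pm i\mu$ are eigenvalues, realized by eigenvectors of the form $u\pm iv$, matching the stated form up to the labeling). Their nullness under the bilinear extension of $g$ follows from $\lrprod{u \pm iv, u \pm iv} = \lrprod{u,u} - \lrprod{v,v} \pm 2i\lrprod{u,v} = 0$, using that $u,v$ are orthogonal with $\lrprod{u,u} = \lrprod{v,v} = 1$. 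Conversely, given $u,v \in V$ real, spacelike, orthogonal and of equal square norm, together with the eigenvalue equation $F(u\pm iv) = \pm i\mu\,(u\pm iv)$, I would split into real and imaginary parts to recover \eqref{eigeneqs} (possibly after a sign redefinition of $\mu$ or $v$), and then rescale $u,v$ by their common positive norm to render them unit without spoiling the equations.

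I do not expect any real obstacle: the content is almost entirely bookkeeping. The one place where a little care is needed is matching the signs coming from the two choices $u\pm iv$ with the two eigenvalues $\pm i\mu$, but since the statement only asserts the existence of such a pair, the sign ambiguity is absorbed either by swapping $v\leftrightarrow -v$ or $\mu\leftrightarrow -\mu$, which leaves the form of \eqref{eigeneqs} invariant.
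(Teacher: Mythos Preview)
Your proposal is correct and follows essentially the same approach as the paper: pick an orthonormal basis of the invariant plane and use skew-symmetry to pin down the matrix entries for the first equivalence, and compute on the complexification for the second. The only minor difference is that in the converse of the second part the paper \emph{derives} the orthogonality, equal-norm, and spacelike conditions on $u,v$ from the nullity of the eigenvector $w=u+iv$ (which is automatic since the eigenvalue is nonzero), whereas you take these as hypotheses; both readings prove the lemma as stated.
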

\begin{proof}
 If \eqref{eigeneqs} is satisfied for $u,v \in V$ spacelike, orthogonal, unit, then $\Pi_s = \spn{u,v}$ is obviously $F$-invariant spacelike. On the other hand, if $\Pi_s$ is $F$-invariant, then it must hold that 
 \begin{equation*}
  F(u) = a_1 u + a_2 v,\quad \quad F(v) = b_1 u + b_2 v, \quad \quad a_1, a_2, b_1, b_2 \in \mathbb{R},
 \end{equation*}
 for a pair of orthogonal, unit, spacelike vectors $u,v$ spanning $\Pi_s$.
Using skew-symmetry and the orthogonality and unitarity of $u,v$, the constants are readily determined: $a_2 = b_2 = 0$ and $a_2 = -b_1 =:\mu$, which implies \eqref{eigeneqs}. This proves the first part of the lemma. 

For the second part, it is immediate that if \eqref{eigeneqs} holds with $\mu \neq 0$, then $\pm i \mu$ are eigenvalues of $F$ with respective eigenvectors $u \pm i v$. The orthogonality  of $u,v$ follows from $\lrprod{F(u),u} = 0 = \mu \lrprod{v,u}$ and the equality of norm from skew-symmetry $\lrprod{F(u),v} = -\lrprod{u,F(v)} \Rightarrow \mu \lrprod{v,v} = \mu \lrprod{u,u}$.
Assume now that $F$ has an eigenvalue $i \mu \neq 0$ with (necessarily null) eigenvector $w = u + i v$, for $u,v \in V$. Since $F$ is real, neither $u$ nor $v$ can be zero. From the nullity property $\lrprod{w,w} =0$, it follows that $\lrprod{u,u} - \lrprod{v,v}=0$ and $\lrprod{u,v} = 0$. Hence, $u,v$ are orthogonal with the same norm, so they are
either null and proportional, which can be discarded because it would imply that $u$ (and $v$) is a real eigenvector with complex eigenvalue; or otherwise $u,v$ are spacelike, thus the lemma follows.  
 
\end{proof}

There is an analogous result for $F$-invariant timelike planes:

 \begin{lemma}\label{lemmatmlkeigenplane}
 Let $F \in \skwend{V}$. Then $F$ has a $F$-invariant timelike plane $\Pi_t$ if and only if 
  \begin{equation}\label{eigeneqstime}
  F(e)  = \mu v,\quad \quad
  F(v)  = \mu e,
\end{equation}
for $\Pi_t = \spn{e,v}$ with $e,v \in V$ for $e$ timelike unit orthogonal to $v$ spacelike, unit and $\mu \in \mathbb{R}$. Moreover, \eqref{eigeneqs} is satisfied for $\mu \neq 0$ if and only if $\pm \mu$ are eigenvalues of $F$ with (null) eigenvectors $e \pm v$, for $e,v \in V$ orthogonal, timelike and spacelike respectively with opposite square norm.
\end{lemma}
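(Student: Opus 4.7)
The plan is to mirror the proof of Lemma \ref{lemmasplkeigenplane} almost verbatim, substituting a timelike/spacelike orthonormal pair for the two spacelike ones; the only real algebraic difference is that the eigenvectors/eigenvalues arising from an invariant timelike plane are real rather than complex, because the restriction of $g$ to $\Pi_t$ has signature $(-,+)$ rather than $(+,+)$.

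For the first equivalence, the ``if'' direction is immediate: the given relations make $\mathrm{span}\{e,v\}$ manifestly $F$-invariant and timelike. For the converse, pick an orthonormal basis $\{e,v\}$ of $\Pi_t$ with $\lrprod{e,e}=-1$, $\lrprod{v,v}=1$, $\lrprod{e,v}=0$ and expand $F(e)=a_1 e+a_2 v$, $F(v)=b_1 e+b_2 v$. Applying Lemma \ref{lemmabasics}(a) to $e$ and $v$ yields $a_1=b_2=0$, and skew-symmetry $\lrprod{F(e),v}=-\lrprod{e,F(v)}$ gives $a_2 = b_1$, which we call $\mu$; this is \eqref{eigeneqstime}.

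For the second equivalence, the forward direction is a direct computation: from \eqref{eigeneqstime} one reads off $F(e\pm v)=\pm\mu(e\pm v)$, and $\lrprod{e\pm v,e\pm v}=-1+1=0$ shows these eigenvectors are null, with $e,v$ themselves orthogonal and of opposite square norms by construction. For the converse, assume $\pm\mu$ are eigenvalues with eigenvectors $w_\pm$; by Lemma \ref{lemmabasics}(d), both $w_\pm$ are null (as $\mu\neq 0$). Set $e:=\tfrac12(w_++w_-)$ and $v:=\tfrac12(w_+-w_-)$ so that $w_\pm=e\pm v$ and \eqref{eigeneqstime} follows by linearity. Expanding the nullity conditions $\lrprod{w_+,w_+}=\lrprod{w_-,w_-}=0$ in terms of $e,v$ gives $\lrprod{e,e}+\lrprod{v,v}=0$ and $\lrprod{e,v}=0$, so $e,v$ are orthogonal with opposite square norms.

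The one subtle point, which I expect to be the main (small) obstacle, is ruling out the degenerate possibility $\lrprod{e,e}=\lrprod{v,v}=0$: in that case $w_+$ and $w_-$ would be linearly independent, mutually orthogonal null vectors, generating a two-dimensional totally isotropic subspace. Since the Witt index of a Lorentzian form is $1$, this cannot occur, so the common value of $\lrprod{e,e}$ and $-\lrprod{v,v}$ is strictly negative for one vector and positive for the other; an independent rescaling of $w_+$ and $w_-$ (which is allowed since each is a separate eigenvector) then fixes the sign conventions so that $e$ is timelike and $v$ spacelike, completing the statement.
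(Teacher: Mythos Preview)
Your argument is correct and, for the first equivalence and the forward half of the second, essentially identical to the paper's. The one genuine difference is in the converse of the second claim: the paper normalises the null eigenvectors $q_\pm$ to be \emph{future directed}, which immediately forces $\lrprod{q_+,q_-}<0$ and hence $\lrprod{e,e}=2\lrprod{q_+,q_-}<0$, giving the timelike/spacelike split for free without any further case analysis. Your route via the Witt index (no two-dimensional totally null subspace in Lorentzian signature) is equally valid and perhaps more structural, but it costs you the extra rescaling/sign step at the end, whereas the paper's time-orientation trick delivers the correct causal characters in one line.
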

\begin{proof}
For the first claim, repeat the first part of the proof of Lemma \ref{lemmasplkeigenplane} assuming $u = e$ timelike.
%   If \eqref{eigeneqstime} is satisfied for $e,v \in V$ with $e$ timelike, unit orthogonal to $v$ spacelike, unit, then $\Pi_t = \spn{e,v}$ is obviously $F$-invariant timelike. On the other hand, if $\Pi_t$ is $F$-invariant, then it must hold that 
%  \begin{equation*}
%   F(e) = a_1 e + a_2 v,\quad \quad F(v) = b_1 e + b_2 v, \quad \quad a_1, a_2, b_1, b_2 \in \mathbb{R},
%  \end{equation*}
%  for a pair of vectors $e, v \in V$ spanning $\Pi_s$, that we can choose $e$ timelike, unit orthogonal to $v$ spacelike, unit. 
% The skew-symmetry of $F$ together with unitarity and orthogonality of $e,v$  determines the constants: $a_2 = b_2 = 0$ and $a_2 = b_1 =:\mu$, which implies \eqref{eigeneqstime}. This proves the first part of the lemma. 

For the second claim, assume \eqref{eigeneqs} is satisfied with $\mu \neq 0$. Then it is immediate that $\lrprod{F(e),e} = 0 = \mu \lrprod{v,e}$, hence $e,v$ are orthogonal and by skew-symmetry $\lrprod{F(e),v} = - \lrprod{e,F(v)} \Rightarrow \mu \lrprod{v,v} = - \mu \lrprod{e,e}$, i.e. must have opposite square norm.
Conversely, let $\pm \mu \neq 0$ be a pair of eigenvalues with respective null eigenvectors $q_\pm$, that w.l.o.g can be chosen future directed. Then $e:= q_+ + q_-$ and $v:= q_+-q_-$ are orthogonal, with opposite square norm $\lrprod{e,e} = 2 \lrprod{q_+,q_-} = -\lrprod{v,v} <0$, and they satisfy \eqref{eigeneqstime}. 
\end{proof}

The $F$-invariant spacelike or timelike planes will be often be refered to as ``eigenplanes'' and $\mu$ will be denoted as the ``eigenvalues'' of $\Pi$. Notice that a simple change of order in the vectors switches the sign of the eigenvalue $\mu$. Thus, unless otherwise stated, we will consider the eigenvalues of eigenplanes (both spacelike and timelike) non-negative by default. 

The first question we address here is under which conditions such a plane exists (cf. Proposition \ref{clasiF}). But before doing so, we need to prove some results first.

\begin{lemma}\label{lemmaeigencond}
 Let $V$ be a Lorentzian vector space $F \in \skwend{V}$. Then  there exist two vectors 
 $\x,\y \in V$, with $\x \neq 0$, such that one of the three following 
exclusive possibilities hold
 \begin{enumerate}[(i)]
  \item $\x$ is a null  eigenvector of $F$.
  \item $\x$ is a non-null eigenvector (with zero eigenvalue).
  \item $\x,\y$ are orthogonal, spacelike and with the same norm,
and define an eigenplane of $F$ with non-zero eigenvalue, i.e. 
    \begin{align*}
  F(x)  = \mu y,\quad \quad
  F(y)  = -\mu x,\quad \mu \in \mathbb{R}\backslash \lrbrace{0}.
\end{align*}
 \end{enumerate}
If, instead, $V$ is riemannian, only cases (ii) and (iii) can arise.  
\end{lemma}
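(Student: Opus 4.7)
The plan is to complexify. Extend $g$ by $\mathbb{C}$-bilinearity and $F$ by $\mathbb{C}$-linearity to $V_{\mathbb{C}} := V \otimes_{\mathbb{R}} \mathbb{C}$; the extended $F$ remains skew with respect to the extended $g$, and the fundamental theorem of algebra provides at least one eigenvalue $\lambda \in \mathbb{C}$. The argument then splits according to whether one may choose $\lambda$ to be real.

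If $F$ admits a real eigenvalue $\lambda$ with real eigenvector $x \in V \setminus \{0\}$, two subcases arise: either $x$ is null, placing us in case (i), or $x$ is non-null, in which case Lemma~\ref{lemmabasics}(d) forces $\lambda = 0$, so $x$ realizes case (ii).

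Otherwise, $F$ has no real eigenvalue; pick $\lambda = a + ib$ with $b \neq 0$ and an associated complex eigenvector $w = x + iy$, $x,y \in V$. Equating real and imaginary parts of $F(w) = \lambda w$ gives
\[
F(x) = a\,x - b\,y, \qquad F(y) = b\,x + a\,y,
\]
and neither $x$ nor $y$ can vanish (if one did, the other would be a real eigenvector with complex eigenvalue, impossible since $F$ is real). Skew-symmetry of the extended $F$ yields $\lambda\,\lrprod{w,w}=0$, hence $\lrprod{w,w}=0$; splitting into real and imaginary parts gives $\lrprod{x,x}=\lrprod{y,y}=:c$ and $\lrprod{x,y}=0$. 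Evaluating $\lrprod{F(x),x}=0$ on the first display produces $a\,c=0$.

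The main — and essentially only — obstacle is ruling out $c \leq 0$, which is exactly where the Lorentzian hypothesis enters. If $c=0$, then $x$ and $y$ are two nonzero orthogonal null vectors, which in Lorentzian signature are forced to be proportional (Cauchy--Schwarz on their spacelike projections along a unit timelike direction); $w$ would then be a complex multiple of a real vector annihilated by $F-\lambda\,\mathrm{Id}$, impossible since $\lambda \notin \mathbb{R}$ and $F$ is real. If $c<0$, then $x$ is timelike and $y \in x^{\perp}$, but $x^{\perp}$ is spacelike, contradicting $\lrprod{y,y}=c<0$. Hence $c>0$; from $a\,c=0$ we get $a=0$, and with $\mu := -b$ the pair $(x,y)$ realizes case (iii). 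For the Riemannian addendum, case (i) is vacuous for lack of nonzero null vectors, and the above analysis — with the exclusion $c \leq 0$ now trivial — yields either (ii) or (iii).
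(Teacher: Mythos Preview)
Your proof is correct and follows essentially the same strategy as the paper: complexify, pick an eigenvalue, split the eigenvector into real and imaginary parts, and use skew-symmetry together with the Lorentzian signature to land in one of the three cases. Your organization is slightly cleaner---by first disposing of any real eigenvalue you may assume $\lambda\neq 0$ in the complex branch and obtain $\langle w,w\rangle=0$ at once, which bypasses the paper's $2\times 2$ linear system and its separate treatment of the subcase $\langle x,y\rangle\neq 0$.
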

\begin{proof}
 From the Jordan block decomposition theorem we know that there is at least one, possibly complex, eigenvalue $s_1 + i s_2$ with eigenvector $\x + i \y$, that is,  $F(\x+ i \y) = (s_1+i s_2)(\x +i \y)$, or equivalently: 
  \begin{align}
F(\x) & = s_1 \x - s_2 \y\label{eigenCond1},  \\
F(\y) & = s_2 \x + s_1 \y\label{eigenCond2}.
\end{align}
This system is invariant under the interchange $(\x,\y) \rightarrow (- \y, \x)$, 
so without loss of generality we may assume $\x \neq 0$. The  respective
scalar products of (\ref{eigenCond1}) and (\ref{eigenCond2}) with $\x$, $\y$
yield
%
%  \\
%of  The skew-symmetry of $F^\sharp$ yields 
%$\lrprod{F^\sharp \x ,\x} = \lrprod{F^\sharp \y,\y} = 0$, so we can write the system of equations as:
\begin{equation}\label{matrix}
\left . 
\begin{array}{cc}
s_1 \lrprod{\x, \x} - s_2 \lrprod{\x, \y} & =0 \\
s_1 \lrprod{\y, \y} + s_2 \lrprod{\x, \y} & =0 
\end{array}
\right \} 
\quad \quad \Longleftrightarrow \quad \quad
\left ( 
\begin{matrix}
\lrprod{\x, \x} & - \lrprod{\x, \y}  \\
\lrprod{ \y, \y} & \lrprod{\x, \y}  
\end{matrix}
\right )  
\left ( 
\begin{matrix}
s_1 \\
s_2 
\end{matrix}
\right )
= 
\left (
\begin{matrix}
0 \\
0 
\end{matrix}
\right ). 
\end{equation}
Observe that if $s_1 + i s_2  \neq 0$  the determinant of the matrix
must vanish. i.e. $\lrprod{\x,\y} \lr{ \lrprod{\x, \x} + \lrprod{\y, \y } } 
=0$. 
Hence, we can distinguish the following possibilities:

\begin{enumerate}[a)]
%\item  If $\x$ is a null eigenvector, then by \eqref{eigenCond1}, either $s_2 = 0$ or $\y$ is proportional to $\x$, thus, also orthogonal to $\x$ $\Longrightarrow i)$ follows. In the remaining  cases, $\x$ is not a null eigenvector
 \item[(a)]  $s_1 = s_2 =0$. Then $\x$ is an eigenvector of
$F$ with vanishing eigenvalue so we fall into cases $(i)$ or  $(ii)$.

\item[(b)] $s_1 + i s_2 \neq 0$. 
From $\lrprod{\x,\y} \lr{ \lrprod{\x, \x} + \lrprod{\y, \y } }=0$ we distinguish
two cases:
\addtolength{\itemindent}{1cm}
\item[(b.1)] $\la \x, \y \ra =0$. If $s_1 \neq 0$ then 
\eqref{matrix} forces $\x$ and $\y$ to be both null and, being also
orthogonal to each other,
there is $a \in \mathbb{R}$ such that $\y = a \x$ and we fall into case $(i)$. So, we can assume $s_1 =0$
(and then  $s_2 \neq 0$). Let $\mu := -s_2$, thus $(iii)$ follows from equations \eqref{eigenCond1}, \eqref{eigenCond2} and Lemma \ref{lemmasplkeigenplane}. 
% We compute the norm of $\y$ using
% (\ref{eigenCond1}) and
% (\ref{eigenCond2}) in the form  
% $\y = - s_2^{-1} F(\x)$ and $F(\y) = s_2 \x$  
% \begin{align*}
% \la \y,\y \ra = - s_2^{-1} \la F(\x),\y \ra = 
% s_2^{-1} \la \x, F(\y) \ra =  \la \x, \x \ra
% \end{align*}
% so $\y$ has he same norm as $\x$ and both are orthogonal to each other. Therefore, if they are null and proportional we fall in case (i) again  or otherwise, both must be spacelike and we are in case (iii)
% with $\mu = - s_2$.
\item[(b.2)] $\la \x, \y \ra \neq 0$.  Then
$\la \x, \x \ra = - \la \y ,\y \ra$  and the
matrix problem \eqref{matrix} reduces to
\begin{align}
s_1 \lrprod{\x,\x} -s_2 \lrprod{\x,\y} = 0. \label{s1s2}
\end{align}
In addition, \eqref{eigenCond1} and
\eqref{eigenCond2} imply
\begin{align*}
 \lrprod{F(\x),\y} 
& = s_1 \lrprod{\x,\y} -s_2 \lrprod{\y,\y} = 
s_1 \lrprod{\x,\y} +s_2 \lrprod{\x,\x} =  \lrprod{F(\y),\x}.
\end{align*}
But skew-symmetry requires
 $\lrprod{F(\x),\y} 
=- \lrprod{F(\y),\x}$, so $\lrprod{F(\y),\x} =0$  and we conclude
 \begin{align*}
s_1 \lrprod{\x,\y} +s_2 \lrprod{\x,\x}=0.
\end{align*}
Combining with (\ref{s1s2}) yields
\begin{eqnarray*}
\left ( 
\begin{matrix}
\lrprod{\x, \x} & - \lrprod{\x, \y}  \\
\lrprod{ \x, \y} & \lrprod{\x, \x}  
\end{matrix}
\right )  
\left ( 
\begin{matrix}
s_1 \\
s_2 
\end{matrix}
\right )
= 
\left (
\begin{matrix}
0 \\
0 
\end{matrix}
\right ). 
\end{eqnarray*}
The determinant of this matrix is non-zero which yields a contradiction with 
$s_1 + i s_2 \neq 0$. So this case is empty.

To conclude the proof, we must consider the case when the vector space V is riemannian. The proof is identical except from the fact that all cases involving null vectors are imposible from the start.
\end{enumerate}
\end{proof}

\begin{remark}\label{timelikeremark}
One may wonder why the lemma includes the possibility of having a spacelike eigenplane (case (iii)), but not a timelike eigenplane. 
The reason is that invariant timelike planes, which are indeed possible,  
fall into case (i) by Lemma \ref{lemmatmlkeigenplane}, because $e \pm v$ are null eigenvectors.
\end{remark}

 In the case of Riemmanian signature, Lemma \ref{lemmaeigencond} can be reduced to the following single statement: 
\begin{corollary}
\label{CorRiem}
Let $V$ be Riemannian of dimension $d$ and $F \in \skwend{V}$. If $d = 1$ then $F =0$ and if $d \geq 2$ then there exist two orthogonal and unit vectors $u,v$
satisfying
\begin{align}
\label{pairriem}
F(u)  = \mu v,  \quad \quad
F(v)  = -\mu u, \quad \quad \mu \in  \mathbb{R}
\end{align}
\end{corollary}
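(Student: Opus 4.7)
The plan is to split on the dimension. For $d=1$, any endomorphism is multiplication by a scalar $\lambda$; applied to any unit $v \in V$, the skew-symmetry relation $\lrprod{v, F(v)} = - \lrprod{F(v),v}$ yields $2\lambda = 0$, so $F = 0$.

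For $d \geq 2$, I would argue by induction on $d$. Apply Lemma \ref{lemmaeigencond} to $F$; in Riemannian signature only cases (ii) and (iii) can occur. In case (iii) we obtain orthogonal spacelike vectors $x,y$ of common norm $r = \|x\| = \|y\| > 0$ satisfying $F(x) = \mu y$, $F(y) = -\mu x$ with $\mu \neq 0$; setting $u := x/r$ and $v := y/r$ produces an orthonormal pair that, by linearity, verifies \eqref{pairriem}. In case (ii) we only get a nonzero vector $x$ with $F(x) = 0$, so a little more work is required: the subspace $\mathrm{span}(x)$ is $F$-invariant, hence by Lemma \ref{lemmabasics}(f) so is $x^\perp$, and $F|_{x^\perp}$ is a skew-symmetric endomorphism of the $(d-1)$-dimensional Riemannian space $x^\perp$. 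If $d = 2$, the base case of the induction (together with the $d=1$ part above) gives $F|_{x^\perp} = 0$, so taking $u := x/\|x\|$ and any unit $v \in x^\perp$ yields \eqref{pairriem} with $\mu = 0$. If $d \geq 3$, the inductive hypothesis applied to $F|_{x^\perp}$ furnishes orthonormal vectors $u, v \in x^\perp$ and a scalar $\mu \in \mathbb{R}$ satisfying the required equations, which continue to hold in $V$.

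The main obstacle I anticipate is case (ii): the pair $(x,y)$ returned by Lemma \ref{lemmaeigencond} does not itself have the shape demanded by \eqref{pairriem}, so one cannot conclude directly. The key observation that unlocks the argument is that $x$ being in the kernel of $F$ forces $F$ to restrict to $x^\perp$ (Lemma \ref{lemmabasics}(f)), which reduces the problem to a strictly lower-dimensional skew-symmetric problem on a Riemannian space and permits the induction to close, with $d=2$ requiring separate (though trivial) treatment because the inductive hypothesis would otherwise have to be invoked in dimension one.
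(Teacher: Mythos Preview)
Your proposal is correct and follows essentially the same approach as the paper: in case (ii) one restricts $F$ to $x^{\perp}$ and reduces dimension. The only minor difference is that you run a full induction, whereas the paper applies Lemma~\ref{lemmaeigencond} just once more on $x^{\perp}$ and observes that if case (ii) recurs, the second zero-eigenvalue vector $y\in x^{\perp}$ together with $x$ already gives (after normalization) the desired pair with $\mu=0$, so no further descent is needed.
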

\begin{proof}
The case $d=1$ is trivial, so let us assume $d \geq 2$.
By the last statement of
Lemma \ref{lemmaeigencond} either there exists an eigenvector
$x$ with zero eigenvalue or the pair $\{u,v\}$ claimed in the corollary exists. 
In  the former case, we consider the vector subspace $x^{\perp}$. Its dimension
is at least one and $F$  restricts to this space 
so again either the pair $\{u,v\}$ exists
or there is $y \in x^{\perp}$ satisfying $F(y)=0$. But then $\{x,y\}$
are orthogonal and non-zero. Normalizing we find a pair $\{u,v\}$ that satisfies
\eqref{pairriem} with $\mu=0$,
\end{proof}

% \begin{remark}\label{kerF2}
% From Corollary \ref{CorRiem} and Remark \ref{timelikeremark}, more can be said when $d=2$ (in arbitrary signature). Either $\mathrm{Im} F = V$, which occurs if and only if $\mu \neq 0$, otherwise $F = 0$. 

%  This corollary will be used in the next lemma with $d=2$. In this case more can be said. 
% Indeed, either $\mu=0$ which occurs, if and only if $F =0$ (or equivalently
% $\ker F = \{u,v\}$, or $\mu \neq 0$ which is equivalent to 
% $\ker F = \{0\}$. 
% \end{remark}

Lemma \ref{lemmaeigencond} lists a set of cases, one of which must always occur. However, we now show that, if the dimension is sufficiently high, case $(i)$ of that lemma implies one of the other two:

\begin{lemma}\label{auxprob} 
 Let $F \in \skwend{V}$, with $V$ Lorentzian of dimension at least four. If $F$ has a null eigenvector, then it also has either a spacelike eigenvector or a spacelike eigenplane. 
\end{lemma}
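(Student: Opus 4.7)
The plan is to split on whether the eigenvalue $\lambda$ of the hypothesized null eigenvector $k$ (so $F(k) = \lambda k$) is zero or not.

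\textbf{Case $\lambda \neq 0$:} Any $F \in \skwend{V}$ satisfies $F^T = -GFG^{-1}$ with $G$ the Gram matrix of $\lrprod{\,,\,}$; a direct computation yields $\det(F-\mu I) = (-1)^{\dim V}\det(F+\mu I)$, so real eigenvalues of $F$ come in $\pm$-pairs. Hence $-\lambda$ is also an eigenvalue, and by Lemma \ref{lemmabasics}(d) its eigenvector $k'$ must be null. Then Lemma \ref{lemmatmlkeigenplane} produces an $F$-invariant timelike plane $\Pi_t = \spn{k,k'}$, and by Lemma \ref{lemmabasics}(f) the endomorphism $F$ restricts to the Riemannian subspace $\Pi_t^\perp$, which has dimension $\dim V - 2 \geq 2$. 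Applying Corollary \ref{CorRiem} to this restriction supplies orthonormal $u,v \in \Pi_t^\perp$ with $F(u) = \mu v$ and $F(v) = -\mu u$; if $\mu \neq 0$ this is the desired spacelike eigenplane, whereas if $\mu = 0$ the vector $u$ is a spacelike eigenvector.

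\textbf{Case $\lambda = 0$:} Now $F(k)=0$ and Lemma \ref{lemmabasics}(e) gives $\mathrm{Im}\,F \subset k^\perp$, so $F$ restricts to the degenerate subspace $k^\perp$. Since $\spn{k}$ is the radical of $g\rvert_{k^\perp}$, the quotient $W := k^\perp / \spn{k}$ carries a positive-definite induced metric and an induced skew map $\bar F$, with $\dim W = \dim V - 2 \geq 2$. Corollary \ref{CorRiem} applied to $\bar F$ on $W$ yields orthonormal $\bar u,\bar v$ with $\bar F(\bar u) = \mu \bar v$ and $\bar F(\bar v) = -\mu \bar u$. Any lifts $u,v \in k^\perp$ are spacelike orthonormal (elements of $k^\perp \setminus \spn{k}$ have strictly positive norm, and adding multiples of $k$ preserves norms and inner products since $k \in k^\perp$) and satisfy $F(u) = \mu v + a k$, $F(v) = -\mu u + b k$ for some $a,b \in \mathbb{R}$. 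When $\mu \neq 0$, the shifted vectors $u' := u - (b/\mu)k$ and $v' := v + (a/\mu)k$ remain spacelike orthonormal and now exactly obey \eqref{eigeneqs}, so $\spn{u',v'}$ is a spacelike eigenplane. When $\mu = 0$, either $a=b=0$ (in which case $u$ is already a spacelike eigenvector) or the non-zero combination $w := b u - a v$ lies in the spacelike plane $\spn{u,v}$ and is annihilated by $F$, giving a spacelike eigenvector.

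\textbf{Main obstacle:} The delicate step is the $\lambda=0$ case: lifts $u,v$ from $W$ back to $k^\perp$ are in general not eigenvectors of $F$ because of the residual $ak$, $bk$ terms. One must either absorb these by exploiting the freedom to shift by elements of $\spn{k}$ (available when $\mu \neq 0$) or extract an honest eigenvector from within $\spn{u,v}$ (when $\mu=0$). The dimensional hypothesis $\dim V \geq 4$ enters precisely to guarantee $\dim W \geq 2$, which is exactly what is needed for Corollary \ref{CorRiem} to have non-trivial content on $W$.
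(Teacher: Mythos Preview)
Your proof is correct. The $\lambda=0$ case is essentially the paper's argument: form the quotient $k^\perp/\spn{k}$, apply Corollary~\ref{CorRiem}, lift, and absorb the residual $k$-terms by shifting (when $\mu\neq 0$) or by taking the right linear combination (when $\mu=0$).

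Where you diverge is the $\lambda\neq 0$ case. The paper does \emph{not} split on the eigenvalue of $k$: it runs the quotient construction uniformly, writing $F(k)=\aa k$ with $\aa$ arbitrary, and then uses a single shift formula $u=e_1-\frac{a\lambda_2+\aa\lambda_1}{a^2+\aa^2}k$, $v=e_2+\frac{a\lambda_1-\aa\lambda_2}{a^2+\aa^2}k$ valid whenever $a^2+\aa^2\neq 0$. Your route instead exploits the $\pm\lambda$ eigenvalue pairing to produce a second null eigenvector $k'$ and hence, via Lemma~\ref{lemmatmlkeigenplane}, a timelike $F$-invariant plane; restricting to its Riemannian orthogonal complement then reduces the problem to Corollary~\ref{CorRiem} directly. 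This is a clean alternative that avoids the slightly heavier shift computation, at the cost of invoking the eigenvalue-pairing property and Lemma~\ref{lemmatmlkeigenplane}. The paper's unified treatment, on the other hand, needs only the quotient construction and a single algebraic identity, and never appeals to the global structure of the spectrum.
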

\begin{proof}
  Let $k  \in V$ be a null eigenvector of $F$.
The space $A:= k^{\perp}\subset V$ is a null hyperplane and $F$ restricts to $A$.
On this space we  define the standard equivalence relation
$\y_0 \sim \y_1$ iff $\y_0-\y_1 = a  k$, $a \in \mathbb{R}$. The quotient $A / \sim$
(which has dimension at least two) inherits
a positive definite metric $\overline{g}$ and $F$ also descends to the
quotient.
More precisely, if we denote the equivalence class of any $\y \in
A$  by $\overline{\y}$,  then for any $\overline{\y} \in
A / \sim$ and any $\y \in \overline{\y}$ the expression
$\overline{F} (\overline{\y}) = \overline{F(\y)}$ is well-defined (i.e.
independent of the choice of representative $\y$) and hence
defines an endomorphism $\overline{F}$
of $A /\sim$ which, moreover,
satisfies
\begin{align*}
  \la \overline{F} (
  \overline{\y_1}), \overline{\y_2} \ra {}_{\overline{g}}
  =
  - \la \overline{\y_1},  \overline{F} (\overline{\y_2}) \ra_{\overline{g}}.
\end{align*}
In other words $\overline{F}$ is a skew-symmetric endomorphism in the
riemannian vector space $A/\sim$.  By Corollary \ref{CorRiem} (here we use that the dimension of $A/\sim$ is at least two) there exists
a pair of orthogonal and $\overline{g}$-unit vectors
$\{ \overline{e_1}, \overline{e_2}\} $ satisfying
\begin{align}\label{eigena}
  \overline{F}(\overline{e_1})  =  a \, \overline{e_2}, 
  \quad \quad \quad \overline{F} (\overline{e_2})  =  - a \, \overline{e_1},\quad\quad a \in \mathbb{R}.
\end{align}
  Select representatives $e_1 \in
\overline{e_1}$ and  $e_2 \in \overline{e_2}$. In terms of $F$, the condition
(\ref{eigena}) and 
the fact that $k$ is eigenvector require the existence of constants
$\aa,a, \lambda_1$ and $\lambda_2$ such that
\begin{align*}
  F ( k) = \aa  k, \quad \quad
  F (e_1) =  a e_2 + \lambda_1  k, \quad \quad 
  F (e_2) = - a  e_1 + \lambda_2  k.
\end{align*}
Whenever $a^2 + \aa^2 \neq 0$ the vectors
\begin{align*}
  u :=  e_1 - \frac{1}{a^2 + \aa^2} \left ( a \lambda_2 + \aa \lambda_1
  \right )  k, \quad \quad
  v :=  e_2 + \frac{1}{a^2 + \aa^2} \left ( a \lambda_1- \aa \lambda_2 
  \right )  k
\end{align*}
satisfy $F (u) = a v$ and $F(v) = - a u$. Since $u$
and $v$ are spacelike, unit and orthogonal to each other
the claim of the  proposition follows (with $\mu =a$).
If $\aa = a =0$, then either $\lambda_1 = \lambda_2 =0$
and then $\{e_1,e_2\}$  are directly
the vectors
$\{u,v\}$ claimed in the proposition (with $\mu=0$), or 
at least one of the $\lambda$s 
(say $\lambda_2$) is not zero. Then $e := e_1 - \frac{\lambda_1}{\lambda_2}
 e_2$ is a spacelike eigenvector of $F$. 
\end{proof}

Now we have all the ingredients to show one of the main results of this section, that will eventually allow us to classify skew-symmetric endomorphisms of Lorentzian vector spaces.

\begin{proposition}\label{clasiF}
Let $V$ be a Lorentzian vector space of dimension at least five
and $F\in \skwend{V}$. 
Then, there exists a spacelike eigenplane. 

\end{proposition}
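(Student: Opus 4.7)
The plan is to combine Lemma \ref{lemmaeigencond} with Lemma \ref{auxprob} and iterate at most twice on orthogonal complements of non-null eigenvectors with zero eigenvalue, ultimately reducing everything to the Riemannian Corollary \ref{CorRiem}. Since $\dim V \geq 5 \geq 4$, Lemma \ref{auxprob} is available; applying it together with Lemma \ref{lemmaeigencond} to $F$ yields either a spacelike eigenplane (in which case the proposition is proved) or a non-null eigenvector whose eigenvalue is forced to be zero by Lemma \ref{lemmabasics}(d). Thus I may assume the existence of a non-null $x_1 \in \ker F$, and the task becomes building the plane from it.

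Next I would split according to the causal character of $x_1$. Because $F(x_1)=0$, Lemma \ref{lemmabasics}(f) ensures that $F$ restricts to $x_1^\perp$. If $x_1$ is timelike, then $x_1^\perp$ is Riemannian of dimension $d-1 \geq 4$, so Corollary \ref{CorRiem} applied to $F|_{x_1^\perp}$ produces orthonormal vectors $u,v$ satisfying $F(u)=\mu v$ and $F(v)=-\mu u$ for some $\mu \in \mathbb{R}$; for any value of $\mu$, including $\mu=0$, the span $\spn{u,v}$ is a spacelike $F$-invariant plane and the proposition follows. If instead $x_1$ is spacelike, then $x_1^\perp$ is Lorentzian of dimension $d-1 \geq 4$, and I would apply Lemma \ref{lemmaeigencond} together with Lemma \ref{auxprob} once more, now to $F|_{x_1^\perp}$. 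Either a spacelike eigenplane appears inside $x_1^\perp$ (and hence in $V$), or a new non-null eigenvector $x_2 \in x_1^\perp$ with zero eigenvalue is produced.

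In this last situation two subcases remain. If $x_2$ is timelike, the previous Riemannian argument applies verbatim inside $x_1^\perp$, producing a spacelike eigenplane via Corollary \ref{CorRiem} on $F|_{x_1^\perp \cap x_2^\perp}$, which is Riemannian of dimension $d-2 \geq 3$. If $x_2$ is spacelike, then $x_1$ and $x_2$ are orthogonal spacelike vectors both annihilated by $F$, and $\spn{x_1,x_2}$ is automatically a spacelike $F$-invariant plane. This exhausts all possibilities. Rather than a genuine obstacle, the main point to verify is that the iteration closes after at most two steps, and that is exactly where the hypothesis $\dim V \geq 5$ intervenes: it guarantees that after one restriction to $x_1^\perp$ the ambient Lorentzian subspace still has dimension at least four, so Lemma \ref{auxprob} remains applicable there.
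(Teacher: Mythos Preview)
Your proposal is correct and follows essentially the same route as the paper: both arguments combine Lemma~\ref{lemmaeigencond}, Lemma~\ref{auxprob}, and Corollary~\ref{CorRiem}, restricting to orthogonal complements of eigenvectors and using $\dim V \geq 5$ precisely to ensure that Lemma~\ref{auxprob} remains applicable after one restriction. The only difference is organizational: you absorb the null-eigenvector case into a single preliminary reduction (``eigenplane or non-null eigenvector''), and you handle the possibility that the second eigenvector $x_2$ is timelike explicitly via Corollary~\ref{CorRiem} on $x_1^\perp \cap x_2^\perp$, whereas the paper disposes of that case up front by assuming no timelike eigenvector exists; both treatments are equivalent.
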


\begin{proof}

We examine each one of the three
possibilities described in Lemma \ref{lemmaeigencond}. Case $(iii)$
yields the result trivially, so we can assume that $F$ has an eigenvector $\x$.

If we are in case $(ii)$, the vector $\x$ is either 
spacelike or timelike. If it is timelike
we consider the riemannian space $\x^{\perp}$ where $F$ restricts. 
We may apply Corollary \ref{CorRiem} (note that $\x^{\perp}$ has dimension at least four)  and conclude that the vectors $\{u,v\}$ exist. So it remains to consider the case when $x$ is spacelike and  F admits no timelike eigenvectors. We
restrict to $\x^{\perp}$ which is Lorentzian and of dimension at least 
four. Applying again Lemma \ref{lemmaeigencond}, either there exists a spacelike eigenplane, or a second eigenvector $y \in x^\perp$, which can only be spacelike or null. If $y$ is spacelike, $\lrbrace{u:=x,~v :=y}$ span a spacelike eigenplane with $\mu = 0$. If $y$ is null, we may apply Lemma \ref{auxprob}  to $F\mid_{x^\perp}$ to conclude that either a spacelike eigenplane exists, or there is a spacelike eigenvector $e \in x^\perp$, so the pair $\{ u:= e, v:= \x\}$ satisfies \eqref{eigeneqs} with
$\mu =0$. This concludes the proof of case $(ii)$. 

In case $(i)$, i.e.
when there is a null eigenvector $x$  we can apply
Lemma \ref{auxprob} and conclude that either
$\{u,v\}$ exist, or there is a spacelike eigenvector $e\in V$, in which case we are into case $(ii)$, already solved. This completes the proof.

\end{proof}

Proposition \ref{clasiF} provides the basic tool to classify systematically skew-symmetric endomorphisms if the dimension $d$ is at least five. The idea is to start looking for a first spacelike eigenplane $\Pi$. Then, we restrict to $\Pi^\perp$, that is Lorentzian of dimension $d-2$. If $d-2\geq 5$, Proposition \ref{clasiF} applies again and we can keep on going until we reach a subspace of dimension three if $d$ odd or dimension four if $d$ even. Therefore, for a complete classification it only remains to solve the problem in three and four dimensions. 
This has already been done in \cite{marspeon20}, where a canonical form based on the classification of skew-symmetric endomorphisms is introduced. The results from \cite{marspeon20} that we shall need are summarized in Proposition \ref{propcanonF4} and Corollary \ref{propcanonF3} and their main consequences in the present context are discussed in Remarks \ref{remarcanonF4} and \ref{remarcanonF3} below, where we also relate the canonical form with the classification of skew-symmetric endomorphisms. For a proof and extended discussion, we refer the reader to \cite{marspeon20}.  In the remainder, when we explicitly write a matrix of entries ${F^\alpha}_\beta$, where $^\alpha$ is the row and $_\beta$ the column, we refer to a linear transformation expressed in a vector basis $\lrbrace{e_\alpha}_{\alpha = 0}^{d-1}$ acting on the vectors $v = v^\alpha e_\alpha \in V$ by 
\begin{equation}
 F(v) = {F^\alpha}_\beta v^\beta e_\alpha.
 \end{equation}

\begin{proposition}\label{propcanonF4}
For every non-zero $F \in \skwend{V}$, with $V$ Lorentzian four-dimensional, there exists an orthonormal basis $B:=\lrbrace{e_0,e_1,e_2,e_3}$, with $e_0$ timelike future directed, into which $F$ is
 \begin{equation}\label{canonFdim4}
  F =\left(
\begin{array}{cccc}
 0 & 0 & -1 + \frac{\aa }{4} & \frac{\bb }{4} \\
 0 & 0 & -1-\frac{\aa }{4} & -\frac{\bb }{4} \\
 -1+\frac{\aa }{4} & 1+\frac{\aa }{4} & 0 & 0 \\
 \frac{\bb }{4} & \frac{\bb }{4} & 0 & 0 \\
\end{array}
\right),\quad\quad \aa, \bb \in \mathbb{R},
 \end{equation}
where $\aa := - \frac{1}{2}\Tr{F^2}$ and $\bb^2 := - 4 \det F$, with $\bb \geq 0$. Moreover, if $\bb=0$ the vector $e_3$ can be taken to be any spacelike unit vector lying in the kernel of $F$.
\end{proposition}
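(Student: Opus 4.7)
My plan is to apply Lemma \ref{lemmaeigencond} to $F$ in this four-dimensional Lorentzian setting and treat the resulting cases separately. In the generic situation (case (iii) of that lemma) there is a spacelike eigenplane $\Pi_s = \spn{u,v}$ with $F(u) = \mu v$, $F(v) = -\mu u$, $\mu \neq 0$. By Lemma \ref{lemmabasics}(f), $F$ restricts to the two-dimensional Lorentzian subspace $\Pi_s^{\perp}$, and a direct consequence of skew-symmetry in dimension two is that any such endomorphism takes, in an orthonormal basis $\{\tilde e_0, f\}$ with $\tilde e_0$ timelike future-directed, the form $F(\tilde e_0) = \nu f$, $F(f) = \nu \tilde e_0$ for some $\nu \in \mathbb{R}$. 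This produces a block-diagonal representation of $F$ with two real parameters $(\mu,\nu)$.

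Next I would explicitly exhibit a Lorentz transformation carrying this block form to \eqref{canonFdim4}. Using the ansatz
\begin{align*}
 e_0 &= \alpha\,\tilde e_0 + \beta u, \qquad e_1 = \gamma\,\tilde e_0 + \delta u,\\
 e_2 &= \epsilon f + \zeta v, \qquad e_3 = \eta f + \theta v,
\end{align*}
orthonormality of $\{e_0,e_1,e_2,e_3\}$ together with the requirement that $F$ take the form \eqref{canonFdim4} translate to a closed algebraic system in the eight coefficients, with solutions consistent with $\aa = \mu^{2} - \nu^{2}$ and $\bb = 2\mu|\nu|$. Sign ambiguities in $\mu,\nu$ are absorbed by reordering $(u,v)$ or flipping the sign of $f$, while the future-directed character of $e_0$ is enforced through the sign of $\alpha$.

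The remaining cases are precisely those in which $F$ fails to be invertible, i.e.\ $\det F = 0$ and hence $\bb = 0$: these are cases (i) and (ii) of Lemma \ref{lemmaeigencond}, with the null-eigenvector subcase forcing a non-null eigenvector by Lemma \ref{auxprob}. Here one can pick $e_3$ to be any spacelike unit vector in $\ker F$, $e_2$ a unit vector along $\mathrm{Im}\,F$ contained in $e_3^{\perp}$ (spacelike thanks to Lemma \ref{lemmabasics}(b,c)), and complete $\{e_0,e_1\}$ in the orthogonal Lorentzian plane; a short verification then shows that $F$ takes the form \eqref{canonFdim4} with $\bb = 0$ and the residual invariant playing the role of $\aa$.

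The identifications $\aa = -\tfrac{1}{2}\Tr F^{2}$ and $\bb^{2} = -4\det F$ follow by direct computation on \eqref{canonFdim4}, exploiting its anti-block-diagonal structure. The stated freedom in the choice of $e_3$ when $\bb = 0$ is then immediate: any spacelike unit vector in $\ker F$ completes the construction, because the stabiliser of $F$ inside $O^+(1,n)$ acts transitively on such vectors. The main obstacle is the algebraic manipulation of the second paragraph; although conceptually routine, tracking orthonormality together with the time-orientation condition through all sign choices of $\mu,\nu$ needs to be done with some care.
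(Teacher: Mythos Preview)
The paper does not actually prove this proposition: it is quoted from the companion paper \cite{marspeon20}, and the reader is referred there ``for a proof and extended discussion''. So there is no in-paper argument to compare with. That said, your strategy --- use Lemma~\ref{lemmaeigencond} to produce an invariant spacelike plane with nonzero eigenvalue in the generic case, block-diagonalise $F$ as a $(\nu,\mu)$ pair on $\Pi_s^{\perp}\oplus\Pi_s$, and then exhibit an explicit $O^+(1,3)$ change of basis to \eqref{canonFdim4} --- is sound and is presumably close in spirit to the original proof. The ansatz you write down does work: taking $e_0=\cosh\psi\,\tilde e_0+\sinh\psi\,u$, $e_1=\sinh\psi\,\tilde e_0+\cosh\psi\,u$, $e_2=\cos\phi\,f+\sin\phi\,v$, $e_3=-\sin\phi\,f+\cos\phi\,v$ and imposing the target entries forces $\tan\phi=\mu/\nu$ and $e^{\psi}=2/\sqrt{\mu^2+\nu^2}$, and then $\aa=\mu^2-\nu^2$, $\bb=2\mu|\nu|$ fall out.

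There are, however, two genuine gaps in your treatment of the residual case. First, the appeal to Lemma~\ref{lemmabasics}(b),(c) to argue that a chosen $e_2\in\mathrm{Im}\,F$ is spacelike is wrong: those items say only that $\mathrm{Im}\,F\perp\ker F$ and that vectors in $\ker F\cap\mathrm{Im}\,F$ are null. When $\bb=0$ and $\aa<0$, $\mathrm{Im}\,F$ is the \emph{timelike} eigenplane, so generic vectors there are not spacelike; you need a more specific choice (and justification) of $e_2$. Second, ``complete $\{e_0,e_1\}$ in the orthogonal Lorentzian plane; a short verification then shows\ldots'' is doing all the work and none of the verification: after fixing $e_3\in\ker F$ you are reduced to finding the canonical form of $F|_{e_3^\perp}$ on a three-dimensional Lorentzian space, i.e.\ exactly Corollary~\ref{propcanonF3}, which the paper also imports from \cite{marspeon20} without proof. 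You should either supply that three-dimensional argument directly (it is the nilpotent $\aa=0$ subcase that needs care) or acknowledge that you are using Corollary~\ref{propcanonF3}. Finally, your transitivity claim for the stabiliser on spacelike unit vectors of $\ker F$ is correct when $\ker F$ is non-degenerate (boosts or rotations in $\ker F$ commute with $F$), but in the degenerate case $\aa=\bb=0$ it requires using a null rotation fixing the null direction of $\ker F$, which you should state explicitly.
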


\begin{corollary}\label{propcanonF3}
 For every non-zero $F \in \skwend{V}$, with $V$ Lorentzian three-dimensional, there exists an orthonormal basis $B:=\lrbrace{e_0,e_1,e_2}$, with $e_0$ timelike future directed, into which $F$ is
\begin{equation}\label{canonFdim3}
 F = \begin{pmatrix}
  0 &  0 & -1 + \frac{\cc}{4} \\
  0 & 0 & - 1 - \frac{\cc}{4} \\
  -1 + \frac{\cc}{4} & 1 + \frac{\cc}{4} & 0
  \end{pmatrix},\quad \quad \cc := -\frac{1}{2}\Tr\lr{F^2} \in \mathbb{R}.
\end{equation}
\end{corollary}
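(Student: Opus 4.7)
The plan is to derive the three-dimensional statement as a direct consequence of Proposition \ref{propcanonF4} via an embedding trick. Let $(V,g)$ be Lorentzian of dimension three. I would enlarge it to a four-dimensional Lorentzian vector space $\widetilde{V} := V \oplus \spn{w}$, where $w$ is a formal spacelike unit vector declared orthogonal to $V$. The extension of the metric is unambiguous and Lorentzian, and one extends $F \in \skwend{V}$ to $\widetilde{F} \in \skwend{\widetilde V}$ by setting $\widetilde{F}|_V = F$ and $\widetilde{F}(w) = 0$. Skew-symmetry of $\widetilde F$ follows at once from the skew-symmetry of $F$ and the fact that $w \perp V$ together with $\widetilde F(w) = 0$, so that $\langle \widetilde F(u), w\rangle = 0 = -\langle u, \widetilde F(w)\rangle$ for every $u \in \widetilde V$. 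Since $F \neq 0$, also $\widetilde F \neq 0$.

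Next I would compute the two invariants appearing in Proposition \ref{propcanonF4} for $\widetilde F$. Because $w \in \ker \widetilde F$, $\widetilde F$ is singular and hence $\det \widetilde F = 0$, so $\bb(\widetilde F) = 0$. Moreover, $\widetilde F^2$ preserves the decomposition $V \oplus \spn{w}$ (since $\widetilde F$ does), acts as $F^2$ on $V$, and vanishes on $w$; consequently $\Tr \widetilde F^2 = \Tr F^2$, which gives $\aa(\widetilde F) = -\tfrac{1}{2}\Tr F^2 = \cc$.

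Now I apply Proposition \ref{propcanonF4} to $\widetilde F$. Since $\bb(\widetilde F) = 0$, the proposition yields an orthonormal basis $\{e_0, e_1, e_2, e_3\}$ of $\widetilde V$ with $e_0$ timelike future directed, in which $\widetilde F$ takes the form \eqref{canonFdim4} with $\aa = \cc$ and $\bb = 0$; furthermore, $e_3$ may be chosen to be \emph{any} spacelike unit vector in $\ker \widetilde F$. I choose $e_3 := w$. Then $\{e_0, e_1, e_2\} \subset w^\perp = V$ is an orthonormal basis of $V$ with $e_0$ timelike future directed. Setting $\bb = 0$ in \eqref{canonFdim4} produces a matrix whose fourth row and fourth column both vanish, so the top-left $3\times 3$ block expresses $\widetilde F|_V = F$ in the basis $\{e_0, e_1, e_2\}$; this block is exactly \eqref{canonFdim3} with $\cc = \aa$.

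I do not expect any substantive obstacle: the only point requiring care is the identification of the invariants of $\widetilde F$ with those of $F$ and the ability to prescribe $e_3 = w$, both of which are built into the $\bb = 0$ clause of Proposition \ref{propcanonF4}. Everything else is an immediate translation of the four-dimensional canonical form to its three-dimensional restriction.
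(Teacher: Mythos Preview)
Your proof is correct and is precisely the argument implicit in the paper's presentation: the statement is given as a corollary of Proposition~\ref{propcanonF4}, and the paper (which defers the proof to \cite{marspeon20}) clearly intends the three-dimensional form to follow from the four-dimensional one via exactly this embedding, using the $\bb=0$ clause of Proposition~\ref{propcanonF4} to set $e_3=w$. The only point you leave implicit is that the time orientation on $\widetilde V$ must be chosen compatibly with that of $V$ so that ``$e_0$ future directed in $\widetilde V$'' agrees with ``$e_0$ future directed in $V$''; this is automatic once stated.
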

% %  In the following remark we summarize the main properties of these canonical forms. For further details we refer again to the original work \cite{marspeon20}.
\begin{remark}
\label{remarcanonF4}
 A classification result follows because only two exclusive possibilities arise:
\begin{enumerate}
\item If either $\aa$ or $\bb$ do not vanish, $F$ has a timelike eigenplane and an orthogonal spacelike eigenplane with respective eigenvalues 
 \begin{equation}\label{eqmutmus}
  \mu_t := 
 \sqrt{(-\aa + \rho)/2}\quad \mbox{and}\quad \mu_s :=  \sqrt{(\aa + \rho)/2}\quad \mbox{for}\quad \rho := \sqrt{\aa^2 + \bb^2} \geq 0. 
 \end{equation}
 The inverse relation between $\mu_t, \mu_s$ and $\aa, \bb$ is $\aa = \mu_s^2 - \mu_t^2$ and $\bb = 2 \mu_t  \mu_s $.

 \item Otherwise,  $\aa = \bb = 0$ if and only if $\ker F$ is degenerate two-dimensional. Equivalently, $F$ has null eigenvector orthogonal to a spacelike eigenvector both with vanishing eigenvalue.

\end{enumerate}
One can easily check that when $\bb = 0$, the sign of $\aa$ determines the causal character of $\ker F$, namely $\aa <0$ if $\ker F$ is spacelike, $\aa = 0$ if $\ker F$ is degenerate and $\aa >0$ if $\ker F$ is timelike. Obviously, $\tau \neq 0$ implies $\ker F = \lrbrace{0}$. The characteristic polynomial of $F$ is directly calculated from \eqref{canonFdim4} 
 \begin{equation}\label{charpolF4}
  \mathcal{P}_F(x) = (x^2 -\mu_t^2)(x^2 + \mu_s^2).
 \end{equation}
 \end{remark}

 \begin{remark}\label{remarcanonF3}
%  In the three dimensional case, the basis taking $F$ to the canonical form \eqref{canonFdim3} is not unique, but belongs to a one-parameter family. 
 For a classification result in the three dimensional case, one can see by direct calculation that $q := (1 + \cc/4) e_0 + (1- \cc/4) e_1$ generates $\ker F$ and furthermore $\lrprod{q, q} = -\cc$. Hence, the sign of $\cc$ determines the causal character of $\ker F$, namely it is spacelike if $\cc < 0$, degenerate if $\cc = 0$ and timelike if $\cc > 0$. Moreover, when $\sigma \neq 0$, $F$ has an eigenplane with opposite causal character than $q$ and eigenvalue $\sqrt{|\cc|}$. The characteristic polynomial of $F$ reads
 \begin{equation}\label{charpolF3}
  \mathcal{P}_F(x) = x (x^2 + \cc).
 \end{equation}
\end{remark}

We have now all the necessary ingredients to give a complete classification of skew-symemtric endomorphisms of Lorentzian vector spaces. In what follows we identify Lorentzian (sub)spaces of $d$-dimension with the Minkowski space $\mathbb{M}^{1,d-1}$. Also, for any real number $x \in \mathbb{R}$, $[x]\in \mathbb{Z}$ denotes its integer part.

\begin{theorem}[Classification of skew-symmetric endomorphisms
in Lorentzian spaces]\label{theoremclasif}
 Let $F \in \skwend{V}$ with $V$ Lorentzian of dimension $d>2$. Then $V$ has a set of $[\frac{d-1}{2}]-1$ mutually orthogonal spacelike eigenplanes $\lrbrace{\Pi_i}$,  ${i = 1, \cdots, [\frac{d-1}{2}]-1}$, so that $V$ admits one of the following decompositions into direct sum of $F$-invariant subspaces:
 \begin{enumerate}[a)]
  \item If $d$ even $V = \mathbb{M}^{1,3}\oplus \Pi_{\frac{d-4}{2}} \oplus \cdots \oplus \Pi_1$ and either $F\mid_{\mathbb{M}^{1,3}} = 0$ or otherwise one of the following cases holds:
  \begin{enumerate}
   \item[a.1)]  $F\mid_{\mathbb{M}^{1,3}}$ has a spacelike eigenvector $e$ orthogonal to a null eigenvector with vanishing eigenvalue and then $\mathbb{M}^{1,3} = \mathbb{M}^{1,2} \oplus \spn{e}$.
   \item[a.2)]  $F\mid_{\mathbb{M}^{1,3}}$ has a spacelike eigenplane $\Pi_{\frac{d-2}{2}}$ (as well as a timelike eigenplane $\mathbb{M}^{1,1}$ orthogonal to $\Pi_{\frac{d-2}{2}}$) and then $\mathbb{M}^{1,3} = \mathbb{M}^{1,1} \oplus \Pi_{\frac{d-2}{2}}$.
  \end{enumerate}
   
  \item If $d$ odd $V = \mathbb{M}^{1,2}\oplus \Pi_{\frac{d-3}{2}} \oplus \cdots \oplus \Pi_1$ and either $F\mid_{\mathbb{M}^{1,2}} = 0$ or otherwise one of the following cases holds:
   \begin{enumerate}
   \item[b.1)]  $F\mid_{\mathbb{M}^{1,2}}$ has a spacelike eigenvector $e$ and then $\mathbb{M}^{1,2} = \mathbb{M}^{1,1} \oplus \spn{e}$.
   \item[b.2)]  $F\mid_{\mathbb{M}^{1,2}}$ timelike eigenvector $t$ and then $\mathbb{M}^{1,3} = \spn{t} \oplus \Pi_{\frac{d-1}{2}}$.
   \item[b.3)]  $F\mid_{\mathbb{M}^{1,2}}$ has a null eigenvector with vanishing eigenvalue.
  \end{enumerate}
 \end{enumerate}

\end{theorem}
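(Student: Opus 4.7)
The plan is to proceed by induction on dimension $d$, using Proposition \ref{clasiF} as the engine that peels off one spacelike eigenplane at a time. Concretely, if $d\geq 5$, Proposition \ref{clasiF} gives a spacelike eigenplane $\Pi_1\subset V$. Since $F(\Pi_1)\subset\Pi_1$, Lemma \ref{lemmabasics}(f) implies $F$ also restricts to $\Pi_1^\perp$, which is Lorentzian of dimension $d-2$. If $d-2\geq 5$, apply Proposition \ref{clasiF} again to $F\mid_{\Pi_1^\perp}$ to obtain $\Pi_2\subset\Pi_1^\perp$, and so on. At each stage the new eigenplane is automatically orthogonal to all the previously extracted ones because it lies in their collective orthogonal complement.

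The iteration terminates when the orthogonal complement has dimension $4$ (if $d$ is even) or $3$ (if $d$ is odd); a direct count shows that the number of spacelike eigenplanes produced is $(d-4)/2$ in the even case and $(d-3)/2$ in the odd case, which in both cases equals $[(d-1)/2]-1$, as claimed. This gives the direct-sum decomposition $V=W\oplus\Pi_{[(d-1)/2]-1}\oplus\cdots\oplus\Pi_1$ with $W=\mathbb{M}^{1,3}$ or $W=\mathbb{M}^{1,2}$, and $F\mid_W\in\skwend{W}$.

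It remains to describe $F\mid_W$. In the even case, apply Proposition \ref{propcanonF4} to $F\mid_{\mathbb{M}^{1,3}}$. If $F\mid_{\mathbb{M}^{1,3}}=0$ we are done; otherwise by Remark \ref{remarcanonF4} either $\tau\neq 0$ or $\sigma\neq 0$, in which case $F\mid_{\mathbb{M}^{1,3}}$ has orthogonal timelike and spacelike eigenplanes $\mathbb{M}^{1,1}$ and $\Pi_{(d-2)/2}$ giving (a.2); or $\tau=\sigma=0$ with $F\neq 0$, in which case $\ker F\mid_{\mathbb{M}^{1,3}}$ is degenerate two-dimensional, containing a spacelike eigenvector $e$ orthogonal to a null eigenvector, and then $\mathbb{M}^{1,3}=\mathbb{M}^{1,2}\oplus\spn{e}$ with $\mathbb{M}^{1,2}:=e^\perp\cap\mathbb{M}^{1,3}$ (invariant under $F$ by Lemma \ref{lemmabasics}(f)), giving (a.1). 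In the odd case, apply Corollary \ref{propcanonF3} and Remark \ref{remarcanonF3} to $F\mid_{\mathbb{M}^{1,2}}$: the sign of $\sigma$ determines the causal character of $\ker F\mid_{\mathbb{M}^{1,2}}$, yielding (b.1) when $\ker F$ is spacelike (extract the spacelike eigenvector $e$, so the timelike eigenplane is $e^\perp\cap \mathbb{M}^{1,2}=\mathbb{M}^{1,1}$), (b.2) when $\ker F$ is timelike (extract the timelike eigenvector $t$ and combine the remaining two orthogonal spacelike directions into a spacelike eigenplane, which together with the previously extracted $\Pi_i$'s gives the full count), and (b.3) when $\ker F$ is degenerate, providing the null eigenvector.

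The only real work is the inductive descent, which is a straightforward application of Proposition \ref{clasiF} combined with item (f) of Lemma \ref{lemmabasics}; the base cases are verbatim restatements of Propositions/Remarks already cited. The mildly subtle point to double-check is that in case (a.1) the two-dimensional degenerate kernel of $F\mid_{\mathbb{M}^{1,3}}$ really does split off a spacelike factor whose orthogonal complement is Lorentzian three-dimensional and $F$-invariant; this follows because the spacelike eigenvector $e\in\ker F$ is in particular orthogonal to $\mathrm{Im}\,F$ by Lemma \ref{lemmabasics}(e), so $F$ restricts to $e^\perp\cap\mathbb{M}^{1,3}$, giving the claimed $\mathbb{M}^{1,2}$ summand.
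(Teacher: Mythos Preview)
Your proof is correct and follows essentially the same approach as the paper: iteratively apply Proposition~\ref{clasiF} to peel off spacelike eigenplanes until the Lorentzian complement has dimension three or four, then invoke Remarks~\ref{remarcanonF4} and~\ref{remarcanonF3} to classify the residual block. The only difference is that you spell out a bit more explicitly why the $F$-invariance of $e^\perp\cap\mathbb{M}^{1,3}$ holds in case (a.1), which the paper leaves implicit.
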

\begin{proof}
 The proof is a simple combination of the previous results. First, if $d \geq 5$, we can apply Proposition \ref{clasiF} to obtain the first spacelike eigenplane $\Pi_1$. Then $\Pi_1^{\perp}$ is Lorentzian of dimension $d-2$. If $d-2 \geq 5$, we can apply again Proposition \ref{clasiF} to obtain a second eigenplane $\Pi_2$. Continuing with this process, depending on $d$, two things can happen:
 
 \begin{enumerate}[a)]
  \item If $d$ even, we get $\frac{d-4}{2} \lr{= [\frac{d-1}{2}]-1}$ spacelike eigenplanes, until we eventually reach a Lorentzian vector subspace of dimension four, $\mathbb{M}^{1,3}$, where Proposition \ref{clasiF} cannot be applied. In $\mathbb{M}^{1,3}$, either $F\mid_{\mathbb{M}^{1,3}} = 0$ or otherwise  cases $a.1)$ and $a.2)$ follow from Remark \ref{remarcanonF4}, cases $2$ and $1$ respectively.
  
   \item If $d$ odd, we get $\frac{d-3}{2}  \lr{ = [\frac{d-1}{2}]-1}$ spacelike eigenplanes, until we reach a Lorentzian vector subspace of dimension three, $\mathbb{M}^{1,2}$. In $\mathbb{M}^{1,2}$, either $F\mid_{\mathbb{M}^{1,2}}=0$ or by Remark \ref{remarcanonF3}  there exists a unique eigenvector $\sigma$ with vanishing eigenvalue. If $\sigma$ null, case $b.3)$ follows. If it is spacelike $e:=\sigma$, $F$ restricts to $e^\perp = \mathbb{M}^{1,1}\subset \mathbb{M}^{1,2}$ and $b.1)$ follows. If $\sigma$ timelike, the same argument applies with $t:= \sigma$ and $t^\perp  \subset \mathbb{M}^{1,2}$ defines the remaining spacelike plane $\Pi_{\frac{d-1}{2}}$.
 \end{enumerate}

\end{proof}

\section{Canonical form for skew-symmetric endomorphisms}\label{seccanonform}

 Our aim here is to extend the results in Proposition \ref{propcanonF4} and Corollary \ref{propcanonF3} to arbitrary dimensions. To do that, we will employ the classification Theorem \ref{theoremclasif} derived in Section \ref{appclassification}, from which it immediately follows a decomposition of any $F \in \skwend{V}$ into direct sum of skew-symmetric endomorphisms of the subspaces that $F$ restricts to, namely
\begin{align}
 F & = \restr{F}{\mathbb{M}^{1,3}} \bigoplus_{i = 1}^{[\frac{d-1}{2}]-1} \restr{F}{\Pi_i}\quad\quad \mbox{if $d$ even}, \label{decompFeven} \\ 
 F & = \restr{F}{\mathbb{M}^{1,2}} \bigoplus_{i = 1}^{[\frac{d-1}{2}]-1} \restr{F}{\Pi_i}\quad\quad\mbox{if $d$ odd}\label{decompFodd},
\end{align}
where $\Pi_i$ are spacelike eigenplanes.
% The blocks $\restr{F}{\mathbb{M}^{1,3}}$ and $\restr{F}{\mathbb{M}^{1,2}}$ may also admit different subdecompositions, but our aim is to give a unique canonical form, so we leave them unaltered for the moment. 
In what follows, we will denote 
\begin{equation}
 p:=[(d-1)/2] -1.
\end{equation}
 Notice that the blocks $\restr{F}{\mathbb{M}^{1,3}}$ and $\restr{F}{\mathbb{M}^{1,2}}$ may also admit different subdecompositions depending on the case, but our purpose is to remain as general as possible, so we leave this part unaltered.
It will be convenient for the rest of the paper to give a name to the decompositions \eqref{decompFeven} and \eqref{decompFodd}: 
\begin{definition}
 Let $F\in \skwend{V}$ non-zero for $V$ Lorentzian $d$-dimensional. Then, a decomposition of the form \eqref{decompFeven} or \eqref{decompFodd} is called {\bf block form} of $F$. A basis that realizes a block form is called {\bf block form basis}.
\end{definition}

Writing $F$ in block form form allows us to work with $F$ as a sum of  skew-symmetric endomorphisms of riemmanian two-planes plus one  skew-symmetric endomorphism of a three or four dimensional Lorentzian vector space. For the latter we will employ the canonical forms in Proposition \ref{propcanonF4} and Corollary \ref{propcanonF3}, and for the former, it is immediate that in every (suitably oriented) orthonormal basis of $\Pi_i$ 
 \begin{equation}\label{canonFdim2}
  F\mid_{\Pi_i}=\begin{pmatrix}
   0 & -\mu_i  \\
   \mu_i & 0 
  \end{pmatrix},\quad\quad 0 \leq \mu_i \in \mathbb{R}. 
 \end{equation}
%  where recall, the sign of $\mu_i$ can be chosen positive by suitably orientating the basis $\Pi_i$.\carnote{Reitero esto porque si la la ecuacion (18) generase dudas} 
 
 Having defined a canonical form for four, three and two dimensional endomorphisms (i.e. matrices \eqref{canonFdim4},  \eqref{canonFdim3} and \eqref{canonFdim2} respectivley),  the idea is to extend this result to arbitrary dimensions finding a systematic way to construct a block form \eqref{decompFeven}, \eqref{decompFodd} such that each of the blocks are in canonical form. This is not immediate, firstly, because the block form does not require the blocks $\restr{F}{\mathbb{M}^{1,3}}$ or $\restr{F}{\mathbb{M}^{1,2}}$ to be non-zero and secondly, because, unlike in the four and three dimensional cases, the parameters $\aa, \bb$ of the four and three dimensional blocks cannot be invariantly defined as, for example,  traces of $F^2$ or determinant of $F$. The first of these concerns is easily solved by suitably choosing a block form: 
\begin{lemma}\label{lemmaprecan} 
 Let $F \in \skwend{V}$ be non-zero for $V$ Lorentzian of dimension $d$. Then there exists a block form \eqref{decompFeven} and  \eqref{decompFodd} such that $\restr{F}{\mathbb{M}^{1,3}}$ and $\restr{F}{\mathbb{M}^{1,2}}$ are non-zero and they either contain no spacelike eigenplanes or they contain one with largest eigenvalue (among all spacelike eigenplanes of $F$). In addition, the rest of spacelike eigenplanes $\Pi_i$ are sorted by decreasing value of $\mu_i^2$, i.e. $\mu_1^2 \geq \mu_2^2 \geq \cdots \geq \mu_{p}^2$.
\end{lemma}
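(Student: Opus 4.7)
My plan is to begin with the decomposition produced by Theorem \ref{theoremclasif}, $V = \mathbb{M}_0 \oplus \Pi_p \oplus \cdots \oplus \Pi_1$, with $\mathbb{M}_0 := \mathbb{M}^{1,3}$ for $d$ even and $\mathbb{M}_0 := \mathbb{M}^{1,2}$ for $d$ odd, and then to modify it in at most three successive steps so as to meet all three conditions of the lemma.

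The first step takes care of the case $\restr{F}{\mathbb{M}_0} = 0$, i.e.\ $\mathbb{M}_0 \subset \ker F$. Since $F$ is non-zero, some $\Pi_{i_0}$ must satisfy $\mu_{i_0} > 0$; I pick one that maximises $\mu_i^2$. Inside $\mathbb{M}_0$ I choose any timelike subspace $T$ of dimension $2$ (if $d$ is even) or $1$ (if $d$ is odd) and let $S := T^\perp \cap \mathbb{M}_0$, which is a spacelike $2$-plane. Replacing $\mathbb{M}_0$ by $\mathbb{M}_0' := T \oplus \Pi_{i_0}$ and the spacelike family by $\lrbrace{S, \Pi_1, \ldots, \widehat{\Pi}_{i_0}, \ldots, \Pi_p}$ still produces an orthogonal direct-sum decomposition of $V$ into $F$-invariant subspaces (both $T \subset \ker F$ and $\Pi_{i_0}$ are $F$-invariant), while $\restr{F}{\mathbb{M}_0'}$ is non-zero thanks to $\mu_{i_0}$. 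Applying Proposition \ref{propcanonF4} or Corollary \ref{propcanonF3} to $\mathbb{M}_0'$, one verifies it falls into case a.2 or b.2 of Theorem \ref{theoremclasif}, so the new block already contains by construction the spacelike eigenplane of maximal $\mu^2$.

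The second step handles the complementary case $\restr{F}{\mathbb{M}_0} \neq 0$. If the block is of type a.1, b.1 or b.3 of Theorem \ref{theoremclasif}, then it contains no spacelike eigenplane and the first alternative of the lemma holds automatically. If instead the block contains a spacelike eigenplane $\Pi'$ of eigenvalue $\mu'$ (cases a.2 or b.2) and some external $\Pi_{i_0}$ satisfies $\mu_{i_0}^2 > {\mu'}^2$, I swap them: the new block $(\mathbb{M}_0 \ominus \Pi') \oplus \Pi_{i_0}$ is $F$-invariant, since $\mathbb{M}_0 \ominus \Pi'$ is $F$-invariant by Lemma \ref{lemmabasics} f) and $\Pi_{i_0}$ is orthogonal to $\mathbb{M}_0$ (hence to $\mathbb{M}_0 \ominus \Pi'$); Proposition \ref{propcanonF4} or Corollary \ref{propcanonF3} again show that the new block is of type a.2 or b.2 with spacelike eigenvalue now $\mu_{i_0}$. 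Choosing $\Pi_{i_0}$ as the global maximiser of $\mu_i^2$ from the outset closes the process in a single swap.

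The third and final step is a mere relabelling of the spacelike eigenplanes lying outside the block in decreasing order of $\mu_i^2$. The main potential obstacle is the verification, in the second step, that the swapped $4$- or $3$-dimensional subspace is a valid Lorentzian block of the form required by Theorem \ref{theoremclasif}; but this reduces entirely to checking $F$-invariance and mutual orthogonality, both of which are inherited from the original decomposition via Lemma \ref{lemmabasics} f).
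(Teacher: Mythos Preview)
Your proof is correct and follows essentially the same approach as the paper: both start from the block decomposition of Theorem \ref{theoremclasif} and perform a single swap of spacelike eigenplanes to place the one of maximal eigenvalue inside the Lorentzian block. The only difference is organizational: you split into the cases $\restr{F}{\mathbb{M}_0}=0$ versus $\restr{F}{\mathbb{M}_0}\neq 0$ and then by the subcases a.1--b.3, whereas the paper splits by the causal character of $\ker F$ (degenerate versus non-degenerate) and absorbs your Step 1 into the same swap argument used in Step 2.
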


% {\color{red}OJO CON LA REDEFINICION DE p Y EL USO DE TEOREMAS SIN REFERENCIAR BIEN.}

\begin{proof}
 If $\ker F$ is degenerate, it must correspond with cases $a.1)$ ($d$ even) or $b.3)$ ($d$ odd) of Theorem \ref{theoremclasif}. Hence, in any block form the blocks $\restr{F}{\mathbb{M}^{1,3}}$ and $\restr{F}{\mathbb{M}^{1,2}}$ are non-zero and they do not contain any spacelike eigenplane, as claimed in the lemma.  So let us assume that $\ker F$ is non-degenerate or zero, which discards cases $a.1)$ and $b.3)$ of Theorem \ref{theoremclasif}. In all possible cases, any block form admits the following splitting in
\begin{equation}\label{auxdecom} 
 \restr{F}{\mathbb{M}^{1,3}} = \restr{F}{\Pi_t} \oplus \restr{F}{\Pi_s}, \quad \quad \restr{F}{\mathbb{M}^{1,2}} = \restr{F}{\spn{v}} \oplus \restr{F}{v^\perp},
\end{equation}
with $\Pi_s, \Pi_t$ spacelike and timelike eigenplanes with (possibly zero) respective eigenvalues $\mu_s$ and $\mu_t$, $v$ a timelike or spacelike eigenvector (in $\ker F$) and $v^\perp \subset \mathbb{M}^{1,2}$ an eigenplane with opposite causal character than $v$. If $v$ is spacelike, then either $\restr{F}{v^\perp}$ is non-zero, in which case $\restr{F}{\mathbb{M}^{1,2}} \neq 0$ and clearly contains no spacelike eigenplanes (which is one of the possibilities in the lemma), or $\restr{F}{v^\perp}=0$ and then $\restr{F}{\mathbb{M}^{1,2}} = 0$, so we can rearrange the decomposition \eqref{auxdecom} using some timelike vector $v' \in v^\perp$ instead of $v$, i.e. $\restr{F}{\mathbb{M}^{1,2}}  = \restr{F}{\spn{v'}}\oplus \restr{F}{v'^\perp} $ . Hence, in the case of $d$ odd, we may assume that $v$ is timelike and $v^\perp \subset \mathbb{M}^{1,2}$ is a spacelike eigenplane. Let $\Pi_{\mu}$ be a spacelike eigenplane of $F$ with largest eigenvalue $\mu$ among $\Pi_s$ ($d$ even) or $v^\perp$ ($d$ odd) and $\Pi_{1},\cdots, \Pi_{p}$. Then, switching $\restr{F}{\Pi_s}$ or $\restr{F}{\Pi_{v^\perp}}$ by $\restr{F}{\Pi_{\mu}}$  we  construct
\begin{equation}
 \hat F\mid_{\mathbb{M}^{1,3}} : = \restr{F}{\Pi_t} \oplus \restr{F}{\Pi_{\mu}}, 
 \quad \quad 
 \hat F\mid_{\mathbb{M}^{1,2}}  : = \restr{F}{\spn{v}} \oplus \restr{F}{\Pi_{\mu}}.
\end{equation}
 The resulting matrix is still in block form and has non-zero blocks $\restr{\hat F}{\mathbb{M}^{1,3}}$, $\restr{\hat F}{\mathbb{M}^{1,2}}$ containing a spacelike eigenplane with largest eigenvalue, which is the other possibility in the lemma.
 The last claim follows by simply rearranging the remaining spacelike eigenplanes $\Pi_i$ by decreasing order of $\mu_i^2$.
\end{proof}

With a skew-symmetric endomorphism $F$ in the block form given in Lemma \ref{lemmaprecan} we can take each of the blocks to its respective canonical form. Let us denote  $F_{\aa \bb} := \restr{F}{\mathbb{M}^{1,3}}$ (if $d$ even), $F_{\cc} := \restr{F}{\mathbb{M}^{1,2}}$ (if $d$ odd) and $F_{\mu_i} :=\restr{F}{\Pi_i}$ when written in the canonical forms \eqref{canonFdim4}, \eqref{canonFdim3} and \eqref{canonFdim2} respectively. Consequently
\begin{equation}\label{eqFgammamu}
  F = F_{\aa \bb} \bigoplus_{i = 1}^{p} F_{\mu_i}\quad \mbox{(d even)}, \quad \quad F = F_\cc \bigoplus_{i = 1}^{p} F_{\mu_i}\quad \mbox{(d odd)},
 \end{equation}
where, notice, each of the blocks is written in an orthonormal basis of the corresponding subspace, which moreover is future directed if the subspace is Lorentzian, i.e. $\mathbb{M}^{1,3}$ or $\mathbb{M}^{1,2}$ (c.f. Proposition \ref{propcanonF4} and Corollary \ref{propcanonF3}). Hence, the form given in \eqref{eqFgammamu} corresponds to a future directed, orthonormal basis of $\mathbb{M}^{1,d-1}$. 

Our aim now is to give an invariant definition of $\sigma, \tau, \mu_i$. A possible way to do this is through the eigenvalues of $F^2$. 
One may wonder why not to use directly the eigenvalues of $F$. One reason is that since we are interested in real Lorentzian vector spaces $V$ (although, for practical reasons, we may rely on the complexification $V_\mathbb{C}$ for some proofs), it is more consistent to give our canonical form in terms of real quantities, while the eigenvalues of $F$ may be complex. In addition, the canonical form
will require to sort them in some way, for which using real numbers is better
suited.

The characteristic polynomial of $F$ is known (e.g. \cite{Kdslike}) to possess the following parity:
\begin{equation}\label{paritycharpol}
 \mathcal{P}_F(x) = (-1)^d \mathcal{P}_F(-x).
\end{equation}
 Thus, a simple calculation relates the characteristic polynomials of $F$ and $F^2$
\begin{equation}\label{charpols}
 \begin{split}
  \mathcal{P}_{F^2}(x) & = \det (x Id_d - F^2) = \det \lr{ \sqrt{x} Id_d - F}  \det \lr{ \sqrt{x} Id_d + F} \\ 
  & =  (-1)^d \mathcal{P}_{F}(\sqrt{x}) \mathcal{P}_F(-\sqrt{x}) = \lr{ \mathcal{P}_{F}(\sqrt{x})}^2,
 \end{split}
\end{equation}
 $\sqrt{x}$ being any of the square roots of $x$ in $\mathbb{C}$ and $Id_d$ the $d \times d$ identity matrix.  We can extract some conclusions from \eqref{charpols}:

\begin{lemma} \label{remarkmultiplic}
Let $F \in \skwend{V}$ for $V$ Lorentzian of dimension $d$. Then the non-zero eigenvalues of $F^2$ have even multiplicity $m_a$ and the zero eigenvalue has multilplicity $m_0$ with the parity of $d$. In addition, $F$ possesses $p_a$ (resp. exaclty one) spacelike (resp. timelike) eigenplanes with eigenvalue $\mu \neq 0$ if and only if $F^2$ has a negative (resp. positive) non-zero eigenvalue $-\mu^2$ (resp. $\mu^2$) with multiplicity $m_a := 2p_a$ (resp. exactly two).
\end{lemma}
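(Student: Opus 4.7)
The plan is to exploit the block decomposition \eqref{eqFgammamu} directly. Since every summand is $F$-invariant, the characteristic polynomial factors as $\mathcal{P}_F=\mathcal{P}_{F_{\sigma\tau}}\prod_{i=1}^{p}\mathcal{P}_{F_{\mu_i}}$ for $d$ even and $\mathcal{P}_F=\mathcal{P}_{F_{\sigma}}\prod_{i=1}^{p}\mathcal{P}_{F_{\mu_i}}$ for $d$ odd, where the individual factors, read off from \eqref{canonFdim2}, \eqref{charpolF4} and \eqref{charpolF3}, are $x^2+\mu_i^2$, $(x^2-\mu_t^2)(x^2+\mu_s^2)$ and $x(x^2+\sigma)$ respectively. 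All subsequent counting reduces to inspecting these factors.

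The first assertion I would obtain directly from the parity \eqref{paritycharpol}: it implies $\mathcal{P}_F(y)=y^{d\bmod 2}Q(y^2)$ for some real polynomial $Q$, and substituting into \eqref{charpols} yields $\mathcal{P}_{F^2}(x)=x^{d\bmod 2}Q(x)^2$. Non-zero roots of the right-hand side visibly have even multiplicity $m_a$, while the multiplicity $m_0$ of $x=0$ equals $d\bmod 2$ plus twice the multiplicity of $0$ as a root of $Q$, and so has the parity of $d$.

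For the eigenplane characterisation, I would proceed block-wise. Each $F_{\mu_i}$ contributes one spacelike eigenplane with eigenvalue $\mu_i$ and the factor $x^2+\mu_i^2$ to $\mathcal{P}_F$; each non-zero $F_{\sigma\tau}$ contributes one spacelike eigenplane (eigenvalue $\mu_s$) together with one timelike eigenplane (eigenvalue $\mu_t$) when $\mu_t\neq 0$, giving the factor $(x^2-\mu_t^2)(x^2+\mu_s^2)$; and $F_\sigma$ contributes one spacelike eigenplane when $\sigma>0$ (factor $x^2+\sigma$) or one timelike eigenplane when $\sigma<0$ (factor $x^2-|\sigma|$). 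Conversely, Lemma \ref{lemmasplkeigenplane} identifies any spacelike eigenplane of eigenvalue $\mu\neq 0$ with a factor $x^2+\mu^2$ in $\mathcal{P}_F$, and Lemma \ref{lemmatmlkeigenplane} does the analogous identification for timelike eigenplanes and factors $x^2-\mu^2$. Aggregating, $p_a$ spacelike eigenplanes sharing a common $\mu\neq 0$ contribute $(x^2+\mu^2)^{p_a}$ to $\mathcal{P}_F$ and hence $(x+\mu^2)^{2p_a}$ to $\mathcal{P}_{F^2}$, which gives $m_a=2p_a$.

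For the timelike case, the key additional ingredient is that $V$ can support at most one timelike eigenplane: the block form confines all timelike directions to the single Lorentzian summand $\mathbb{M}^{1,3}$ or $\mathbb{M}^{1,2}$, and the corresponding characteristic polynomials contain at most one factor of the form $x^2-\mu_t^2$ with $\mu_t\neq 0$. The step I expect to require most care is the bijective identification between pairs $\{\pm i\mu\}$ (or $\{\pm\mu\}$) of complex eigenvalues of $F$ and eigenplanes, for which one must check that $F$ has no non-trivial Jordan blocks at the non-zero eigenvalues, so that algebraic multiplicity in $F^2$ coincides with twice the count of eigenplanes. This is immediate from the explicit canonical forms: $F_{\mu_i}$ is already diagonalisable over $\mathbb{C}$, and the non-trivial parts of $F_{\sigma\tau}$ and $F_\sigma$ are diagonalisable on their respective invariant eigenplanes.
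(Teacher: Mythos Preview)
Your proof is correct but follows a genuinely different route from the paper. The paper proves this lemma without invoking the block form \eqref{eqFgammamu}: after getting the multiplicity statement from \eqref{charpols} just as you do, it argues inductively---given a negative eigenvalue $-\mu^2$ of $F^2$, Lemma~\ref{lemmasplkeigenplane} produces one spacelike eigenplane $\Pi$; restricting $F$ to $\Pi^\perp$ drops the multiplicity by two, and iterating yields $p_a = m_a/2$ eigenplanes. The timelike case is handled the same way via Lemma~\ref{lemmatmlkeigenplane}, with the signature forcing at most one timelike plane.

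Your approach instead exploits the already-established decomposition \eqref{eqFgammamu} to read off both $\mathcal{P}_F$ and the full eigenplane structure simultaneously, with the diagonalisability observation (no Jordan blocks at non-zero eigenvalues, since each $F_{\mu_i}$, $F_{\sigma\tau}$, $F_\sigma$ is semisimple away from zero) ensuring algebraic and geometric multiplicities match. This is not circular: \eqref{eqFgammamu} depends only on Theorem~\ref{theoremclasif} and Lemma~\ref{lemmaprecan}, neither of which uses Lemma~\ref{remarkmultiplic}. What you gain is directness---once the block form is in hand the lemma is almost a tautology. What the paper's route buys is independence from the heavier classification machinery: its argument would stand even if placed before Section~\ref{seccanonform}, and it makes transparent that the eigenplane count is governed purely by Lemmas~\ref{lemmasplkeigenplane}--\ref{lemmatmlkeigenplane} and the relation \eqref{charpols}.
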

\begin{proof}
 It is an immediate consequence of equation \eqref{charpols} that non-zero eigenvalues of $F^2$ must have even multiplicity $m_a$. Moreover, since the sum of all multiplicites adds up to the dimension $d$, the multiplicity of the zero $m_0$ has the parity of $d$. 
 
 Combining Lemma \ref{lemmasplkeigenplane} and equation \eqref{charpols}, $F$ has a spacelike eigenplane $\Pi$ with non-zero eigenvalue $ \mu$ if and only if $F^2$ has a negative double\footnote{We adopt the convention that a root with multiplicity $m\geq 2$ is also double} eigenvalue $-\mu^{2}$.  If $d \leq 4$, there cannot be any other spacelike eigenplanes in $\Pi^\perp$, so applying the same argument to $\restr{F}{\Pi^\perp} \in \skwend{\Pi^\perp}$, the multiplicity $m_a$ of $-\mu^2$ must be $m_a = 2$. If $d >4$ and $m_a \geq 4$, then $-\mu^2$ is an eigenvalue of $(\restr{F}{\Pi^\perp})^2$ with multiplicity $m_a - 2$, thus $F$ has a second spacelike eigenplane with eigenvalue $\mu$ in $\Pi^\perp$. Repeating this argument, $F^2$ has a negative eigenvalue $-\mu^2$ with multilplicity $m_a$ if and only if $F$ has $p_a = m_a/2$ spacelike eigenplanes with eigenvalue $\mu$.
 
 Finally, by Lemma \ref{lemmatmlkeigenplane} and equation \eqref{charpols}, $F$ has a timelike eigenplane $\Pi$ with non-zero eigenvalue $ \mu$ if and only if $F^2$ has a positive double eigenvalue $\mu^{2}$. Obviously, the maximum number of timelike eigenplanes that $F$ can have is one. Thus, $\restr{F}{\Pi^\perp}$ cannot have timelike eigenplanes and hence $(\restr{F}{\Pi^\perp})^2$ has no additional positive eigenvalues. Consequently, the multiplicity of $\mu^2$ is exactly two.

\end{proof}

%  First, it is obvious that non-zero eigenvalues of $F^2$ must have even multiplicity $m_a$. Since the sum of all multiplicites adds up to the dimension $d$, the multiplicity of the zero $m_0$ has the parity of $d$.
%  Moreover, from Lemma \ref{lemmasplkeigenplane} (\ref{lemmatmlkeigenplane}), it follows that if $F$ has a spacelike (timelike) eigenplane with non-zero eigenvalue if and only if $F^2$ has a negative (positive) double eigenvalue.  For this reason we will employ the eigenvalues of $-F^2$ rather than those of $F$, which amounts to employ the roots of the characteristic polynomial $\mathcal{P}_{F^2}(-x)$. 
  
  Taking into account Lemma \ref{remarkmultiplic}, we will employ the eigenvalues of $-F^2$ rather than those of $F^2$, so we assign positive eigenvalues of $F^2$ with spacelike eigenplanes and negative eigenvalues to timelike eigenplanes. This amounts to employ the roots of the characteristic polynomial $\mathcal{P}_{F^2}(-x)$. 
  
 We now discuss how to invariantly define the parameters $\aa, \bb, \mu_i$ for $d$ even and $\cc, \mu_i$ for $d$ odd. The result of the argument is formalized below in Definition \ref{defgammamu}. Recall that the characteristic polynomial of a direct sum of two or more endomorphisms is the product of their individual characteristic polynomials, in particular, the characteristic polynomial of $-F^2$ equals to the product of the characteristic polynomials of $-F^2_{\aa\bb}$ or $-F_\cc^2$ times those of each $-F_{\mu_i}^2$ (c.f. equation \eqref{eqFgammamu}).  
 Let us define:
\begin{equation}\label{defQF2}
\mathcal{Q}_{F^2}(x) := \lr{\mathcal{P}_{F^2}(-x)}^{1/2} \quad \mbox{(d even)}, \quad\quad \mathcal{Q}_{F^2}(x) := \lr{\frac{\mathcal{P}_{F^2}(-x)}{x}}^{1/2} \quad \mbox{(d odd)},
\end{equation}
 Starting with $d$ even, from formula \eqref{eqFgammamu} it is immediate that $\mu_i^2$ are double roots of $\mathcal{P}_{F^2}(-x^2)$, which by Lemma \ref{lemmaprecan} satisfy $\mu_1^2 \geq \cdots \geq \mu_{p}^2 \geq 0$. On the other hand, let  $\mu_t := \sqrt{(-\aa + \rho)/2}$ and $i \mu_s :=  i \sqrt{(\aa + \rho)/2}$ with $\rho := \sqrt{\aa^2 + \bb^2} \geq 0$, that by Remark \ref{remarcanonF4}, are roots of $\mathcal P_{F_{\aa\bb}}(x)$, thus roots of $\mathcal P_{F}(x)$ . By equation \eqref{charpols}, $-\mu_t^2, \mu_s^2$ are double roots of $\mathcal{P}_{F^2}(-x)$. The set $\lrbrace{-\mu_t^2,\mu_s^2, \mu_1^2, \cdots, \mu_{p}^2}$ are in total $p + 2 = [(d-1)/2]+1 = d/2$ elements, each of which is a double root of $\mathcal{P}_{F^2}(-x)$. In other words,  $\lrbrace{-\mu_t^2,\mu_s^2, \mu_1^2, \cdots, \mu_{p}^2}$ is the set of all roots of the polynomial\footnote{$\mathcal{Q}_{F^2}(x)$ is a polynomial because all the roots of $\mathcal{P}_{F^2}(-x)$ are double.} $\mathcal{Q}_{F^2}(x)$. If $\ker F$ is degenerate, then $\ker F_{\aa \bb}$ is degenerate and by Remark \ref{remarcanonF4} it must happen $\mu_t = \mu_s = 0$. Hence $\mu_1^2 \geq \mu_2^2 \geq \cdots \mu_{p}^2 \geq \mu_s^2 = - \mu_t^2 =0$. Otherwise, also by Remark \ref{remarcanonF4}, $F_{\aa \bb}$ contains a spacelike eigenplane with eigenvalue $\mu_s$ (which by Lemma \ref{lemmaprecan} is the largest) as well as a timelike eigenplane with eigenvalue $\mu_t$. In this case $\mu_s^2 \geq \mu_1^2 \geq \cdots \mu_{p}^2 \geq 0 \geq -\mu_t^2$.
%  Having properly defined $\lrbrace{-\mu_t^2,\mu_s^2, \mu_1^2, \cdots, \mu_{p}^2}$, $\aa, \bb$ are given by $\aa = \mu_s^2 - \mu_t^2$ and $\bb = 2 \mu_t \mu_s $ and $\mu_i$ are the positive roots of $\mu_i^2$.

We next discuss $\sigma, \mu_i$ for $d$ odd. Again, from \eqref{eqFgammamu} we have that $\mu_i^2$ are double roots of $\mathcal{P}_{F^2}(-x^2)$, which by Lemma \ref{lemmaprecan} also satisfy $\mu_1^2 \geq \cdots \geq \mu_{p}^2 \geq 0$. By Remark \ref{remarcanonF3}, $\sqrt{\cc}$ is a root of $P_{F_\cc}(x)$, thus a root of $P_{F}(x)$, so by formula \eqref{charpols}, $\cc$ is a double root of $\mathcal{P}_{F^2}(-x)$. Also, $\mathcal{P}_{F^2}(-x)$ has at least one zero root and hence, $\mathcal{P}_{F^2}(-x)/x$ is a polynomial with $d-1$ roots (counting multiplicity). Then, the set $\lrbrace{\sigma, \mu_1^2, \cdots, \mu_{p}^2}$ are all double roots of $\mathcal{P}_{F^2}(-x)/x$, which are $p +1 = [(d-1)/2] = (d-1)/2$ elements. Therefore $Q_{F^2}$ as defined in \eqref{defQF2} is also a polynomial and
  $\lrbrace{\sigma, \mu_1^2, \cdots, \mu_{p}^2}$ is the set of all its roots. If $\ker F$ is timelike, then $\ker F_\cc$ is timelike, which happens if and only if $\cc > 0$ (c.f. Remark \ref{remarcanonF3}) and also $F_\cc$ has a spacelike eigenplane with eigenvalue $\sqrt{|\cc|}$, that by Lemma \ref{lemmaprecan} is the largest eigenvalue among spacelike eigenplanes. Thus $\sigma \geq \mu_1^2 \geq \cdots \geq \mu_{p}^2$. In the case $\ker F$ not timelike, the inequalities become $\mu_1^2 \geq \cdots \geq \mu_{p}^2 \geq 0 \geq \cc$.

 {Summarizing}, the paramaters $\aa, \bb, \mu_i$ correspond {to} the set of all roots of $\mathcal{Q}_{F^2}$ sorted in a certain order fully determined by the causal character of $\ker F$. This allows us to put forward the following definition:

\begin{definition}\label{defgammamu}
 Let $\mathrm{Roots}\lr{\mathcal{Q}_{F^2}}$ denote the set of roots of $\mathcal{Q}_{F^2}(x)$ repeated as many times as their multiplicity. Then
 \begin{enumerate}
  \item[a)] If $d$ odd, $\lrbrace{\cc; \mu_1^2, \cdots, \mu_{p}^2} := \mathrm{Roots}\lr{\mathcal{Q}_{F^2}}$ sorted by $\cc \geq \mu_1^2\geq \cdots \geq \mu_{p}^2$ if $\ker F$ is timelike and $\mu_1^2\geq \cdots \geq \mu_{p}^2\geq 0 \geq \cc$ otherwise. 
  
  \item[b)] If $d$ even, $\aa := \mu_s^2 - \mu_t^2,~\bb :=2 |\mu_t \mu_s|$ with $\lrbrace{-\mu_t^2, \mu_s^2; \mu_1^2, \cdots, \mu_{p}^2} := \mathrm{Roots}\lr{\mathcal{Q}_{F^2}}$ sorted by $\mu_1^2\geq \cdots \geq \mu_{p}^2\geq\mu_s^2 = -\mu_t^2 = 0$ if $\ker F$ is degenerate and $\mu_s^2 \geq \mu_1^2\geq \cdots \geq \mu_{p}^2\geq 0 \geq  -\mu_t^2$ otherwise.  
  
 \end{enumerate}

\end{definition}

In addition, we also summarize the results concerning the canonical form in the following Theorem:

\begin{theorem}\label{theocanonicalF}
 Let $F \in \skwend{V}$ non-zero, with $V$ Lorentzian of dimension $d \geq 3$ and $p := [(d-1)/2]-1$. Then there exists an orthonormal, future oriented basis such that $F$ is given \eqref{eqFgammamu} where $F_{\aa\bb} := F\mid_{\mathbb{M}^{1,3}}$, $F_{\cc} := F\mid_{\mathbb{M}^{1,2}}$, $F_{\mu_i}:=F\mid_{\Pi_i}$ are given by \eqref{canonFdim4}, \eqref{canonFdim3}, \eqref{canonFdim2} respectively and $\sigma, \tau, \mu_i$ are given in Definition \ref{defgammamu}. In particular, $F_{\aa\bb} $, $F_{\cc}$ are non-zero and they either do not contain a spacelike eigenplane or they contain one with maximal eigenvalue (among all spacelike eigenplanes of $F$) and the eigenvalues $\mu_i$ are sorted by $\mu_1^2 \geq \mu_2^2 \geq \cdots \mu_{p}^2$.
 
%  are non-zero and they either do not contain a spacelike eigenplane or contain one of maximal eigenvalue and , for $\Pi_i$ spacelike eigenplanes of eigenvalue $\mu_i$ sorted by $\mu_1^2 \geq \mu_2^2 \geq \cdots$
%  \begin{enumerate}
%   \item[a)] If $d$ even, $F_{\aa \bb}$ and $ F_{\mu_i},~i=1,\cdots p-2$, are given by equations \eqref{canonFdim4} and \eqref{canonFdim2} respectively, with $\aa,\bb, \mu_i$ given by Definition \ref{defgammamu}. 
%   
%    \item[b)] If $d$ odd, $F_\cc$ and $F_{\mu_i},~i=1,\cdots p-1$, are given by equations \eqref{canonFdim3} and \eqref{canonFdim2} respectively, with $\cc, \mu_i$ given by Definition \ref{defgammamu}.   
%  \end{enumerate}
\end{theorem}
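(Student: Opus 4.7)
The proof is essentially an assembly of ingredients that have already been prepared: the block decomposition provided by Lemma \ref{lemmaprecan} together with the canonical forms available for each individual block. The plan is to first realize the block form, then put each block into its canonical shape, and finally identify the resulting parameters with those prescribed by Definition \ref{defgammamu}.

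First I would invoke Lemma \ref{lemmaprecan} to obtain a decomposition of $V$ into pairwise orthogonal $F$-invariant subspaces, namely either $V = \mathbb{M}^{1,3} \oplus \Pi_1 \oplus \cdots \oplus \Pi_p$ (if $d$ even) or $V = \mathbb{M}^{1,2} \oplus \Pi_1 \oplus \cdots \oplus \Pi_p$ (if $d$ odd), in which the Lorentzian summand carries a nonzero restriction of $F$ that either contains no spacelike eigenplane at all or contains one whose eigenvalue is maximal among all spacelike eigenplanes of $F$, and in which the remaining spacelike eigenplanes $\Pi_i$ are sorted so that $\mu_1^2 \geq \mu_2^2 \geq \cdots \geq \mu_p^2$. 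On each $\Pi_i$, any orthonormal basis oriented appropriately puts $F\mid_{\Pi_i}$ in the form \eqref{canonFdim2}. On the Lorentzian summand I apply either Proposition \ref{propcanonF4} or Corollary \ref{propcanonF3}, both of which supply an \emph{orthonormal and future oriented} basis in which $F\mid_{\mathbb{M}^{1,3}}$ has the form \eqref{canonFdim4} or $F\mid_{\mathbb{M}^{1,2}}$ has the form \eqref{canonFdim3}. The concatenation of these bases is orthonormal (because the subspaces are mutually orthogonal) and future oriented (because the unique timelike basis vector lies in the Lorentzian summand and is chosen future directed), and in it $F$ has precisely the block form \eqref{eqFgammamu}.

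The only substantive remaining task is to verify that the numerical parameters appearing in the above blocks, which are a priori defined through that particular choice of basis, coincide with those of Definition \ref{defgammamu}, which are defined purely through the characteristic polynomial of $F^2$ and are therefore basis independent. This amounts to identifying $\mathrm{Roots}(\mathcal{Q}_{F^2})$ with the collection $\{\mu_1^2,\ldots,\mu_p^2\}$ together with $\{-\mu_t^2,\mu_s^2\}$ (even case) or $\{\sigma\}$ (odd case). This identification was already carried out in the paragraphs immediately preceding Definition \ref{defgammamu}: the characteristic polynomial of a direct sum factors as the product of the individual characteristic polynomials, equations \eqref{charpolF4} and \eqref{charpolF3} give the relevant factors for the Lorentzian block, and Lemma \ref{remarkmultiplic} combined with \eqref{charpols} determines the multiplicities. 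I would simply reproduce that counting argument more tersely, observing that the total number of double roots of $\mathcal{P}_{F^2}(-x)$ (after dividing by $x$ in the odd case) is exactly $d/2$ or $(d-1)/2$, which matches the number of parameters in \eqref{eqFgammamu}, so no root is missed.

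The required ordering is then read off from the two cases singled out in Lemma \ref{lemmaprecan}. When $\ker F$ is degenerate (even $d$) or timelike (odd $d$ with $\sigma>0$ case excluded), Remarks \ref{remarcanonF4} and \ref{remarcanonF3} fix the sign of the Lorentzian-block parameter relative to the $\mu_i^2$, and the requirement that the Lorentzian block contain the spacelike eigenplane of largest eigenvalue whenever $\ker F$ is not of this degenerate/timelike type ensures the inequality $\mu_s^2 \geq \mu_1^2$ or $\sigma \geq \mu_1^2$, respectively. I do not foresee a genuine obstacle: the main care is bookkeeping, making sure the orientation of $e_0$ is consistent and that the particular choice of eigenplane collected in the Lorentzian block is dictated unambiguously by the ``maximal spacelike eigenvalue'' criterion from Lemma \ref{lemmaprecan}, because this is what guarantees the sign conventions in Definition \ref{defgammamu} are satisfied.
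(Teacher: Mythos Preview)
Your proposal is correct and follows essentially the same route as the paper: the theorem is presented there as a summary of the preceding discussion, so its ``proof'' is precisely the assembly of Lemma \ref{lemmaprecan}, the low-dimensional canonical forms (Proposition \ref{propcanonF4}, Corollary \ref{propcanonF3}, and \eqref{canonFdim2}), and the root-counting argument leading to Definition \ref{defgammamu} that you have outlined. The only cosmetic point is that your phrasing of the case split in the final paragraph is slightly tangled; just mirror the two alternatives in Definition \ref{defgammamu} directly (degenerate vs.\ non-degenerate $\ker F$ for $d$ even, timelike vs.\ non-timelike $\ker F$ for $d$ odd).
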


\begin{definition}
 For any $F\in \skwend{V}$, for $V$ Lorentzian $d$-dimensional, the form of $F$ given in Theorem \ref{theocanonicalF} is called {\bf canonical form} and the basis realizing it is called {\bf canonical basis}. 
\end{definition}

The first and obvious reason why the canonical form is useful is that it allows one to work with all elements $F \in \skwend{V}$ at once. The fact that we can give a canonical form for every element without splitting into cases is a great strenght, since we can perform a general analysis just in terms of the parameters that define the canonical form.
% i.e. $F$ has $d(d-1)/2$ entries in an arbitrary orthonormal basis while our canonical form reduces this to $d/2$ parameters for $d$ even and $(d-1)/2$ parameters for $d$ odd. Obviously the parameters in our canonical form depend on the original entires, 
% but they appear in such a way that computations and some properties, such as the classification of $F$, are easier to obtain. 
Moreover, as we will show in Section \ref{seclorclass}, this form is the same for all the elements in the orbit generated by the adjoint action of the orthochronous Lorentz group $\Lor^+(1,d-1)$. Thus, the canonical form is specially suited for problems with $\Lor^+(1,d-1)$ invariance (or covariance) which, as discussed in Section \ref{secconfvecs}, is directly related to certain conformally covariant problems in general relativity.

We finish this section with two corollaries that will be useful later. The first one is trivial from the canonical form \eqref{eqFgammamu}
 
\begin{corollary}\label{corolcharac}
 The characteristic polynomial of $F \in \skwend{V}$ is
 \begin{equation}\label{eqcharac}
  \mathcal{P}_F(x)  = (x^2 -\mu_t^2)(x^2 +\mu_s^2) \prod_{i=1}^{p}(x^2 + \mu_i^2)\quad \mbox{($d$ even)},\quad\quad
  \mathcal{P}_F(x)  = x (x^2 + \cc) \prod_{i=1}^{p}(x^2 + \mu_i^2)\quad \mbox{($d$ odd)} ,
\end{equation}
where $-2 \mu_t^2 := \aa - \sqrt{\aa^2 + \bb^2}$, $2 \mu_s^2 := \aa + \sqrt{\aa^2 + \bb^2}$ .
\end{corollary}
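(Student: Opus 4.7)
The plan is to exploit the block decomposition provided by Theorem \ref{theocanonicalF} together with the multiplicative property of characteristic polynomials under direct sums. Specifically, if an endomorphism $F$ decomposes as $F = F_1 \oplus F_2 \oplus \cdots \oplus F_k$ relative to a direct sum of invariant subspaces, then the matrix representation in a basis adapted to the decomposition is block-diagonal, and hence $\mathcal{P}_F(x) = \prod_j \mathcal{P}_{F_j}(x)$. Since the canonical form \eqref{eqFgammamu} realizes exactly such a block-diagonal structure in a future-oriented orthonormal basis, the corollary reduces to computing the characteristic polynomial of each individual block and taking the product.

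For the two-dimensional spacelike blocks $F_{\mu_i}$ given by \eqref{canonFdim2}, a direct $2\times 2$ determinant computation yields $\mathcal{P}_{F_{\mu_i}}(x) = x^2 + \mu_i^2$. For the four-dimensional Lorentzian block $F_{\aa\bb}$ appearing when $d$ is even, the characteristic polynomial has already been recorded in \eqref{charpolF4} of Remark \ref{remarcanonF4} as $\mathcal{P}_{F_{\aa\bb}}(x) = (x^2 - \mu_t^2)(x^2 + \mu_s^2)$, where the relation $-2\mu_t^2 = \aa - \sqrt{\aa^2+\bb^2}$, $2\mu_s^2 = \aa + \sqrt{\aa^2+\bb^2}$ follows from the definitions \eqref{eqmutmus}. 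Analogously, for $d$ odd the three-dimensional Lorentzian block $F_\cc$ has characteristic polynomial $\mathcal{P}_{F_\cc}(x) = x(x^2 + \cc)$ by \eqref{charpolF3} of Remark \ref{remarcanonF3}.

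Multiplying these factors yields
\[
\mathcal{P}_F(x) = \mathcal{P}_{F_{\aa\bb}}(x)\prod_{i=1}^{p}\mathcal{P}_{F_{\mu_i}}(x) = (x^2-\mu_t^2)(x^2+\mu_s^2)\prod_{i=1}^{p}(x^2+\mu_i^2)
\]
for $d$ even, and the odd-dimensional formula follows identically after replacing $\mathcal{P}_{F_{\aa\bb}}$ by $\mathcal{P}_{F_\cc}$. There is essentially no obstacle: the computation is routine once the canonical form and the low-dimensional characteristic polynomials (already established in the cited remarks) are in place. The only small point worth noting explicitly in the proof is that the block-diagonal structure of the canonical basis guarantees the factorization of $\mathcal{P}_F$, so that no additional argument about invariance or sign conventions is required beyond invoking Theorem \ref{theocanonicalF}.
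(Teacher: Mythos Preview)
Your proof is correct and follows essentially the same approach as the paper, which simply states that the result is trivial from the canonical form \eqref{eqFgammamu}. You have merely spelled out the details (multiplicativity of the characteristic polynomial under direct sums, together with the low-dimensional formulas \eqref{charpolF4}, \eqref{charpolF3}, and the obvious $2\times 2$ case) that the paper leaves implicit.
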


The second gives a formula for the rank of $F$. We base our proof in the canonical form \eqref{eqFgammamu} because it is straightforward. However, we remark that this corollary can also be regarded as a consequence of Theorem \ref{theoremclasif}.

\begin{corollary}\label{lemmaequivcases}
 Let $F \in \skwend{V}$, with $V$ Lorentzian of dimension $d$ and $m_0$ the multiplicity of the zero eigenvalue. Then, only of the following exclusive cases hold:
  \begin{itemize}
    \item[a)] $\ker F$ is non-degenerate or zero if and only if $\rank F = d - m_0$.
    
      \item[b)] $\ker F$ is degenerate if and only if $m_0 >2$ and $\rank F = d - m_0 + 2$.
       \end{itemize}
%        Moreover, case $b)$ can only happen if $m_0 >2$
\end{corollary}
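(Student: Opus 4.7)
The plan is to exploit the canonical form from Theorem \ref{theocanonicalF}, since both $\rank F$ and $\ker F$ (as well as its causal character and the multiplicity $m_0$) are basis-independent quantities that can be read off directly from the block decomposition \eqref{eqFgammamu}. The trivial case $F=0$ is immediate ($\ker F = V$ is Lorentzian, $m_0=d$, $\rank F=0=d-m_0$), so I will assume $F\neq 0$ and work with the canonical basis.

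The first observation is that the causal character of $\ker F$ is controlled entirely by the Lorentzian block $F_{\aa\bb}$ (or $F_\cc$), because the contributions from the two-dimensional blocks $F_{\mu_i}$ to $\ker F$ are spacelike (they are either zero, when $\mu_i\neq 0$, or the whole spacelike plane $\Pi_i$, when $\mu_i = 0$). Therefore $\ker F$ is degenerate if and only if $\ker F_{\aa\bb}$ (resp.\ $\ker F_\cc$) is degenerate, which by Remark \ref{remarcanonF4} (resp.\ Remark \ref{remarcanonF3}) happens precisely when $\aa=\bb=0$ (resp.\ $\cc=0$). This splits the problem into the two mutually exclusive cases of the statement.

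Next, I will compute $\rank F$ and $m_0$ additively over the blocks in the canonical form. Each spacelike block contributes $\rank F_{\mu_i}=2$ and no zero-eigenvalue multiplicity if $\mu_i\neq 0$, and contributes $\rank 0$ together with $2$ to $m_0$ if $\mu_i=0$. For the Lorentzian block I will use the characteristic polynomials \eqref{charpolF4} and \eqref{charpolF3}. In the non-degenerate case one checks by direct inspection of \eqref{canonFdim4}, \eqref{canonFdim3} (or from the charactistic polynomial together with the fact that $\ker$ and the $0$-generalized eigenspace coincide when $0$ is a semisimple eigenvalue) that $\rank F_{\aa\bb}=4-m_0(F_{\aa\bb})$ and $\rank F_\cc=2=3-m_0(F_\cc)$. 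Summing over all blocks and using $d=4+2p$ (even) or $d=3+2p$ (odd) yields $\rank F = d - m_0$, proving (a).

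For (b), the degenerate case forces $\aa=\bb=0$ (or $\cc=0$) in the Lorentzian block; its characteristic polynomial then becomes $x^4$ (resp.\ $x^3$), so $m_0(F_{\aa\bb})=4$ (resp.\ $m_0(F_\cc)=3$), while a direct computation from \eqref{canonFdim4}, \eqref{canonFdim3} shows $\ker F_{\aa\bb}$ is the $2$-dimensional degenerate plane and $\ker F_\cc$ is the $1$-dimensional null line, giving $\rank F_{\aa\bb}=2=4-m_0(F_{\aa\bb})+2$ and $\rank F_\cc = 2 = 3 - m_0(F_\cc)+2$. Adding the spacelike contributions as before produces $\rank F = d - m_0 + 2$, and the inequality $m_0>2$ follows because the Lorentzian block already contributes $m_0(F_{\aa\bb})=4$ or $m_0(F_\cc)=3$ to the total $m_0$. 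The only slightly delicate point, and the one I would be careful with, is the routine but crucial verification that in the non-degenerate case the zero eigenvalue of the Lorentzian block is truly semisimple (so that $m_0$ equals $\dim \ker$), which is most cleanly handled by inspecting \eqref{canonFdim4} and \eqref{canonFdim3} directly rather than going through the Jordan form.
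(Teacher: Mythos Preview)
Your proposal is correct and follows essentially the same approach as the paper: both proofs use the canonical form \eqref{eqFgammamu}, reduce the causal character of $\ker F$ to that of the Lorentzian block via Remarks \ref{remarcanonF4} and \ref{remarcanonF3}, and then compute $\rank F$ and $m_0$ additively over the blocks. The paper organizes the bookkeeping by introducing $k$ (the number of vanishing $\mu_i$) and splitting into the subcases $\bb\neq 0$ versus $\bb=0,\aa\neq 0$ for the even non-degenerate case, whereas you phrase it as $\rank F_{\aa\bb}=4-m_0(F_{\aa\bb})$ in one stroke, but the content is identical.
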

\begin{proof}
 Consider $F$ in canonical form \eqref{eqFgammamu} and let $k \in \mathbb N$ be the number of parameters $\mu_i$ that vanish. 
 For $d$ even we have $\dim \ker F = 2 k + \dim \ker F_{\aa\bb}$. 
  On the one hand, $\ker F$ degenerate implies $\ker F_{\aa \bb}$ degenerate, which by Remark \ref{remarcanonF4} happens if and only if $\aa=\bb = 0$ and in addition  $ \dim \ker F_{\aa \bb} = 2$. Therefore $\dim \ker F = 2 k + 2$ and by \eqref{eqcharac}, $m_0 = 2 k + 4$ $ (>2)$. Thus $\rank F = d - \dim \ker F = d - m_0 + 2$. On the other hand, $\ker F$ non-degenerate if at most one of $\aa$ or $\bb$ vanish. If $\bb \neq 0$ (so that $\mu_s \neq 0$ and $\mu_t \neq 0$), $ \dim \ker F_{\aa \bb}=0$ and  $m_0 = 2 k = \dim \ker F$. Consequently $\rank F = d-m_0$. If $\bb = 0$ (and $\aa \neq 0$, so that exactly one of $\mu_s, \mu_t$ vanish), by Remark \ref{remarcanonF4} $\dim \ker F_{\aa \bb} = 2$ and by \eqref{eqcharac} $m_0 = 2 k +2$. Hence $\dim \ker F = 2 k +2$ and  $\rank F =  d - m_0$.  
  
  For $d$ odd, we have $\dim \ker F = 2 k + \dim \ker F_\cc = 2k + 1$, because $\dim \ker F_\cc = 1$ (c.f. Remark \ref{remarcanonF3}). $\ker F$ is degenerate if and only if $\ker F_\cc$ is degenerate, which by Remark \ref{remarcanonF3} occurs if and only if $\cc = 0$. Hence, by equation \eqref{eqcharac}, $m_0 = 2 k + 3$ $(>2)$ and $\rank F = d - \dim \ker F = d - m_0 +2$.  For the $\ker F$ non-degenerate case, $\cc \neq 0$ and also by \eqref{eqcharac} $m_0 = 2 k + 1 = \dim \ker F$. Therefore $\rank  F= d - m_0$.
 
%  By equation \eqref{eqcharac}, $m_0$ is exactly given by twice the number of vanishing parameters $\aa, \bb, \mu_i$, for $d$ even or $\cc, \mu_i$ plus one for $d$ odd. On the other hand, for each $\mu_i = 0$, $F$ has a spacelike eigenplane in $\ker F$, so it increases $\dim \ker F$ into two. Say that there are $k$ such planes and consider first $d$ odd. Since $F_\cc$ contains a unique eigenvector with zero eigenvalue (c.f. Remark \ref{remarcanonF3}), we have that $\dim \ker F = 2 k +1$.  In addition, $\ker F$ is degenerate if $\ker F_\cc$ is degenerate, which by Remark \ref{remarcanonF3} happens if and only if $\cc = 0$ and in which case (c.f equation \eqref{eqcharac}) $m_0 = 2 k +3 $ and otherwise $m_0 = 2 k +1 $. Thus $\ker F$ degenerate can only happen for $m_0 \geq 3$ and given that $\dim \ker F = d - \rank F$, the corollary now follows for $d$ odd by direct calculation.
%  
%  Continuing with $d$ even, $\ker F$ degenerate requires $\ker F_{\aa\bb}$ degenerate, that by Remark \ref{remarcanonF4} occurs  if and only if $\aa = \bb = 0$, in which case $m_0 = 2 k + 4$ and $\dim \ker F= d - \rank F = 2 k + 2$ (and therefore, $\ker F$ degenerate can only happen for $m_0 \geq 4$). Otherwise, also by Remark \ref{remarcanonF4}, either $\bb \neq 0$, so $\ker F_{\aa \bb} = \lrbrace{0}$ and $m_0 = 2k = \dim \ker F = d - \rank F$, or $\bb = 0,~\aa \neq 0$, so  $\dim \ker F_{\aa \bb} = 2$ and $\dim \ker F = 2 k + 2$ and by \eqref{eqcharac}, $m_0 = 2k + 2$. Hence the corollary also follows by direct calculation for $d$ even.  

\end{proof}

 \section{Simple endomorphisms}\label{secsimpleend}

 By {\it simple skew-symmetric endomorphism}  we mean a $G \in \skwend{V}$ satisfying $\rank G = 2$.  As usual $e_\flat \equiv \lrprod{e,\cdot}$ is the one-form obtained by lowering index to a vector $e \in V$. Then, a simple skew-symmetric endomorphism can be always written as
\begin{equation}\label{simpleend}
 G = e \otimes  v_\flat -v \otimes  e_\flat
\end{equation}
for two linearly {independent} vectors $e,v \in V$ and its action on any vector $w \in V$ is 
\begin{equation*}
 G(w) = \lrprod{v,w} e - \lrprod{e,w} v.
\end{equation*}
Since the two-fom associated to a simple endomorphism is $\mathbf{G} = e_\flat \wedge v_\flat$, it follows from elementary algebra  that two simple skew-symmetric endomorphisms $G = e \otimes  v_\flat -v \otimes  e_\flat$ and $G'= e' \otimes  v'_\flat -v' \otimes  e'_\flat$ are proportional  if and only if $\spn{e,v} = \spn{e',v'}$. This freedom in the pair  $\{e,v\}$ defining $G$ can be used
to choose them orthogonal.
\begin{lemma}\label{simpleortho}
 Let $G\in \skwend{V}$ be simple. Then there exist two non-zero orthogonal vectors $e,v \in V$ such that $G = e \otimes  v_\flat -v \otimes  e_\flat$ with $v$ spacelike.
\end{lemma}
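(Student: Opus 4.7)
The plan is to exploit the representation freedom noted just before the lemma. Fix any linearly independent $e_0,v_0\in V$ with $G=e_0\otimes v_{0\flat}-v_0\otimes e_{0\flat}$ and set $\Pi:=\spn{e_0,v_0}$. Because any two simple endomorphisms with the same 2-plane are proportional, and proportional simple endomorphisms differ by a scalar rescaling of either factor, it suffices to produce an orthogonal basis $(e,v)$ of $\Pi$ with $v$ spacelike; equality $G=e\otimes v_\flat-v\otimes e_\flat$ is then achieved by one final rescaling of $e$.

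The main (really the only) non-trivial step is to show that every $2$-plane $\Pi$ of a Lorentzian vector space contains a spacelike vector. The restriction $g|_\Pi$ is a symmetric bilinear form whose signature $(p_\Pi,q_\Pi,r_\Pi)$ satisfies $p_\Pi+q_\Pi+r_\Pi=2$ and $q_\Pi\le 1$, since $V$ has a single negative direction. I would dispose of the two residual possibilities with $p_\Pi=0$ as follows. The case $(0,1,1)$ would put a non-zero null vector inside the orthogonal complement of a timelike vector, contradicting that this complement is positive definite. The case $(0,0,2)$ would force every vector of $\Pi$ to be null; applied to a basis $\{w_1,w_2\}$ of $\Pi$, the identity $0=\lrprod{w_1+w_2,w_1+w_2}=2\lrprod{w_1,w_2}$ places $w_2\in w_1^\perp$, but in Lorentzian signature the hyperplane $w_1^\perp$ has radical $\spn{w_1}$ and positive definite quotient, so any null element of $w_1^\perp$ is proportional to $w_1$, contradicting linear independence. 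Hence $p_\Pi\ge 1$, and a spacelike $v\in\Pi$ exists.

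With such $v$ fixed, $\lrprod{v,v}\neq 0$ makes $\lrprod{v,\cdot}\big|_{\Pi}$ a non-trivial linear form on $\Pi$, so its kernel is a one-dimensional subspace of $\Pi$ not containing $v$. Picking any non-zero $e$ in this kernel yields an orthogonal basis $(e,v)$ of $\Pi$, with both vectors non-zero. By the proportionality statement recalled above the lemma, $e\otimes v_\flat-v\otimes e_\flat=\lambda G$ for some $\lambda\in\mathbb{R}\setminus\{0\}$; replacing $e$ by $\lambda^{-1}e$ keeps it on the same line (so still orthogonal to $v$) and upgrades the proportionality to equality, completing the proof. The rest of the argument, namely the one-dimensional orthogonalization inside $\Pi$ and the final rescaling, is entirely routine; the crux is the signature analysis ruling out totally null $2$-planes in Lorentzian signature.
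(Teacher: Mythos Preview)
Your proof is correct and takes a somewhat different route from the paper's. The paper first orthogonalizes directly: if one of the two generators is non-null it projects the other onto its orthogonal complement; if both are null it writes one as a multiple of the other plus a vector in a spacelike complement, reducing to the first case. Only at the very end does it note that two orthogonal non-zero vectors in Lorentzian signature cannot both fail to be spacelike unless they are null and proportional. You instead isolate the geometric core up front: by a clean signature analysis of $g|_\Pi$ you show that every $2$-plane in a Lorentzian space contains a spacelike vector, and then the orthogonalization and rescaling are routine. Your argument makes the underlying obstruction (no totally null or $(0,1,1)$ $2$-planes in Lorentzian signature) explicit, which is conceptually nice; the paper's version is more hands-on and slightly shorter since it avoids the case split on signatures. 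Both are equally valid and of comparable length.
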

\begin{proof}
  By definition $G = \tilde e \otimes  \tilde v_\flat -\tilde v \otimes  \tilde e_\flat$ for two linearly indepedent vectors $\tilde e, \tilde v \in V$. If one of them is non-null, we set $\tilde v := v$ and decompose $V =  \spn{v} \oplus  v^\perp$. Thus $\tilde e = a v + e$ with $a \in \mathbb{R}$ and $e \in v^\perp$ and $G$ takes the form  $G = (a v + e) \otimes  v_\flat -v \otimes  (a v + e)_\flat = e \otimes  v_\flat -v \otimes  e_\flat$, as claimed. 
  If $\tilde e$ and $\tilde v$ are both null, consider $V = \spn{\tilde e} \, \widetilde{\oplus} \, (\tilde e)^c$ (we use $\widetilde{\oplus}$ because this direct sum is not by orthogonal spaces) where $(\tilde e)^c$ is a 
  spacelike complement of $\spn{\tilde e}$. Then we can write $\tilde v = a \tilde e + v'$, with $a \in \mathbb{R}$ and $v' \in \tilde e^c$ non-null. Thus $G = \tilde e \otimes  v'_\flat -v' \otimes \tilde e_\flat$, with $v'$ non-null and we fall into the previous case. All in all, $G = e \otimes v_\flat - v \otimes e_\flat$ with $e,v$ orthogonal. Consequently, either one of the vectors is spacelike or both are null and proportional which would imply $G =0$, against our hypothesis $\rank G = 2$.
\end{proof}

The decomposition $G = e \otimes  v_\flat -v \otimes  e_\flat$ is not unique even with the restriction of $v$ being spacelike unit and orthogonal to $e$. One can easily show that the remaining freedom  is given by the transformation $e' = a e - b \lrprod{e,e}v$, $v' = b e + a v$ with $a,b\in  \mathbb{R}$ restricted to $a^2 + b^2\lrprod{e,e} = 1$. Nevertheless, the square norm $\lrprod{e',e'}$ is invariant under this change, so the following definition makes sense:
\begin{definition}\label{defsimend}
Let $G \in \skwend{V}$ be simple, with $G = e \otimes  v_\flat -v \otimes  e_\flat$, $e,v \in V$ orthogonal with $v$ spacelike unit. Then $G$ is said to be {\it spacelike, timelike} or {\it null} if the vector $e$ is {\it spacelike, timelike} or {\it null} respectively. In the non-null case, $G$ is called spacelike (resp. timelike) unit whenever $\lrprod{e,e} = +1$ (resp. $\lrprod{e,e} = -1$).
\end{definition}

By Lemma \ref{simpleortho}, it is immediate that Definition \ref{defsimend}  comprises any possible simple endomorphism (up to a multiplicative factor).

We next obtain the necessary and sufficient conditions for a simple endomorphism $G$ to commute with a given $F \in \skwend{V}$. We first make the simple observation that the composition
of a one-form $e_{\flat}$ and a skew-symmetric endomorphism $F$ satisfies
(simply apply for sides to any $w \in V$)
\begin{align*}
e_{\flat} \circ F = -F(e)_{\flat}.
\end{align*}
An immediate consequence is that
for any pair of vectors $e,v \in V$ and $F\in \skwend{V} $ it holds
\begin{align}
F \circ ( e \otimes v_{\flat} ) = F(e)\otimes v_{\flat},\quad\quad ( e \otimes v_{\flat} ) \circ F = - e \otimes F(v)_{\flat}
\label{comp}
\end{align}
The following commutation result will be used later.

\begin{lemma}\label{commutingrank1}
Let $F,G \in \skwend{V}$ with $G = e \otimes  v_\flat -v \otimes  e_\flat$ simple and $e,v \in V$ as in Definition \ref{defsimend}. Then $[F,G] = 0$ if and only if there exist $\mu \in \mathbb{R}$ such that: 
\begin{equation}\label{commplanes}
 F(e) = \lrprod{e,e} \mu v, \qquad F(v) = - \mu e.
\end{equation}
% \begin{equation}\label{commplanes}
%  \begin{split}
%    F(u) & = \mu v\\
%    F(v) & = -\mu  u
%   \end{split}
%   \qquad
%   \begin{split}
%    F(t) & = \mu v\\
%    F(v) & = \mu  t
%   \end{split}
%   \qquad
%   \begin{split}
%    F(k) & = 0 \\
%    F(v) & =  \mu  k
%   \end{split}
%  \end{equation}
 \end{lemma}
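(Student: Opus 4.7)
The approach is to reduce the commutator $[F,G]$ to a single, explicit expression and then to test it on cleverly chosen vectors. Using the identities in \eqref{comp}, a direct computation gives
\begin{equation*}
[F,G] = F(e)\otimes v_\flat - v\otimes F(e)_\flat - F(v)\otimes e_\flat + e \otimes F(v)_\flat,
\end{equation*}
so for every $w\in V$,
\begin{equation*}
[F,G](w) = \lrprod{v,w}F(e) - \lrprod{F(e),w}v - \lrprod{e,w}F(v) + \lrprod{F(v),w}e.
\end{equation*}

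For the ``if'' direction, I would simply substitute $F(e)=\lrprod{e,e}\mu v$ and $F(v)=-\mu e$ into this expression; the terms pair up immediately (using $\lrprod{v,v}=1$, $\lrprod{e,v}=0$ and the hypothesised values of $F(e),F(v)$) and the commutator vanishes.

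For the ``only if'' direction, I would evaluate $[F,G]=0$ on the vectors $w=v$ and $w=e$. Taking $w=v$ uses $\lrprod{v,v}=1$, $\lrprod{e,v}=0$, and $\lrprod{F(v),v}=0$ (by skew-symmetry), and yields $F(e)=\lrprod{F(e),v}v$; by skew-symmetry, $\lrprod{F(e),v}=-\lrprod{e,F(v)}$. Taking $w=e$ and using $\lrprod{F(e),e}=0$ and $\lrprod{v,e}=0$ gives $\lrprod{e,e}F(v)=\lrprod{F(v),e}e$. In the non-null case $\lrprod{e,e}\neq 0$, this forces $F(v)$ to be a multiple of $e$; writing $F(v)=-\mu e$ and substituting back into the expression for $F(e)$ yields $F(e)=\lrprod{e,e}\mu v$, which is exactly \eqref{commplanes}.

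The genuinely delicate case is the null one, $\lrprod{e,e}=0$, where the above argument only extracts $F(e)=\lrprod{F(e),v}v$ and provides no direct information on $F(v)$. I would complete the analysis by testing on an arbitrary $w\in\spn{e,v}^\perp$: the first two terms of $[F,G](w)$ vanish, leaving $\lrprod{F(v),w}e=\lrprod{F(e),w}v$, and since $e,v$ are linearly independent this implies $F(e),F(v)\in(\spn{e,v}^\perp)^\perp=\spn{e,v}$ (using that $\lrprod{\,,\,}$ is non-degenerate on $V$). Writing $F(v)=ae+bv$ and $F(e)=ce+dv$, the skew-symmetry conditions $\lrprod{F(v),v}=0$ and $\lrprod{F(e),v}=-\lrprod{e,F(v)}$, together with $\lrprod{e,e}=0$, force $b=d=0$; combining with $F(e)=\lrprod{F(e),v}v$ gives $c=0$ as well. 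Hence $F(e)=0=\lrprod{e,e}\mu v$ for any $\mu$, and setting $\mu:=-a$ yields $F(v)=-\mu e$, so \eqref{commplanes} holds. The main obstacle is precisely keeping track of this null subcase, since the non-degeneracy argument and the skew-symmetry bookkeeping must be applied carefully to recover the same $\mu$ for both conditions.
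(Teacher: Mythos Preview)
Your proof is correct and proceeds by a somewhat different route than the paper's. The paper rewrites $[F,G]=0$ as the equality of two rank-$\leq 2$ endomorphisms, $F(e)\otimes v_\flat - v\otimes F(e)_\flat = F(v)\otimes e_\flat - e\otimes F(v)_\flat$, and then argues on the spans: either both sides vanish (giving \eqref{commplanes} directly via skew-symmetry), or both are simple with $\spn{F(e),v}=\spn{e,F(v)}$, which forces $v\in\spn{e,F(v)}$ and hence $\lrprod{v,v}=0$, a contradiction. This span argument is slick and handles all causal types of $e$ uniformly without a case split.

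Your approach instead evaluates $[F,G]$ on the test vectors $v$, $e$, and (in the null case) $w\in\spn{e,v}^{\perp}$, extracting the relations by direct computation. This is more hands-on but entirely elementary, and it makes the role of $\lrprod{e,e}$ in \eqref{commplanes} very transparent. The only price you pay is the explicit null subcase, where the key step is your use of $(\spn{e,v}^{\perp})^{\perp}=\spn{e,v}$; this identity holds for any subspace of a non-degenerate ambient space, degenerate or not, so the argument is sound. One cosmetic point: for $w\in\spn{e,v}^{\perp}$ it is the first and third terms of $[F,G](w)$ that vanish, not the ``first two'', but your resulting equation $\lrprod{F(v),w}e=\lrprod{F(e),w}v$ is correct after moving the surviving $v$-term across.
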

\begin{proof}
The commutator is $\lrbrkt{F,G} = F \circ G - G \circ F$
\begin{align}
%  \begin{split}
 \lrbrkt{F ,G}   
& =  F \circ G - G \circ F 
= F \circ \lr{e \otimes v_\flat - v \otimes e_\flat} - \lr{e \otimes v_\flat - v \otimes e_\flat} \circ F \nn \\
 & =  F(e) \otimes v_\flat - F(v) \otimes e_\flat
+ e \otimes F(v)_{\flat} -v \otimes F(e)_{\flat},\label{commuteq}
% \end{split}
\end{align}
where we have used \eqref{comp}.  The ``if'' part is obtained by direct calculation inserting \eqref{commplanes} in \eqref{commuteq}. 
To prove the ``only if'' part, the condition $[F,G] = 0$ requires the two  endomorphisms $F(e) \otimes v_\flat - v \otimes F(e)_\flat$ and $F(v) \otimes e_\flat - e \otimes F(v)_\flat$ to be equal. One such endomorphism is either identically zero or simple.  This implies that  $\spn{F(e),v}$ and $ \spn{e,F(v)}$ are either  both one dimensional or both two-dimensional and equal. In the first case, $F(v) = -\mu e$ and $F(e) = \alpha v$ for $\mu, \alpha \in \mathbb{R}$, which are determined by skew-symmetry to satisfy $\alpha = \mu \lrprod{e,e}$, so the lemma follows. The second case is empty, for it is necessary that $v = a e + b F(v)$ with $a,b \in \mathbb{R}$, which implies $\lrprod{v,v} =
\lrprod{a e + b F(v),v} = b \lrprod{F(v),v} = 0$, against the hypothesis of $v$ being spacelike.
 \end{proof}

\begin{corollary}\label{commutingrank1_cor}
  Let $G,G' \in \skwend{V}$ be simple, spacelike and linearly independent. Let $\{ e,v\}$,
  $\{e',v'\}$ be orthogonal spacelike vectors such that
  $G = e \otimes  v_\flat -v \otimes  e_\flat$ and
  $G' = e' \otimes  v'_\flat -v' \otimes  e'_\flat$. Then
  $[G,G'] =0$ if and only if $\{ e,v,e',v'\}$ are mutually orthogonal.
\end{corollary}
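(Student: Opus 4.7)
The plan is to deduce this corollary as a direct application of Lemma \ref{commutingrank1} with $F := G'$. That lemma tells us that $[G,G'] = 0$ holds if and only if there exists $\mu \in \mathbb{R}$ with
\begin{equation*}
G'(e) = \lrprod{e,e}\,\mu\, v, \qquad G'(v) = -\mu\, e.
\end{equation*}
The ``if'' direction is then immediate: if $\{e,v,e',v'\}$ are mutually orthogonal, then $G'(e) = \lrprod{v',e}e' - \lrprod{e',e}v' = 0$ and likewise $G'(v) = 0$, so $\mu = 0$ fulfils the required condition.

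For the ``only if'' direction, the key observation is that $\mathrm{Im}\,G' = \spn{e',v'}$, so the two conditions above place $\lrprod{e,e}\mu\, v$ and $\mu\, e$ inside $\spn{e',v'}$. The crucial use of the hypothesis that $G$ is \emph{spacelike} is here: since $e$ is spacelike, $\lrprod{e,e}>0$, so if $\mu \neq 0$ then both $v$ and $e$ lie in $\spn{e',v'}$. As $\{e,v\}$ is a basis of a two-plane (they are orthogonal and non-zero), this would give $\spn{e,v} = \spn{e',v'}$; but by the discussion preceding Lemma \ref{simpleortho}, this would make $G$ and $G'$ proportional, contradicting their linear independence. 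Hence $\mu = 0$.

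With $\mu = 0$ we have $G'(e) = 0$ and $G'(v) = 0$. Expanding $G'(e) = \lrprod{v',e}e' - \lrprod{e',e}v' = 0$ and invoking the linear independence of $\{e',v'\}$ (which holds since $G'$ is simple and in particular non-zero) forces $\lrprod{e',e} = \lrprod{v',e} = 0$. An analogous argument applied to $G'(v) = 0$ gives $\lrprod{e',v} = \lrprod{v',v} = 0$. Together with the original orthogonalities $e \perp v$ and $e' \perp v'$ this yields mutual orthogonality of the four vectors.

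The only point requiring any care is the dichotomy on $\mu$: the entire argument for ``only if'' reduces to excluding $\mu \neq 0$, and this exclusion is the place where the two hypotheses of the corollary — linear independence of $G,G'$ and the spacelike character (so that $\lrprod{e,e}\neq 0$) — are both used. Once this is settled, the rest is a routine expansion of the vanishing of $G'(e)$ and $G'(v)$ in the basis $\{e',v'\}$.
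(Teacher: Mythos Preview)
Your proof is correct and follows essentially the same route as the paper's: apply Lemma~\ref{commutingrank1}, rule out $\mu\neq 0$ by the linear-independence hypothesis (since it would force the two planes $\spn{e,v}$ and $\spn{e',v'}$ to coincide), and then read off the four orthogonality relations from $\mu=0$. The only difference is that the paper applies the lemma with $F=G$ acting on $\{e',v'\}$ while you apply it with $F=G'$ acting on $\{e,v\}$; by the symmetry of the commutator this is immaterial.
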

\begin{proof}
By the previous lemma $[G,G'] =0$ if and only if
 there exist $\mu \in \mathbb{R}$ such that
  \begin{equation}
   G(e') = \lrprod{e',v} e - \lrprod{e',e}v  =\mu v', \quad \quad G(v') = \lrprod{v',v}e - \lrprod{v',e}v = -\mu e'. \label{Gep}
  \end{equation}
  If $\mu \neq 0$, then $\spn{e,v} = \spn{e',v'}$ and $G$ and $G'$ are proportional, agains hypothesis. Thus, $\mu = 0$ and by \eqref{Gep}
  the set $\{ e,v,e',v'\}$ is mutually orthogonal.
  \end{proof}

\section{$\Lor^+(1,d-1)$-classes} \label{seclorclass}

In this section we use the canonical form of Section \ref{seccanonform} to characterize skew-symmetric endomorphisms of $V$ under the adjoint action of the orthochronous Lorentz group $\Lor^+(1,d-1)$. Recall that this is the subgroup of $O(1,d-1)$ preserving time orientation. The corresponding classes of skew-symmetric endomorphisms are also known as the adjoint orbits or conjugacy classes and  we denote them by $[F]_{\Lor^+}$ for a given element $F \in \skwend{V}$. The charaterization of these orbits by a set of independent invariants is known and it
 can be found in \cite{Kdslike} in terms of two-forms and other references such as \cite{burgoyne77}. What we do here is, first, to give an alternative way to characterize  the orbits $[F]_{\Lor^+}$ by a convenient set of invariants and second, to show that the canonical form is the same for every element in a given orbit. This makes the canonical form specially useful as a tool for problems with $\Lor^+(1,d-1)$ invariance.
 
 Although we restrict here to the orthochronous component $O^+(1,d-1)$ because of its relation with conformal transformations of the sphere $\mathbb{S}^{d-2}$ (see Section \ref{secconfvecs}), from this case, the orbits of the full group $O(1,d-1)$ are easy to determine. Recall that the time-reversing component $O^-(1,d-1)$ is one-to-one with $O^+(1,d-1)$. In an orthonormal basis, we can map elements $\Lambda^- \in O^-(1,d-1)$ to elements in $\Lambda^+ \in O^+(1,d-1)$ by e.g. $\Lambda^+ := \Lambda^- \eta$, where $\eta = \mathrm{diag}(-,+,\cdots,+)$ is an element in $O^-(1,d-1)$ with the same matrix form as the metric. Then
 \begin{equation}
  \Lambda^+ F (\Lambda^+)^{-1} = \Lambda^- \eta F \eta (\Lambda^-)^{-1} =  -\Lambda^-  F  (\Lambda^-)^{-1},
 \end{equation}
where the last equality follows from skew-symmetry. Hence, the elements $F$ and $-F$ belong to the same orbit under the action of $O(1,d-1)$. If we denote the space of orbits of the orthochronous component and full group respectively by $[O^+] = \skwend{V}/O^+(1,d-1)$ and $[O] = \skwend{V}/O(1,d-1) $, this is expressed as $[O] = [O^+]/\mathbb{Z}_2$. This fact is of course well-known and can be inferred from general references e.g. \cite{hall2004symmetries}.  
 
%  \begin{remark}
%   \car{We restrict to the orthochronous component because of its relation with conformal transformations (see Section \ref{secconfvecs}). However, this is not really limiting, in the sense that if we denote the space of orbits of the orthochronous component and full group respectively by $[O^+]$ and $[O]$, then $[O^+] = [O]/\mathbb{Z}_2$. Recall that the time-reversing component $O^-(1,d-1)$ is one-to-one with $O^+(1,d-1)$. In an orthonormal basis, we can map elements $\Lambda^- \in O^-(1,d-1)$ to elements in $\Lambda^+ \in O^+(1,d-1)$ by e.g. $\Lambda^+ := \Lambda^- \eta$, where $\eta = \mathrm{diag}(-,+,\cdots,+)$ is an element in $O^-(1,d-1)$ with the same matrix form than the metric. Then $\Lambda^+ F (\Lambda^+)^{-1} = \Lambda^- \eta F \eta (\Lambda^-)^{-1} =  -\Lambda^-  F  (\Lambda^-)^{-1}$, where the last equality follows from skew-symmetry.}
%  \end{remark}

A consequence of equation \eqref{paritycharpol} is that the characteristic polynomial of $F\in \skwend{V}$ must have the form
\begin{equation}\label{coefpoly}
 \mathcal{P}_F(x) = x^d + \sum_{b=1}^{\q} c_b x^{d-2b},
\end{equation}
where we have introduced  $\q := [\frac{d}{2}]$.
 The coefficients $c_b$ can be obtained using the Fadeev-LeVerrier algorithm, summarized by the following matrix determinant \cite{gantmacher1960theomat}:
\begin{equation}\label{fadlever}
 c_{b} = \frac{1}{(2b)!}\left| \begin{array}{ccccc}
                               \mathrm{Tr}~F & 2b-1 & 0 & \cdots & 0 \\
                               \mathrm{Tr}~F^2 & \mathrm{Tr}~F & 2b-2 & \cdots & 0 \\
                               \vdots & \vdots & & & \vdots \\
                              \mathrm{Tr}~F^{2b-1} & \mathrm{Tr}~F^{2b-2} & \cdots & \cdots & 1\\
                               \mathrm{Tr}~F^{2b} & \mathrm{Tr}~F^{2b-1} & \cdots & \cdots & \mathrm{Tr}~F
                              \end{array}
\right|.
\end{equation}
Since the traces of odd powers vanish by skew-symmetry, the coefficients $c_b$  depend on the entries of $F$ only through the traces of the squared powers of $F$:
\begin{equation}\label{Ilambmu}
I_b = \frac{1}{2} \mathrm{Tr}~(F^{2b}),~~b = 1,\cdots,\q.
\end{equation}
% \carnote{Por qué maximo p independientes? No es obvio, pero se puede buscar resultado, p.j. abellanas dice algo?<}

Recall that the adjoint respresentation $\mathrm{Ad}$ of a matrix Lie group $G$ is a linear representation of $G$ on its Lie algebra automorphisms $\mathrm{Aut}(\mathfrak{g})$ given by
\begin{align*}
  \mathrm{Ad}: G & \longrightarrow \mathrm{Aut} (\mathfrak{g}) \\
  g  &\longrightarrow \mathrm{Ad}(g):=\mathrm{Ad}_g : \begin{array}{lll}
                                      & & \\
                                      \mathfrak g & \rightarrow & \mathfrak g \\
                                     X  & \rightarrow & g X g^{-1}
                                          \end{array}.
                                   \end{align*}
The traces $I_b$ are obviously invariant under the adjoint action of $\Lor^+(1,d-1)$ and so are the coefficients $c_b$. Another invariant that plays an important role in the classification of conjugacy classes is the rank of $F$. Since this is always even, we denote it by 
\begin{equation}
 \rank F = 2  r,
\end{equation}
and clearly $r \leq \q$. From now we say {\bf rank parameter} to refer to
$r$.
In the following proposition we show that this set of invariants actually identifies the
canonical form.
\begin{proposition}\label{invsequalcanonform}
 Let $F, \widetilde F \in \skwend{V}$, for $V$ Lorentzian of dimension $d$. Then  the invariants $\{c_b,r\}$ and $\{\widetilde c_b,\widetilde r \}$ of $F$ and $\widetilde F$ respectively are equal if and only if their canonical forms given by Theorem \ref{theocanonicalF} are the same. 
\end{proposition}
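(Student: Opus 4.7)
The forward implication is immediate since the canonical form is a concrete matrix of which both the characteristic polynomial (hence $\{c_b\}$) and the rank are functions. The content lies in the converse, and my plan is to show that $(c_b, r)$ determines the ordered parameter tuple $(\sigma,\tau,\mu_1,\ldots,\mu_p)$ ($d$ even) or $(\sigma,\mu_1,\ldots,\mu_p)$ ($d$ odd) produced by Definition \ref{defgammamu}, which then fixes the canonical matrix through \eqref{eqFgammamu} and the blocks \eqref{canonFdim4}, \eqref{canonFdim3}, \eqref{canonFdim2}.

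First I use $c_b = \tilde c_b$ to conclude $\mathcal{P}_F = \mathcal{P}_{\tilde F}$, and through the identity \eqref{charpols} this also fixes $\mathcal{P}_{F^2}$ and hence the polynomial $\mathcal{Q}_{F^2}$ of \eqref{defQF2}. I therefore recover (i) the multiset of roots of $\mathcal{Q}_{F^2}$, which by Definition \ref{defgammamu} is the unordered version of $\{\sigma,\mu_1^2,\ldots,\mu_p^2\}$ or $\{-\mu_t^2,\mu_s^2,\mu_1^2,\ldots,\mu_p^2\}$; and (ii) the multiplicity $m_0$ of $0$ as a root of $\mathcal{P}_F$.

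The remaining task is to determine the correct ordering rule. Comparing $r$ with $m_0$ via Corollary \ref{lemmaequivcases} tells me whether $\ker F$ is degenerate (iff $2r = d - m_0 + 2$), and this dichotomy is already enough to pin down the sorting in the even case. For $d$ odd with non-degenerate kernel I additionally need to distinguish timelike from spacelike kernel, but this is read off from the multiset itself: $\sigma < 0$ (spacelike) is equivalent to the presence of a negative root, whereas $\sigma > 0$ (timelike) corresponds to all roots being non-negative with $\sigma$ the maximum. Once the ordering is fixed, the parameters, and hence the canonical matrix, are uniquely determined, proving the converse.

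The subtlety I foresee, and the main place where care is needed, is handling ties among the roots of $\mathcal{Q}_{F^2}$, since the sortings in Definition \ref{defgammamu} are not strict. I expect every admissible assignment to produce the same canonical matrix, because swapping equal values between the $\sigma$ (or $\mu_s^2$) slot and a $\mu_i^2$ slot in \eqref{eqFgammamu} leaves the direct sum invariant up to a permutation of identical blocks. This is the one point where a short separate verification is required, and it should follow from a direct inspection of \eqref{canonFdim4}, \eqref{canonFdim3}, \eqref{canonFdim2} with the relevant parameters equal.
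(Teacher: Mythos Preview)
Your approach is essentially the same as the paper's: both deduce equality of $\mathcal{P}_F$ and $\mathcal{Q}_{F^2}$ from the $c_b$'s, invoke Corollary \ref{lemmaequivcases} (with $m_0$ and $r$) to match the causal character of the kernels, and conclude via Definition \ref{defgammamu}. If anything, you are more explicit than the paper about how, in the odd non-degenerate case, the timelike/spacelike distinction is read off from the sign of the smallest root of $\mathcal{Q}_{F^2}$; the paper compresses this into the single sentence that the rank together with $m_0$ fixes the causal character.

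Your final paragraph about ties is an unnecessary worry, and the resolution you propose is slightly off. The blocks $F_\cc$ (size $3\times 3$) and $F_{\mu_i}$ (size $2\times 2$) are \emph{not} identical when $\cc=\mu_i^2$, so ``permutation of identical blocks'' is not the right picture. The actual point is simpler: sorting a multiset produces a unique tuple, so Definition \ref{defgammamu} assigns a single well-defined parameter list $(\cc;\mu_1^2,\ldots,\mu_p^2)$ or $(-\mu_t^2,\mu_s^2;\mu_1^2,\ldots,\mu_p^2)$ regardless of repeated roots, and hence a single canonical matrix via \eqref{eqFgammamu}. No separate verification is required.
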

\begin{proof}
The ``if'' part ($\Leftarrow$) is trivial, because the invariants $c_b, r$ are independent on the basis, so they can be calculated in a canonical basis. Hence, same canonical form implies same invariants. For the ``only if'' part ($\Rightarrow$), we notice that if the coefficients $c_b$ and $\widetilde c_b$ of $\mathcal{P}_F$ and $\mathcal{P}_{\widetilde F}$  are equal, so are their characterisic polynomials, the multiplicities of their zero eigenvalue and the polynomials $\mathcal{Q}_{F^2}$ and $\mathcal{Q}_{\widetilde F^2}$ (equation \eqref{defQF2}). Since $\rank F = \rank \widetilde F$, Corollary \ref{lemmaequivcases} implies that $\ker F$ and $\ker \widetilde F$ must have the same causal character. The canonical form only depends on the
 roots $\mathcal{Q}_{F^2}$  and the causal character of $\ker F$ through Definition \ref{defgammamu}. Thus, $F$ and $\widetilde F$ must have the same canonical form.
\end{proof}

We now characterize the classes $[F]_{\Lor^+}$ in terms of the same invariants given in Proposition \ref{invsequalcanonform}. As mentioned above, this result is known \cite{Kdslike}, but we give here an alternative and very simple proof based on our canonical form:

\begin{theorem}{\cite{Kdslike}}\label{theoLorendo}
 Let $F, \widetilde F \in \skwend{V}$, for $V$ Lorentzian of dimension $d$. Then their invariants $\{c_b,r\}$ and $\{\widetilde c_b,\widetilde r \}$ are the same if and only if $F$ and $\widetilde F$ are $\Lor^+(1,d-1)$-related. 
\end{theorem}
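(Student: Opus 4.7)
The strategy is to leverage Proposition \ref{invsequalcanonform}, which already identifies equality of the invariants $\{c_b,r\}$ with equality of canonical forms. The theorem then reduces to showing that two skew-symmetric endomorphisms are $\Lor^+(1,d-1)$-conjugate if and only if they admit identical canonical forms.

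For the easy direction ($\Rightarrow$ from conjugate to same invariants), I would note that if $\widetilde F = \Lambda F \Lambda^{-1}$ for some $\Lambda \in \Lor^+(1,d-1)$, then $\mathrm{Tr}(\widetilde F^{2b}) = \mathrm{Tr}(F^{2b})$ by cyclicity of the trace, so the $I_b$ in \eqref{Ilambmu} agree and, by the Fadeev--LeVerrier formula \eqref{fadlever}, so do the $c_b$. Moreover, $\rank \widetilde F = \rank F$ since $\Lambda$ is invertible, giving $\widetilde r = r$.

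For the nontrivial direction ($\Leftarrow$), suppose $c_b = \widetilde c_b$ and $r = \widetilde r$. By Proposition \ref{invsequalcanonform}, $F$ and $\widetilde F$ have the same canonical form. Apply Theorem \ref{theocanonicalF} to each to obtain canonical bases $B = \{e_0,\dots,e_{d-1}\}$ for $F$ and $\widetilde B = \{\widetilde e_0,\dots,\widetilde e_{d-1}\}$ for $\widetilde F$; by construction both are orthonormal with $e_0$ and $\widetilde e_0$ future directed timelike, and the matrix of $F$ in $B$ coincides with the matrix of $\widetilde F$ in $\widetilde B$. Define the linear map $\Lambda\colon V \to V$ by $\Lambda(e_i) = \widetilde e_i$. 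Orthonormality of both bases forces $\Lambda \in \Lor(1,d-1)$, and $\Lambda(e_0) = \widetilde e_0$ being future timelike implies $\Lambda$ preserves time orientation, hence $\Lambda \in \Lor^+(1,d-1)$.

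The final step is the direct verification that $\widetilde F = \Lambda \circ F \circ \Lambda^{-1}$: writing $[F]^j{}_i$ for the common matrix entries in the two bases, one has
\begin{equation*}
\Lambda \circ F \circ \Lambda^{-1}(\widetilde e_i) \;=\; \Lambda\bigl(F(e_i)\bigr) \;=\; [F]^j{}_i\, \Lambda(e_j) \;=\; [F]^j{}_i\, \widetilde e_j \;=\; \widetilde F(\widetilde e_i),
\end{equation*}
so the two endomorphisms agree on a basis. I expect no genuine obstacle here: the real content has already been absorbed into the construction of the canonical form (Theorem \ref{theocanonicalF}) and its dependence on only the invariants $\{c_b,r\}$ (Proposition \ref{invsequalcanonform}). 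The only subtle point is checking that the change of basis lands in the orthochronous component rather than the full Lorentz group, which is immediate from the future-orientation clause built into Theorem \ref{theocanonicalF}.
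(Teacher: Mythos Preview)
Your proposal is correct and follows essentially the same route as the paper: both invoke Proposition \ref{invsequalcanonform} to reduce to equality of canonical forms, then use that canonical bases are orthonormal and future directed (Theorem \ref{theocanonicalF}) so the change-of-basis map lies in $\Lor^+(1,d-1)$. Your version is simply more explicit about the conjugation computation and the orthochronous check; the paper states these in one sentence.
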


\begin{proof}
 The if ($\Leftarrow$) part is immediate, since it is trivial from their definitions that the quantities $\lrbrace{c_b,r}$ are Lorentz invariant.
 To prove the ``only if'' ($\Rightarrow$), by Proposition \ref{invsequalcanonform}, $F$ and $\widetilde F$ have the same canonical form in canonical bases $B$ and $\widetilde B$ respectively. By definition (c.f. Theorem \ref{theocanonicalF}), these bases are unit, future oriented and orthonormal. Thus, the transformation taking $B$ to $\widetilde B$ transforms $F$ into $\widetilde F$ and both must be $\Lor^+(1,d-1)$-related.
 
  \end{proof}

Theorem \ref{theoLorendo} stablishes the necessary and sufficient conditions for two endomorphisms to be $\Lor^+(1,d-1)$-related. Combining this result with Proposition \ref{invsequalcanonform}, we find that the canonical form (hence the parameters $\cc, \mu_i^2$ or $\aa,\bb,\mu_i^2$) totally define the equivalence class of skew-symmetric endomorphisms up to $\Lor^+(1,d-1)$ transformations. 
Moreover, we emphasize that this form is the same for every equivalence class, unlike other canonical (or normal) forms based on the classification of $\skwend{V}$, such as the one in \cite{djokovic83}, where they seek irreducibility of the blocks, so they must give two different forms to cover every case.
% We will apply this in Section \ref{secstructure} to analyze the structure of the quotient space $\skwend{V}/\Lor^+$. 

Next, we discuss some facts about the coefficients of the characteristic polynomial, also stated in \cite{Kdslike}, where the proof is only indicated, and which can now be easily proven using the canonical form.

% However, calculate all of the invariants $I_b,~b=1,\cdots, d$ might be redudant. Indeed, if $\rank F = 2 r$ all $I_b$ for $b>r$ can be determined in terms of $I_b$ for $b\leq r$. This maximal number of independent $I_b$ invariants has also consequences in the coefficients $c_b$:

% \begin{lemma}\label{lemmaIrindepe}
%  Let $F\in \skwend{V}$ such that $2 r = \rank F$. Then $c_p = c_{p-1} = \cdots = c_{r+1} = 0$ and $c_r>0,~c_r = 0,~c_r <0$ if and only if $\ker F$ is timelike, null or spacelike (or zero) respectively. In addition the maximum number of independent $I_a$ invariants is $ r$.
% \end{lemma}

\begin{lemma}\label{lemmaIrindepe}
 Let $F\in \skwend{V}$ be non-zero and let $2 r = \rank F$. Then $c_r>0,~c_r = 0,~c_r <0$ if and only if $\ker F$ is timelike, null or spacelike (or zero) respectively. Moreover, if $r < \q$,  $c_{\q} = c_{\q-1} = \cdots = c_{r+1} = 0$. 
\end{lemma}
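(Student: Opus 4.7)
My plan is to reduce the statement to a direct computation using the canonical form of Theorem \ref{theocanonicalF}. Since the coefficients $c_b$ are Lorentz invariants, I may evaluate them with $F$ already in canonical form and apply the factorization \eqref{eqcharac} from Corollary \ref{corolcharac}. Setting $y = x^2$ (and dividing out the factor $x$ when $d$ is odd), $\mathcal{P}_F$ becomes a degree-$\q$ polynomial in $y$, and Vieta's formula yields $c_b = (-1)^b e_b(\Lambda)$, where $e_b$ denotes the $b$-th elementary symmetric polynomial and
\begin{equation*}
\Lambda := \{\mu_t^2,\, -\mu_s^2,\, -\mu_1^2, \ldots, -\mu_p^2\} \quad (d \text{ even}), \qquad \Lambda := \{-\cc,\, -\mu_1^2, \ldots, -\mu_p^2\} \quad (d \text{ odd}).
\end{equation*}

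The next step is to count the number $n$ of nonzero entries of $\Lambda$ in terms of $r$ and the causal character of $\ker F$. Combining Remarks \ref{remarcanonF4} and \ref{remarcanonF3}, which translate the causal character of $\ker F$ into vanishing conditions for $\aa,\bb,\cc$, with Corollary \ref{lemmaequivcases}, which relates the multiplicity $m_0$ of the zero eigenvalue to $\rank F = 2r$, a short case analysis yields $n = r$ whenever $\ker F$ is non-degenerate or zero, and $n = r - 1$ whenever $\ker F$ is degenerate. In the degenerate case both ``Lorentzian block'' entries ($\mu_t^2$ and $-\mu_s^2$ for $d$ even, or $-\cc$ for $d$ odd) vanish simultaneously, but Corollary \ref{lemmaequivcases} compensates with the $+2$ term in its rank formula, so the rank drops by two less than the total amount lost in $\Lambda$.

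Since $e_b(\Lambda) = 0$ whenever $b$ exceeds the number of nonzero arguments, we immediately obtain $c_q = c_{q-1} = \cdots = c_{r+1} = 0$ (in all cases, as $n \leq r$), and likewise $c_r = 0$ in the degenerate case where $n = r - 1$. For the non-degenerate (or zero) subcases $n = r$, so $e_r(\Lambda)$ is exactly the product of the $r$ nonzero entries of $\Lambda$, and $c_r = (-1)^r e_r(\Lambda)$. A direct sign check in each of these subcases gives $c_r > 0$ precisely when $\ker F$ is timelike (namely $\mu_t = 0$ with $\mu_s^2 > 0$ for $d$ even, or $\cc > 0$ for $d$ odd, so the product contains only strictly negative entries, giving sign $(-1)^r$ that is cancelled by the Vieta factor $(-1)^r$), and $c_r < 0$ otherwise (spacelike or zero $\ker F$, where one positive factor $\mu_t^2$ or $-\cc$ flips the overall sign). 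The only genuine work is this last piece of sign bookkeeping, but it is entirely routine once the multiset $\Lambda$ has been identified.
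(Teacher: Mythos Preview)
Your proof is correct and follows essentially the same route as the paper's. Both arguments put $F$ in canonical form, invoke Corollary~\ref{lemmaequivcases} to relate $r$ and $m_0$ according to the causal character of $\ker F$, and then read off $c_r$ as (up to sign) the product of the nonzero roots of $\mathcal{Q}_{F^2}$; you package this last step via Vieta and elementary symmetric polynomials, while the paper factors out $x^{m_0}$ and identifies $c_r$ directly as the constant term of the remaining polynomial, but the content is the same.
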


\begin{proof}
 Taking into account that the parities of $d$ and $m_0$ are equal (Lemma \ref{remarkmultiplic}),  $\q - [\frac{m_0}{2}] = [\frac{d}{2}] - [\frac{m_0}{2}] = \frac{d-m_0}{2}$, so  equation \eqref{coefpoly}  can be rewritten 
% 
% Consider $F$ in canonical form \eqref{eqFgammamu}.  By Corollary \ref{corolcharac} the algebraic multiplicity of the zero eigenvalue $m_0$ is exactly determined by the number of parameters $(\aa,\bb,\mu_i)$ or $(\cc,\mu_i)$ vanishing. On the other hand, equation \eqref{coefpoly} gives:
\begin{equation}\label{cbmultip}
 \mathcal{P}_F(x) = x^{m_0}\lr{x^{d-m_0} + \sum\limits_{b= 1}^{\q-[m_0/2] }c_b x^{d-m_0-2b}} =x^{m_0}\lr{x^{d-m_0} + \sum\limits_{b= 1}^{\frac{d-m_0}{2} }c_b x^{d-m_0-2b}},
 \end{equation}
 where we have explicitly substituted all zero coefficients by extracting the common factor $x^{m_0}$, thus the remaining coefficients $c_b \neq 0$ for $b = 1, \cdots, (d-m_0)/2$. By Corollary \ref{lemmaequivcases}, $\ker F$ degenerate if and only if $2 r = d - m_0+2$ and $m_0 >2$, so the sum in \eqref{cbmultip} runs up to $(d-m_0 )/2 = r-1$, which means $c_r = c_{r+1} = \cdots = c_q = 0$, as stated in the lemma. Also by Corollary \ref{lemmaequivcases}, $\ker F$ non-degenerate if and only if $2 r = d - m_0$. In this case, the sum in \eqref{cbmultip} runs up to $(d-m_0)/2 = r$, hence $c_r \neq 0$ and if $r < \q$, the next coefficients vanish  $c_{r+1} = c_{r+2}= \cdots = c_\q = 0$. In addition $c_r$ is the independent term in the polynomial in parentheses. Let $\mu_1, \cdots, \mu_\lambda$ be all the non-zero parameters among the $\{ \mu_i \}$ of the canonical form of $F$ given in \eqref{eqFgammamu}. By equation \eqref{eqcharac}, $c_r$ can be written for $d$ odd: 
\begin{equation*}
 c_r = \cc \mu_1^2 \cdots \mu_\lambda^2.
\end{equation*}
Then, the sign of $\cc$ determines the sign of $c_r$ and,  by Remark \ref{remarcanonF3}, also the causal character of $\ker F_\cc$, hence, the causal character of $\ker F$ in accordance with the stament of the lemma. For $d$ even, also from \eqref{eqcharac} we have 
 \begin{equation*}
 c_r = -\frac{\bb^2}{4} \mu_1^2 \cdots \mu_\lambda^2<0 \quad(\bb \neq 0), \quad \quad c_r = \aa \mu_1^2 \cdots \mu_\lambda^2 \quad (\bb = 0),
\end{equation*}
 where the expression for $\bb = 0$ follows because in this case either $\mu_t$ or $\mu_s$ (or both) vanish, hence either $c_r = \mu_s^2 \mu_1^2 \cdots \mu_\lambda^2$ or  $c_r = -\mu_t^2 \mu_1^2 \cdots \mu_\lambda^2$ and $\aa$ equals $\mu_s^2$ in the first situation and
$-\mu_t^2$ in the second. By Remark \ref{remarcanonF4}, when $\bb \neq 0$ we have $\ker F_{\aa \bb} = \{0\}$ and hence ker F is always spacelike or zero and when $\bb = 0$, the causal character of $\ker F_{\aa \bb}$ (and that of $\ker F$) is determined by the sign of $\aa$ in accordance with the statement of the lemma.

\end{proof}
 
 \begin{remark}\label{remarkrankcb}
   A converse version of Lemma \ref{lemmaIrindepe} also holds, in the sense that the number $\nu$ of last vanishing coefficients  
   restricts the allowed rank parameters $r$.   Let $\nu$ be defined by $\nu=0$ if $c_q \neq 0$ and, otherwise, by the largest natural number satisfying
   $c_q = c_{q-1} = \cdots c_{q-\nu+1} =0$. By equation \eqref{cbmultip} it follows $\nu = [m_0/2]$, and since the dimension $d$ and $m_0$ have the same parity (cf. Lemma \ref{remarkmultiplic}), 
  $d- m_0 = 2 [ d/2 ] - 2 [ m_0/2] = 2 (q- \nu)$ which in particular shows that $\nu$ determines $m_0$ uniquely.
 If $m_0 > 2$, by Corollary \ref{lemmaequivcases} the rank parameter admits two possibilities $r = \lrbrace{\q-\nu,\q -\nu+1}$,
    each of which determined by the causal character of $\ker F$.  If $m_0 \leq 2$, also by Corollary \ref{lemmaequivcases} the $\ker F$ degenerate case cannot occur and $r = (q - \nu)$ is uniquely determined. In particular, if $d=4$, $r$ is always determined by $c_1, c_2$, because $r=2$ happens if and only if
    $\nu = 0$ and otherwise $r=1$ (unless $F$ is identically zero, in which case $r = 0$).
  
%   However, there are two exceptions: if $c_{\nu-1} < 0$, then $r=\nu$ and $\ker F$ must be timelike; if $d$ even and $\nu = \q$, then $m_0 = 2$ and by Corollary \ref{lemmaequivcases}, the $\ker F$ degenerate case cannot occur, so by Lemma \ref{lemmaIrindepe}, $c_r = c_{\q-1} \neq 0$, i.e. $r = \nu - 1$ is determined. In particular, in the $d=4$ case, $r$ is always determined by $c_1,c_2$: If none of them vanishes $r=2$; otherwise $r=1$. 

 \end{remark}

\subsection{Structure of $\skwend{V}/\Lor^+(1,d-1)$}\label{secstructure}

By Theorem \ref{theoLorendo}, the $q$-tuple $(c_1, \cdots, c_\q)$ corresponding to the coefficients of the characteristic polynomial of a skew-symmetric endomorphism, does not suffice to determine a point in the quotient space $\skwend{V}/\Lor^+(1,d-1)$, since generically two ranks are possible (dimensions three and four are an exception). As dicussed in Remark \ref{remarkrankcb}, for a number $\nu$ of last vanishing coefficients $c_b$, the allowed rank parameters are $r \in \{q-\nu, q -\nu + 1 \}$, and $r = q -\nu + 1$ is only possible provided $m_0>2$ (in particular, when $c_q \neq 0$ then necessarily $r = q$). One says that there is a {\it degeneracy} for the value of the rank at certain points in the space of coefficients $c_b$. In the submanifold 
$\{c_q = \cdots = c_{q-\nu+1} = 0, c_{q - \nu} \neq 0\}$, the possible rank parameters are $r \in \{q-\nu, q -\nu + 1 \}$. When a boundary point
where the number of last vanishing coefficients increases by exactly one
is approached, the rank parameter may remain equal to $q-\nu$ or jump to $q- \nu -1$ (note that while the coefficients $c_i$ are continuous functions of $F$, the rank is only lower
semicontinuous, e.g. \cite{lange13optm}).
As we shall see in this section, this behaviour
gives rise to special limit points which make the space of  parameters defining the canonical form (i.e. the space of conjugacy classes) a non-Hausdorf topological space, when endowed with the natural quotient topology.
%  For example, if $d$ odd, by Remark \ref{remarkrankcb} the hypersuface $\lrbrace{c_\q = 0, c_{\q-1}>0, \cdots, c_1 >0}$ admits two possible ranks $r \in \lrbrace{\q,\q-1}$. 
% Then one says that there is a {\it degeneracy} for the value of the rank at certain points in the space of coefficients $c_b$.
% This is due to the fact that the coefficients $c_b$ are continous functions of the entries of the matrices representing skew-symmetric endomorphisms, while the rank $r$ is lower semicontinuous (ref). Given a value of $r$, the critical points where $r$ may jump to a lower value must coincide with those points where the number of vanishing coefficients $c_b$ increases by one. 
Let us start by locating these limit points using the canonical form.
Degeneracies can only occur in dimensions $d=5$ or larger because
in dimension three the rank is two for any non-trivial $F$ and in dimension four the rank is uniquely determined by the invariants (c.f. Remark \ref{remarkrankcb}). We thus consider first the case $d=5$ and then extend to all values $d \geq 5$. In $d=5$ the space of parameters $\mathcal{A}$ defining the $[F]_{\Lor^+}$ classes is (see fig. \ref{plotregion} )
%In the $d$ odd case, they are those points where either the least (non-zero) $\mu_i$ or $\aa$ vanish. This is the lowest dimension where the degeneracies can occur
\begin{equation}
 \mathcal{A} := \lrbrace{ (\cc,\mu^2)  \in \mathbb{R} \times \mathbb{R}^+ \mid \cc \geq \mu^2~\mbox{if}~\cc>0}.
\end{equation}
Consider a $[F]_{\Lor^+}$ in the region $\mathcal{R}_+ :=\lrbrace{ \cc \geq \mu^2 > 0}$ and let $F$ be a representative of $[F]_{\Lor^+}$ in a canonical basis $B = \lrbrace{e_\alpha}_{\alpha = 0, \cdots, 4}$, that is 
\begin{equation}\label{F5can}
   F =\left(
\begin{array}{cccc}
 0 & 0 & -1 + \frac{\cc }{4}  \\
 0 & 0 & -1-\frac{\cc }{4}  \\
 -1+\frac{\cc }{4} & 1+\frac{\cc }{4} & 0  \\
\end{array}
\right)
\oplus
\left(
\begin{array}{cccc}
 0 & -\mu \\
 \mu & 0  \\
\end{array}
\right).
 \end{equation}
Let us define the functions $C_{\pm}(x) : = \frac{1}{x} \pm \frac{x}{4}$. Then, the following change of basis to $B' = \lrbrace{e'_\alpha}$ is well defined in $\mathcal{R}_+$:
\begin{equation}\label{basiscan4}
 \begin{array}{lclcl}
 e'_0 = C_{+}(\mu)\lr{C_{+}(\sqrt{\cc}) e_0 + C_{-}(\sqrt{\cc})e_1} - C_{-}(\mu)e_4, 
& & 
 e'_2  = -e_3,
 & &
 e'_4 = C_{-}(\sqrt{\cc})e_0 + C_{+}(\sqrt{\cc})e_1
 \\
 e'_1 = -C_{-}(\mu)\lr{C_{+}(\sqrt{\cc}) e_0 + C_{-}(\sqrt{\cc})e_1} + C_{+}(\mu)e_4,
 &  & 
e'_3  = -e_2.
 \end{array}
 \end{equation}
By direct calculation, $F$ is written in basis $B'$ as
\begin{equation}\label{switchgammamumatrix}
   F =\left(
\begin{array}{cccc}
 0 & 0 & -1 + \frac{\mu^2 }{4}  \\
 0 & 0 & -1-\frac{\mu^2 }{4}  \\
 -1+\frac{\mu^2 }{4} & 1+\frac{\mu^2 }{4} & 0  \\
\end{array}
\right)
\oplus
\left(
\begin{array}{cccc}
 0 & -\sqrt{\cc} \\
 \sqrt{\cc} & 0  \\
\end{array}
\right).
 \end{equation}
The basis $B'$ is non-canonical because $\mu^2< \cc$. However, if we vary the parameters so that $\mu \rightarrow 0$ (keeping $\cc$ unchanged), the matrix \eqref{switchgammamumatrix} becomes canonical (i.e. of the form \eqref{eqFgammamu}) in the limit and the class $[\lim_{\mu\rightarrow 0} F]_{\Lor^+}$ is given by $l_1=(0,\cc)$. On the other hand, $F$ in canonical form \eqref{F5can} also admits a limit $\mu \rightarrow 0$, which is also canonical and whose representative $[\lim_{\mu\rightarrow 0} F]_{\Lor^+}$ is given by $l_2=(\cc,0)$. Both limits are defined by the same sequence of points, because the transformation \eqref{basiscan4} is invertible in $\mathcal{R}_+$. However this sequence has two different limit points. As a consequence, the space of canonical matrices, and therefore the quotient space $\skwend{V}/\Lor^+(1,d-1)$, inherits a  non-Hausdorff topology.

Something similar happens in the region $\mathcal{R}_- :=\lrbrace{\cc <0, \mu > 0}$. Let $F$ be a representative in  canonical form of a point $[F]_{\Lor^+}$ in this region. Then, $F$ has a timelike eigenplane $\Pi_t$ with eigenvalue $\sqrt{|\cc|}$ (c.f. Remark \ref{remarcanonF3}), a spacelike eigenvector $e$ as well as spacelike eigenplane $\Pi_s$ with eigenvalue $\mu$. Thus $V = \Pi_t \oplus \spn{e} \oplus \Pi_s$ and there exist a (non-canonical) basis $B'$ adapted to this decomposition, into which $F$ takes the form
\begin{equation}\label{explicittime}
   F =\left(
\begin{array}{cccc}
 0 & \sqrt{|\cc|} & 0  \\
 \sqrt{|\cc|} & 0 & 0  \\
 0 & 0 & 0  \\
\end{array}
\right)
\oplus
\left(
\begin{array}{cccc}
 0 & -\mu \\
 \mu & 0  \\
\end{array}
\right).
 \end{equation}
 Keeping $\mu$ unchanged, expression \eqref{explicittime} has a limit $\cc \rightarrow 0$, which has a spacelike eigenplane $\Pi_s$ of eigenvalue $\mu$ and it is identically zero on $\Pi^\perp$ . Hence, $\ker F$ is timelike and using Definition \ref{defgammamu}, the canonical form of this limit $\lim_{\aa \rightarrow 0} F$ is given by $\cc' = \mu^2$ and $\mu' = 0$. Thus $[\lim_{\cc\rightarrow 0} F]_{\Lor^+}$ is represented by the point $l_2 = (\mu^2,0)$. On the other hand, in a canonical basis \eqref{F5can}, $F$ also admits a limit $\cc\rightarrow 0$, whose class $[\lim_{\cc\rightarrow 0} F]_{\Lor^+}$ is obviously represented by the point $l_1 = (0, \mu^2)$.

 \begin{figure}[h!] 
 \centering 
    \psfrag{Rm}{$\mathcal{R}_-$}
    \psfrag{Rz}{$\mathcal{R}_0$}
    \psfrag{Rp}{$\mathcal{R}_+$}
    \psfrag{pe}{$l_2$}
    \psfrag{qu}{$l_1$}
    \psfrag{mu2}{$\mu^2$} 
    \psfrag{sig}{$\cc$} 
  \includegraphics[scale=0.5]{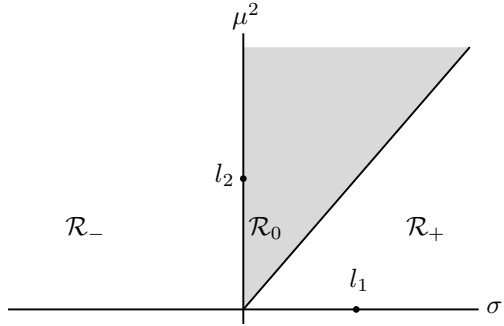}
  \caption{\small \it Representation of $\skwend{V}/\Lor^+(1,d-1)$ in the region $\mathcal{A} \subset \mathbb{R}^2$. The shadowed region is not included.}\label{plotregion} 
 \end{figure} 

The same reasoning can be carried out to arbitrary odd dimension. 
% First, consider a tuple $\lr{\cc,\mu_1^2,\cdots, \mu_{p-1}^2}$, where none of the parameters vanish and $\mu_{p-1}^2$ is (by definition) the least of the $\mu_i^2$ values.
First, define the regions 
\begin{equation*}
 \mathcal{R}_+^{(d,0)} := \lrbrace{\cc \geq \mu_1^2 \geq \cdots \geq \mu_{p}^2 > 0}\quad \mbox{and}\quad\mathcal{R}_-^{(d,0)}:=\lrbrace{\cc < 0,\mu_1^2 \geq \cdots \geq \mu_{p}^2 > 0}
\end{equation*}
and also the limit regions
\begin{equation*}
 \mathcal{R}_0^{(d,0)} := \lrbrace{\cc = 0,\mu_1^2 \geq \cdots \geq \mu_{p}^2 > 0}\quad \mbox{and}\quad \mathcal{R}_+^{(d,1)}:=\lrbrace{\aa \geq \mu_1^2 \geq \cdots \geq \mu_{p-1}^2 > \mu_{p}^2 = 0}.
\end{equation*}
 Consider representatives $F_+$ and $F_-$ (in canonical form) of points $(\cc^+, (\mu_1^+)^2, \cdots,(\mu_p^+)^2)$ and $(\cc^-, (\mu_1^-)^2, \cdots, (\mu_p^-)^2)$ in the regions $\mathcal{R}_+^{(d,0)}$ and $\mathcal{R}_-^{(d,0)}$ respectively. Then $F_+$ has a spacelike eigenplane $\Pi_s^+$ with eigenvalue $\mu_{p}^+$ as well as timelike eigenvector $e^+$ and spacelike eigenplane $\Pi_t^+$ with eigenvalue $\sqrt{\cc^+}$. Restricting to the subspace $W^+ = \spn{e^+} \oplus \Pi_t^+ \oplus \Pi_s^+$ we can repeat the procedure followed for the five dimensional case and conclude that $[\lim_{\mu_{p}^+ \rightarrow 0} F_+]$ has simultaneously limits on the points $(\cc^+, (\mu_1^+)^2, \cdots,(\mu_{p-1}^+)^2,  0) \in \mathcal{R}_+^{(d,1)} $ and $( 0, (\mu_1^+)^2, \cdots,(\mu_{p}^+)^2) \in \mathcal{R}_0^{(d,0)}$. Analogously $F_-$ has a spacelike eigenplane $\Pi_s^-$ with eigenvalue $\mu_{p}^-$ as well as spacelike eigenvector $e^-$ and timelike eigenplane ${\Pi'}_s^-$ with eigenvalue $\sqrt{|\cc^-|}$. Restricting to the subspace $W^- =   \Pi_s^- \oplus \spn{e^-} \oplus {\Pi'}_s^-$, the above arguments for the five dimensional case show that  $[\lim_{\cc^- \rightarrow 0} F_-]$ has simultaneous limits on the points $((\mu_{p}^-)^2, (\mu_{1}^-)^2, \cdots, (\mu_{p-1}^-)^2,  0) \in \mathcal{R}_+^{(d,1)} $ and $(0, (\mu_{1}^-)^2, \cdots, (\mu_{p}^-)^2) \in \mathcal{R}_0^{(d,0)}$. Thus the regions $\mathcal{R}_+^{(d,0)}$ and $\mathcal{R}_ -^{(d,0)}$ limit simultaneously with $\mathcal{R}_+^{(d,1)} $ and $ \mathcal{R}_0^{(d,0)}$ as $\mu_{p}$ and $\sigma$ tend to zero respectively. Indeed, the same ideas can be applied again to $\mathcal{R}^{(d,1)}_+$ and $\mathcal{R}^{(d,1)}_- :=\lrbrace{\cc < 0, \mu_1^2 \geq \cdots \geq \mu_{p-1}^2 > \mu_{p}^2 = 0} $, so that they also limit simultaneously, as $\mu_{p-1}$ and $\sigma$ go to zero respectively,  with $\mathcal{R}_0^{(d,1)} := \lrbrace{ \cc = 0, \mu_1^2 \geq \cdots \geq \mu_{p-1}^2 > \mu_{p}^2 = 0}$ and $\mathcal{R}_+^{(d,2)}:= \lrbrace{ \cc > 0, \mu_1^2 \geq \cdots \geq \mu_{p-1}^2 > \mu_{p-1}^2 = \mu_p^2 = 0 }$. In general, the regions $\mathcal{R}^{(d,i)}_\pm$ analogously defined, i.e. where $i$ gives the number of vanishing parameters $\mu_p = \cdots = \mu_{p-i} = 0$ and the subindex $\pm$ gives the sign of $\cc$, have simultaneous limits in $\mathcal{R}_0^{(d,i)}$ and $\mathcal{R}_+^{(d,i+1)}$, where the subindex $_0$ stands for vanishing $\cc$.

For the even dimensional case (with $d\geq 6$), notice that the canonical form \eqref{eqFgammamu} with $\bb = 0$ is equivalent to the odd dimensional case direct sum with a one dimensional zero endomorphism (of a Riemannian line). Hence, the previous reasoning for odd dimensions also applies for even dimensions and $\bb =  0$. For example, consider in $d = 6$ dimensions the regions $\mathcal{R}_+ = \lrbrace{\bb = 0, \aa \geq \mu^2 > 0}$ and $\mathcal{R}_- = \lrbrace{\bb = 0, \aa <0, \mu^2 > 0}$. Then they both assume limit in $\mathcal{R}_0 = \lrbrace{\bb = 0, \aa = 0,  \mu^2 > 0}$ and $\mathcal{R}_+^{(d,1)} = \lrbrace{\bb = 0, \aa >   \mu^2 = 0}$. 
Notice that if we keep $\bb \neq 0$ no {{\it degenerate} limits of this kind occur. This can be justified as follows. Let $\mu_t, \mu_s$ be defined as in \eqref{eqmutmus}. Then, it can be readily checked that $\det F = -\mu_t^2 \mu_s^2 \mu^2$, so if $\aa, \bb, \mu \neq 0$, then $\rank F = 6$. If we keep $\bb \neq 0 $ (thus both $ \mu_s, \mu_t$ are different from zero), $\mu \neq 0$ and make $\aa \rightarrow 0$, the limit must have always $\rank F = 6$. Hence, it is not possible that a limit $\aa \rightarrow 0$ ends at two points with different rank. Similarly, keeping $\bb \neq 0$, the limit $\mu \rightarrow 0$ always has $\rank F = 4$ and therefore, $\mu \rightarrow 0$ limits cannot be degenerate either. The generalization to arbitrary even dimensions with $\bb  = 0 $ is also straightforward from the odd dimensional case discussed above, which we now summarize in the following remark:

\begin{remark}\label{remarklimits}
 In the case of $d$ odd, consider the subset of $\mathbb{R}^q$ given by
 \begin{align*}
\mathcal{A}^{(odd)} := \{  \lr{\cc, \mu_1^2, \cdots, \mu_{p}^2} \in  \mathbb{R} \times \lr{\mathbb{R}^+}^{p} \mid \mu_1^2 \geq \cdots \geq \mu_{p-1}^2~\mbox{and if $\cc >0$}, ~ \cc \geq \mu_1^2 \geq \cdots \geq \mu_{p}^2 \}.
\end{align*}
 Define also the subsets of $\mathcal{A}^{(odd)}$ given by
\begin{align*}
 \mathcal{R}_+^{(d,i)} & := \lrbrace{(\sigma, \mu_1^2, \cdots, \mu_{p}^2) \in \mathcal{A}^{(odd)} \mid \cc \geq \mu_1^2 \geq \cdots \geq \mu_{p-i}^2 > \mu_{p-i+1}^2 = \cdots = \mu_{p}^2 = 0}, \\
 \mathcal{R}_-^{(d,i)} & := \lrbrace{(\sigma, \mu_1^2, \cdots, \mu_{p}^2) \in \mathcal{A}^{(odd)} \mid \cc <0, \mu_1^2 \geq \cdots \geq \mu_{p-i}^2 > \mu_{p-i+1}^2 = \cdots = \mu_{p}^2 = 0},\\
 \mathcal{R}_0^{(d,i)} & := \lrbrace{(\sigma, \mu_1^2, \cdots, \mu_{p}^2) \in \mathcal{A}^{(odd)} \mid \cc = 0, \mu_1^2 \geq \cdots \geq \mu_{p-i}^2 > \mu_{p-i+1}^2 = \cdots = \mu_{p}^2 = 0}.
\end{align*}
Then in the quotient topology of $\skwend{V}/\Lor^+(1,d-1)$ the sequences of $\mathcal{R}_\pm^{(p-i)}$ with limit at $\mathcal{R}_0^{(d,i)}$ also have limit at $\mathcal{R}_+^{(d,i+1)}$.

In the case of $d$ even, first define $\mu_s$ as in \eqref{eqmutmus} and let  $\mathcal{A}^{(even)}$ be the subspace of $\mathbb{R}^q$ given by:
\begin{align*}
% \mathcal{A}^{(odd)} := \{ & \lr{\cc, \mu_1^2, \cdots, \mu_{p-1}^2} \in  \mathbb{R} \times \lr{\mathbb{R}^+}^{p-1} \mid \mu_1^2 \geq \cdots \geq \mu_{p-1}^2~\mbox{and if $\cc >0$}, ~ \cc \geq \mu_1^2 \geq \cdots \geq \mu_{p-1}^2 \}\\
  \mathcal{A}^{(even)}  :=  \{  \lr{\aa, \bb, \mu_1^2, \cdots, \mu_{p}^2} \in  \mathbb{R} \times \lr{\mathbb{R}^+}^{p} \mid \mu_1^2 \geq \cdots \geq \mu_{p}^2 \mbox{~and if $\bb \neq 0$ or $\aa >0$, } 
   \mu_s^2 \geq \mu_1^2 \geq \cdots \geq \mu_{p}^2 \}. 
\end{align*}
Define also the following subsets of $\mathcal{A}^{(even)}$
\begin{align*}
 \mathcal{R}_+^{(d,i)} & := \lrbrace{(\sigma, \mu_1^2, \cdots, \mu_{p}^2) \in \mathcal{A}^{(even)} \mid  \bb = 0, \cc \geq \mu_1^2 \geq \cdots \geq \mu_{p-i}^2 > \mu_{p-i+1}^2 = \cdots = \mu_{p}^2 = 0}, \\
 \mathcal{R}_-^{(d,i)} & := \lrbrace{(\sigma, \mu_1^2, \cdots, \mu_{p}^2) \in \mathcal{A}^{(even)} \mid \bb = 0,  \cc <0, \mu_1^2 \geq \cdots \geq \mu_{p-i}^2 > \mu_{p-i+1}^2 = \cdots = \mu_{p}^2 = 0},\\
 \mathcal{R}_0^{(d,i)} & := \lrbrace{(\sigma, \mu_1^2, \cdots, \mu_{p}^2) \in \mathcal{A}^{(even)} \mid \bb = 0, \cc = 0, \mu_1^2 \geq \cdots \geq \mu_{p-i}^2 > \mu_{p-i+1}^2 = \cdots = \mu_{p}^2 = 0}.
\end{align*}
Then in the quotient topology of $\skwend{V}/\Lor^+(1,d-1)$ the sequences of $\mathcal{R}_\pm^{(d,i)}$ with limit at $\mathcal{R}_0^{(d,i)}$ also have limit at $\mathcal{R}_+^{(d,i+1)}$.

% Consider the neighborhood of the point $q = (0,\mu_1^2,\cdots, \mu_{p-1}^2) \in \mathcal{A}^{odd}$ given by $\mathcal{U}_{q} = \lrbrace{(-\epsilon_\sigma, \mu_1^2 \pm \epsilon_1,\cdots,\mu_{p-1}^2 \pm \epsilon_{p-1}) \in \mathcal{A}^{(odd)}}$ for $\epsilon_\sigma, \epsilon_1,\cdots, \epsilon_{p-1} > 0$. Then, the closure of $\mathcal{U}_{q}$ in the topology of $\skwend{V}/\Lor^+$ is $\bar{ \mathcal{U}}_{q} = \lrbrace{(-\epsilon_\sigma, \mu_1^2 \pm \epsilon_1,\cdots,\mu_{p-1}^2 \pm \epsilon_{p-1}) \in \mathcal{A}^{(odd)}} \cup \lrbrace{(\mu_1^2 \pm \epsilon_1,\cdots,\mu_{p-1}^2 \pm \epsilon_{p-1} , 0 ) \in \mathcal{A}^{(odd)}}$ for $ \epsilon_\sigma, \epsilon_1,\cdots, \epsilon_{p-1} \geq 0$.   
% 
% 
% In the topology of $\mathcal{A}^{(odd)}$ induced by $\skwend{V}/\Lor^+$ the subsets
% $B_+ = \lrbrace{ \epsilon_\sigma \geq \epsilon_1 \geq \cdots \geq  \epsilon_{p-1} >0}$ 
% and $B_- = \lrbrace{ \epsilon_1 \geq \cdots \geq  \epsilon_{p-1} >0 >\epsilon_\sigma}$
% are open and have closure .  $B_- = \lrbrace{ \lr{\epsilon_\sigma, \epsilon_1, \cdots, \epsilon_{p-1}} \mid \epsilon_1 \geq \cdots \geq  \epsilon_{p-1} >0 >\epsilon_\sigma}$ with closure $B = \lrbrace{ \lr{\epsilon_\sigma, \epsilon_1, \cdots, \epsilon_{p-1}} \mid \epsilon_1 \geq \cdots \geq  \epsilon_{p-1} >0 \geq \epsilon_\sigma}$ 

\end{remark}

% 
% \begin{align*}
% \mathcal{A}^{(odd)} := \{ & \lr{\cc, \mu_1^2, \cdots, \mu_{p-1}^2} \in  \mathbb{R} \times \lr{\mathbb{R}^+}^{p-1} \mid \mu_1^2 \geq \cdots \geq \mu_{p-1}^2~\mbox{and if $\cc >0$}, ~ \cc \geq \mu_1^2 \geq \cdots \geq \mu_{p-1}^2 \}\\
%   \mathcal{A}^{(even)}  :=  \{ & \lr{\aa, \bb, \mu_1^2, \cdots, \mu_{p-2}^2} \in  \mathbb{R} \times \lr{\mathbb{R}^+}^{p-1} \mid \mu_1^2 \geq \cdots \geq \mu_{p-1}^2 \mbox{~and if $\bb \neq 0$ or $\aa >0$, } \\  
%   & (\aa + \rho)/2 \geq \mu_1^2 \geq \cdots \geq \mu_{p-1}^2, \mbox{with $\rho = \sqrt{\aa^2 + \bb^2} $ } \}
% \end{align*}

% \begin{theorem}
% Let $V$ be a Lorentzian vector space of dimension $d$. 
% 
% and consider the quotient space $\skwend{V}/\Lor^+$.  respresented in a subset of $\mathbb{R} \times \lr{\mathbb{R^+}}^{p-1}$, where a point $q \in \mathcal{A}$ is $q = (\aa, \bb, \mu_1^2, \cdots, \mu_{p-2}^2)$ for $d$ even and $q = (\cc, \mu_1^2, \cdots, \mu_{p-1}^2)$, where the parameters $\aa, \bb, \mu_i$ correspond with Definition \ref{defgammamu}. Define the subregions 
% \begin{equation}
%  \mathcal{R}_+^{(p-i)}:= \quad \mathcal{R}_-^{(p-i)}:=,\quad \mathcal{R}_0^{(p-i)}:=  
% \end{equation}
% Then the closure of $\mathcal{R}_+^{(p-i)}$ is $\bar{\mathcal{R}}_+^{(p-i)}= \mathcal{R}_+^{(p-i)} \cup \mathcal{R}_+^{(p-i)}$ 
% \end{theorem}

\section{Conformal vector fields}\label{secconfvecs}

 One interesting applications of our previous results is based on the relation between skew-symmetric endomorphisms and the set of conformal Killing vector fields (CKVFs) of the $n$-sphere, $\mathrm{CKill}(\mathbb{S}^n)$. These vector fields are the generators of the conformal transformations of the $n$-sphere $\conf{\mathbb{S}^n}$, i.e. the group of transformations $\psi_\Lambda$ that scale the spherical metric $\gamma$, $\psi_\Lambda^\star(\gamma) = \Omega^2 \gamma$, where $\Omega$ is a smooth positive function of $\mathbb{S}^n$.  A standard technique to describe these transformations consists in viewing $\mathbb{S}^n$ as the (real) projectivization of the null future cone in $\mathbb{M}^{1,n+1}$, in such a way that $\conf{\mathbb{S}^n}$ is induced from the isometries of $\mathbb{M}^{1,n+1}$. This is discussed in detail for the four dimensional case in \cite{PenroseRindVol1} and in arbitrary dimensions in \cite{Kdslike} and in \cite{IntroCFTschBook} (the latter considers arbitrary signature and the projectivization of the null ''cone'' in $\mathbb{M}^{p+1,q+1}$, giving $\mathbb{S}^p \times \mathbb{S}^q$).  
 This procedure stablishes a group homomorphism $\psi: O(1,n+1) \rightarrow \mathrm{Conf}(\mathbb{S}^n),~\Lambda \mapsto \psi_\Lambda$, which is one-to-one when restricted to the orthochronous component $\Lor^+(1,n+1) \subset O(1,n+1)$. 
 
 The Euclidean space $\mathbb{E}^n = (\mathbb{R}^n,g_E)$ and $\mathbb{S}^n$ are well-known to be conformally related via the stereographic projection $St_N: \mathbb{S}^n\backslash\{N\} \rightarrow \mathbb{E}^n$, where $N$ denotes the point  w.r.t. which the projection is taken. Observe that the stereographic projection depends not only on the point $N$ but also on the (signed) distance between $N$
 and the plane onto which the projection is performed. We do not reflect this dependence in the notation for simplicity.

 Hence, the composition of a transformation $\psi_\Lambda \in \conf{\mathbb{S}^n}$ with the stereographic projection yields $St_N \circ \psi_\Lambda \circ St_N^{-1}=:\phi_\Lambda \in \conf{\mathbb{E}^n}$, which is a conformal transformation of $\mathbb{E}^n$. Strictly speaking, these transformations are not diffeomorphisms of $\mathbb{E}^n$, as they require to remove the two points $p_1, p_2 \in \mathbb{E}^n$ satisfying $\psi_\Lambda \circ St_N^{-1}(p_1) = N$ and $\psi_\Lambda^{-1} \circ St_N^{-1}(p_2) = N$, which are the ``preimage'' and the ``image'' of infinity under $\phi_\Lambda$ respectively. Nevertheless, since  $\psi: O(1,n+1) \rightarrow \mathrm{Conf}(\mathbb{S}^n)$ is a group homomorphism, so is $\phi: O(1,n+1) \rightarrow \mathrm{Conf}(\mathbb{E}^n),~\Lambda \mapsto \phi_\Lambda$ as well as the map which assigns $\psi_\Lambda \mapsto \phi_\Lambda$. In that sense $\conf{\mathbb{S}^n}$ and $\conf{\mathbb{E}^n}$ are the same. 
%  Removing these points $\phi_\Lambda: \mathbb{E}^n\backslash \{ p_1 \} \longrightarrow \mathbb{E}^n\backslash \{ p_2 \}$ is a diffeomorphism, but as they depend on the transformation $\phi_\Lambda$ itself, $\conf{\mathbb{E}^n}$ is not even a subset of the diffeomorphisms of $\mathbb{E}^n\backslash \{p_1,p_2\}$ 
These group homomorphisms, induce Lie algebra homomorphisms between $\skwend{\mathbb{M}^{1,n+1}}$,  $\mathrm{CKill}(\mathbb{S}^n)$ and $\mathrm{CKill}(\mathbb{E}^n)$ (the vector fields generating $\conf{\mathbb{E}^n}$). 
 The precise form of these maps depends, firstly, on the representative used to describe the projective cone (i.e. $\mathbb{S}^n$) and secondly on the point $N$ as well as on the signed distance from this point to the plane. In \cite{Kdslike}, the morphism 
  \begin{equation}\label{ximap}
  \begin{array}{rcl}
   \xi:=\phi_\star: \skwend{\mathbb{M}^{1,n+1}} & \longrightarrow & \mathrm{CKill}(\mathbb{E}^n),\\
   F & \longmapsto & \xi(F) =:\xi_F,
  \end{array}
 \end{equation} 
%  $\xi: \skwend{\mathbb{M}^{1,n+1}} \rightarrow \mathrm{CKill}(\mathbb{E}^n),~F \mapsto \xi_F$,
  is constructed\footnote{The method in \cite{Kdslike} is based on the unit spacelike hyperboloid in Minkowski instead of on the null cone. However, the two methods are easily seen to be equivalent to the one we describe}
  related to each other}
 %  is and the procedure is equivalent to the one we describe (see also \cite{marspeon20}),
using the representative with $\{x^0 = 1\}\cap \{x_\alpha x^\alpha = 0 \}$ for the projective cone, where $\{x^\alpha\}$ ($\alpha,\beta = 0, \cdots, n+1$)
are Minkowskian coordinates of $\mathbb{M}^{1,n+1}$, $N$ is  the point with coordinates  $\{ x^0 = -x^1 = 1,x^{A+1} = 0 \}$ ($A,B = 1, \cdots, n$) and
the image plane for the stereographic projection is 
$\{x^0 = x^1 = 1\}$. The result is a representation of $\mathrm{CKill}(\mathbb{E}^n)$ where the vector vector fields are expressed in Cartesian coordinates $\{ y^A \}$  induced from the
Minkowskian coordinates by means of $\{ x^0 = x^1 =1, x^{A+1} = y^A\}$.

\begin{theorem}\label{theoisomskwconf}[\cite{Kdslike}]
  Let $\mathbb{M}^{1,n+1}$ endowed with Minkowskian coordinates $\lrbrace{x^\alpha}$ and consider any element $F \in \skwend{\mathbb{M}^{1,n+1}}$ written in the basis $\{ \partial_{x^{\alpha} } \}$ in the form
\begin{equation}\label{skwmatrix}
 F = \begin{pmatrix}
  0 &  -\nu & -a^t + b^t/2 \\
  -\nu & 0 & - a^t - b^t/2 \\
  -a + b/2 & a +b/2 & -\pmb{\omega}
  \end{pmatrix},
\end{equation}
where $a , b \in \mathbb{R}^n$ are column vectors, $^t$ stands for the transpose and $\pmb{\omega}$ is a skew-symmetric $n \times n$ matrix ($\pmb{\omega} = -\pmb{\omega}^t$). Then, in the Cartesian coordinates $\{ y^A\}$ of $\mathbb{E}^n $ defined by the embedding
$i: \mathbb{E}^n  \hookrightarrow \mathbb{M}^{1,n+1}$, $i(\mathbb{E}^n) = \{x^0 = x^1 = 1, x^{A+1} = y^A\}$, the CKVFs of $\mathbb{E}^n$ are
\begin{equation}\label{CKVFgeneral}
\xi_{F} = \lr{b^A + \nu y^A + (a_B y^B)y^A - \frac{1}{2}(y_B y^B)a^A - {\omega^A}_B y^B} \partial_{y^A}.
\end{equation}
Moreover,  $\xi_{Ad_\Lambda(F)} = \phi_{\Lambda \star}(\xi_F)$ for every $\Lambda \in \Lor^+(1,n+1)$ and $\xi$ is a Lie algebra antihomomorphism, i.e. $[\xi_F,\xi_G] =-\xi_{[F,G]}$.
\end{theorem}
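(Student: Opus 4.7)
My plan is to compute $\xi_F$ as the infinitesimal generator of $\phi_{\exp(tF)}$ by lifting the construction to the null cone of $\mathbb{M}^{1,n+1}$. A direct parametric calculation of the stereographic projection from $N=(1,-1,0,\ldots,0)$ onto the plane $\{x^0=x^1=1\}$ shows that any future null vector $\tilde k$ projects to $y^A=2\tilde k^{A+1}/(\tilde k^0+\tilde k^1)$, and that the null representative of the ray associated to $y\in\mathbb{E}^n$ normalized by $\tilde k^0+\tilde k^1=2$ is $K(y):=\bigl((4+|y|^2)/4,\,(4-|y|^2)/4,\,y^1,\ldots,y^n\bigr)$. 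Since $\phi_\Lambda(y)=\pi(\Lambda K(y))$, one has
\begin{equation*}
\xi_F(y)\;=\;\tfrac{d}{dt}\Big|_{t=0}\pi\bigl(\exp(tF)K(y)\bigr)\;=\;d\pi_{K(y)}\bigl(F\cdot K(y)\bigr),
\end{equation*}
which reduces the first claim to a finite linear-algebra identity.

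Carrying out this identity is the computational core of the argument. Since $\tilde k^0+\tilde k^1=2$ at $K(y)$, the differential of $\pi$ satisfies $\partial_{\tilde k^0}\pi^A=\partial_{\tilde k^1}\pi^A=-y^A/2$ and $\partial_{\tilde k^{B+1}}\pi^A=\delta^A_B$. Multiplying the block matrix $F$ of \eqref{skwmatrix} into the column $K(y)$ and assembling the result through these partial derivatives, the rotation block $\pmb{\omega}$ contributes $-{\omega^A}_By^B$, the dilation parameter $\nu$ contributes $\nu y^A$, the translation $b$ contributes $b^A$, and the special-conformal block $a$ produces simultaneously $(a_By^B)y^A$ and $-\tfrac12(y_By^B)a^A$; the sum is exactly \eqref{CKVFgeneral}. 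The main obstacle here is purely clerical: tracking signs and the coefficients $(4\pm|y|^2)/4$ coming from the entries of $K(y)$.

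For the intertwining statement, the Lie-group identity $\exp(t\,\mathrm{Ad}_\Lambda F)=\Lambda\exp(tF)\Lambda^{-1}$ combined with the group homomorphism property of $\phi$ gives $\phi_{\exp(t\,\mathrm{Ad}_\Lambda F)}=\phi_\Lambda\circ\phi_{\exp(tF)}\circ\phi_\Lambda^{-1}$, and differentiating at $t=0$ immediately yields $\xi_{\mathrm{Ad}_\Lambda(F)}=\phi_{\Lambda\star}(\xi_F)$ by the chain rule.

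The antihomomorphism property $[\xi_F,\xi_G]=-\xi_{[F,G]}$ is the standard sign arising for the infinitesimal generators of a \emph{left} Lie group action: letting $\sigma_y:O^+(1,n+1)\to\mathbb{E}^n$, $\Lambda\mapsto\phi_\Lambda(y)$, denote the orbit map, one has $\xi_F(y)=\sigma_{y\star}(F)$ with $F\in T_eO^+(1,n+1)$, and $\sigma_y$ pushes right-invariant (not left-invariant) extensions on the group to the vector fields $\xi_F$; since the bracket on $\mathfrak{o}(1,n+1)$ is defined using left-invariant extensions, a sign flip appears. Alternatively, the formula can be checked directly by substituting \eqref{CKVFgeneral} into the bracket, a routine calculation in which the cross terms conspire to produce the overall minus sign. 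The bulk of the proof is therefore the explicit computation of the first two paragraphs; the remaining statements follow structurally.
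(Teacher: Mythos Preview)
Your proof is correct. The paper does not actually prove this theorem: it is stated as a citation from \cite{Kdslike}, with the surrounding text only describing the geometric setup (the representative $\{x^0=1\}$ of the projective null cone, the choice of $N$, and the image plane $\{x^0=x^1=1\}$) without carrying out the computation. Your argument supplies precisely what the paper omits, and in fact your null-cone formulation matches the description in the paragraph preceding the theorem more closely than the original reference, which (as the paper's footnote notes) works via the unit spacelike hyperboloid instead.

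A quick verification of your computational core confirms it: with $K(y)=\bigl((4+|y|^2)/4,\,(4-|y|^2)/4,\,y\bigr)$ one has $(FK)^0+(FK)^1=-2\nu-2a\cdot y$ and $(FK)^{A+1}=b^A-\tfrac12|y|^2 a^A-{\omega^A}_B y^B$, so $d\pi_{K(y)}(FK)=-\tfrac{y^A}{2}\bigl((FK)^0+(FK)^1\bigr)+(FK)^{A+1}$ reproduces \eqref{CKVFgeneral} exactly. The intertwining and antihomomorphism arguments are the standard structural ones and require no further justification.
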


\begin{remark}
 For later use, we write explicitly the parameters of the vector field $\nu, a^A, b^A, {\omega^A}_B$ in terms of the entries ${F^\alpha}_\beta$ of the endomorphism $F$:
\begin{equation}\label{vecendocomponents}
 \begin{aligned}
  \nu & = - {F^0}_1 ,
  \qquad
  & a_A & = -\frac{1}{2} \lr{{F^0}_{A+1} + {F^1}_{A+1}}, \\
  b_A & = \frac{1}{2} \lr{{F^0}_{A+1} - {F^1}_{A+1}}, \qquad
  & {\omega^A}_{B} & = - {F^{A+1}}_{B+1}.
 \end{aligned}
\end{equation}
where capital Latin indices are lowered with the
Kronnecker $\delta_{AB}$. Unless otherwise stated, $\xi$ without subindex
refers to the map $\xi$ given in \eqref{ximap} while $\xi_F$ refers to the CKVF which is image under $\xi$ of the skew-symmetric endomorphism $F$. 
\end{remark}

The freedom of choosing a representative for $\mathbb{S}^n$ (as well as the point $N$ and the projection stereographic plane) can be also seen in a more ``passive'' picture. Consider two different sets of Minkowskian coordinates $\{ x^\alpha\}$ and $\{ x'^\alpha\}$ related by a $\Lor^+(1,n+1)$ transformation $\Lambda$, $x'^\alpha = {\Lambda^\alpha}_\beta x^\beta$. Using Theorem \ref{theoisomskwconf}, we obtain two different embeddings $i,i':\mathbb{E}^n \hookrightarrow \mathbb{M}^{1,n+1}$ associated to $\{x^\alpha \}$ and $\{x'^\alpha \}$ respectively, for which $i(\mathbb{E}^n) = \{x^0 = x^1 = 1, x^{A+1} =: y^A\}$ and $i'(\mathbb{E}^n) = \{x'^0 = x'^1 = 1, x'^{A+1} =: y'^A\}$, as well as two associated maps $\xi, \xi'$. Let $F \in \skwend{\mathbb{M}^{1,n+1}}$, defined by \eqref{skwmatrix} with parameters $\{\nu, a^A, b^A, {\omega^A}_B\}$ and $\{\nu', a'^A, b'^A, {\omega'^A}_B\}$ in the bases $\{\partial_{x^\alpha} \}$ and $\{\partial_{x'^\alpha} \}$ respectively. Then, $F$ can be associated to two vector fields
\begin{align}
\xi_{F} & = \big(b^A + \nu y^A + (a_B y^B)y^A - \frac{1}{2}(y_B y^B)a^A - {\omega^A}_B y^B \big) \partial_{y^A}, \\
\xi'_{F} & =\big({b'}^A + \nu' y'^A + (a'_B {y'}^B)y'^A - \frac{1}{2}(y'_B {y'}^B){a'}^A - {{\omega'}^A}_B {y'}^B \big) \partial_{y'^A},
\end{align}
which are equal in the following sense.  If we transform the representative $\mathbb{S'}^n = \{ x'^0 = 1\}\cap \{ x'_\alpha x'^\alpha = 0\}$ with $\Lambda$, we obtain a new representative of the projective cone which in coordinates $x^\alpha$ is precisely $\mathbb{S}^n = \{ x^0 = 1\}\cap \{ x_\alpha x^\alpha = 0\}$. Abusing the notation, the map $\wphi_{\Lambda} := St_{N} \circ \Lambda \circ St_{N'}^{-1}$ is such that $\wphi_{\Lambda \star} (\xi'_F) = \xi_F$. Then, considering $i(\mathbb{E}^n)$ and $i'(\mathbb{E}^n)$ as respresentations of the  same space in two different global charts $( y^A, \mathbb{R}^n)$ and $(y'^A, \mathbb{R}^n)$, $\wphi_{\Lambda}$ can be seen as a change of coordinates  $y^A = (\wphi_\Lambda(y'))^A$, with the property that the Euclidean metric
in coordinates $\{ y'^A \}$ transforms as
% Hence, we can construct the map $\xi$ with a given set of coordinates $\{ x^\alpha\}$ and the induced $\{y^A\}$. Then, we may switch for a more convenient set of cartesian coordinates $\{ {x'}^\alpha\}$ and $\{ {y'}^A\}$ and the form of $\xi$ is preserved. However, one must be careful with the following fact once flat metrics $\eta$ for $\mathbb{M}^{1,n+1}$ and $g_E$ for $\mathbb{E}^n$ are chosen. While $\eta$ preserves its form under this change of coordinates, $g_E$ scales by a factor in the primed coordinates
\begin{equation}
 g_E = \delta_{AB} \dif y'^A \dif y'^B = \Omega^2(y) \delta_{AB} \dif y^A \dif y^B 
\end{equation}
for a smooth positive function $\Omega$. In other words, changing to different Minkowskian coordinates in $\mathbb{M}^{1,n+1}$ induces a change of coordinates in $\mathbb{E}^n$ in such a way that the form \eqref{CKVFgeneral} of the map $\xi$ is preserved. Notice that a similar result holds if we change the point w.r.t. which we take the stereographic projection, because any two $N, N' \in \mathbb{S}^n$ must be related by a $SO(n) \subset O^+(1,n+1)$ transformation.

Therefore, for the rest of this section, we will often adapt our choice
of Minkowskian coordinates
$\lrbrace{x^\alpha}$ of $\mathbb{M}^{1,n+1}$ to simplify the problem at hand. With this choice, it comes a corresponding set of cartesian coordinates
$\lrbrace{y^A}$ of $\mathbb{E}^n$ such that $\xi_F$ is given by equation \eqref{CKVFgeneral} and the Euclidean metric is $g_E = \Omega(y)^2 \delta_{AB} \dif y^A \dif y^B$. Which coordinates are adequate obviously depends on the problem. For example,
% with the results of Section \ref{seccommendo}, we can write $\xi_F$ in the coordinates $\{ x^\alpha\}$ associated with the basis $B_\mathcal{C}$ defined in Remark \ref{remarkdecompVspace}. Then, the set of vectors generating the centralizer of $\xi_F$, $\mathcal{C}(\xi_F)$ can be straightforwardly obtained from Theorem \ref{theosimpbas} and equations \eqref{vecendocomponents}. We describe this with detail in Appendix \ref{seccentrxif}.
from the block form \eqref{decompFeven} and \eqref{decompFodd} of skew-symmetric endomorphisms, 
consider each of the blocks $\restr{F}{\mathbb{M}^{1,3}}$ $\restr{F}{\mathbb{M}^{1,2}}$ as endomorphisms of $\mathbb{M}^{1,n+1}$, extended as the zero map  in $(\mathbb{M}^{1,3})^\perp$ and $(\mathbb{M}^{1,2})^\perp$ respectively, and similarly for each $\restr{F}{\Pi_i}$.
%extended as zero in $\Pi_i^\perp$.
If we denote by $\xi_{\restr{F}{\mathbb{M}^{1,3}}},~\xi_{\restr{F}{\mathbb{M}^{1,2}}}$ and $\xi_{\restr{F}{\Pi_i}}$ the corresponding images by $\xi$, one readily gets following decomposition:
 \begin{equation}\label{precanCKVF}
  \xi_F = \xi_{\restr{F}{\mathbb{M}^{1,3}}} + \sum\limits_{i=1}^p \xi_{\restr{F}{\Pi_i}}\quad\mbox{($n$ even)},\quad \quad  \xi_F = \xi_{\restr{F}{\mathbb{M}^{1,2}}} + \sum\limits_{i=1}^p \xi_{\restr{F}{\Pi_i}}\quad\mbox{($n$ odd)},
 \end{equation}
 where in terms of $n$, $p$ is given by
\begin{align}
  p = \left[ \frac{n+1}{2}\right] - 1 \label{pn}
\end{align}
(because the dimension of the Minkowski space where $F$ is defined is $d=n+2$, cf. Theorem \ref{theoisomskwconf}). 
%  Decomposition \eqref{precanCKVF} will be denoted {\it precanonical form} of the CKVF $\xi_F$ in analogy with the skew-symmetric endomorphisms.
%  Each of the adding terms in the first and second of equations \eqref{precanCKVF} is the image of the blocks adding in \eqref{decompFeven} and \eqref{decompFodd} respectively. 
 The explicit form of each of the terms in \eqref{precanCKVF} is direct from  \eqref{vecendocomponents}. Namely, the terms $\xi_{\restr{F}{\mathbb{M}^{1,3}}}$ and $\xi_{\restr{F}{\mathbb{M}^{1,2}}}$ are given by \eqref{CKVFgeneral} with vanishing parameters $a^A,b^A, {\omega^A}_B$ for $A,B\geq 3$ and $A,B\geq 2$ respectively, and each $\xi_{\restr{F}{\Pi_i}}$ is proportional to a vector field of the form 
 \begin{equation}\label{CAKVFgeneral}
  \eta:= y^{A_0} \partial_{y^{B_0}} - y^{B_0}\partial_{y^{A_0}}
 \end{equation}
 with $A_0,B_0 \in \lrbrace{1,\cdots, n}$ such that $A_0 \neq B_0$. More specifically, $\xi_{\restr{F}{\Pi_i}} = \mu_i \eta_i$, where $\eta_i$ is given by equation \eqref{CAKVFgeneral} with $B_0 = A_0 +1$ and $A_0 = 2i$ if $n$ even while  $A_0 = 2i+1$ if $n$ odd. 
 Vector fields of the form \eqref{CAKVFgeneral} will play an important role in the following analysis. They have the form of axial Killing vector fields, although in general they are CKVFs because of the conformal factor in $g_E = \Omega(y)^2 \delta_{AB} \dif y^A \dif y^B$. From the previous discussion, it follows that there exists a conformal transformation $\wphi_\Lambda \in \conf{\mathbb{E}^n}$ such that $g'_E := \wphi^\star_\Lambda(g_E) = \delta_{AB} \dif y^A \dif y^B$. Then by the properties of the Lie derivative it is immediate
 \begin{equation}
  0 = \mathcal{L}_\eta \wphi_\Lambda^\star(g_E) = \mathcal{L}_{\wphi_{\Lambda\star} \eta} g_E.
 \end{equation}
In other words, $\eta$ is an axial Killing vector of $g'_E$ and $\wphi_{\Lambda\star}\eta$ is an axial Killing vector of $g_E$. Thus, we define:
 \begin{definition}
  A CKVF of an Euclidean metric $g_E$, $\eta$,  is said to be a {\bf conformally axial} Killing vector field (CAKVF) if and only if the exist a $\wphi_{\Lambda} \in \conf{\mathbb{E}^n}$ such that $\wphi_{\Lambda\star}(\eta)$ is an axial Killing vector field of $g_E$. Equivalently, $\eta$ is a CAKVF if and only if it is an axial Killing vector field of $\wphi^\star_\Lambda(g_E)$.
 \end{definition}
 
\begin{remark}\label{remarkGsplkunit}
 Using Theorem \ref{theoisomskwconf}, it is immediate to verify that a CKVF is a CAKVF if and only if it is the image under $\xi$ of a simple unit spacelike endomorphism $G$.
\end{remark}

Notice that the terms in \eqref{precanCKVF} form a commutative subset of $\mathrm{CKill}\lr{\mathbb{E}^n}$. This is an immediate consequence
of the fact that $\xi$ is a Lie algebra antihomomorphism (c.f. Theorem \ref{theoisomskwconf}) and the blocks $\restr{F}{\mathbb{M}^{1,2}}$ (resp. $\restr{F}{\mathbb{M}^{1,3}}$) and $\restr{F}{\Pi_i}$ are pairwise commuting. In addition, a straightforward calculation shows that they form an orthogonal set
 \begin{equation}
 g_E(\widetilde \xi, \eta_i) = 0, \quad \quad  g_E(\eta_i,\eta_j) = 0\quad\quad(i\neq j)
 \end{equation}
 where $\widetilde \xi := \xi_{\restr{F}{\mathbb{M}^{1,3}}}$  for $n$ even and $\widetilde \xi := \xi_{\restr{F}{\mathbb{M}^{1,2}}}$ for $n$ odd. 
 In fact, as we show next, orthogonality of two CKVFs implies commutativity provided one of them is a CAKVF.  If both are CAKVF, then orthogonality turns out to be equivalent to commutativity.
 \begin{lemma}\label{lemmaortho}
  Let $\eta,\eta'$ be non-proportional CAKVFs and $\xi_F$ a CKVF. Then  $[\eta,\eta']=0$ if and only if there exist cartesian coordinates such that $\eta = y^{n-2}\partial_{y^{n-3}}- y^{n-3} \partial_{y^{n-2}}$ and $\eta' = y^{n-1} \partial_{y^n} - y^n \partial_{y^{n-1}}$. Equivalently  
  $[\eta, \eta']=0$ if and only if $g_E(\eta,\eta') = 0$. In addition, $[\xi_F, \eta]=0$ if $g_E(\xi_F,\eta) = 0$. 
 \end{lemma}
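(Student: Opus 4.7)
The plan is to transfer the statement to $\skwend{\mathbb{M}^{1,n+1}}$ via the map $\xi$ of Theorem \ref{theoisomskwconf}. By Remark \ref{remarkGsplkunit} we have $\eta=\xi_G$ and $\eta'=\xi_{G'}$ for simple unit spacelike $G,G'\in\skwend{\mathbb{M}^{1,n+1}}$, which per Definition \ref{defsimend} we write as $G=e\otimes v_\flat-v\otimes e_\flat$, $G'=e'\otimes v'_\flat-v'\otimes e'_\flat$ with $\{e,v\}$ and $\{e',v'\}$ orthonormal spacelike pairs. Non-proportionality of $\eta,\eta'$ corresponds to linear independence of $G,G'$ because $\xi$ is a linear bijection, and since $\xi$ is a Lie algebra antihomomorphism, $[\eta,\eta']=0\iff[G,G']=0$.

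For the first equivalence, I would invoke Corollary \ref{commutingrank1_cor}: $[G,G']=0$ precisely when $\{e,v,e',v'\}$ is mutually orthogonal, in which case these four unit spacelike vectors extend to an orthonormal future-directed Minkowski basis $\{\partial_{x^{\alpha}}\}_{\alpha=0}^{n+1}$ chosen so that $e=\partial_{x^{n-2}}$, $v=\partial_{x^{n-1}}$, $v'=\partial_{x^{n}}$, $e'=\partial_{x^{n+1}}$. Reading off the block decomposition \eqref{skwmatrix} via \eqref{vecendocomponents}, only the $\pmb\omega$ blocks of $G,G'$ are nonzero, and inserting into \eqref{CKVFgeneral} yields at once the stated expressions for $\eta,\eta'$ in the induced cartesian coordinates $\{y^A\}$. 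Conversely, if $\eta,\eta'$ have the stated form in some cartesian coordinates, direct evaluation of $[\eta,\eta']$ gives zero because the two vector fields involve disjoint sets of coordinates.

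For the equivalence with orthogonality and the claim about $\xi_F$, the implication $[\eta,\eta']=0\Rightarrow g_E(\eta,\eta')=0$ is immediate in the coordinates just constructed. For the converse and for the statement about $\xi_F$, I would instead work in coordinates adapted only to $\eta$, taking $e=\partial_{x^{2}}$, $v=\partial_{x^{3}}$ so that $\eta=y^{2}\partial_{y^{1}}-y^{1}\partial_{y^{2}}$ by the same computation. The identity $g_E(\xi_F,\eta)=y^{2}\xi_F^{1}-y^{1}\xi_F^{2}\equiv 0$ must then hold as a polynomial in $y$; expanding $\xi_F$ with \eqref{CKVFgeneral} and collecting by degree, the linear part forces $b^{1}=b^{2}=0$, the cubic part forces $a^{1}=a^{2}=0$, and the quadratic part is $-\omega^{1}{}_{2}\bigl((y^{1})^{2}+(y^{2})^{2}\bigr)+\sum_{B\geq 3}(\omega^{2}{}_{B}\,y^{1}-\omega^{1}{}_{B}\,y^{2})y^{B}$, forcing $\omega^{1}{}_{B}=\omega^{2}{}_{B}=0$ for every $B$. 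By \eqref{vecendocomponents} and \eqref{skwmatrix} these vanishings are exactly $F(e)=F(v)=0$, so Lemma \ref{commutingrank1} applies with $\mu=0$ and gives $[F,G]=0$, proving the last claim. Specialising $F=G'$ one finds $G'(e)=\langle v',e\rangle e'-\langle e',e\rangle v'=0$ and analogously $G'(v)=0$; linear independence of $\{e',v'\}$ then forces $\langle e',e\rangle=\langle v',e\rangle=\langle e',v\rangle=\langle v',v\rangle=0$, and Corollary \ref{commutingrank1_cor} yields $[G,G']=0$, hence $[\eta,\eta']=0$, closing the second equivalence.

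The only delicate point is the polynomial analysis: orthogonality is strictly stronger than commutativity, since a nonzero $\mu$ in $F(e)=\mu v$, $F(v)=-\mu e$ would contribute the term $-\mu\bigl((y^{1})^{2}+(y^{2})^{2}\bigr)$ to $g_E(\xi_F,\eta)$. This is why the last claim is only an implication and not an equivalence, but it is enough for the lemma because the converse of the second equivalence is obtained by applying that implication to $F=G'$, where the rigidity of $G'$ being simple means that the conclusions $G'(e)=G'(v)=0$ already encode the full mutual orthogonality of the four defining vectors.
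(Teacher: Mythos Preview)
Your proof is correct and follows essentially the same route as the paper: transfer to $\skwend{\mathbb{M}^{1,n+1}}$ via Remark \ref{remarkGsplkunit}, use Corollary \ref{commutingrank1_cor} for the first equivalence, and then carry out the polynomial analysis of $g_E(\xi_F,\eta)$ in coordinates adapted to $\eta$ to force the relevant parameters of $F$ to vanish. The only cosmetic differences are your choice of indices ($1,2$ rather than $n-1,n$), your more explicit degree-by-degree breakdown, and your identification of the conclusion as $F(e)=F(v)=0$ plus Lemma \ref{commutingrank1} (the paper just says ``block structure''); your final specialisation to $F=G'$ is slightly redundant since $[\eta,\eta']=0$ already follows from the general implication, but it does no harm.
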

 \begin{proof}
   Let $G,G' \in \skwend{\mathbb{M}^{1,n+1}}$ be such that $\xi(G) = \eta,~\xi(G') = \eta'$. Since $G$ and $G'$ are simple, spacelike and unit (cf. Remark \ref{remarkGsplkunit}), we can write $G = e \otimes v_\flat - v \otimes e_\flat$ and $G' = e' \otimes v'_\flat - v' \otimes e'_\flat $ for spacelike, unit vectors $\{e,e',v,v'\}$, such that $0 = \lrprod{e,v} = \lrprod{e',v'}$. By Corollary \ref{commutingrank1_cor}, it follows that $[G,G'] = 0$ if and only if $\{e,e',v,v'\}$ are  mutually orthogonal.  Let us take cartesian coordinates of $\mathbb{M}^{1,n+1}$ such that $e = \partial_{x^{n-2}}, v = \partial_{x^{n-1}},e' = \partial_{x^{n}}, v' = \partial_{x^{n+1}}$.
Then, in the associated coordinates $\lrbrace{y^A}$ of $\mathbb{E}^n$ it follows $\eta = y^{n-2}\partial_{y^{n-3}}- y^{n-3} \partial_{y^{n-2}}$ and $\eta' = y^{n-1} \partial_{y^n} - y^n \partial_{y^{n-1}}$. This proves the first part of the lemma. From this result, it is trivial that $[\eta, \eta'] = 0$ implies $g_E(\eta, \eta') = 0$.

To prove that $g_E(\eta,\xi_F)=0$ implies $[\eta,\xi_F] = 0$ (which in
particular establishes the converse $g_E(\eta, \eta') = 0 \Longrightarrow
[\eta, \eta'] = 0$ for CAKVFs), let us take coordinates $\lrbrace{y^A}$ such that $\eta = y^{n-1} \partial_{y^{n}} - y^{n} \partial_{y^{n-1}}$. Then, 
writing $\xi_F$ as a general CKVF \eqref{CKVFgeneral}, we obtain by direct calculation:
\begin{equation}\label{eqgEetaxi}
 g_E(\eta, \xi_F) = \Omega^2 \lr{y^n b^{n-1} - y^{n-1} b^n -\frac{y_B y^B}{2}(a^n y^{n-1} - a^{n-1} y^n) + {\omega^{n-1}}_B y^B y^n - {\omega^{n}}_B y^B y^{n-1}}=0.
\end{equation}
Therefore $a^n, a^{n-1},b^n,b^{n-1}, {\omega^n}_B, {\omega^{n-1}}_B$ must vanish.
This implies that the associated endomorphisms $G$ and $F$ to $\eta$ and $\xi_F$ adopt a block structure from which it easily follows that $[G,F] = 0$ and hence
$[\eta, \xi_F] = 0$.
 \end{proof}

 \begin{definition}\label{defCAKVF}
   Let $\xi_F \in \mathrm{CKill}\lr{ \mathbb{E}^n}$. Then  a {\bf decomposed form}  of $\xi_F$ is $\xi_F = \widetilde \xi + \sum_{i=1}^p \mu_i \eta_i$ for an orthogonal subset $\{\widetilde \xi, \eta_i\}$, where $\eta_i$ are CAKVFs, $\mu_i \in \mathbb{R}$ for $i=1, \cdots, p$. A set of cartesian coordinates $\lrbrace{y^A}$ such that $\eta_i = y^{A_i} \partial_{y^{A_i+1}} - y^{A_i+1} \partial_{y^{A_i}}$, for  $A_i = 2 i$ for $n$ odd and $A_i = 2i+1$ for $n$ even, is called a set of {\bf decomposed} coordinates. 
 \end{definition}

 \begin{remark}
   \label{tildexi}
   Observe that the $\widetilde{\xi}$ is a CKVF. By Lemma \ref{lemmaortho} and its proof, the parameters $\{\nu, a, b, \boldsymbol \omega\}$
   defining $\widetilde{\xi}$ in a set of decomposed coordinates must all vanish
   except possibly $\{ \nu, a^1, a^2, b^1, b^2, \omega^1{}_{2} = - \omega^{2}_1\}$
   when $n$ is even or $\{ \nu, a^1, b^1\}$ when $n$ is odd.
   This means that there is a skew-symmetric endomorphism  $\widetilde{F}$ 
   with restricts
   to  $\mathbb{M}^{1,3} \subset
   \mathbb{M}^{1,n}$  ($n$  even) or
   $\mathbb{M}^{1,2} \subset
   \mathbb{M}^{1,n}$  ($n$ odd) and vanishes identically on their
   respective orthogonal complements such that  $\widetilde{\xi}
   = \xi_{\widetilde{F}}$. We will exploit this fact in an essential way below.
 \end{remark}

 With the definition of decomposed form of CKVFs, we can reformulate  Theorem \ref{theoremclasif} in terms of CKVFs. 
 
 \begin{proposition}\label{proprecan}
   Let $\xi_F \in \mathrm{CKill}\lr{\mathbb{E}^n}$. Then there exist an orthogonal set $\{ \eta_i\}_{i=1}^p$ of CAKVFs such that $[\xi_F,\eta_i] = 0$. For every such a set $\{ \eta_j\}_{j=1}^p$ and $i \in \{ 1, \cdots, p\}$
   there exist $\mu_i \in \mathbb{R}$ such that  $g_E(\eta_i, \eta_i) \mu_i = g_E(\xi_F,\eta_i)$. In addition,with the definition $\widetilde \xi := \xi_F - \sum  \mu_i \eta_i$ the expression $\xi_F = \widetilde \xi + \sum \mu_i \eta_i$ provides a decomposed form of $\xi_F$.
   \end{proposition}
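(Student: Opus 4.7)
My plan is to transfer the problem to $\skwend{\mathbb{M}^{1,n+1}}$ via the antihomomorphism $\xi$ of Theorem \ref{theoisomskwconf}, where the classification results of Sections \ref{appclassification}--\ref{secsimpleend} apply directly. Let $F \in \skwend{\mathbb{M}^{1,n+1}}$ be the preimage of $\xi_F$. Applying Theorem \ref{theoremclasif}, I obtain $p$ mutually orthogonal $F$-invariant spacelike eigenplanes $\Pi_1,\dots,\Pi_p$; choosing an orthonormal basis $\{u_i,v_i\}$ of each $\Pi_i$ yields simple unit spacelike endomorphisms $G_i := u_i\otimes v_i^\flat - v_i\otimes u_i^\flat$ whose images $\eta_i := \xi(G_i)$ are CAKVFs by Remark \ref{remarkGsplkunit}. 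Since $F$ restricts to $\Pi_i$ as a multiple of the canonical skew generator, and each $G_j$ vanishes on $\Pi_j^\perp$, which contains $\Pi_i$ for $i\neq j$, a short check gives $[F,G_i]=0$ and $[G_i,G_j]=0$. The antihomomorphism property of $\xi$ then yields $[\xi_F,\eta_i]=0$ and $[\eta_i,\eta_j]=0$, and Lemma \ref{lemmaortho} converts the latter into $g_E(\eta_i,\eta_j)=0$, establishing the existence claim.

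For any orthogonal set $\{\eta_i\}_{i=1}^p$ of CAKVFs commuting with $\xi_F$, I write $\eta_i = \xi(G_i)$ with $G_i = u_i\otimes v_i^\flat - v_i\otimes u_i^\flat$ (Remark \ref{remarkGsplkunit} and Lemma \ref{simpleortho}). Since $\xi$ is antihomomorphic, $[F,G_i]=0$, so by Lemma \ref{commutingrank1} there is $\mu_i\in\mathbb{R}$ with $F(u_i)=\mu_i v_i$ and $F(v_i)=-\mu_i u_i$. To verify the scalar identity, I choose Minkowskian coordinates adapted to the basis by setting $u_i=\partial_{x^{A_i+1}}$, $v_i=\partial_{x^{A_i+2}}$; in the associated Euclidean coordinates one has $\eta_i = y^{A_i}\partial_{y^{A_i+1}} - y^{A_i+1}\partial_{y^{A_i}}$, and translating the action of $F$ on $u_i,v_i$ through \eqref{vecendocomponents} forces
\begin{equation*}
a_{A_i}=a_{A_i+1}=b_{A_i}=b_{A_i+1}=0,\qquad {\omega^{A_i}}_{A_i+1}=\mu_i,\qquad {\omega^{A_i}}_B=0\ \text{for}\ B\notin\{A_i,A_i+1\}.
\end{equation*}
A short substitution into \eqref{CKVFgeneral} then gives $y^{A_i}\xi_F^{A_i+1} - y^{A_i+1}\xi_F^{A_i} = \mu_i\bigl((y^{A_i})^2+(y^{A_i+1})^2\bigr)$, which is exactly $\mu_i\,g_E(\eta_i,\eta_i)$.

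Finally, define $\widetilde\xi := \xi_F - \sum_j\mu_j\eta_j$. The identity $g_E(\widetilde\xi,\eta_i) = g_E(\xi_F,\eta_i) - \mu_i g_E(\eta_i,\eta_i) = 0$ is then immediate from the defining formula for the $\mu_j$'s together with the pairwise orthogonality of the $\eta_j$, so $\{\widetilde\xi,\eta_i\}$ is orthogonal. To realise the Cartesian coordinate form of Definition \ref{defCAKVF}, note that pairwise commutation of the $G_i$'s combined with Corollary \ref{commutingrank1_cor} makes the $2p$ vectors $\{u_1,v_1,\dots,u_p,v_p\}$ mutually orthogonal spacelike unit, hence completable to an orthonormal Minkowski basis of $\mathbb{M}^{1,n+1}$; a relabelling of the resulting Euclidean coordinates then places each $\eta_i$ at the precise index $A_i$ prescribed in Definition \ref{defCAKVF}. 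The main obstacle is the bookkeeping needed to derive the $\mu_i$ formula, as one must carefully track how the conditions $F(u_i)=\mu_i v_i$, $F(v_i)=-\mu_i u_i$ constrain the parameters $\nu,a,b,\boldsymbol\omega$ entering \eqref{CKVFgeneral}; everything else is a direct consequence of the $\xi$-dictionary between $\skwend{\mathbb{M}^{1,n+1}}$ and $\mathrm{CKill}(\mathbb{E}^n)$.
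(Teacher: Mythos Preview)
Your proof is correct and follows essentially the same route as the paper: both transfer the problem to $\skwend{\mathbb{M}^{1,n+1}}$ via Theorem \ref{theoisomskwconf}, invoke the block decomposition of Theorem \ref{theoremclasif} for existence, and then use Lemma \ref{commutingrank1} and Corollary \ref{commutingrank1_cor} to show that any orthogonal family of CAKVFs commuting with $\xi_F$ determines a block form of $F$. The only difference is one of detail: where the paper simply asserts that the relation $g_E(\eta_i,\eta_i)\mu_i = g_E(\xi_F,\eta_i)$ is ``immediate'' from the block form, you carry out the explicit coordinate computation via \eqref{vecendocomponents} and \eqref{CKVFgeneral}, which is a welcome verification.
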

 \begin{proof}
   The existence of $p$ commuting CAKVFs is a direct consequence of decompositions \eqref{decompFeven} and \eqref{decompFodd} of the associated skew-symmetric endomorphism $F$, for $n$ even and odd respectively. Indeed, for each such decomposition of $F$, it follows a set of $p$ CAKVFs commuting with $\xi_F$.
   {Let us denote $\{ \eta_i \}$ any such set}. Each    $\eta_i$ is associated to a simple, spacelike unit endomorphism $G_i$ that commutes with $F$.
   By Lemma \ref{commutingrank1}, {$G_i$} defines a spacelike eigenplane $\Pi_i$ of $F$. The orthogonality of any two such eigenplanes $\Pi_i, \Pi_j$, $ i \neq j$   is a consequence of Corollary \ref{commutingrank1_cor} because $[G_i,G_j] = 0$. In other words, given a set of $p$ CAKVFs commuting with $\xi_F$, we have a block form of $F$, thus, defining $\widetilde \xi := \xi_F - \sum  \mu_i \eta_i$, it is immediate that $\xi_F = \widetilde \xi + \sum \mu_i \eta_i$ is a decomposed form with 
  $g_E(\eta_i, \eta_i) \mu_i = g_E(\xi_F,\eta_i)$.
 \end{proof}

 The next  step now is to give a definition of canonical form for CKVFs, which we induce  from the canonical form of the associated skew-symmetric endomorphism.  
 
%  We remark that it is possible to give an intrinsic definition of the parameters $\aa, \bb, \mu_i$ that will be involved in the canonical form without referring to skew-symmetric endomorphisms, but we will not work this out here because it does not really simplify things more than the definition we gave for skew-symmetric endomorphisms. 

 \begin{definition}
 A CKVF $\xi_F$ is in {\bf canonical form} if it is the image of a skew-symmetric endomorphism $F$ in canonical form, i.e. $\xi_F = \widetilde \xi + \sum \mu_i \eta_i$ such that $\widetilde \xi$ is given, in a cartesian set of coordinates $\{y^A\}$ denoted {\bf canonical coordinates}, by the parameters $a^1 = 1,~b^1 = \sigma/2,~a^2 = 0,~b^2 = \tau/2$ if $n$ even and $a^1 = 1,~b^1 = \sigma/2$ if $n$ odd (the non-specified parameters all vanish) and $\eta_i$ are CAKVFs  $\eta_i = y^{A_i} \partial_{y^{A_i+1}} - y^{A_i+1} \partial_{y^{A_i}}$, for  $A_i = 2 i$ for $n$ odd and $A_i = 2i+1$ for $n$ even, and where $\aa, \bb, \mu_i$ are given by Definition \ref{defgammamu}.
\end{definition}

Given a CKVF $\xi_F$, the existence of a canonical form and canonical coordinates are guaranteed by Theorem \ref{theocanonicalF}. By Theorem \ref{theoisomskwconf}, the conformal class $[\xi_F]_{Conf}$ of a CKVF $\xi_F$ is equivalent to the equivalence class $[F]_{\Lor^+}$ of $F$ under the adjoint action of $\Lor^+(1,n+1)$, and this is determined by the canonical form of $F$ (c.f. Theorem \ref{theoLorendo}). This argument together with the results of Section \ref{seclorclass} yield the following statement.

\begin{theorem}\label{theoconfclass}
 Let $\xi_F \in \mathrm{CKill}\lr{\mathbb{S}^{n}}$ be in canonical form. Then its conformal class $[\xi_F]_{Conf}$ is determined by $(\aa, \bb, \mu_i^2)$ if $n$ even and $(\cc, \mu_i^2)$ if $n$ odd. Moreover, the structure of $\mathrm{CKill}(\mathbb{S}^n)/\conf{\mathbb{S}^n}$ corresponds with that of Remark \ref{remarklimits}. 
\end{theorem}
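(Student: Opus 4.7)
The plan is to transport everything to the skew-symmetric endomorphism side via the isomorphism of Theorem \ref{theoisomskwconf}, where all the work has already been done. Concretely, the map $\xi:\skwend{\mathbb{M}^{1,n+1}}\to \mathrm{CKill}(\mathbb{S}^n)$ is a linear bijection (in fact a Lie algebra antihomomorphism) satisfying the equivariance property $\xi_{\mathrm{Ad}_\Lambda(F)}=\phi_{\Lambda\star}(\xi_F)$ for every $\Lambda\in O^+(1,n+1)$. Since $\phi:O^+(1,n+1)\to \conf{\mathbb{S}^n}$ is a group isomorphism, two CKVFs $\xi_F,\xi_{F'}$ lie in the same conformal class if and only if the associated endomorphisms $F,F'$ lie in the same $O^+(1,n+1)$-adjoint orbit.

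With this identification in hand, the first part of the statement is immediate. A canonical form of $\xi_F$ is, by definition, the image under $\xi$ of a skew-symmetric endomorphism $F$ in canonical form. By Theorem \ref{theoLorendo} combined with Proposition \ref{invsequalcanonform}, two elements $F,F'\in\skwend{\mathbb{M}^{1,n+1}}$ are $O^+(1,n+1)$-related if and only if they share the same canonical form, which in turn is encoded by $(\aa,\bb,\mu_i^2)$ for $n$ even (i.e.\ $d=n+2$ even) and by $(\cc,\mu_i^2)$ for $n$ odd. Thus $[\xi_F]_{Conf}=[\xi_{F'}]_{Conf}$ is equivalent to equality of these tuples, which is the first assertion.

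For the second assertion, observe that $\xi$ is a linear bijection between finite-dimensional real vector spaces, hence a homeomorphism for the natural topologies. By the equivariance of $\xi$, it descends to a bijection between the quotients
\[
\overline{\xi}:\skwend{\mathbb{M}^{1,n+1}}/O^+(1,n+1)\longrightarrow \mathrm{CKill}(\mathbb{S}^n)/\conf{\mathbb{S}^n},
\]
and this descended map is a homeomorphism for the quotient topologies, since both quotient maps are open and $\xi$ intertwines the two group actions. Consequently the topological picture of $\skwend{\mathbb{M}^{1,n+1}}/O^+(1,n+1)$ described in Remark \ref{remarklimits}, together with its non-Hausdorff limit behaviour at the degeneracy points, transfers verbatim to $\mathrm{CKill}(\mathbb{S}^n)/\conf{\mathbb{S}^n}$.

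The only subtle point is checking that $\overline{\xi}$ is truly a homeomorphism and not merely a continuous bijection; but this follows routinely because $\xi$ and its inverse are continuous and both group actions are by continuous linear maps, so open saturated sets correspond under $\xi$. No additional calculation is needed beyond the results already established in Sections \ref{seccanonform} and \ref{seclorclass}.
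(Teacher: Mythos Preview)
Your proof is correct and follows essentially the same approach as the paper: the paper's justification (given in the paragraph immediately preceding the theorem) likewise transports the problem to the skew-symmetric endomorphism side via Theorem \ref{theoisomskwconf}, invokes the equivariance $\xi_{\mathrm{Ad}_\Lambda(F)}=\phi_{\Lambda\star}(\xi_F)$ to identify conformal classes with $O^+(1,n+1)$-adjoint orbits, and then appeals to Theorem \ref{theoLorendo} and the results of Section \ref{seclorclass}. Your added detail that $\xi$ is a linear bijection of finite-dimensional spaces and hence descends to a homeomorphism of quotients makes explicit what the paper leaves implicit in the phrase ``together with the results of Section \ref{seclorclass}''.
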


Given a canonical form  $\xi_F = \widetilde \xi + \sum \mu_i \eta_i$ the set of vectors $\{ \widetilde \xi, \eta_i \}$ are pairwise commuting and linearly independent. As we will next prove, in the case of odd dimension this set is a maximal (linearly independent) pairwise commuting set of CKVFs commuting with $\xi$ (i.e. it is not contained in a larger set of linearly independent vectors commuting with $\xi$). In the case of even dimension it is not maximal.
By Remark \ref{tildexi}, $\widetilde{\xi} =
\widetilde \xi (\nu, a^1, a^2, b^1, b^2, \omega)$, where the right-hand side
denotes a CKVF of the form \eqref{CAKVFgeneral} whose parameters vanish, except possibly $\{ \nu, a^1, a^2, b^1, b^2, \omega := {\omega^1}_2 \}$. As mentioned
in the Remar, the corresponding skew-symmetric endomorphism $\widetilde F$ satisfying $\xi_{\widetilde{F}}  = \widetilde \xi$ 
can be understood as an element $\widetilde F \in \skwend{\mathbb{M}^{1,3}}$, with $\mathbb{M}^{1,3} = \spn{e_0,e_1,e_2,e_3}$, that is identically zero in $\lr{\mathbb{M}^{1,3}}^\perp$. 
Fix the orientation in $M^{1,3}$ so that the basis $\{e_0, e_1, e_2, e_3\}$ is positively oriented. The Hodge star maps two-forms into two-forms. This defines a natural map 
\begin{equation}
  \begin{array}{rcl}
   \star: \skwend{\mathbb{M}^{1,3}}  & \longrightarrow & \skwend{\mathbb{M}^{1,3}},\\
 \widetilde F & \longmapsto & \widetilde F^\star.
  \end{array}
 \end{equation}
From standard properties of two-forms, (see also \cite{marspeon20}) it follows that 
 ${\widetilde F}^{\star}$ commutes with $\widetilde{F}$.
We may extend ${\widetilde F}^{\star}$ to an endomorphism on $\mathbb{M}^{1,n+1}$ that vanishes identically on $(\mathbb{M}^{1,3})^{\perp}$, just as $\widetilde{F}$. It is clear that the commutation property is preserved by this extension. The image of $\widetilde F^\star$ under $\xi$ is the vector field
\begin{equation}
 \widetilde \xi^\star := \lr{\widetilde \xi (\nu, a^1, a^2, b^1, b^2, \omega)}^\star =  \widetilde \xi (-\omega, a^2, -a^1, -b^2, b^1, \nu),
\end{equation}
which by construction commutes with $\widetilde \xi$. In the case that $\widetilde \xi$ is the first element in a  decomposed form $\xi_F = \widetilde \xi + \sum \mu_i \eta_i$, 
% we have $\nu = a^2 =\omega = 0, a^1 = 1, b^1 = \aa/2, b^2 = \bb/2$, and
it is immediately true that $\widetilde \xi^\star$ also commutes with all of the CAKVFs $\eta_i$. Hence, $\{\widetilde \xi, \widetilde \xi^\star, \eta_i \}$ is a pairwise commuting set, all of them commuting with $\xi$.
This set can be proven to be maximal:

 \begin{proposition}
   Let $\xi_F = \widetilde \xi + \sum \mu_i \eta_i$ be a CKVF in canonical form. If $n$ is odd, $\{ \widetilde \xi, \eta_i \}$ is a maximal linearly independent pairwise commuting set of elements that commute with $\xi_F$.   If $n$ is even,  $\{ \widetilde \xi, \widetilde \xi^\star, \eta_i \}$ is a maximal linearly independent pairwise commuting set of elements that commute with $\xi_F$.
 \end{proposition}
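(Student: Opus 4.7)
The plan is to transfer the problem to $\skwend{\mathbb{M}^{1,n+1}}$ via the Lie algebra anti-isomorphism $\xi$ of Theorem \ref{theoisomskwconf}, since commutativity is preserved and $\xi$ is bijective. Under this correspondence the canonical decomposition $\xi_F = \widetilde \xi + \sum \mu_i \eta_i$ arises from an endomorphism $F = \widetilde F \oplus \bigoplus_i \mu_i G_i$, where $W$ denotes the distinguished Lorentzian block ($\mathbb{M}^{1,2}$ for $n$ odd, $\mathbb{M}^{1,3}$ for $n$ even), $\widetilde F := F\mid_W \neq 0$, and $G_i$ is the simple spacelike unit endomorphism generating rotations in the eigenplane $\Pi_i$. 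Pairwise commutativity and linear independence of the proposed set is immediate: the $G_i$ commute among themselves by Corollary \ref{commutingrank1_cor}; each $G_i$ commutes with both $\widetilde F$ and $\widetilde F^\star$ because they act on orthogonal subspaces; and $\widetilde F^\star$ commutes with $\widetilde F$ by construction.

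For maximality I would take $H\in\skwend{V}$ commuting with every member of the set and show $H$ lies in its linear span. Applying Lemma \ref{commutingrank1} with each $G_i$ yields that $H$ preserves the plane $\Pi_i$, and Lemma \ref{lemmabasics} (f) propagates this to $\Pi_i^\perp$. Hence $H$ respects the orthogonal decomposition $V = W \oplus \Pi_1 \oplus \cdots \oplus \Pi_p$ and splits as $H = H\mid_W \oplus \bigoplus_i H\mid_{\Pi_i}$. On each Euclidean $2$-plane $\Pi_i$ the restriction is forced to be a multiple of the rotation generator, so $H\mid_{\Pi_i} = \lambda_i G_i$. The remaining task is to characterize the centralizer of $\widetilde F$ (and additionally of $\widetilde F^\star$ when $n$ is even) inside $\skwend{W}$.

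For $n$ odd, $\skwend{W}$ is the three-dimensional algebra $\mathfrak o(1,2)$ and a direct matrix computation in an orthonormal basis shows that the centralizer of any non-zero element is the line through it, yielding $H\mid_W = \lambda_0 \widetilde F$. For $n$ even, $\skwend{W}=\mathfrak o(1,3)$ is six-dimensional and I would invoke the isomorphism of complexifications $\mathfrak o(1,3)_{\mathbb C} \cong \mathfrak{sl}(2,\mathbb C) \oplus \mathfrak{sl}(2,\mathbb C)$, under which the Hodge star acts by $\pm i$ on the two factors and $\widetilde F$ decomposes as $A \oplus \bar A$ with $A \neq 0$; since the complex centralizer of any non-zero element of $\mathfrak{sl}(2,\mathbb C)$ is one-dimensional, the real centralizer of $\widetilde F$ equals $\mathbb R \widetilde F \oplus \mathbb R \widetilde F^\star$, the two generators being $\mathbb R$-linearly independent because $\star^2 = -1$ on $2$-forms in Lorentzian four-dimensional signature. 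Combining these facts with the contributions on the $\Pi_i$ gives $H \in \mathrm{span}\{\widetilde F, \widetilde F^\star, G_1,\ldots, G_p\}$, establishing maximality.

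The main obstacle is the centralizer computation in $\mathfrak o(1,3)$ for non-regular $\widetilde F$, in particular the degenerate-kernel case $\aa=\bb=0$ in which $\widetilde F$ is nilpotent; a case-by-case matrix analysis based on the canonical form \eqref{canonFdim4} would be tedious, whereas the complexification viewpoint handles every $\widetilde F \neq 0$ uniformly. As a sanity check, the sizes of the proposed sets, $(n+1)/2$ for $n$ odd and $n/2+1$ for $n$ even, coincide with the rank of $\mathfrak o(1,n+1)$, so the maximal set will in fact be a Cartan subalgebra.
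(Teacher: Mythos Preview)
Your proof is correct and follows essentially the same skeleton as the paper's: pass to $\skwend{\mathbb{M}^{1,n+1}}$, use commutation with the $G_i$ to force a block decomposition of $H$ compatible with $V = W \oplus \Pi_1 \oplus \cdots \oplus \Pi_p$, and then reduce to identifying the centralizer of $\widetilde F$ inside $\skwend{W}$. The paper obtains the block decomposition via Proposition \ref{proprecan} (which is just your Lemma \ref{commutingrank1} step rephrased on the CKVF side) and then simply cites \cite{marspeon20} for the centralizer facts $\mathcal{C}(\widetilde F|_{\mathbb{M}^{1,2}}) = \mathbb{R}\widetilde F$ and $\mathcal{C}(\widetilde F|_{\mathbb{M}^{1,3}}) = \mathbb{R}\widetilde F \oplus \mathbb{R}\widetilde F^\star$.

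The genuine difference is your treatment of the $\mathfrak o(1,3)$ centralizer: rather than citing an external case analysis, you use the real isomorphism $\mathfrak o(1,3)\cong\mathfrak{sl}(2,\mathbb C)_{\mathbb R}$ and the fact that any non-zero element of $\mathfrak{sl}(2,\mathbb C)$ has one-dimensional complex centralizer. This is cleaner and handles the nilpotent case $\aa=\bb=0$ without extra work, exactly as you note. One small caveat on your closing remark: when $\widetilde F$ is nilpotent the abelian subalgebra you obtain is not ad-semisimple, so it is a maximal abelian subalgebra of the right dimension but not literally a Cartan subalgebra; the dimension count is still a valid sanity check, just drop the identification.
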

\begin{proof}
  Suppose that there is an additional CKVF $\xi'$ commuting
  with each element in $\{ \widetilde \xi, \eta_i \}$ if $n$ odd or $\{ \widetilde \xi, \widetilde \xi^\star, \eta_i \}$ if $n$ even (in either case
  $\xi'$ clearly commutes with $\xi_F$ also). Since it commutes with each $\eta_i$, by Proposition \ref{proprecan}, it admits a decomposed form $\xi' = \widetilde \xi' + \sum_{i=1}^{p} \mu'_i \eta_i$, where $\widetilde \xi'$ is a CKVF orthogonal to each $\eta_i$ and which must verify $[\widetilde \xi', \widetilde \xi] = 0$. Equivalently,
  their associated endomorphisms satisfy
  $\widetilde F' \in \mathcal{C}(\widetilde F)$, where
  $\mathcal{C}(\widetilde F)$ denotes the centralizer of $F$, i.e. the set
  of all skew-symmetric endomorphisms that commute with $F$. 
{From the results in \cite{marspeon20}, $\mathcal{C}({\widetilde F}\mid_{\mathbb{M}^{1,2}}) = \spn{{\widetilde F}\mid_{\mathbb{M}^{1,2}}}$ when $n$ is odd and $\mathcal{C}({\widetilde F}\mid_{\mathbb{M}^{1,3}}) = \spn{\widetilde F\mid_{\mathbb{M}^{1,3}},{\widetilde F^\star}\mid_{\mathbb{M}^{1,3}}}$ when $n$ is even. Here, $\widetilde F^\star$ is the skew-symmetric endomorphim associated with $\widetilde \xi^\star$ and we restrict to $\mathbb{M}^{1,3}$ because the action of the endomorphisms is identically zero in  $(\mathbb{M}^{1,3})^\perp$.} Thus $\widetilde \xi' = a \widetilde \xi,~a \in \mathbb{R},$ if $n$ odd and  $\widetilde \xi' = b \widetilde \xi + c\widetilde \xi^\star,~b,c \in \mathbb{R}$ if $n$ even.

%  Calculating the preimages $\widetilde F'$, $\widetilde F$ and $\widetilde F^\star$ of $\widetilde \xi'$, $\widetilde \xi$ and $\widetilde \xi^\star$ respectively, we find they are zero in all $\mathbb{M}^{1,n+1}$, but in $\mathbb{M}^{1,2} = \spn{e_0,e_1,e_2}$ if $n$ odd and $\mathbb{M}^{1,3} = \spn{e_0,e_1,e_2,e_3}$ if $n$ even.  The commutation conditions of vector fields are, by Theorem \ref{theoisomskwconf}, equivalent to $[\widetilde F', \widetilde F] = 0$ if $n$ odd and $[\widetilde F', \widetilde F] = [\widetilde F', \widetilde F^\star] = 0$ if $n$ even. Using that $\mathcal{C}(F) = \spn{F}$ if $n$ odd and $\mathcal{C}(F) = \spn{F}$ ()

\end{proof}

 \section{Adapted coordinates}\label{secadapted}

 In the previous section we obtained a canonical form for each CKVF of euclidean space based on the canonical form of skew-symmetric endomorphisms in Section
 \ref{seccanonform}. As an application, we consider in this section the problem
 of adapting coordinates in $\mathbb{E}^n$ to a given CKVF $\xi_F$. The use of the canonical form  will allow us to solve the problem for every possible $\xi_F$ essentially in one go. Actually it will suffice to consider the case
 of even dimension $n$ and assume that at least one of the parameters $\aa, \bb$ in the canonical form of $\xi_F$ is non-zero. The case where both
 $\aa$ and $\bb$ vanish will be obtained as a limit (and we will check that
 this limit does solve the required equations). The case of odd dimension $n$ wil be obtained from the even dimensional one by expliting the  property that $\mathbb{E}^{2m +1}$
 can be viewed as a hyperplane of
 $\mathbb{E}^{2m+2}$ in such a way that the given CKVF $\xi_F$ in
 $\mathbb{E}^{2m +1}$ extends conveniently to $\mathbb{E}^{2m+2}$. Restricting
 the adapted coordinates already obtained in the even dimensional case to the appropriate  hyperplane we will be able to infer the odd dimensional case. As we will justify the process of adapting coordinates is different for $n=2$
 and $n \geq 4$ even. The case $n =2$ has been treated in detail
 in \cite{marspeon20}, so it will suffice to consider even $n \geq 4$ here.

% In this Section we will derive an interesting application of our canonical form: finding adapted coordinates to a CKVF, with no regard of its conformal class.
% To do this, we will employ the canonical form worked out in the Section \ref{secconfvecs} and we will start with the even dimensional case with none of the parameters $\aa, \bb$ vanishing. From this case, the cases $\aa, \bb  = 0$ are obtained as a limit and coincide with the general solution of the PDEs obtained by adapting coordinates to the canonical form with $\aa, \bb = 0$. Moreover, from the even dimensional, we can also show that the odd dimensional one can be regarded as a subcase of this restricted to a hyperplane of $\mathbb{E}^n$. As we will justify, the solution that we are obtaining here will be valid for dimension $n>2$. The case $n=2$ can be found in detail in the reference 

 Consider $\mathbb{E}^n$ endowed with a CKVF $\xi_F$. First of all we adapt the
 Cartesian coordinates of $\mathbb{E}^n$ so that $\xi_F$ takes its canonical
 form and we fix the metric of $\mathbb{E}^n$ to take the explictly flat form in these coordinates.  We further assume (for the moment)
 that  $n$ is even. For notational reasons it is convenient to rename
 the canonical coordinates\footnote{The fact that we tag the coordinates $\{z_1,z_2,x_i,y_i \}$ with lower indices has no particular meaning. It is simply
   to avoid a notational clash of upper indices and powers that will appear later}  as $z_1 := y^1,~z_2 := y^2 $ and $x_i :=y^{2i +1},~y_i := y^{2i+2}$ for $i = 1, \cdots, p$, where in the even case  case $p = n/2 -1$ (see \eqref{pn}). 
 By Proposition \ref{proprecan}, $\xi_F$ can be decomposed as a sum of CKVFs
 $\widetilde \xi$ and $\eta_i$ and, additionally one can construct canonically yet another CKVF $\widetilde \xi^\star$. This collection of CKVFs  defines a maximal commutative set. Moreover, $\{ \eta_i \}$ are all mutually orthogonal and perpendicular to $\widetilde \xi$ and $\widetilde \xi^{\star}$. It is therefore most natural to try and find coordinates adapted simultaneously to
 the whole family $\{ \widetilde \xi, \widetilde \xi^{\star}, \eta_i\}$. This will lead a (collection of) coordinate systems where the components of $\xi_F$
 are simply constants. From here one can immediately find coordinates that
 rectify $\xi_F$, if  necessary. It is important to emphasize that selecting
 the whole set $\{ \widetilde \xi, \widetilde \xi^{\star}, \eta_i\}$ to adapt
 coordinates provides enough restrictions so that the coordinate change(s) can be fully determined. Imposing the (much weaker) condition that the system of coordinates rectifies only $\xi_F$ is just a too poor condition to solve the problem.
 This is an interesting example where the structure of the canonical
 decomposition of $\xi_F$ (or of $F$) is exploited in full.

By Theorem \ref{theoisomskwconf}, the explicit form of $\{ \widetilde \xi, \widetilde \xi^{\star}, \eta_i\}$
 in the canonical  coordinates is 
 \begin{align}
  \widetilde \xi &= \lr{\frac{\aa}{2} + \frac{1}{2}\lr{z_1^2 - z_2^2 - \sum\limits_{i=1}^{p}(x_i^2 + y_i^2)}} \partial_{z_1} + \lr{\frac{\bb}{2} + z_1 z_2} \partial_{z_2} + z_1 \sum\limits_{i=1}^{p}\lr{x_i \partial_{x_i} + y_i \partial_{y_i}} \label{CKVFxi}\\
  \widetilde \xi^\star &= -\lr{\frac{\bb}{2} + z_1 z_2} \partial_{z_1} +  \lr{\frac{\aa}{2} - \frac{1}{2}\lr{z_2^2 - z_1^2 - \sum\limits_{i=1}^{p}(x_i^2 + y_i^2)}} \partial_{z_2} -  z_2 \sum\limits_{i=1}^{p}\lr{x_i \partial_{x_i} + y_i \partial_{y_i}} \label{CKVFxistar}\\
  \eta_i & = x_i \partial_{y_i} - y_i \partial_{x_i}.\label{CAKVF}
 \end{align}
We are seeking coordinates $\{ t_1, t_2, \phi_i, v_i \}$ adapted to  these vector fields, i.e. such that $\partial_{t_1} = \widetilde \xi,~\partial_{t_2} = \widetilde \xi^\star,~\partial_{\phi_i} = \eta_i$.  
It is clear that if $\{ t_1, t_2, \phi_i, v_i \}$ is an adapted coordinate system, so it is $\{ t_1 - t_{0,1}(v), t_2 - t_{0,2}(v),\phi_i - \phi_{0,i}(v), v_i \}$ for arbitrary functions $t_{0,1}(v)$, $t_{0,2}(v)$ and $\phi_{0,i}(v)$, where  $v = (v_1,\cdots,v_p)$. This will be used to simplify the process of integration. This freedom, may be restored at the end if so desired.
Hence 
\begin{align}
    \frac{\partial z_1}{ \partial t_1}&=\frac{\aa}{2} + \frac{1}{2}\lr{z_1^2 - z_2^2 - \sum\limits_{i=1}^{p}(x_i^2 + y_i^2)}, & \frac{\partial z_2}{\partial t_1}&=\frac{\bb}{2} + z_1 z_2, & \frac{\partial x_i}{\partial t_1}&=z_1 x_i, & \frac{\partial y_i}{\partial t_1}&=z_1 y_i,  \label{PDEsxi}\\
      \frac{\partial z_2}{\partial t_2}&=\frac{\aa}{2} - \frac{1}{2}\lr{z_2^2 - z_1^2 - \sum\limits_{i=1}^{p}(x_i^2 + y_i^2)}, & \frac{\partial z_1}{\partial t_2}&=-\frac{\bb}{2} - z_1 z_2, & \frac{\partial x_i}{\partial t_2}  &=-z_2 x_i, & \frac{\partial y_i}{\partial t_2}&=-z_2 y_i,\label{PDEsxistar} \\
       \frac{\partial z_1}{\partial \phi_i}&=0 &  \frac{\partial z_2}{\partial \phi_i}&=0 & \frac{\partial x_i}{\partial \phi_i}&=-y_i  & \frac{\partial y_i}{\partial \phi_i}&=x_i \label{PDEsetai}
 \end{align}
 The additional $p$ coordinates $v_i$, will appear through functions of integration. It is clear that the structure of the equations is different for $n=2$, where
 there are no $\{x_i, y_i\}$, which implies that the process
 of integration follows a different route. The case $n=2$ has been treated in full detail in \cite{marspeon20}, where the the complex structure of $\mathbb{S}^2$ can be exploited to simplify the problem. Here we adress the problem for $n \geq 4$ which we assume from now on.

 We may start by integrating \eqref{PDEsetai}. The first pair 
 gives $ z_1 = z_1(t_1,t_2,v),~ z_2 = z_2(t_1,t_2,v)$, so that the second pair becomes a harmonic oscillator in  $x_i,y_i$, whose solution is
 \begin{equation}\label{xiyi1}
 \quad\quad x_i = \rho_i(t_1,t_2,v) \cos(\phi_i - \phi_{0,i}(t_1,t_2,v)), \quad\quad y_i = \rho_i(t_1,t_2,v) \sin(\phi_i - \phi_{,0,i}(t_1,t_2,v)),
 \end{equation}
 where $\rho_i$ and $\phi_{0,i}$ are arbitrary functions (depending only on the variables indicated) and $\rho_i$ is not identically zero.
 
% A restriction that must be imposed so far on $\rho_i$ is that $\rho_i >0$, for otherwise there would be an obvious multivaluation in the coordinate $\phi_i$.

Inserting \eqref{xiyi1} in any of the two right-most equations of \eqref{PDEsxi} and \eqref{PDEsxistar} and equating terms multiplying $\sin(\phi_i + \phi_{(0)i})$ and $\cos(\phi_i + \phi_{(0)i})$ yields:
\begin{equation*}
 z_1 = \frac{1}{\rho_i} \frac{\partial \rho_i}{\partial t_1},\quad \quad z_2 = -\frac{1}{\rho_i} \frac{\partial \rho_i}{\partial t_2},\quad \quad \frac{\partial\phi_{(0)i}}{\partial t_1} = 0,\quad \quad  \frac{\partial\phi_{(0)i}}{\partial t_2} = 0. 
\end{equation*}
Thus, $\phi_{0,i}$ is a function only of $v$, which may be absorbed on the coordinate $\phi_i$ as discussed above. The two first equations imply
\begin{equation*}
 \frac{1}{\rho_i} \frac{\partial \rho_i}{\partial t_1} = \frac{1}{\rho_j} \frac{\partial \rho_j}{\partial t_1}, \quad \frac{1}{\rho_i} \frac{\partial \rho_i}{\partial t_2} = \frac{1}{\rho_j} \frac{\partial \rho_j}{\partial t_2} \quad\Longleftrightarrow\quad \rho_i = \hat \alpha_i(v) \hat \rho(t_1,t_2,v),
\end{equation*}
 for arbitrary (non-zero) functions $\hat \alpha_i$ and $\hat \rho$.
% The condition $\rho_i >0$ imposes $\sign (\hat \alpha_i) = \sign (\hat \rho)$, that w.o.l.g. we assume positive.
Defining $\rho^2 := \sum \limits_{i=1}^{p} \rho_i^2 = \left (  \sum\limits_{i=1}^{p} \hat \alpha_i^2 \right ) \hat \rho^2$ we can write 
\begin{equation*}
 \rho_i = \hat \alpha_i \hat \rho = \frac{\hat \alpha_i\epsilon}{\sqrt{\sum_{j=1}^p\hat \alpha_j^2}} \rho = \alpha_i \rho,
\end{equation*}
where $\alpha_i :=  \hat \alpha_i\epsilon/\sqrt{\sum_{j=1}^p\hat \alpha_j^2}$, with $\epsilon^2 = 1$, form a set of arbitrary (non-zero) functions of $v$ such that $\sum\limits_{i=1}^{p} \alpha_i^2 = 1$. The function $\rho$ satisfies
\begin{equation}\label{z1z2rho}
z_1 = \frac{1}{\rho} \frac{\partial \rho}{\partial t_1},\quad \quad z_2 = -\frac{1}{\rho} \frac{\partial \rho}{\partial t_2}.
\end{equation}
Inserting \eqref{z1z2rho} in the two left-most equations in \eqref{PDEsxi} and \eqref{PDEsxistar}, with the change of variable $U = \rho^{-1}$, we obtain after some algebra the following covariant system of PDEs (indices $a, b = 1,2$ refer to $\{ t_1, t_2\}$)
\begin{equation}\label{PDEUt1t2}
\nabla_a \nabla_b U = U A_{ab}  + \frac{1}{2 U}(1 + \nabla_c U \nabla^c U)g_{ab}\quad\mbox{with}\quad 
A = \frac{1}{2}(-\aa \dif t_1^2 + \aa \dif t_2^2 + 2 \bb \dif t_1 \dif t_2),
% A =\frac{1}{2}\left(\begin{array}{cc}                                                                                     -\aa & \bb \\ \bb & \aa                                                                                                           \end{array}\right),
\quad g = \dif t_1^2 + \dif t_2^2,
\end{equation}
and where $\nabla$ is the Levi-Civita covariant derivative of $g$.
\begin{lemma}
  Up to shifts $t_1 \rightarrow t_1 - t_{0,1}(v)$ and
  $t_1 \rightarrow t_1 - t_{0,1}(v)$, the general solution of \eqref{PDEUt1t2} with either $\aa$ or $\bb$  non-zero is given by
 \begin{equation}\label{solUgen}
 U  = \frac{\epsilon}{\mu_t^2 + \mu_s^2}\lr{  \beta \cosh(t_+)-\alpha \cos(t_-)} \quad \quad\mbox{with}\quad\quad  \beta = \sqrt{\alpha^2 + \mu_t^2 + \mu_s^2} 
\end{equation}
where $\alpha$ is a  function of integration (depending on $v$), $\epsilon^2 = 1$ and $t_+ := \mu_t t_1 + \mu_s t_2 $, $t_- := \mu_t t_2 - \mu_s t_1 $, with $\mu_s, \mu_t$ given by \eqref{eqmutmus}. The solution \eqref{solUgen} admits a limit $\aa = \bb = 0$ (i.e. $\mu_t = \mu_s = 0$) provided $\alpha >0$, which is
\begin{equation}\label{Ulim}
 \lim_{\mu_s \mu_t \rightarrow 0} U = \epsilon \frac{\alpha}{2}(t_1^2 + t_2^2) + \frac{\epsilon}{2 \alpha}.
\end{equation}
Up to shifts $t_1 \rightarrow t_1 - t_{0,1}(v)$ and
$t_2 \rightarrow t_2 - t_{0,2}(v)$, this function 
is the general solution of \eqref{PDEUt1t2} for $\aa = \bb = 0$.
\end{lemma}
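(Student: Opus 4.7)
The strategy is to decouple \eqref{PDEUt1t2} into a linear subsystem plus one non-linear algebraic constraint. Taking the trace and the trace-free part of \eqref{PDEUt1t2} and using $g_{ab} = \delta_{ab}$, the system is equivalent to the three scalar equations
\begin{equation*}
  U_{22}-U_{11} = \aa\, U, \qquad U_{12} = (\bb/2)\, U, \qquad U(U_{11}+U_{22}) = 1 + U_1^2 + U_2^2,
\end{equation*}
of which the first two are linear and the third carries the only inhomogeneous term. Assuming first that at least one of $\aa,\bb$ is non-zero, I would pass to the characteristic coordinates $s_1 := t_+$, $s_2:=t_-$ defined in the statement. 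A direct calculation using $\aa = \mu_s^2 - \mu_t^2$ and $\bb = 2\mu_t\mu_s$ shows that the linear subsystem decouples into $U_{s_1 s_2}=0$ and $U_{s_1 s_1}-U_{s_2 s_2}=U$. The first gives $U=F(s_1)+G(s_2)$, and substitution into the second yields $F''(s_1)-F(s_1) = G''(s_2)+G(s_2) = K(v)$ (separation constant), whose particular solutions cancel. This produces the explicit 4-parameter family
\begin{equation*}
  U = A(v)\cosh(s_1) + B(v)\sinh(s_1) + C(v)\cos(s_2) + D(v)\sin(s_2).
\end{equation*}

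Since $(t_1,t_2)\mapsto(s_1,s_2)$ is orthogonal up to the overall scale $\sqrt{\mu_t^2+\mu_s^2}$, one has $U_1^2+U_2^2 = (\mu_t^2+\mu_s^2)(U_{s_1}^2+U_{s_2}^2)$ and $U_{11}+U_{22}=(\mu_t^2+\mu_s^2)(U_{s_1 s_1}+U_{s_2 s_2})$. Plugging the above ansatz in and using $\cosh^2-\sinh^2=1$ and $\cos^2+\sin^2=1$, a short computation collapses the non-linear equation to the single algebraic constraint
\begin{equation*}
  A^2 - B^2 - C^2 - D^2 = \frac{1}{\mu_t^2+\mu_s^2}.
\end{equation*}
The remaining coordinate freedom $t_a\mapsto t_a - t_{0,a}(v)$ lifts, via the invertible linear change with Jacobian $\mu_t^2+\mu_s^2\neq 0$, to arbitrary independent shifts of $s_1$ and $s_2$. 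A shift of $s_1$ acts on $(A,B)$ as a Lorentz rotation preserving $A^2-B^2$, and a shift of $s_2$ acts on $(C,D)$ as a Euclidean rotation preserving $C^2+D^2$. The constraint forces $A^2>B^2$, so $s_1$ can be shifted to achieve $B=0$ with $A\neq 0$, and $s_2$ to achieve $D=0$. Absorbing the residual sign freedom $U\mapsto -U$ (a symmetry of \eqref{PDEUt1t2}) into $\epsilon=\pm 1$ and writing $A=\epsilon\beta/(\mu_t^2+\mu_s^2)$ with $\beta>0$ and $C=-\epsilon\alpha/(\mu_t^2+\mu_s^2)$, the constraint becomes $\beta^2-\alpha^2 = \mu_t^2+\mu_s^2$, i.e.\ $\beta = \sqrt{\alpha^2+\mu_t^2+\mu_s^2}$, which is exactly \eqref{solUgen}.

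For the limit statement, I would Taylor-expand \eqref{solUgen} using $\beta = \alpha + (\mu_t^2+\mu_s^2)/(2\alpha) + O((\mu_t^2+\mu_s^2)^2)$ (valid for $\alpha>0$) together with the identity $t_+^2+t_-^2 = (\mu_t^2+\mu_s^2)(t_1^2+t_2^2)$; a direct expansion of $\cosh(t_+)$ and $\cos(t_-)$ to second order gives \eqref{Ulim}. Conversely, when $\aa=\bb=0$ the linear subsystem reduces to $U_{12}=0$ and $U_{11}=U_{22}$, so $U = (c(v)/2)(t_1^2+t_2^2) + a(v)t_1 + b(v)t_2 + k(v)$; the shift freedom kills the linear terms (requiring $c\neq 0$, since $c=0$ is incompatible with the non-linear equation, which reads $0=1+a^2+b^2$), and substituting back into $U(U_{11}+U_{22})=1+U_1^2+U_2^2$ forces $2ck=1$, i.e.\ $k=1/(2c)$. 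Setting $c=\epsilon\alpha$ recovers \eqref{Ulim} and shows it is the general solution in this degenerate case.

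The main obstacle I anticipate is the bookkeeping rather than the integration: verifying that the two-parameter shift freedom in $(t_1,t_2)$ really does implement independent shifts in the characteristic coordinates (which rests on $\mu_t^2+\mu_s^2\neq 0$), checking that the degenerate hyperbolic branch $A^2\le B^2$ is ruled out by the non-linear constraint (so that the shift genuinely normalises $B=0$), and confirming that the formal limit \eqref{Ulim} actually solves \eqref{PDEUt1t2} at $\aa=\bb=0$, which is straightforward by direct substitution.
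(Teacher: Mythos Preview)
Your proof is correct and follows essentially the same route as the paper: pass to the rotated coordinates $(t_+,t_-)$, use the off-diagonal equation to separate $U=F(t_+)+G(t_-)$, solve the resulting linear ODEs, extract the single algebraic constraint $A^2-B^2-C^2-D^2=(\mu_t^2+\mu_s^2)^{-1}$ from the trace equation, and normalise with the shift freedom. Your treatment of the degenerate case $\aa=\bb=0$ (noting explicitly that $c=0$ is excluded by the nonlinear constraint) and your use of $t_+^2+t_-^2=(\mu_t^2+\mu_s^2)(t_1^2+t_2^2)$ in the limit are in fact slightly cleaner than the paper's presentation.
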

\begin{proof}
  The coordinates $t_+, t_-$ defined in the lemma diagonalize $A$ and $g$
  simultaneously and yield
   \begin{equation}
  A= \frac{1}{2} (\dif t_+^2 - \dif t_-^2),\quad\quad g = \frac{1}{\mu_s^2 + \mu_t^2}(\dif t_+^2 + \dif t_-^2).
 \end{equation}
 From this and equation \eqref{PDEUt1t2} it follows that $\partial^2 U/\partial t_+ \partial t_- = 0$ or, equivalently, $U(t_+,t_-) = U_+(t_+) + U_-(t_-)$.
 Substracting the $\{ t_+, t_+\}$ and $\{ t_-, t_-\}$ components of \eqref{PDEUt1t2} one obtains
 \begin{equation}
  \frac{\dif^2 U_+}{\dif t_+^2} - \frac{\dif^2 U_-}{\dif t_-^2} = U = U_+ + U_- \quad\Longrightarrow\quad \frac{\dif^2 U_+}{\dif t_+^2} - U_ +  = \frac{\dif^2 U_-}{\dif t_-^2} + U_- = \hat a
 \end{equation}
 for an arbitrary separation function $\hat a(v)$. The general solution
 is clearly
\begin{equation}\label{eqUpUm}
 U_+ = -\hat a + a \cosh(t_+) + b \sinh(t_- )  \quad\quad U_- = \hat{a} + c \cos(t_-- \delta),
\end{equation}
where $a,b,c, \delta$ are also functions of $v$. Since $\hat{a}$ drops out in
 $U = U_+ + U_-$ we may set  $\hat a = 0$ {w.l.o.g.} Inserting \eqref{eqUpUm} in (any of) the diagonal terms of \eqref{PDEUt1t2} and one simply gets 
\begin{equation}\label{eqconstr}
 a^2 -b^2 = \frac{1}{\mu_s^2 + \mu_t^2} + c^2.
\end{equation}
Hence $|a|> |b|$ and we may use the freedom of translating $t_+$ by a function of $v$ to write $U_+ = a \cosh(t_+)$ (i.e. $b=0$). A similar translation in $t_{-}$ sets
$\delta=0$. Rescaling the functions $a,c$ as $a = (\mu_s^2+ \mu_t^2)^{-1} \beta$
and $c = - (\mu_s^2+ \mu_t^2)^{-1} \alpha$ we get
\begin{equation}\label{eqUpUm2}
 U = U_-+ U_- =  \frac{\beta}{\mu_s^2 + \mu_t^2} \cosh(t_+) - \frac{\alpha}{\mu_s^2 + \mu_t^2}\cos(t_-),\quad\quad \beta^2 = \mu_s^2 + \mu_t^2 + \alpha^2.
\end{equation}
 It is obvious that $\sign(U) = \sign(\beta)$. Thus taking $\beta$ as the positive root $\beta = \sqrt{\alpha^2 + \mu_s^2 + \mu_s^2}$ and adding a multiplicative sign $\epsilon$ in \eqref{eqUpUm2}, we obtain \eqref{solUgen}.  
 To evaluate the convergence as both $\aa, \bb$ tend to zero, or equivalently $\mu_s,\mu_t \rightarrow 0$, consider the series expansion
\begin{align*}
 \beta  \cosh(t_+) & = \lr{|\alpha|+ \frac{\mu_s^2 + \mu_t^2}{2|\alpha|} + o_{\mu_t,\mu_s}^{(4)}} \lr{1 + \frac{(\mu_s t_2 + \mu_t t_1)^2}{2} + o_{\mu_t,\mu_s}^{(4)}},\\
 \alpha \cos(t_-) & = \alpha - \alpha\frac{(\mu_t t_2 - \mu_s t_1)^2}{2} + o_{\mu_t,\mu_s}^{(4)},
\end{align*}
where $o_{\mu_t,\mu_s}^{(4)}$ denotes a sum of homogeneous polynomials in $\mu_t, \mu_s$ starting at order four, whose coeficients may depend on $t_1,t_2$ and $\alpha$. Then, the expansion of $U$ is 
\begin{equation}
 U = \frac{\epsilon}{\mu_s^2 + \mu_t^2}\lr{(|\alpha| - \alpha)(1 + \mu_s \mu_t t_1 t_2)+ \frac{|\alpha|\mu_s^2 + \alpha \mu_t^2}{2}t_2^2 + \frac{|\alpha|\mu_t^2 + \alpha \mu_s^2}{2}t_1^2+ \frac{\mu_s^2 + \mu_t^2}{2|\alpha|} + o_{\mu_t,\mu_s}^{(4)}}.
\end{equation} 
It is clear that $\lim_{\mu_s,\mu_t \rightarrow 0} o_{\mu_t,\mu_s}^{(4)}/(\mu_s^2 + \mu_t^2) = 0$ and the rest of the equation converges if and only if $\alpha>0$ in which case the limit is \eqref{Ulim}. 
% \begin{equation}\label{Ulim}
%  \lim_{\mu_s \mu_t \rightarrow 0} U = \epsilon \frac{\alpha}{2}(t_1^2 + t_2^2) + \frac{\epsilon}{2 \alpha}.
% \end{equation}
An easy calculation shows that this limit is (up to shifts in $t_1, t_2$)
is the general solution of \eqref{PDEUt1t2} when $\aa, \bb = 0$.
%where the sign of $\alpha$ is irrelevant having two branches $\epsilon = \pm 1$.
\end{proof}

 Having the general general solution \eqref{solUgen} of \eqref{PDEUt1t2}  we can give the expression of the adapted coordinates
\begin{align}
 z_1 &=-\frac{1}{U}\frac{\partial U}{\partial t_1} =\left|\frac{1}{U}\right|\frac{\alpha  \mu_s \sin (t_-) - \beta  \mu_t \sinh (t_+)}{\mu_s^2 + \mu_t^2},
 \label{adaptz1}\\
 z_2 &= ~~\frac{1}{U}\frac{\partial U}{\partial t_2}  = \left|\frac{1}{U}\right| \frac{\alpha  \mu_t \sin (t_-)  +   \beta  \mu_s \sinh (t_+)}{\mu_s^2 + \mu_t^2},
 \label{adaptz2}\\
 x_i &= \frac{\alpha_i}{U} \cos(\phi_i),
%  \label{adaptxi}\\
\quad\quad
 y_i = \frac{\alpha_i}{U} \sin(\phi_i) \label{adaptxiyi},
\end{align}
where no sign of $\alpha$ is in principle assumed\footnote{The domain of definition of $\alpha$ will be later restricted under the condition that the adapted coordinates define a one to one map.}, except for the case $\mu_s = \mu_t  = 0$, where $U$ must be understood as the limit (with $\alpha>0$) \eqref{Ulim} and $z_1 = -U^{-1}\partial U/\partial t_1$, $z_2 = U^{-1}\partial U/\partial t_2$. This coincides with the limit of the RHS expressions \eqref{adaptz1}, \eqref{adaptz2}, which is\begin{equation} \label{adaptz1z2lim}
 z_1 =\frac{-2  \alpha^2 t_1}{1 + \alpha^2 (t_1^2 + t_2^2)},
 \quad\quad
 z_2 = \frac{2  \alpha^2 t_2}{1 + \alpha^2 (t_1^2 + t_2^2)} .
\end{equation}
From equations \eqref{adaptz1}, \eqref{adaptz2} and \eqref{adaptxiyi} it is obvious that the sign $\epsilon$ is not relevant in the definition of the adapted coordinates. This is because the two branches $\epsilon =1$ and $\epsilon = -1$ correspond with $U>0$ and $U<0$ respectively, which in terms of the adapted coordinates, is equivalent to a rotation of $\pi$ in the $\phi_i$ angles. Hence, w.l.o.g. we consider $\epsilon = 1$, i.e. $U>0$. Also notice that the dependence on the variables $v_i$  appears through the functions $\alpha_i$ and $\alpha$, with $\sum_{i=1}^{p}\alpha_i^2 = 1$. The set $\{\alpha_i, \alpha\}$ define $p$ independent arbitrary functions of the variables $v_i$, so it is natural to use
as coordinates  $\lrbrace{\alpha_i,\alpha}$ themselves, provided they are
restricted to satisfy $\sum_{i=1}^{p}\alpha_i^2 = 1$.

We now  calculate the region of $\mathbb{E}^n$ covered by the adapted coordinates. It is clear that in no case this region can include neither the  zeros of the vector fields  $\widetilde \xi$ and $\widetilde \xi^\star$ and $\eta_i$ nor the points where these $p+2$ vectors are linearly dependent. We therefore start by locating those points. Denoting the loci of the zeros of $\widetilde \xi$ and $\widetilde \xi^\star$ and $\eta_i$ by $\mathcal{Z}(\widetilde \xi)$, $\mathcal{Z}(\widetilde \xi)^\star$ and $\mathcal{Z}(\eta_i)$ respectively, a simple calculation gives
\begin{align}
  \mathcal{Z}(\widetilde \xi) &=
  \Big ( \{\bigcap_{j=1}^{p} \lrbrace{x_j = y_j = 0}\} \cap \lrbrace{z_1 = \pm \mu_t, z_2 = \mp \mu_s} \Big ) \cup \Big ( \{z_1 = 0\}\cap \{z_2^2 + \sum_{j=1}^p (x_j^2 + y_j^2) = \mu_s^2 - \mu_t^2 \} ~\mbox{if}~\mu_s \mu_t =0 \Big ), \label{zxi} \\
 \mathcal{Z}(\widetilde \xi^\star) &= \Big ( \{\bigcap_{j=1}^{p} \lrbrace{x_j = y_j = 0}\} \cap \lrbrace{z_1 = \pm \mu_t, z_2 = \mp \mu_s} \Big ) \cup \Big ( \{z_2 = 0\}\cap \{z_1^2 + \sum_{j=1}^p(x_j^2 + y_j^2) = \mu_t^2 - \mu_s^2 \} ~\mbox{if}~\mu_s \mu_t=0 \Big ), \label{zxitilde} \\
 \mathcal{Z}(\eta_i) &= \lrbrace{x_i = y_i = 0}. \nonumber 
\end{align}
These expressions are valid for every value of $\mu_s, \mu_t$ and imply that
in the case $\mu_s = \mu_t = 0$, $\mathcal{Z}(\widetilde \xi) = \mathcal{Z}(\widetilde \xi^\star) =\big\{ \bigcap_{j=1}^{p} \lrbrace{x_j = y_j = 0} \big\} \cap \lrbrace{z_1 = z_2 = 0}$, which is contained in each $\mathcal{Z}(\eta_i) = \lrbrace{x_i = y_i = 0}$.

On the other hand, since $\{\widetilde \xi, \eta_i \}$ is an orthogonal set of CKVFs (cf. Lemma \ref{lemmaortho}), they are pointwise linearly independent at all points where they do not vanish. Similarly, $\{ \widetilde \xi^\star, \eta_i \}$ is also an orthogonal set, so linear independence is guaranteed away from the zero set. Away from this set, the set of vectors $\{ \widetilde \xi,
\widetilde \xi^\star, \eta_i \}$ is linearly dependent only at points where
$\widetilde \xi$ and $\widetilde \xi^\star$ are proportional to each other
with a non-zero
proportionality factor, $\widetilde \xi = a \widetilde \xi^\star$,  $a \neq 0$.
One easily checks that, away from $\mathcal{Z}(\widetilde \xi)$ and  $\mathcal{Z}(\widetilde \xi^{\star})$,  the set of point where $\widetilde \xi - a \widetilde \xi^\star$ vanishes is empty except when $\mu_s \neq 0, \mu_t \neq 0$ and $a = \frac{\mu_t}{\mu_s}$.  It turns out to be useful to determine the set of points
where $\mu_s \widetilde \xi - \mu_t \widetilde \xi^\star=0$ when at least one
of $\{ \mu_s, \mu_t\}$ is non-zero. We call this set $\mathcal{Z}(\mu_s \widetilde \xi -\mu_t   \widetilde \xi^\star)$, and  a straightforward analysis gives
\begin{align}\label{xiproptoxistar}
  \mathcal{Z}(\mu_s \widetilde \xi -\mu_t   \widetilde \xi^\star ) & =
\left \{   \begin{array}{ll}
             \{\mu_s z_1 = -\mu_t z_2\}\cap \{ (\mu_s^2 + \mu_t^2)z_2^2 + \mu_s^2 \sum\limits_{i=1}^p (x_i^2 + y_i^2) = (\mu_s^2 + \mu_t^2)\mu_s^2 \} & 
\mbox{if } \mu_s \neq 0 \\
 \{\mu_s z_1 = -\mu_t z_2\}\cap \{ (\mu_s^2 + \mu_t^2)z_1^2 + \mu_t^2 \sum\limits_{i=1}^p (x_i^2 + y_i^2) = (\mu_s^2 + \mu_t^2)\mu_t^2 \} & \mbox{if }
\mu_t \neq 0 
\end{array}
\right .
\end{align}
Obviously, the two expressions are equivalent when both 
$\mu_s$ and $\mu_t$ are non-zero.
The interest of this set is that it happens to always contain $\mathcal{Z}(\widetilde \xi)$ and $ \mathcal{Z}(\widetilde \xi^\star)$. This, together with the fact that when $\mu_s=\mu_t=0$ these sets are contained in the axes
${\mathcal Z}(\eta_i)$
will allow us to ignore them altogether.
\begin{lemma}
 Assume that at least one of $\{ \mu_s, \mu_t \}$ is non-zero.  Then 
$\mathcal{Z}(\widetilde \xi), \mathcal{Z}(\widetilde \xi^\star) \subset \mathcal{Z}(\mu_s \widetilde \xi -\mu_t   \widetilde \xi^\star)$.
\end{lemma}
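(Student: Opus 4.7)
The plan is a direct verification via the explicit formulas \eqref{zxi}, \eqref{zxitilde} and \eqref{xiproptoxistar}, exploiting throughout the identities $\aa = \mu_s^2 - \mu_t^2$ and $\bb = 2\mu_s\mu_t$ from \eqref{eqmutmus}. I will split according to whether $\mu_s\mu_t$ vanishes.

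First, suppose $\mu_s\mu_t \neq 0$. Then the conditional hyperquadric components in \eqref{zxi}--\eqref{zxitilde} do not appear, so both $\mathcal{Z}(\widetilde{\xi})$ and $\mathcal{Z}(\widetilde{\xi}^\star)$ reduce to the two discrete points $P_\pm := \{z_1 = \pm\mu_t,\, z_2 = \mp\mu_s,\, x_j = y_j = 0\}$. Substitution of $P_\pm$ into \eqref{CKVFxi}--\eqref{CKVFxistar}, together with the identities above, shows that $\widetilde{\xi}$ and $\widetilde{\xi}^\star$ vanish at these points simultaneously; in particular the combination $\mu_s\widetilde{\xi} - \mu_t\widetilde{\xi}^\star$ vanishes there as well. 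Moreover, a one-line inspection confirms that $P_\pm$ satisfies the first line of \eqref{xiproptoxistar}, so $P_\pm \in \mathcal{Z}(\mu_s \widetilde{\xi} - \mu_t \widetilde{\xi}^\star)$.

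Next, if $\mu_s\mu_t = 0$, say $\mu_t = 0$ and $\mu_s \neq 0$ (the alternative $\mu_s=0,\mu_t\neq 0$ being fully symmetric and handled using the second line of \eqref{xiproptoxistar}), then $\mu_s\widetilde{\xi} - \mu_t\widetilde{\xi}^\star = \mu_s\widetilde{\xi}$, so the inclusion $\mathcal{Z}(\widetilde{\xi}) \subset \mathcal{Z}(\mu_s\widetilde{\xi} - \mu_t\widetilde{\xi}^\star)$ is automatic. For $\mathcal{Z}(\widetilde{\xi}^\star) \subset \mathcal{Z}(\mu_s\widetilde{\xi})$, I observe that the hyperquadric component of \eqref{zxitilde} becomes $\{z_2 = 0,\, z_1^2 + \sum_j(x_j^2 + y_j^2) = -\mu_s^2\}$, which is empty, while the discrete component reduces to $\{z_1 = 0,\, z_2 = \pm\mu_s,\, x_j = y_j = 0\}$; evaluating \eqref{CKVFxi} at these points (using $\aa = \mu_s^2$, $\bb = 0$) annihilates $\widetilde{\xi}$, completing the argument.

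No real obstacle is expected: the result is essentially a bookkeeping exercise once one notes that, in the degenerate sub-cases $\mu_s\mu_t = 0$, the hyperquadric component of one of $\mathcal{Z}(\widetilde{\xi}), \mathcal{Z}(\widetilde{\xi}^\star)$ becomes empty, so both zero sets reduce to a finite number of points at which $\widetilde{\xi}$ and $\widetilde{\xi}^\star$ vanish together. The only delicate point is consistent sign handling in the substitutions $\aa = \mu_s^2 - \mu_t^2$, $\bb = 2\mu_s\mu_t$.
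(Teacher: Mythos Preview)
Your proof is correct and follows essentially the same approach as the paper: a case split on whether $\mu_s\mu_t$ vanishes, followed by direct verification using the explicit descriptions \eqref{zxi}, \eqref{zxitilde}, and \eqref{xiproptoxistar}. The only cosmetic difference is that in the case $\mu_s\mu_t\neq 0$ the paper intersects $\mathcal{Z}(\mu_s\widetilde\xi-\mu_t\widetilde\xi^\star)$ with the axes to recover the points $P_\pm$, whereas you plug $P_\pm$ into \eqref{xiproptoxistar} directly; both amount to the same one-line check.
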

\begin{proof}
 Consider first $\mu_s, \mu_t \neq 0 $. Then at $\mathcal{Z}(\mu_s \widetilde \xi -\mu_t \widetilde \xi^\star) \cap \big\{ \bigcap_{j=1}^{p} \lrbrace{x_j = y_j = 0} \big\}$ we have that $z_1 = \pm \mu_t$ and $z_2 = \mp \mu_s$
which establishes $\mathcal{Z}(\widetilde \xi), \mathcal{Z}(\widetilde \xi^\star) \subset \mathcal{Z}(\mu_s \widetilde \xi -\mu_t   \widetilde \xi^\star)$ in this case.  When $\mu_t= 0,~\mu_s \neq 0$, by definition of the respective sets
we have $\mathcal{Z}(\widetilde \xi) = \mathcal{Z}(\mu_s \widetilde \xi -\mu_t \widetilde \xi^\star)$. 
Moreover, directly from \eqref{zxitilde} one finds
\begin{align*}
 \mathcal{Z}(\widetilde \xi^\star) &=  \bigcap_{j=1}^{p} \lrbrace{x_j = y_j = 0} \cap \lrbrace{z_1 = 0, z_2 = \pm \mu_s},
\end{align*}
which (cf. the first expression in  \eqref{xiproptoxistar}) 
 is clearly contained  in  $\mathcal{Z}(\mu_s \widetilde \xi -\mu_t \widetilde \xi^\star)$. An analogous argument applies in the case $\mu_t \neq 0,~ \mu_s = 0$.
\end{proof}

Let us define the following auxiliary coordinates
\begin{equation} 
 \hat z_+ := \frac{\mu_s z_1 + \mu_t z_2}{\sqrt{\sum_{i=1}^p (x_i^2 + y_i^2)}},\quad\quad \hat z_- := \frac{\mu_s z_2 - \mu_t z_1}{\sqrt{\sum_{i=1}^p (x_i^2 + y_i^2)}},\quad\quad \hat x_i := x_i, \quad\quad \hat y_i := y_i.
\end{equation}
Except for the case $\mu_s = \mu_t = 0$ (which will be analyzed later) the coordinates $\lrbrace{\hat z_+, \hat z_-,\hat x_i,\hat y_i}$ obviously cover
$\mathbb{R}^n \backslash \big\{\bigcap_{j=1}^{p} \{x_j = y_j = 0\}\big\}$. In terms of the adapted coordinates, they read
\begin{equation}\label{auxcart}
\hat z_+ = \alpha \sin(t_-),\quad\quad\hat z_- = \beta \sinh(t_+)\quad\quad \hat x_i = \frac{\alpha_i}{U} \cos(\phi_i),\quad\quad \hat y_i = \frac{\alpha_i}{U} \sin(\phi_i).
\end{equation}
Let us analyze the points where \eqref{auxcart} fails to be a change of coordinates  and hence restrict the domain of definition of $\{\alpha,t_-,t_+,\alpha_i,\phi_i\}$.  The first thing to notice is that a change of sign in the coordinate $\alpha_i$ is equivalent to a rotation of angle $\pi$ in the coordinate $\phi_i$.
Moreover, at points where  $\alpha_i = 0$, i.e. the axis of
$\eta_i$,  the coordinate $\phi_i$ is completely degenerate, which obviously
excludes $\bigcup_{j=1}^{p} \lrbrace{x_j = y_j = 0}$ from the region covered by the adapted coordinates. To avoid duplications, we must restrict $\alpha_i \in (0,1)$ and $\phi \in [-\pi,\pi)$ or alternatively $\alpha_i \in (-1,1)\backslash\{0\}$ and $\phi_i \in [0,\pi)$. We choose the former for definiteness.
% notice that none of the two possibilities covers

The hypersurface $\{\alpha = \mbox{const}, t_- = \mbox{const}, t_+ = \mbox{const}\}$ is an $n-3$ dimensional sphere of radius $U^{-1}$, namely $\{\hat z_-=
\mbox{const},~\hat z_+ = \mbox{const} \}\cap \{  \sum_{i=1}^p (x_i^2 + y_i^2) = U^{-2} = \mbox{const}\}$. This gives a straightforward splitting of {$\mathbb{R}^n \backslash  \{ 0_{n-2} \}$, with $0_{n-2}  := \{\bigcap_{j=1}^{p} \{x_j = y_j = 0\}\big\}$, into $\mathbb{R}^2\times (\mathbb{R}^{n-2} \backslash \{ 0_{n-2} \} )$, where $\mathbb{R}^{n-2} \backslash \{ 0_{n-2} \}$
is foliated by $n-3$ dimensional spheres.} The set
${\mathcal Z}(\mu_s \widetilde{\xi} - \mu_t \widetilde \xi^{\star})$
respects this foliation, so it descends to
$\mathbb{R}^2  \times \mathbb{R}^+$ (the last factor is the radius of the $n-3$ sphere). To avoid extra notation
we also use ${\mathcal Z}(\mu_s \widetilde{\xi} - \mu_t \widetilde \xi^{\star})$
to denote this quotient set. We next show that the adapted coordinates actually cover the largest possible domain, namely $\mathbb{R}^n \setminus \{
{\mathcal Z}(\mu_s \widetilde{\xi} - \mu_t \widetilde \xi^{\star}) \cup 
\bigcup_{j=1}^{p} \lrbrace{x_j = y_j = 0} \}$. From the previous discussion, this is a consequence of the following result.
\begin{lemma}
Assume that at least one of $\{ \mu_s, \mu_t\}$ is not zero. Then. the transformation 
 \begin{equation}\label{mapauxz}
  \begin{array}{rcl}
  (\hat z_+, \hat z_-, U): \mathbb{R}\times[-\pi,\pi) \times \mathbb{R}^+  & \longrightarrow & \lr{\mathbb{R}^2\times \mathbb{R}^+} \backslash \mathcal{Z}(\mu_s \widetilde \xi -\mu_t   \widetilde \xi^\star)\\
(t_+,t_-,\alpha) & \mapsto & (\hat z_+, \hat z_-, U). 
  \end{array}
 \end{equation}
is a diffeomorphism.
\end{lemma}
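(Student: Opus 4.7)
The plan is to construct an explicit smooth inverse of the map in \eqref{mapauxz}. Smoothness of the forward map is immediate from \eqref{auxcart} and the definition $\beta=\sqrt{\alpha^2+\mu_s^2+\mu_t^2}$, so only the existence, smoothness and unique determination of the inverse require work. I will treat the angular variable $t_-\in[-\pi,\pi)$ as a coordinate on $S^1$ in the usual way, so that it suffices to reconstruct the pair $(\cos t_-,\sin t_-)$ and show it depends smoothly on the target point.

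The central trick is to introduce the auxiliary unknown $\xi:=\alpha\cos t_-$. Combined with $\hat z_+=\alpha\sin t_-$ this gives $\alpha^2=\xi^2+\hat z_+^2$ and hence $\beta^2=\xi^2+\hat z_+^2+\mu_s^2+\mu_t^2$. The identity $(\beta\cosh t_+)^2=(\beta\sinh t_+)^2+\beta^2$ with $\beta\sinh t_+=\hat z_-$ forces $\beta\cosh t_+=\sqrt{\xi^2+M^2}$ (positive root), where $M^2:=\hat z_+^2+\hat z_-^2+\mu_s^2+\mu_t^2>0$. Substituting into $(\mu_s^2+\mu_t^2)U=\beta\cosh t_+-\alpha\cos t_-$ turns the third equation into the single scalar relation $R+\xi=\sqrt{\xi^2+M^2}$ with $R:=(\mu_s^2+\mu_t^2)U>0$. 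Squaring and cancelling $\xi^2$ yields the explicit formula
\begin{equation}
  \xi\;=\;\frac{M^2-R^2}{2R}\;=\;\frac{\hat z_+^2+\hat z_-^2+\mu_s^2+\mu_t^2-(\mu_s^2+\mu_t^2)^2U^2}{2(\mu_s^2+\mu_t^2)U},
\end{equation}
and one checks $R+\xi=(R^2+M^2)/(2R)>0$, so squaring was admissible. Thus $\xi$ is a smooth function of $(\hat z_+,\hat z_-,U)$.

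From $\xi$ one sets $\alpha:=\sqrt{\xi^2+\hat z_+^2}$; the constraint $\alpha>0$ fails precisely when $\xi=0$ and $\hat z_+=0$ simultaneously, and $\xi=0$ with $\hat z_+=0$ is readily seen to be equivalent to $(\mu_s^2+\mu_t^2)^2U^2=\hat z_-^2+\mu_s^2+\mu_t^2$ with $\hat z_+=0$, i.e.\ exactly the equations defining $\mathcal{Z}(\mu_s\widetilde\xi-\mu_t\widetilde\xi^\star)$ in the quotient coordinates (as derived from \eqref{xiproptoxistar} using $\mu_s z_1+\mu_t z_2=\hat z_+/U$, $\mu_s z_2-\mu_t z_1=\hat z_-/U$, $\sqrt{\sum(x_i^2+y_i^2)}=1/U$). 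Consequently the target of the inverse is exactly $(\mathbb R^2\times\mathbb R^+)\setminus\mathcal{Z}$. On this set, $(\cos t_-,\sin t_-)=(\xi/\alpha,\hat z_+/\alpha)$ is a unit vector depending smoothly on $(\hat z_+,\hat z_-,U)$, so $t_-\in[-\pi,\pi)$ is uniquely and smoothly determined as a point of $S^1$; and $t_+=\operatorname{arcsinh}(\hat z_-/\beta)$ with $\beta=\sqrt{\alpha^2+\mu_s^2+\mu_t^2}$ is a smooth real number.

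Substituting these back into \eqref{auxcart} reproduces the original $(\hat z_+,\hat z_-,U)$, which gives left- and right-inverse identities and hence the diffeomorphism claim. The only step that requires genuine care is verifying that the image of the forward map avoids $\mathcal{Z}$: this is the same algebraic identity used in reverse, showing that $\xi^2+\hat z_+^2=\alpha^2>0$ along any image point, and it is the only place where the standing hypothesis $(\mu_s,\mu_t)\neq(0,0)$ is essential (it guarantees $R>0$ so that the explicit formula for $\xi$ makes sense). The handling of the angular identification $t_-=-\pi\sim\pi$ is the only mild subtlety, resolved by regarding the $t_-$-factor as $S^1$.
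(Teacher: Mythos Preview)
Your proof is correct and follows essentially the same route as the paper: both compute the inverse map explicitly, obtain the same formula for $\alpha^2=\xi^2+\hat z_+^2$ (your $\xi$ is exactly the quantity inside the large parentheses in the paper's \eqref{inversealpha}), and identify the excluded set $\alpha=0$ with $\mathcal{Z}(\mu_s\widetilde\xi-\mu_t\widetilde\xi^\star)$. The only organizational difference is that the paper first computes the Jacobian determinant $\alpha U$ to locate the singular locus and then inverts only for $\alpha$, whereas you construct the full inverse $(t_+,t_-,\alpha)$ directly via the auxiliary variable $\xi=\alpha\cos t_-$; your version makes global bijectivity more transparent, but the algebraic content is identical.
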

\begin{proof}
 The determinant of the jacobian of \eqref{mapauxz} reads
\begin{equation}
 \left|\frac{\partial(\hat z_+, \hat z_-, U)}{\partial(t_+,t_-,\alpha)}\right| = \alpha U.
\end{equation}
Since $U$ is strictly positive (cf. \eqref{solUgen}
and recall that we chose $\epsilon =1$ w.l.o.g.), the conflictive points are $\alpha = 0$. To calculate the locus $\{\alpha = 0\}$ we obtain the inverse transformation of $\alpha$ in terms of $U, \hat z_+,\hat z_-$ by solving \eqref{solUgen} and the first two in \eqref{auxcart}. The result is, after a straightforward
computation,
\begin{equation}\label{inversealpha}
 \alpha = \pm\lr{\hat z_+^2 +\frac{1}{4 U^2 (\mu_s^2 + \mu_t^2)^2}(\hat z_+^2 + \hat z_-^2 -U^2(\mu_s^2 + \mu_t^2)^2 + (\mu_s^2 + \mu_t^2))^2}^{1/2}.
 \end{equation}
 It follows that $\alpha = 0$ is equivalent to $\hat z_+ = 0$ and $\hat z_-^2  +\mu_s^2 +  \mu_t^2 = U^2 (\mu_s^2 +  \mu_t^2)^2$. When translated into the
 original coordinates$\{z_1,z_2,x_i,y_i\}$ this set is
 precisely  $\mathcal{Z}(\mu_s \widetilde \xi -\mu_t   \widetilde \xi^\star)$. Also, from \eqref{inversealpha} it is obvious that $\alpha$ is multivalued, which also implies that $t_-$ is multivalued after substituting $\alpha$ as a function of $\hat z_+, \hat z_-, U$ in the first equation in \eqref{auxcart}\footnote{This was already evident by observing that a change of sing in $\alpha$ is cancelled 
   by a rotation of $\pi$ in $t_-$}. We solve this issue by restricting
 $\alpha$ to be strictly positive and let $t_-$  take values in $[-\pi,\pi)$.
%  Hence, we either must restrict $t_-$ to
%  the interval $[0,\pi)$ and let $\alpha$ take values in $\mathbb{R}\backslash \{0\}$ or allows 
%  and restrict $\alpha$ to the set of strictly positive or negative values. 
%  It is more convenient to consider  $t_- \in [-\pi,\pi)$ and $\alpha \in \mathbb{R}^+$. since as we previously discussed, $U$  (therefore all the coordinates), has always a good limit as $\aa, \bb \rightarrow 0$ in the positive branch of $\alpha$.
\end{proof}

%  A way to visualize this is to represent the curve ${t_- = 0, \alpha = 0}$ parametrized by $t_+$ in the plane $\hat z_+ = 0$ (Figure \ref{fig1}).  The curve is similar to a hyperbola, where the distance to the $\hat z_-$ axis is $\cosh(t_+)/\sqrt{\mu_s^2 + \mu_t^2}$, thus it grows to infinity as $\aa,\bb \rightarrow 0$ and the region covered by $\alpha <0$ also tends to infinity. 
%  
% \begin{center}
%  \begin{figure}
%  \caption{aqui va la grafica U z+ z-}\label{fig1}
%  \end{figure}
% \end{center}
%   
We have shown that the adapted coordinates cover all $\mathbb{R}^n$
except
%$\big\{\bigcup_{j=1}^{p} \{x_j = y_j = 0\} \big\} \cup \mathcal{Z}(\mu_s \widetilde \xi -\mu_t   \widetilde \xi^\star)$ or equivalently $\mathbb{R}^n \backslash
$\bigcup_{j=1}^{p} \mathcal{Z}(\eta_j) \cup \mathcal{Z}(\mu_s \widetilde \xi -\mu_t   \widetilde \xi^\star)$ . The domain of definition of the coordinates $t_1,t_2$ depends on $\mu_t$ and $\mu_s$, because $ -\pi \leq t_- = \mu_t t_2 - \mu_s t_1 < \pi $. This defines a band $B(\mu_s,\mu_t) := \{-\pi \leq t_- = \mu_t t_2 - \mu_s t_1 < \pi\}$, whose width and tilt is determined by $\aa, \bb$ through $\mu_s, \mu_t$ (see figure \ref{fig2}). Nevertheless, the coordinate change
is well defined for all values of $t_1$ and $t_2$ and involves
only periodic functions of $t_-$. Thus, we can extend the domain
of definition of  $t_1,t_2$ to all of $\mathbb{R}^2$. This defines
a covering of the original space 
$\mathbb{R}^n \backslash (\bigcup_{j=1}^{p} \mathcal{Z}(\eta_j) \cup \mathcal{Z}(\mu_s \widetilde \xi -\mu_t   \widetilde \xi^\star))$  which unwraps completely
the orbits of $\widetilde{\xi}$ and $\widetilde \xi^{\star}$. It is not the universal covering because it does not unwrap the orbits of the axial vectors. 
This result is a generalization to higher dimensions of the covering dicussed in detail in \cite{marspeon20}.

The limit case $\mu_s = \mu_t = 0$ (that is $\aa = \bb = 0$) corresponds with a band of infinite width, i.e. $ B(\mu_s,\mu_t) =  \mathbb{R}^2$. In this case, the adapted coordinates also cover the largest possible set
$\mathbb{R}^n \backslash (\bigcup_{j=1}^{p} \mathcal{Z}(\eta_j))$. Recall that in  this case the only points where $\{ \widetilde \xi, \widetilde \xi^{\star},\eta_i \}$ is not a linearly independent set is the union of
${\mathcal Z} (\widetilde \xi), {\mathcal Z} (\widetilde \xi^{\star})$, and
${\mathcal Z} (\eta_i)$ and we have already seen that in this case
${\mathcal Z} (\widetilde \xi) = {\mathcal Z} (\widetilde \xi^{\star})
\subset {\mathcal Z} (\eta_i)$, for $i = 1, \cdots, p$.
This limit case is 
the same result that we would have obtained, had we performed
a direct analysis using $U$ as given by \eqref{Ulim}.

 \begin{figure}[h!]   
 \begin{center}
 \psfrag{Z}{$\theta$}
  \psfrag{W}{$w$}
    \psfrag{T1}{$t_1$}
      \psfrag{T2}{$t_2$}
   \includegraphics[width=7cm]{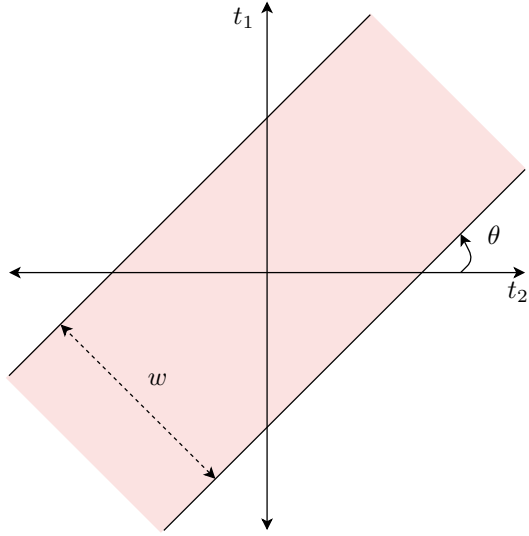}  
 \caption{Band $B(\mu_s,\mu_t)$ where the coordinates $t_1,t_2$ are defined. The tilt is given by $\theta = \arctan\lr{\frac{\mu_s}{\mu_t}}$ and the width $w$ is $2 \pi/\mu_t$ if $\mu_t \neq 0$, $2 \pi/\mu_s$ if $\mu_t = 0,~\mu_s \neq 0$ and $w \rightarrow \infty$ if $\mu_s = \mu_t = 0$.}\label{fig2}     
 \end{center} 
 \end{figure}

Once we have determined the adapted coordinates and the region they cover, we may proceed to calculate the expression of the Euclidean metric 
\begin{equation}\label{flatcart}
 g_E = \dif z_1^2 + \dif z_2^2 + \sum\limits_{i=1}^{p} \left ( \dif  x_i^2 + \dif y_i^2 \right ).
\end{equation}
in adapted coordinates. We start with the term $\sum\limits_{i=1}^{p} \left ( \dif x_i^2 + \dif y_i^2 \right )$, which is straightforward 
\begin{equation}\label{flatxiyi}
  \sum\limits_{i=1}^{p} \left ( \dif x_i^2 + \dif y_i^2 \right ) = \frac{\dif U^2}{U^4} +\frac{1}{U^2}\sum\limits_{i=1}^{p}\restr{\lr{\dif \alpha_i^2 + \alpha_i^2 \dif \phi_i^2}}{\sum_{i=1}^p \alpha_i^2 = 1} - \frac{2 \dif U}{U^2} \left ( \sum\limits_{i=1}^p \alpha_i \dif \alpha_i \right ) = \frac{\dif U}{U^4} +\frac{1}{U^2}
  \gamma_{\mathbb{S}^{n-3}},
\end{equation}
where in the last equality we used $ \sum_{i=1}^p \alpha_i \dif \alpha_i = 0$, which follows from $\sum_{i=1}^p \alpha_i^2 = 1$ and we have defined
\begin{equation} 
\gamma_{\mathbb{S}^{n-3}}:=\sum\limits_{i=1}^{p}\restr{\lr{\dif \alpha_i^2 + \alpha_i^2 \dif \phi_i^2}}{\sum_{i=1}^p \alpha_i^2 = 1}. \label{gammasph}
\end{equation}
The notation is justified because the right-hand side corresponds
to the standard unit metric on $\mathbb{S}^{n-3}$. This follows because
$\sum_{i=1}^{p}\lr{\dif \alpha_i^2 + \alpha_i^2 \dif \phi_i^2}$ is obviously flat and the restriction $\sum_{i=1}^p \alpha_i^2 = 1$ defines a unit sphere.
We emphasize, however that the notation $\gamma_{\mathbb{S}^{n-3}}$ refers to  the
quadratic form above, not to the spherical metric in any other coordinate
system. Observe also that $\dif U$ in \eqref{flatxiyi} should be understood as a
short name for the explicit differential of $U$ in terms of
$\dif t_1, \dif t_2$, $\dif \alpha$. Using \eqref{flatcart} and \eqref{flatxiyi}, we have
\begin{equation}
 g_{t_1 t_1} = \lr{\frac{\partial z_1}{\partial t_1}}^2 + \lr{\frac{\partial z_2}{\partial t_1}}^2 + \frac{1}{U^4} \lr{\frac{\partial U}{\partial t_1}}^2,
\end{equation}
which after a explicit calculation reduces to
\begin{equation}
 g_{t_1 t_1} = \frac{\alpha^2 +\mu_t^2}{ U^2}.
\end{equation}
Notice that $g_{t_1 t_1} = g_E(\widetilde \xi, \widetilde \xi)$, $g_{t_2 t_2} = g_E(\widetilde \xi^\star, \widetilde \xi^\star)$ and $g_{t_1,t_2} = g_E(\widetilde \xi, \widetilde \xi^\star)$. From the expressions in cartesian coordinates it is straightforward to show
\begin{equation}
 g_E(\widetilde \xi, \widetilde \xi) = g_E(\widetilde \xi^\star, \widetilde \xi^\star) - \aa\sum\limits_{i=1}^{p}  (x_i^2 +  y_i^2) = g_E(\widetilde \xi^\star, \widetilde \xi^\star) - \frac{\aa}{ U^2}, \quad \quad g_E(\widetilde \xi, \widetilde \xi^\star) = \frac{\bb}{2}\sum\limits_{i=1}^{p}  (x_i^2 +  y_i^2) = \frac{\bb}{2 U^2}
\end{equation}
where we have used $U^{-2} = \sum\limits_{i=1}^{p}  (x_i^2 +  y_i^2)$ (see \eqref{adaptxiyi}). Thus
\begin{equation}
 g_{t_2 t_2} = g_{t_1 t_1} + \frac{\aa}{U^2}= \frac{\alpha^2 + \mu_s^2}{ U^2},\quad \quad g_{t_1 t_2} = \frac{\bb}{2 U^2} = \frac{\mu_s \mu_t}{ U^2}.
\end{equation}
The remaining terms are rather long to calculate. With the aid of a computer algebra system one gets
\begin{align}
 g_{\alpha \alpha} &= \lr{\frac{\partial z_1}{\partial \alpha}}^2 + \lr{\frac{\partial z_2}{\partial \alpha}}^2 + \frac{1}{U^4} \lr{\frac{\partial U}{\partial \alpha}}^2 = \frac{1}{\beta^2 U^2}\\
 g_{\alpha t_1} &= \frac{\partial z_1}{\partial \alpha} \frac{\partial z_1}{\partial t_1} + \frac{\partial z_2}{\partial \alpha}\frac{\partial z_2}{\partial t_1} + \frac{1}{U^4} \frac{\partial U}{\partial \alpha}\frac{\partial U}{\partial t_1} = 0,\\
 g_{\alpha t_2} &= \frac{\partial z_1}{\partial \alpha}\frac{\partial z_1}{\partial t_2} + \frac{\partial z_2}{\partial \alpha}\frac{\partial z_2}{\partial t_2} + \frac{1}{U^4} \frac{\partial U}{\partial \alpha}\frac{\partial U}{\partial t_2} = 0.
\end{align}
Notice that no terms in $\dif \alpha_i,~\dif \phi_i$ appear but those in 
$\gamma_{\mathbb{S}^{n-3}}$, since neither $U$ nor $z_1,~z_2$ depend on $\alpha_i, \phi_i$.
Putting all these results  together we obtain the following expression: 
\begin{lemma}
 In adapted coordinates $\{t_1,t_2,\alpha, \alpha_i, \phi_i \}$, the Euclidean metric $g_E$ takes the form
 \begin{equation}\label{flatadatped}
g_E = \frac{1}{U^2} \lr{(\alpha^2 + \mu_t^2)\dif t_1^2 + (\alpha^2 + \mu_s^2) \dif t_2^2 + 2 \mu_s \mu_t  \dif t_1 \dif t_2 + \frac{\dif \alpha^2}{\alpha^2 + \mu_s^2 + \mu_t^2} +\gamma_{\mathbb{S}^{n-3}} }.
\end{equation}
\end{lemma}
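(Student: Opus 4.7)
The plan is to assemble the component-wise calculations laid out in the paragraphs preceding the statement into the single closed expression claimed. Starting from the explicitly flat form $g_E = \dif z_1^2 + \dif z_2^2 + \sum_{i=1}^p (\dif x_i^2 + \dif y_i^2)$, I would proceed blockwise in the three natural groups of coordinates $\{t_1,t_2\}$, $\{\alpha\}$, $\{\alpha_i,\phi_i\}$, and verify that all cross terms between distinct blocks vanish.

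For the $\{\alpha_i,\phi_i\}$ (spherical) sector, I would substitute $x_i = \alpha_i U^{-1} \cos\phi_i$, $y_i = \alpha_i U^{-1}\sin\phi_i$, differentiate, and use the constraint $\sum_i \alpha_i^2 = 1$, which forces $\sum_i \alpha_i \dif \alpha_i = 0$ and collapses the cross terms with $\dif U$. This yields $\sum_i(\dif x_i^2 + \dif y_i^2) = \dif U^2/U^4 + U^{-2} \gamma_{\mathbb{S}^{n-3}}$ with $\gamma_{\mathbb{S}^{n-3}}$ defined in \eqref{gammasph}. Crucially, since $U$, $z_1$, $z_2$ have no dependence on $\alpha_i$ or $\phi_i$, the spherical piece decouples cleanly from the other coordinates, contributing only through the $\dif U^2/U^4$ term (which must be combined with $\dif z_1^2 + \dif z_2^2$).

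For the $\{t_1,t_2\}$ block, rather than compute $\partial z_a/\partial t_b$ and sum squares directly, I would exploit the identity $g_{t_a t_b} = g_E(\partial_{t_a},\partial_{t_b}) = g_E(\widetilde \xi^{(a)},\widetilde \xi^{(b)})$ with $\widetilde \xi^{(1)} := \widetilde \xi$, $\widetilde \xi^{(2)} := \widetilde \xi^\star$. These inner products are straightforward to read off from the Cartesian expressions \eqref{CKVFxi}--\eqref{CKVFxistar}: one finds $g_E(\widetilde \xi,\widetilde \xi^\star) = \bb/(2 U^2)$ and $g_E(\widetilde \xi,\widetilde \xi) = g_E(\widetilde \xi^\star,\widetilde \xi^\star) - \aa/U^2$, using the identity $U^{-2} = \sum_i(x_i^2 + y_i^2)$. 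A single direct evaluation of $g_{t_1 t_1} = (\alpha^2 + \mu_t^2)/U^2$ combined with the relations $\aa = \mu_s^2 - \mu_t^2$ and $\bb = 2\mu_s\mu_t$ from Corollary \ref{corolcharac} then gives the remaining two components.

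Finally, for the $\{\alpha\}$ sector the computation is direct: differentiating $z_1, z_2, U$ with respect to $\alpha$ and summing squares yields $g_{\alpha\alpha} = 1/(\beta^2 U^2) = 1/\bigl((\alpha^2 + \mu_s^2 + \mu_t^2) U^2\bigr)$ after using $\beta^2 = \alpha^2 + \mu_s^2 + \mu_t^2$. The only non-manifest cancellations are in the cross terms $g_{\alpha t_a}$, which I expect to be the main technical obstacle: they are not obviously zero from the formulas \eqref{adaptz1}--\eqref{adaptz2}, but a careful expansion of $\partial z_1/\partial\alpha\cdot\partial z_1/\partial t_a + \partial z_2/\partial\alpha\cdot\partial z_2/\partial t_a + U^{-4}\partial U/\partial\alpha \cdot \partial U/\partial t_a$ (using $\partial\beta/\partial\alpha = \alpha/\beta$ and the trigonometric identities implicit in the $t_\pm$ variables) should produce the required cancellation. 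Once this vanishing is checked, collecting all the computed components produces the block-diagonal form asserted in the lemma.
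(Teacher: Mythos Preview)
Your proposal is correct and follows essentially the same route as the paper: the spherical sector is handled exactly as in \eqref{flatxiyi}, the $\{t_1,t_2\}$ block is obtained by computing $g_{t_1 t_1}$ directly and then using the Cartesian identities for $g_E(\widetilde\xi,\widetilde\xi^\star)$ and $g_E(\widetilde\xi,\widetilde\xi) - g_E(\widetilde\xi^\star,\widetilde\xi^\star)$, and the $\alpha$-sector (including the vanishing of $g_{\alpha t_a}$) is done by direct expansion, which the paper relegates to a computer algebra check. Your identification of the $g_{\alpha t_a}$ cancellation as the main technical obstacle is accurate and matches where the paper defers to machine computation.
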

We would like to stress the simplicity of this result. Except in the a
global conformal factor, the metric does not depend  in $t_1$ and $t_2$ (so, both $\widetilde \xi$ and $\widetilde \xi^{\star}$ are Killing vectors
of $U^2 g_E$). The dependence in the coordinate $\alpha$ and
the conformal class constants $\{ \mu_s, \mu_t\}$ is also extremely simple. Even more,
the fact that all dependence in $\{ \alpha_i, \phi_i \}$ arises only in
$\gamma_{\mathbb{S}^{n-3}}$ allows us to use any other coordinate system on the unit $\mathbb{S}^{n-3}$. Any such coordinate system is still adapted to $\widetilde{\xi}$ and $\widetilde \xi^{\star}$ but (in general)
no longer to  $\{ \eta_i \}$.  
This enlargement to partially adapted coordinates
is an interesting consequence
of the foliation of $\mathbb{R}^n$ by $(n-3)$-spheres described above.

We now work out the odd $n$ case. As already discussed, we will base the
analysis on the even dimensional case by restricting to a suitable a hyperplane. The underlying reason why this is possible is given
in the following lemma.
\begin{lemma}\label{lemmaextCKVF}
Fix $n \geq 3$  odd. Let  $\xi_F$ be a CKVF of $\mathbb{E}^n$ in canonical form and let $\{z_1, x_i, y_i\}$ be canonical coordinates. Consider the embedding $\mathbb{E}^n \hookrightarrow \mathbb{E}^{n+1}$ where $\mathbb{E}^n$ is identified with the hyperplane $\lrbrace{z_2=0}$, for a cartesian coordinate $z_2$ of $\mathbb{E}^{n+1}$. Then $\xi_F$ extends to a CKVF of $\mathbb{E}^{n+1}$ with the same value of $\aa,\mu_i $ and $\bb = 0$.
\end{lemma}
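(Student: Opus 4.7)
The plan is to prove the lemma by a direct comparison of explicit coordinate expressions obtained from Theorem \ref{theoisomskwconf}. The key observation is that the canonical form in odd dimension is exactly the restriction to $\{z_2=0\}$ of the canonical form in one higher (even) dimension with $\tau=0$.

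First I would write down $\xi_F$ in canonical coordinates of $\mathbb{E}^n$ (with $n$ odd and $p=(n-1)/2$) by applying formula \eqref{CKVFgeneral} to the parameters prescribed for the odd canonical form, namely $a^1=1$, $b^1=\sigma/2$, all others zero, together with the CAKVFs $\eta_i = x_i\partial_{y_i}-y_i\partial_{x_i}$. The result is
\begin{equation*}
\xi_F = \Big(\tfrac{\sigma}{2} + \tfrac{1}{2}\big(z_1^2 - \sum_{i=1}^{p}(x_i^2+y_i^2)\big)\Big)\partial_{z_1} + z_1 \sum_{i=1}^p (x_i\partial_{x_i}+y_i\partial_{y_i}) + \sum_{i=1}^p \mu_i\big(x_i\partial_{y_i} - y_i\partial_{x_i}\big).
\end{equation*}
On $\mathbb{E}^{n+1}$, now with $n+1$ even and associated $p'=(n+1)/2-1=p$ (so the CAKVFs are indexed by the same set), the canonical CKVF $\xi_F^{\mathrm{ext}}$ with parameters $\sigma$, $\tau=0$ and the same $\mu_i$ is, by \eqref{CKVFxi} and \eqref{CAKVF},
\begin{equation*}
\xi_F^{\mathrm{ext}} = \Big(\tfrac{\sigma}{2} + \tfrac{1}{2}\big(z_1^2 - z_2^2 - \sum_{i=1}^{p}(x_i^2+y_i^2)\big)\Big)\partial_{z_1} + z_1 z_2\,\partial_{z_2} + z_1 \sum_{i=1}^p (x_i\partial_{x_i}+y_i\partial_{y_i}) + \sum_{i=1}^p \mu_i\big(x_i\partial_{y_i} - y_i\partial_{x_i}\big).
\end{equation*}

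The conclusion then follows from two immediate checks: the $\partial_{z_2}$ component of $\xi_F^{\mathrm{ext}}$, namely $z_1 z_2$, vanishes identically on the hyperplane $\{z_2=0\}$, so $\xi_F^{\mathrm{ext}}$ is tangent to it, and setting $z_2=0$ in the remaining components recovers exactly $\xi_F$. Hence $\xi_F^{\mathrm{ext}}$ is the required extension. Equivalently, at the level of the associated skew-symmetric endomorphisms, the odd canonical decomposition $F = F_\sigma \oplus \bigoplus_i F_{\mu_i}$ on $\mathbb{M}^{1,n+1}$ extends under the natural inclusion $\mathbb{M}^{1,n+1}\hookrightarrow \mathbb{M}^{1,n+2}$ (adjoining one spacelike direction) to $F_{\sigma,0}\oplus \bigoplus_i F_{\mu_i}$, i.e.\ the even canonical form with $\tau=0$, since the $\mathbb{M}^{1,2}$-block $F_\sigma$ of \eqref{canonFdim3} extends trivially to the $\mathbb{M}^{1,3}$-block $F_{\sigma,0}$ of \eqref{canonFdim4} at $\tau=0$ by direct-summing with the zero map on the new line.

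The only delicate point is the bookkeeping between the two labelings of canonical coordinates: for odd $n$ the CAKVFs use $A_i=2i$ while for even $n+1$ they use $A_i=2i+1$, which becomes consistent once one observes that inserting the new coordinate $z_2=y^2$ at position $2$ shifts all subsequent indices by one. No obstacle of substance is anticipated; the argument is essentially a single-line verification once the explicit formulas are lined up.
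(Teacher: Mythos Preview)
Your proposal is correct and follows essentially the same approach as the paper: write the explicit canonical expressions of $\xi_F$ on $\mathbb{E}^n$ and of the even-dimensional canonical CKVF with $\tau=0$ on $\mathbb{E}^{n+1}$, then check tangency to $\{z_2=0\}$ and agreement upon restriction. Your additional remarks on the endomorphism-level picture and the index shift are correct and slightly more detailed than the paper's proof, but the core argument is the same direct verification.
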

\begin{proof}
  By Remark \ref{tildexi} and Theorem \ref{theoisomskwconf},
  the expression of $\xi_F$ in the canonical coordinates
  $\{ z_1, x_i, y_i\}$ is
  \begin{align*}
    \xi_F = & \lr{\frac{\aa}{2} + \frac{1}{2}\lr{z_1^2  - \sum\limits_{i=1}^{p}(x_i^2 + y_i^2)}} \partial_{z_1} + 
              z_1 \sum\limits_{i=1}^{p}\lr{x_i \partial_{x_i} + y_i \partial_{y_i}} + \sum \limits_{i=1}^{p} \mu_i \lr{x_i \partial_{y_i} - y_i \partial_{x_i}} :=  \widetilde \xi +  \sum\limits_{i=1}^{p} \mu_i \eta_i.
\end{align*}
Define $\xi'_F$ on $\mathbb{E}^{n+1}$ in cartesian coordinates
$\{ z_1, z_2, x_i ,y_i \}$ by  $\xi'_F = \widetilde \xi' 
+ \sum \limits_{i=1}^{p} \mu_i \lr{x_i \partial_{y_i} - y_i \partial_{x_i}} 
$ where $\widetilde \xi'$ is given by \eqref{CKVFxi}
with $\tau =0$. It is clear that this vector is a CKVF of $\mathbb{E}^{n+1}$ 
written in canonical form, that it is tangent to the hyperplane $z_2=0$ and
that it agrees with $\xi_F$ on this submanifold.  
\end{proof}
% 
% Consider $\mathbb{E}^n$ embedded in the even dimensional Euclidean space $\mathbb{E}^n \hookrightarrow \mathbb{E}^{n+1}$ as the hyperplane $\mathbb{E}^n = \lrbrace{z_2=0}$. Our strategy is to show that every CKVF of $\mathbb{E}^n$ in canonical form extends to a CKVF of $\mathbb{E}^{n+1}$ in canonical form, and use the adapted coordinates obtained for the even dimensional case suitably restricted to $\mathbb{E}^n$. Let $\xi_F$ be a CKVF of $\mathbb{E}^{n}$ in canonical form $\xi_F = \widetilde \xi + \sum\limits_{i=1}^p\mu_i \eta_i$ where: 
%  \begin{align}
%   \widetilde \xi &= \Big({\frac{\aa}{2} + \frac{1}{2}\big({z_1^2 - \sum\limits_{i=1}^{p}(x_i^2 + y_i^2)}}\big) \Big) \partial_{z_1} +  z_1 \sum\limits_{i=1}^{p}\lr{x_i \partial_{x_i} + y_i \partial_{y_i}},\label{CKVFxiodd}\\
%   \eta_i & = x_i \partial_{y_i} - y_i \partial_{x_i},\label{CAKVFodd}
%  \end{align}
%  for a set of canonical coordinates $\lrbrace{z_1,x_i,y_i}$. The CAKVFs $\eta_i$ extend trivially to $\mathbb{E}^{n+1}$ and $\widetilde \xi$ extends as \eqref{CKVFxi} with $\bb= 0$. In particular, this vector field is tangent to the hypersurface $\{z_2 = 0\}$. Thus, the adapted coordinates \eqref{adaptz1}-\eqref{adaptxiyi} with $\bb = 0$ obtained in the even dimensional case must give adapted coordinates to \eqref{CKVFxiodd} and \eqref{CAKVFodd} when restricted to $z_2 = 0$. 

Consequently, introducing adapted coordinates for the extended CKVF
and  restricting to $\{z_2 = 0\}$ will provide adapted coordinates for $\xi_F$. The restriction will obviously reduce the domain of definition of the adapted coordinates $(t_1,t_2,\alpha,\alpha_i,\phi_i)$ to a hypersurface. It is straightforward from equation \eqref{adaptz2} and the second equation in \eqref{adaptz1z2lim} that for the three cases $\cc >0$, $\cc = 0$ or $\cc < 0$,  the hyperplane $\{z_2 = 0 \}$ corresponds to $\{t_2 = 0 \}$. 
%  Let us start by determining this explicitly. 
%  If $\aa >0$ and $\bb = 0$
%  \begin{equation}
%   z_2 = \frac{1}{U} \frac{\beta}{\sqrt{\aa}}\sinh(\sqrt{\aa}t_2) = 0 \quad \Longleftrightarrow \quad t_2 = 0.
%  \end{equation}
% In a similar way, for the cases $\aa<0$ and $\bb = 0$ and $\aa = \bb = 0$ one obtains, respectively:
%  \begin{equation}
%  \begin{array}{rclrrcl}
%   z_2 &=& \frac{1}{U} \frac{-\alpha}{\sqrt{\aa}}\sin(\sqrt{|\aa|}t_2) = 0 &\Longleftrightarrow& t_2 &=& 0,\\
%   z_2 &=& \frac{2 \alpha t_2}{U} = 0 &\Longleftrightarrow& t_2 &=& 0.
%  \end{array}
%  \end{equation}
%  Therefore, for all three cases the hyperplane $\{z_2 = 0\}$ coincides with the hyperplane $\{t_2 = 0\}$. 
 It follows that the remaining coordinates  $\{t_1,\alpha,\alpha_i,\phi_i\}$
are adapted to $\widetilde \xi$ and all  $\eta_i$. Their domain of definition is $t_1 \in \mathbb{R}$, $ \alpha \in \mathbb{R}^+ $, $ \alpha_i \in (0,1)$, $ \phi_i \in [-\pi,\pi)$  and the coordinate change is given by 
\eqref{adaptz1} (or the first in \eqref{adaptz1z2lim}) together with \eqref{adaptxiyi} after setting $\tau = 0$ and $t_2 = 0$. 
Depending on the sign of $\aa$ one gets for $z_1$
\begin{equation}\label{adaptoddz}
z_1 = \left \{ \begin{array}{ll}
\frac{-1}{|U^+|}\frac{\alpha  \sin (\sqrt{\aa}t_1)}{\sqrt{\aa}}, & \aa >0 \\
\frac{-1}{|U^-|}\frac{\sqrt{\alpha^2 + |\aa|}  \sinh (\sqrt{|\aa|}t_1)}{\sqrt{|\aa|}}, & \aa < 0 \\
\frac{-1}{|U^0|} \alpha t_1, & \aa =0
\end{array} \right .,
\end{equation}
where 
\begin{equation}
 U^{+} := \frac{1}{\aa}(\sqrt{\alpha^2 + \aa} - \alpha \cos(\sqrt{\aa}t_1)),\quad\quad U^{-} := \frac{1}{-\aa}(\sqrt{\alpha^2 - \aa}\cosh(\sqrt{-\aa} t_1) - \alpha), \quad\quad U^{0} := \frac{1}{2}(\alpha t_1^2 + \frac{1}{\alpha}),\label{Uodd}
\end{equation}
and for all three cases
 \begin{equation}
 x_i = \frac{\alpha_i}{U^\epsilon} \cos(\phi_i),\quad\quad
 y_i = \frac{\alpha_i}{U^\epsilon} \sin(\phi_i), \label{adaptoddxy}
\end{equation}
 where we write $U^\epsilon$ for the function $U^+, U^-$ or $U^0$ according with sign of $\aa$. 
 
The range of variation of $\{ t_1, \alpha, \alpha_i, \phi_i\}$
was inferred before from the corresponding range 
of variation of $\{ t_1, t_2, \alpha, \alpha_i, \phi_i \}$ 
in $\mathbb{E}^{n+1}$. It may happen, however, that when we restrict to
the hyperplane $\{ z_2 =0\}$, the range gets enlarged and additional points
get covered by the adapted coordinate system. The underlying reason is that, in effect, we are no longer adapting coordinates to  $\widetilde \xi'{}^{\star}$, so the points on $z_2=0$ where this vector is linearly dependent to
$\widetilde \xi'$ (or zero) are no longer problematic. When
$\tau=0$, one has
\begin{align*}
(\mu_s = \sqrt{\sigma}, \quad \mu_t =0) \quad \mbox{ if } \sigma \geq 0, \qquad \qquad 
(\mu_s = 0, \quad \mu_t = \sqrt{|\sigma|}) \quad \mbox{ if } \sigma \leq 0.
\end{align*}
We may ignore the case $\sigma=0$ because ${\mathcal Z} 
(\widetilde \xi') = 
{\mathcal Z} (\widetilde \xi'{}^{\star})$. It follows from \eqref{zxi} 
and \eqref{xiproptoxistar} that
\begin{align*}
\left .   \mathcal{Z}(\widetilde \xi') \right |_{z_2=0} &=
\left \{ \begin{array}{ll}
       \{ z_1 = 0 \}\cap \left \{  \sum\limits_{i=1}^p (x_i^2 + y_i^2) = \sigma \right \} & \mbox{if } \sigma > 0 \\
 \bigcap_{j=1}^{p} \lrbrace{x_j = y_j = 0} \cap \lrbrace{z_1 = \pm \sqrt{|\sigma|}} & \mbox{if } \sigma < 0 
\end{array} \right . \\
\left .  \mathcal{Z}(\mu_s \widetilde \xi' -\mu_t   \widetilde \xi'{}^\star )
\right |_{z_2 =0}  & =
\left \{   \begin{array}{ll}
             \{ z_1 = 0 \}\cap \left \{  \sum\limits_{i=1}^p (x_i^2 + y_i^2) = \sigma \right \} & 
\quad \qquad \mbox{if } \sigma > 0 \\
 \{ z_1^2 +  \sum\limits_{i=1}^p (x_i^2 + y_i^2) = |\sigma| \} & 
\quad \qquad \mbox{if } \sigma < 0.  
\end{array}
\right . 
\end{align*}
When $\sigma >0$, the two sets are the same and no extension of the coordinates
$\{ t_1, \alpha, \alpha_i, \phi_i\}$ is possible. However, when $\sigma <0$,
the set  
$\mathcal{Z}(\mu_s \widetilde \xi' -\mu_t   \widetilde \xi'{}^\star )
|_{z_2 =0}$ is strictly larger than
$\mathcal{Z}(\widetilde \xi') |_{z_2=0}$.
From expressions \eqref{adaptoddz} and \eqref{adaptoddxy} one checks that
$\mathcal{Z}(\mu_s \widetilde \xi' -\mu_t   \widetilde \xi'{}^\star )
|_{z_2 =0} \setminus \mathcal{Z}(\widetilde \xi') |_{z_2=0}$ corresponds exactly to the value $\alpha =0$
and that  
$\mathcal{Z}(\widetilde \xi) =  \mathcal{Z}(\widetilde \xi') |_{z_2=0}$ 
is at the limit $t_1 \rightarrow \pm \infty$.
Thus, a priori there is the possibility that the adapted
coordinates $\{ t_1, \alpha, \alpha_i, \phi_i \}$ can be extended regularly
to  $\alpha=0$ when $\sigma <0$. It follows directly from 
\eqref{adaptoddz} that this is indeed the case (observe that, to the contrary,
 the limit $\alpha \rightarrow 0$ in \eqref{adaptoddz}
is singular when $\sigma \geq 0$, in agreement with the previous discussion).
Thus, the range of definition of $\alpha$ is $[0, \infty)$ when
$\sigma <0$. The conclusion is that,
irrespectively of the value of 
$\sigma$,
the adapted coordinates $\{ t_1, \alpha, \alpha_i, \phi_i\}$ cover the largest possible domain of $\mathbb{E}^n$, namely all points where
$\widetilde \xi$ is non-zero away from the axes of $\{ \eta_i \}$.

To obtain the Euclidean metric in $\mathbb{E}^{n}$ for $n$ odd
in adapted coordiantes we simply restrict \eqref{flatadatped} (with $n \rightarrow n+1$)
to the hypersurface $t_2 =0$, and get 
\begin{align}\label{eqflatodd}
g_E^\epsilon = \frac{1}{(U^{\epsilon})^2} \left ( 
\left (\alpha^2 + \frac{(1 - \epsilon) | \sigma|}{2}  \right ) dt_1^2 
 + \frac{d\alpha^2}{\alpha^2 + |\sigma|} + \gamma_{\mathbb{S}^{n-2}}
\right ), 
\end{align}
where $\epsilon = -1, 0, 1$  respectively if $\cc <0, \cc=0, 
\cc >0$.

\begin{remark}\label{remarkextensiong}
  {The three odd dimensional cases can be unified into one. The function $U^0$ coincides with the limits of $U^+$ and $U^-$ when $\aa\rightarrow 0$. However, the analytical continuation of  $U^+$ to negative values of $\cc$ does not directly yield $U^-$. To solve this we introduce the function
    \begin{align*}
      W_1 (y) = \frac{1}{\cc} \left (\sqrt{y^2 + \cc}
      - y \cos\left ( \sqrt{\cc} t_1 \right ) \right ),
    \end{align*}
      which is analytic in $\cc$ and takes real values for real $\cc$.
      We observe that
      $U^+(\alpha=y) = W_1 (y)$ for  $\cc >0$,
      $U^0(\alpha=y) = W_1 (y)$ ($\cc =0$) and
      $U^-(\alpha = + \sqrt{ y^2 + \cc}) = W_1 (y)$ ($\cc <0$). This suggests introducing the coordinate change $\alpha =y$ for $\cc \geq 0$ and
      $\alpha = + \sqrt{y^2 + \cc}$ for $\cc<0$. From the domain of $\alpha$, it follows that  $y$ takes values in
      $y > 0$ when $\cc \geq 0$ and
      $y \geq \sqrt{-\cc}$ when $\cc <0$. In terms of $y$,
      the three metrics metric $g^{\epsilon}$ take the unified form
      \begin{align*}
        g^{\epsilon}_E = \frac{1}{W_1(y)^2} \left ( y^2 \dif t_1^2 + \frac{\dif y^2}{y^2+
          \cc} + \gamma_{\mathbb{S}^{n-2}} \right ).  
      \end{align*}
      The function $W_1$ is the analytic continuation of $U^+$ to negative values of $\cc$. We could have started with $U^{-}$ and continued analytically to positive values of $\cc$. Instead of repeating the argument, we simply introduce a new variable $z$ defined by $y = \sqrt{z^2 - \cc}$ with range
      of variation $z > \sqrt{\cc}$ for $\cc \geq 0$ and $z \geq 0$ for
      $\cc<0$. The metric takes the (also unified and even more symmetric) form
      \begin{align*}
  g^{\epsilon}_E = \frac{1}{W_2(z)^2} \left ( (z^2- \cc) \dif t_1^2 +
  \frac{\dif z^2}{z^2 - \cc} + \gamma_{\mathbb{S}^{n-2}} \right ),
  \quad
  W_2(z):= \frac{1}{\cc} \left ( z - \sqrt{z^2 - \cc} \cos \left ( \sqrt{\cc} t_1 \right ) \right ).
     \end{align*}
      The function $W_2(z)$ is again  analytic in $\cc$,  takes real values on the real line, and now it extends $U^-$. More specifically,
            $U^-(\alpha=z) = W_2 (z)$ ($\cc <0$),
$U^0(\alpha=z) = W_2 (z)$ ($\cc =0$) and
$U^+(\alpha=\sqrt{z^2 - \cc} ) = W_2 (z)$ ($\cc >0$).}
\end{remark}

%\car{  Moreover, $U^+(\alpha,t_1)$ (resp. $U^-(\alpha,t_1)$) admits an analytic extension for negative (resp. positive) values of $\aa$, which gives $U^-(y,t_1)$ (resp. $U^+(y,t_1)$) after a simple redefinition\footnote{These extensions make sense if we avoid to use $|\cc|$ in the expressions, because $|\cc|$ remains unaltered under changes of sign of $\cc$.} of $\alpha$, given by $\alpha^2 + \cc = y^2$ (resp. $\alpha^2 - \cc = y^2$)}. 
% \car{Likewise the functions $U^\epsilon$, the metric $g_E^0$ coincides with the limits of $g^{+}_E$ and $g^{-}_E$ for  $\cc \rightarrow 0$. Moreover $g_E^+$  (resp. $g_E^-$) analytically extends to $g_E^-$ (resp. $g_E^+$) for negative (resp. positive) values of $\cc$, after the change of coordinate $y(\alpha)$ just defined
% \begin{equation}
% g_E^+ = \frac{1}{(U^+)^2} \lr{\frac{\dif y^2}{y^2-\cc} + (y^2-\cc) \dif t_1^2 + \gamma_{\mathbb{S}^{n-2}}},\quad\quad \lr{\mbox{resp. } g_E^- = \frac{1}{(U^-)^2} \lr{\frac{\dif y^2}{y^2+\cc} + y^2 \dif t_1^2 + \gamma_{\mathbb{S}^{n-2}}}}.
%\end{equation} 
% Looking at the domain of definition of $\alpha$, the variable $y$ takes its values in $y > \cc$ for $\cc>0$ and $y \geq 0$ for $\cc\leq 0$ (resp. $y \geq -\cc$ for $\cc<0$ and $y > 0$ for $\cc \geq 0$). }
%\end{remark}

{Remark \ref{remarkextensiong} allows us to work with all the odd dimensional cases at once, which will be useful for Section \ref{secTTtens}. However,
  this unified form does not arise naturally when the odd
  dimensional case is viewed as a consequence of the $n+1$ even dimensional case. So, leaving aside this remark for Section \ref{secTTtens}, we summarize the results of this section in the following Theorem. }

\begin{theorem}{
 Given a CKVF $\xi_F$ of $\mathbb{E}^n$, with $n \geq 4$ even, in canonical form $\xi_F = \widetilde \xi + \sum_{i=1}^p \mu_i \eta_i$, the coordinates $t_1,t_2, \phi_i, \alpha, \alpha_i$, for $i=1, \cdots p$ and $\sum_{i=1}^p \alpha_i^2 = 1$, defined 
 by
 \begin{equation}
 z_1 =-\frac{1}{U}\frac{\partial U}{\partial t_1},\quad\quad
 z_2 = \frac{1}{U}\frac{\partial U}{\partial t_2} \quad\quad
 x_i = \frac{\alpha_i}{U} \cos(\phi_i),
\quad\quad
 y_i = \frac{\alpha_i}{U} \sin(\phi_i) 
 \end{equation}
with
\begin{equation}
  U  = \frac{\sqrt{\alpha^2 + \mu_t^2 + \mu_s^2}\cosh(\mu_t t_1 + \mu_s t_2)-\alpha \cos(\mu_t t_2 -\mu_s t_1)}{\mu_t^2 + \mu_s^2},
\end{equation}
which admits a limit  $\lim_{\mu_s \mu_t \rightarrow 0} U = \frac{\alpha}{2}(t_1^2 + t_2^2) + \frac{1}{2 \alpha}$, furnish adapted coordinates to $\widetilde \xi =  \partial_{t_1}$ $ \widetilde \xi^\star = \partial_{t_2}$ $\eta_i = \partial_{\phi_i}$, which cover the maximal possible domain, namely
$\mathbb{E}^n \backslash \left ( \bigcup_{j=1}^{p} \mathcal{Z}(\eta_j) \cup \mathcal{Z}(\mu_s \widetilde \xi -\mu_t   \widetilde \xi^\star)\right )$ for $t_1, t_2 \in B(\mu_s,\mu_t)$, $\phi_i \in [-\pi, \pi)$, $\alpha_i \in (0,1)$ and $\alpha \in \mathbb{R}^+$. Moreover, the metric $g_E$, which is flat in canonical cartesian coordinates, is given by 
 \begin{equation}
   g_E = \frac{1}{U^2} \Big((\alpha^2 + \mu_t^2)\dif t_1^2 + (\alpha^2 + \mu_s^2) \dif t_2^2 + 2 \mu_s \mu_t  \dif t_1 \dif t_2 + \frac{\dif \alpha^2}{\alpha^2 + \mu_s^2 + \mu_t^2} +\sum\limits_{i=1}^{p}\restr{\lr{\dif \alpha_i^2 + \alpha_i^2 \dif \phi_i^2}}{\sum_{i=1}^p \alpha_i^2 = 1} \Big).
   \label{n+1metric}
\end{equation} 
 If $n \geq 3$ is odd and $\xi_F$ is in canonical form, $\xi_F = \widetilde \xi + \sum_{i=1}^p \mu_i \eta_i$, the coordinates $\{t_1, \phi_i, \alpha, \alpha_i \}$ adapted to $\widetilde \xi =  \partial_{t_1}$ $\eta_i = \partial_{\phi_i}$ are given by the case of $n+1$ (even) dimensions, for $\bb = 0$ restricted to $t_2 = 0$ (which defines the embedding $\mathbb{E}^n = \{z_2 = 0\} \subset \mathbb{E}^{n+1}$) and cover again the maximal possible domain, given by $\mathbb{E}^n \backslash \left ( \bigcup_{j=1}^{p} \mathcal{Z}(\eta_j) \cup \mathcal{Z}( \widetilde \xi) \right )$ for $t_1 \in \mathbb{R}$, $\phi_i \in [-\pi, \pi)$, $\alpha_i \in (0,1)$ and $\alpha \in \mathbb{R}^+$ when
   $\cc \geq 0$ and $\alpha \in \mathbb{R}^+ \cup \{0\}$ when $\cc <0$. Moreover, the metric $g_E$, which is flat in canonical cartesian coordinates, is given by the pull-back of \eqref{n+1metric} at $t_2 = 0$ after setting $\bb = 0$. Explicitly
   $g_E$ is, depending on the sign of $\cc$,  given by
   \eqref{eqflatodd} with $\gamma_{\mathbb{S}^{n-2}}$ as in \eqref{gammasph}.} 
\end{theorem}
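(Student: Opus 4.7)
The plan is essentially to collect and cite the intermediate results already developed in this section. For the even case $n \geq 4$, I would start by recalling that adapted coordinates $(t_1,t_2,\phi_i,\alpha,\alpha_i)$ must solve the PDE system \eqref{PDEsxi}--\eqref{PDEsetai} obtained from $\partial_{t_1}=\widetilde\xi$, $\partial_{t_2}=\widetilde\xi^\star$, $\partial_{\phi_i}=\eta_i$. Integration of \eqref{PDEsetai} puts $(x_i,y_i)$ in polar form, substitution into the remaining systems forces the separation $\rho_i=\alpha_i\rho$ with $\sum_i \alpha_i^2=1$, yields $z_1,z_2$ as logarithmic derivatives of $U=1/\rho$, and reduces the problem to the single PDE \eqref{PDEUt1t2} for $U$. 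The explicit general solution \eqref{solUgen} and its limit \eqref{Ulim} then give exactly the formulas in the statement.

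The ranges of the coordinates come from the diffeomorphism analysis already performed. I would invoke the identification of $\mathcal Z(\widetilde\xi)$, $\mathcal Z(\widetilde\xi^\star)$, $\mathcal Z(\eta_i)$ and $\mathcal Z(\mu_s\widetilde\xi-\mu_t\widetilde\xi^\star)$ in \eqref{zxi}--\eqref{xiproptoxistar} and the inclusion $\mathcal Z(\widetilde\xi), \mathcal Z(\widetilde\xi^\star)\subset\mathcal Z(\mu_s\widetilde\xi-\mu_t\widetilde\xi^\star)$. The $(n-3)$-sphere foliation of $\mathbb{R}^n\setminus\{0_{n-2}\}$ allows reducing the question to the map \eqref{mapauxz}, whose Jacobian $\alpha U$ combined with the inverse formula \eqref{inversealpha} pinpoints $\{\alpha=0\}$ as the exceptional locus, matching exactly $\mathcal Z(\mu_s\widetilde\xi-\mu_t\widetilde\xi^\star)$ after ranges are fixed by $\alpha>0$, $\alpha_i\in(0,1)$, $\phi_i\in[-\pi,\pi)$, $t_-\in[-\pi,\pi)$ (extended to $t_1,t_2\in\mathbb R^2$ at the cost of producing a covering).

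For the metric formula \eqref{n+1metric}, I would use the splitting \eqref{flatcart}, compute $\sum_i(\dif x_i^2+\dif y_i^2)$ as in \eqref{flatxiyi} using $\sum_i\alpha_i\,\dif\alpha_i=0$, and then fill in the $(t_1,t_2,\alpha)$-block. The three diagonal/off-diagonal components are most efficiently obtained by observing $g_{t_at_b}=g_E(\widetilde\xi^{(a)},\widetilde\xi^{(b)})$ for $\{\widetilde\xi^{(1)},\widetilde\xi^{(2)}\}=\{\widetilde\xi,\widetilde\xi^\star\}$, then using the Cartesian formulas \eqref{CKVFxi}--\eqref{CKVFxistar} together with $U^{-2}=\sum_i(x_i^2+y_i^2)$ to read off $g_{t_1t_1},g_{t_2t_2},g_{t_1t_2}$; the cross terms $g_{\alpha t_a}$ vanish by direct substitution and $g_{\alpha\alpha}=1/(\beta^2 U^2)$ is a one-line calculation. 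No $\dif\alpha_i$ or $\dif\phi_i$ mixed term survives because $z_1,z_2,U$ are independent of $(\alpha_i,\phi_i)$.

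For the odd-dimensional case, I would invoke Lemma \ref{lemmaextCKVF} to extend $\xi_F$ to a canonical CKVF $\xi'_F$ of $\mathbb{E}^{n+1}$ with $\tau=0$ whose restriction to the hyperplane $\{z_2=0\}$ recovers $\xi_F$. Since \eqref{adaptz2} and \eqref{adaptz1z2lim} show that this hyperplane corresponds to $\{t_2=0\}$, the restricted even-dimensional adapted coordinates $(t_1,\alpha,\alpha_i,\phi_i)$ are automatically adapted to $\widetilde\xi$ and $\eta_i$, with the explicit forms \eqref{adaptoddz}--\eqref{adaptoddxy} arising from plugging $\tau=t_2=0$ into \eqref{solUgen} in the three cases $\sigma\lessgtr 0$ and $\sigma=0$. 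The metric \eqref{eqflatodd} is the pullback of \eqref{n+1metric} at $t_2=0$, $\tau=0$. The main subtlety — and where I would need to be most careful — is that the ranges of the coordinates do not simply restrict: dropping the requirement of being adapted to $\widetilde\xi'{}^\star$ enlarges the admissible domain. Explicit comparison of $\mathcal Z(\widetilde\xi')|_{z_2=0}$ and $\mathcal Z(\mu_s\widetilde\xi'-\mu_t\widetilde\xi'{}^\star)|_{z_2=0}$ shows these coincide when $\sigma\geq 0$ but differ precisely by the locus $\{\alpha=0\}$ when $\sigma<0$, where \eqref{adaptoddz} extends regularly; this justifies $\alpha\in\mathbb{R}^+\cup\{0\}$ in that case and $\alpha\in\mathbb{R}^+$ otherwise.
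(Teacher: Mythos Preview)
Your proposal is correct and follows essentially the same route as the paper: the theorem is a summary statement collecting the results of Section~\ref{secadapted}, and you have accurately traced the chain of intermediate steps (integration of \eqref{PDEsxi}--\eqref{PDEsetai}, reduction to \eqref{PDEUt1t2}, the explicit solution \eqref{solUgen}--\eqref{Ulim}, the diffeomorphism analysis via \eqref{mapauxz}--\eqref{inversealpha}, the metric computation, and the odd-dimensional reduction through Lemma~\ref{lemmaextCKVF} including the $\alpha=0$ extension for $\cc<0$). The only cosmetic difference is that the paper computes $g_{t_1t_1}$ by direct substitution first and then uses the Cartesian identities to obtain $g_{t_2t_2}$ and $g_{t_1t_2}$, and it notes that $g_{\alpha\alpha}$ and $g_{\alpha t_a}$ were obtained with computer algebra rather than by a one-line calculation.
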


\section{TT-Tensors}\label{secTTtens}
 
{The adapted coordinates derived in Section \ref{secadapted} provide a useful tool to solve geometric equations involving CKVFs. In this section we give an example of this in the context of $\Lambda$-vacuum spacetimes admitting a smooth null conformal infinity.}

{Recall that for such spacetimes
the data at $\scri$ is a
conformal class $[g]$ of riemannian metrics and a conformal class
of transverse and divergence-free tensors.
More specifically, for a representative metric   $g$ in the
conformal class, there is associated a symmetric tensor $D^{AB}$ satisfying
$g_{AB} D^{AB}=0$ (divergence-free) and $\nabla_A D^{AB} =0$ (transverse). For any other metric $\tilde{g}=\Omega^2 g$ in the conformal class, the associated tensor is $\Omega^{-(n+2)} D^{AB}$, which is again a TT tensor with respect to $\tilde{g}$. In dimension $n=3$, it has been shown in \cite{KIDPaetz} that the spacetime generated by the Cauchy data at $\scri$ admits a Killing vector if and only if the metric $g$ admits a CKV $\xi$ (which is the restriction of the Killing vector to $\scri$) and $D$ satisfies the so-called Killing initial data (KID) equation. This equation admits a natural generalization to arbitrary dimension which is
 \begin{equation}\label{eqKID}
  \mathcal{L}_\xi D^{AB} + \frac{n+2}{n} \mbox{div}_g \xi  D^{AB} = 0,
 \end{equation}  
 where $\mbox{div}_g \xi $ is the divergence of $\xi$. Equation \eqref{eqKID} reduces to the KID equation of Paetz in dimension $n=3$ and it is conformally covariant, i.e.
 if $\{ g_{AB},D^{AB},\xi^A\}$ is a solution, then so it is $\{ \Omega^2 g_{AB},
 \Omega^{-(n+2)} D^{AB}, \xi^A\}$. We emphasize however, that in higher dimension
 ($n \geq 4$) it is not known whether a spacetime admiting a smooth $\scri$
 such that the corresponding data at null infinity solves the KID equation for some CKV $\xi$, must necessarily admit a Killing vector.}

%In particular, we are going to employ these coordinates to obtain a class of symmetric, traceless and transverse (i.e. with zero divergence) tensors, also called TT-tensors, satisfying the so-called KID equations (\cite{KIDPaetz}):
% \begin{equation}\label{eqKID}
%  \mathcal{L}_\xi D^{AB} + \frac{n+2}{n} \mbox{div}_g \xi  D^{AB} = 0,
% \end{equation}  
% where $\xi$ is a CKVF (we drop the explicit dependence on $F$ for this section) and $D^{AB}$ is TT.
A CKVF satisfying \eqref{eqKID} will be called KID vector for short.  An important property of KID vectors is that they form a Lie subalgebra of CKVFs, i.e. if $\xi, \xi'$ are KIDs for a given $TT$ tensor $D$, then $[\xi, \xi']$ is also a
KID for $D$.
% Recall that the property of being TT is also conformally covariant
%\begin{equation}\label{eqTT}
% \nabla_A D^{AB} = 0 \quad \Longleftrightarrow  \quad \hat \nabla_A  \lr{\frac{D^{AB}}{\Omega^{n+2}}}=0,
%\end{equation}
%where $\nabla$ and $\hat \nabla$ are the Levi-Civita covariant derivatives of $g$ and $\hat g = \Omega^2 g$, for any smooth positive function $\Omega$. Then, \eqref{eqKID} is also conformally covariant, meaning that if $D$ satisfies \eqref{eqKID} for a certain CKVF $\xi$, so it does $D/\Omega^{n+2}$ w.r.t. the rescaled metric.  
%TT-tensors, together with the conformal structure of $\mathscr{I}$, form the data sets for the initial value problem of $\Lambda$-vacuum Einstein's equations at conformal infinity ($\mathscr{I}$) with $\Lambda>0$ 
%(cite). The spacetimes generated by such data are proven in \cite{KIDPaetz} to posses one symmetry for each KID vector that they admit. 
The problem of obtaining all TT-tensors with generality for a given conformal structure is hard, even in the conformally flat case (see e.g. \cite{Beig1997}).  {In this section we exploit the results above  to
obtain the general solution of the KID equations in 
dimension $n=3$}
for spacetimes which possess two commuting symmetries, one of which is axial.
{This case is specially relevant since $n=3$ corresponds to the
physical case of four spacetime  dimensions
and the class necessarily contains the 
Kerr-de Sitter family of spacetimes, which is a particularly interesting explicit
familiy of spacetimes.} 
Our strategy is to take an arbitrary CKVF $\xi$, derive its canonical form $\xi_F = \widetilde \xi + \mu \eta$, adapt coordinates to $\widetilde \xi$ and $\eta$ and impose the KID equations to $\widetilde \xi$ and $\eta$. 
The problem simplifies notably in the conformal gauge to $g := (U^\epsilon)^2 g_E^\epsilon$ because both $\widetilde \xi$ and $\eta$ become Killing vector fields. From Remark \ref{remarkextensiong}, we may treat all cases
$\cc<0,~\cc = 0,~\cc>0$ at the same time by using the form of the metric
\begin{equation}\label{eqgkill}
  g =  \frac{\dif z^2}{z^2-\cc}   + (z^2-\cc) \dif t^2 
  + \dif \phi^2,
  \qquad \widetilde \xi = \partial_t, \qquad \eta = \partial_\phi.
  \end{equation} 
  We remark that even though we solve the problem by fixing the coordinates and conformal gauge, we shall write the final result in fully covariant form (cf. Theorem \ref{theoTTcov} below).

  In the conformal gauge of $g$, the condition that
  a TT-tensor $D$ satisfies KID equations for both $\widetilde \xi$ and $\eta$  (which is equivalent to imposing that $\xi$ and $\eta$ are KID vectors) is trivial in the adapted coordinates obtained in the previous section:
\begin{equation}
 \mathcal{L}_{\widetilde \xi} D^{AB} = \partial_t D^{AB} = 0,\quad\quad \mathcal{L}_{\eta} D^{AB} = \partial_\phi D^{AB} = 0. 
\end{equation}
Thus, $D^{AB}$ are only functions of $z$. The transversality condition is also
quite simple in adapted  coordinates:

% To solve the transversality condition.
% we observe that the only non-vanishing Christoffel symbols are:
% \begin{equation}
%  \Gamma_{yy}^y = \frac{-y}{y^2 - \cc},\quad\quad\Gamma_{tt}^y = -y(y^2 - \cc),\quad\quad\Gamma_{ty}^t = \frac{y}{y^2 - \cc^2},
% \end{equation}
% from which is obvious $\Gamma_{AB}^A = 0$. Hence
% \begin{equation}
%  \nabla_A D^{AB} = \partial_A D^{AB} + \Gamma_{AC}^B D^{AC} =  \frac{\dif D^{yB}}{\dif y}  + \Gamma_{AC}^B D^{AC} = 0
% \end{equation}
% and we obtain the following system of ODEs:

\begin{align}
 \frac{\dif D^{zz}}{\dif z} - z \left ( \frac{D^{zz}}{z^2 - \cc} + (z^2 - \cc) D^{tt} \right) & = 0, \label{eqtrns1}\\
 \frac{\dif D^{zt}}{\dif z} +  \frac{2 z}{z^2 - \cc}D ^{zt} & = 0\label{eqtrns2}\\
 \frac{\dif D^{z\phi}}{\dif z} & = 0, \label{eqtrns3}
\end{align}
while the traceless condition imposes
\begin{equation}
 g_{AB} D^{AB} = \frac{D^{zz}}{z^2 - \cc} + (z^2 - \cc) D^{tt} + D^{\phi \phi} = 0\label{eqtrace}.
\end{equation}
There are no equations for $D^{t\phi}$ so
$D^{ \phi t} =h(z)$ with $h(z)$  an arbitrary function.
The general solution of equations \eqref{eqtrns2} and \eqref{eqtrns3} is
obtained at once and reads
%e immediate as well as the $D^{\phi t}$ term (which is totally free)
\begin{equation}
  D^{zt} = \frac{K_1}{z^2- \cc},\quad\quad D^{z \phi } = K_2,\quad\quad K_1,K_2 \in \mathbb{R}.
\end{equation}
For equations \eqref{eqtrns1} and \eqref{eqtrace}, we let $D^{zz} =: f(z)$ be an arbitrary function and obtain the remaining components
\begin{equation}
 D^{\phi \phi} =- \frac{1}{z} \frac{\dif f}{\dif y},\quad \quad D^{ t t} = \frac{1}{z (z^2 - \cc)} \frac{\dif f}{\dif z} -\frac{f}{(z^2 - \cc)^2}.
\end{equation}
Summarizing 
\begin{lemma}\label{lemmaTTadapt}
  In the three-dimensional conformally flat class $[g]$, let $\xi_F$ be a CKVF.
  Decompose $\xi$ in canonical form $\xi = \widetilde \xi + \mu \eta$ and fix the conformal gauge so that $g$ given by \eqref{eqgkill}. Then the most general
  symmetric TT-tensor $D$ satifying the KID equations for $\xi$ and $\eta$ simultaneously is, in adapted coordinates $\{z,t, \phi\}$, a combination (with constants) of the following tensors
\begin{align}
 D_f & := f \partial_z \otimes \partial_z + \left (\frac{1}{z (z^2 - \cc)} \frac{\dif f}{\dif z} -\frac{f}{(z^2 - \cc)^2}\right ) \partial_t \otimes \partial_t   - \frac{1}{z} \frac{\dif f}{\dif z} \partial_\phi \otimes \partial_\phi,\\
  D_h & := h (\partial_t \otimes \partial_\phi +
        \partial_\phi \otimes \partial_t ),\quad\quad
        D_{\widetilde{\xi}} := \frac{1}{z^2 -\cc} (\partial_z \otimes \partial_t +
        \partial_t \otimes \partial_z )
        , \quad\quad D_{\eta} = \partial_z \otimes \partial_{\phi} +
        \partial_\phi \otimes \partial_z,
        \end{align}
where $f$ and $h$ are arbitrary functions of $z$.
\end{lemma}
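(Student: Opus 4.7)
My plan is to carry out the steps already sketched in the paragraphs preceding the statement, formalizing them into a proof. First I would note that in the chosen conformal gauge, where $g$ takes the explicit form \eqref{eqgkill}, both $\widetilde{\xi} = \partial_t$ and $\eta = \partial_\phi$ are Killing vectors (the metric coefficients are independent of $t$ and $\phi$), hence divergence-free. The KID equation \eqref{eqKID} with $n=3$ therefore collapses to the pure Lie derivative conditions $\mathcal{L}_{\widetilde{\xi}}D = 0$ and $\mathcal{L}_\eta D = 0$, which in the adapted coordinates force $D^{AB}$ to be a function of $z$ alone.

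Next I would impose the TT conditions on the $z$-dependent tensor $D^{AB}(z)$. Tracelessness $g_{AB}D^{AB}=0$ yields the algebraic relation \eqref{eqtrace}, while transversality $\nabla_A D^{AB}=0$ reduces, via the explicit Christoffel symbols of \eqref{eqgkill}, to the three ODEs \eqref{eqtrns1}--\eqref{eqtrns3}; the $t\phi$-component of transversality produces no equation, so $D^{t\phi}=h(z)$ is a free function, contributing the piece $D_h$. Equations \eqref{eqtrns2} and \eqref{eqtrns3} are elementary linear first-order ODEs that integrate at once to $D^{zt} = K_1/(z^2-\cc)$ and $D^{z\phi} = K_2$ for constants $K_1, K_2$, contributing $K_1 D_{\widetilde{\xi}} + K_2 D_\eta$.

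Finally, setting $f(z):=D^{zz}$ arbitrary, the remaining relations \eqref{eqtrns1} and \eqref{eqtrace} determine $D^{tt}$ and $D^{\phi\phi}$ algebraically in terms of $f$ and $f'$, recovering precisely the tensor $D_f$. Summing all contributions yields the claimed general form. I do not foresee any substantive obstacle: the only step requiring some care is the Christoffel-symbol computation verifying that $\nabla_A D^{AB}$ takes the form \eqref{eqtrns1}--\eqref{eqtrns3}, but this is routine for the essentially diagonal metric \eqref{eqgkill}. The key structural point that makes the whole argument so clean is the choice of conformal gauge: this is precisely where the adapted coordinates of Section \ref{secadapted} pay off, turning the KID equations into genuine symmetry conditions on $D$ rather than first-order PDEs.
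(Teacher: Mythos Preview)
Your proposal is correct and follows exactly the argument the paper gives in the paragraphs preceding the lemma: reduce the KID equations to $\partial_t D^{AB}=\partial_\phi D^{AB}=0$ using that $\widetilde\xi,\eta$ are Killing in the chosen gauge, then solve the TT conditions \eqref{eqtrns1}--\eqref{eqtrace} as ODEs in $z$ with $D^{zz}=f$ and $D^{t\phi}=h$ as the free data. There is nothing to add.
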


Having obtained the general solution in a particular gauge, our next aim is to give a (diffeomorphism and conformal) covariant form of the generators in Lemma \ref{lemmaTTadapt}. From \cite{Kdslike}, we know that, for any
CKV $\xi$ of any $n$-dimensional metric $g$ (not necessarily conformally flat)
the following tensors are TT w.r.t. to $g$ and satisfy the KID equation with respect to $\xi$. 
\begin{equation}\label{eqDcan}
 \mathcal{D}(\xi) = \frac{1}{|\xi|_g^{n+2}}\left (\xi \otimes \xi - \frac{|\xi|_g^2}{n}g^\sharp  \right ),
\end{equation}
where $|\cdot|_g$ denotes the norm w.r.t. $g$ and $g^\sharp$ the contravariant form of $g$. Thus, we can rewrite $D_f$ as
\begin{equation}
 D_f = \left (- 2 (z^2 - \cc)^{1/2}f + \frac{(z^2 - \cc)^{3/2}}{z}\frac{\dif f}{\dif z} \right ) \mathcal{D}(\widetilde \xi) - \left (
 \frac{f}{z^2 - \cc} + \frac{1}{z} \frac{\dif f}{\dif z}\right ) \mathcal{D}(\eta).
\end{equation}
We now restore the conformal gauge freedom by considering the metric $\widehat g = \Omega^2 g$ and $\widehat D_f = D_f/\Omega^5$, for any (positive) conformal factor $\Omega$. Since the tensors $ \mathcal{D}(\widetilde \xi), \mathcal{D}(\eta)$ are already conformal and diffeomorphism covariant, we must impose their multiplicative factors in $\widehat D_f$ to be conformal and diffeomorphism invariant. With the gauge freedom restored, the norms of the CKVFs now are
\begin{equation} 
 |\widetilde \xi|_{\widehat g} = \Omega \sqrt{z^2 - \cc} ,\quad\quad  |\eta|_{\widehat g} =  \Omega.
\end{equation}
Then, considering $f =: \sqrt{\X} \widehat f(\X)$ as function of the conformal invariant quantity $\X = |\widetilde \xi|_{\widehat g}/|\eta|_{\widehat g} = \sqrt{z^2 - \cc}$, one can directly cast $\widehat D_f$ in the following form:
\begin{equation}
 \widehat D_f = \X^4 \frac{\dif}{\dif \X} \lr{\frac{\widehat f(\X)}{\X^{3/2}}} \mathcal{D}(\widetilde \xi) - \frac{1}{\X^2} \frac{\dif}{\dif \X} \lr{\X^{3/2} \widehat f(\X)} \mathcal{D}(\eta),
\end{equation}
which is a conformal and diffeomorphism covariant expression. Notice that the
expression is symmetric under the interchange $\widetilde \xi \leftrightarrow \eta$ because the coefficient of
$\mathcal{D}(\eta)$ expressed in the variable $\mathrm{Y} = \X^{-1}$
is identical in form to the coefficient
of $\mathcal{D}(\xi)$.

For the tensor $\widehat D_h := D_h/\Omega^5$, redifining $h =: \widehat h |\widetilde \xi|^{-5/2}$, it is immediate to write
\begin{equation}\label{eqDhcov}
 \widehat D_h = \widehat{D}_{\widehat{h}} := \frac{\widehat h}{|\eta|_{\widehat g}^{5/2} |\widetilde \xi|^{5/2}_{\widehat g}}(\widetilde \xi \otimes \eta + \eta \otimes \widetilde \xi),
\end{equation}
which is obviously conformal and diffemorphism covariant if and only if $\widehat h$ is conformal invariant, e.g. considering $\widehat h \equiv \widehat h(\X)$. We remark that the form \eqref{eqDhcov} already appeared (with different powers due to the different dimension) in the classification \cite{marspeon20}
of TT tensors in dimension two satisfying the KID equation.

For the remaining tensors $\widehat D_{\widetilde{\xi}} := D_{\widetilde{\xi}} / \Omega^5$ and $\widehat D_{\eta} := D_{\eta} / \Omega^5$, we define a conformal class of vector fields $\chi$, which in the original gauge coincides with $\chi := \partial_{z}$.
This vector is divergence-free $\nabla_A \chi^A = 0$, and this equation is conformally invariant provided the conformal weight of $\chi$ is $-3$ (i.e.
for  $\widehat{g} = \Omega^2 g$, the corresponding vector
is $\widehat{\chi}  = \Omega^{-3} \chi$). We therefore impose this conformal
behaviour of $\chi$.\footnote{This choice may appear somewhat ad hoc at this point. However, the condition of vanishing divergence appears naturaly
  when studying (for more general metrics) under which
  conditions  a tensor $\xi \otimes W  + W \otimes \xi$ is a TT tensor satisfying the KID equation for $\xi$. We leave this general analysis for a future work.}
The direction of $\chi$ is fixed by orthogonality to $\widetilde \xi$ and $\eta$. The combination of norms that has this conformal weight and
recovers the appropriate expression in the gauge of Lemma \ref{lemmaTTadapt}
is $ |\chi|_{\widehat g} := |\widetilde \xi|_{\widehat g}^{-1} |\eta|_{\widehat g}^{-2}$
(note that the orthogonality and norm conditions fix $\chi$ uniquely  up to an irrelevant sign in any gauge). Thus, we may write
\begin{equation}\label{eqD1D2cov}
 D_{\widetilde{\xi}} = \frac{1}{|\widetilde \xi|_{\widehat g}^2}(\chi \otimes \widetilde \xi + \widetilde \xi \otimes \chi), \quad\quad D_{\eta} = \frac{1}{|\eta|_{\widehat g}^2}(\chi \otimes \eta + \eta \otimes \chi),
\end{equation}
which are conformally covariant expressions. Therefore, we get to the final result: 

\begin{theorem}\label{theoTTcov}
 Let $\xi$ be a CKVF of the class of three dimensional conformally flat metrics and let $\xi = \widetilde \xi + \mu \eta$ a canonical form. For each conformal gauge, let us define a vector field $\chi$ with norm $ |\chi|_{\widehat g} := |\widetilde \xi|_{\widehat g}^{-1} |\eta|_{\widehat g}^{-2}$,  orthogonal to $\widetilde \xi$ and $\eta$. Then, any TT-tensor satisfying the KID equations \eqref{eqKID} for $\widetilde \xi$ and $\eta$ is a combination (with constants) of the following tensors:
 \begin{align}
 &\widehat D_{\widehat{f}} = \X^4 \frac{\dif}{\dif \X} \lr{\frac{\widehat f(\X)}{\X^{3/2}}} \mathcal{D}(\widetilde \xi) -\frac{1}{\X^2} \frac{\dif}{\dif \X} \lr{\X^{3/2} \widehat f(\X)} \mathcal{D}(\eta), & & \widehat D_{\widehat{h}} = \frac{\widehat h}{|\eta|_{\widehat g}^{5/2} |\widetilde \xi|^{5/2}_{\widehat g}}(\widetilde \xi \otimes \eta + \eta \otimes \widetilde \xi),\\ 
 & D_{\widetilde{\xi}} = \frac{1}{|\widetilde \xi|_{\widehat g}^2}(\chi \otimes \widetilde \xi + \widetilde \xi \otimes \chi), & & D_{\eta} = \frac{1}{|\eta|_{\widehat g}^2}(\chi \otimes \eta + \eta \otimes \chi),
\end{align}
for arbitrary functions $\widehat f$ and $\widehat h$ of $\X = |\widetilde \xi|_{\widehat g}/|\eta|_{\widehat g}$.
\end{theorem}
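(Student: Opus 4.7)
The plan is to promote the four generators found in the specific gauge of Lemma \ref{lemmaTTadapt} to manifestly conformally and diffeomorphism covariant expressions, and then observe that covariance together with the exhaustiveness of Lemma \ref{lemmaTTadapt} in one gauge forces exhaustiveness in every gauge. In detail, fix the representative metric $g$ given by \eqref{eqgkill} together with adapted coordinates $\{z,t,\phi\}$ in which $\widetilde\xi=\partial_t$, $\eta=\partial_\phi$, so that the classification in Lemma \ref{lemmaTTadapt} applies. Since the KID equation \eqref{eqKID} is conformally covariant (with $D\mapsto\Omega^{-(n+2)}D$ and $g\mapsto\Omega^2 g$) and tensorial, any solution $D$ for an arbitrary representative $\widehat g=\Omega^2 g$ satisfies $\widehat D=\Omega^{-5}D$ with $D$ one of the generators of Lemma \ref{lemmaTTadapt}. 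It therefore suffices to rewrite each of the four generators in a form which depends only on $\widetilde\xi$, $\eta$, $\widehat g$ and on quantities invariant under both diffeomorphisms and conformal rescalings.

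The first step is to treat $D_f$. I would use the generating tensor $\mathcal{D}$ of \eqref{eqDcan}, which is known to be TT and to solve the KID equation with respect to its generating CKVF (and is by construction conformal and diffeomorphism covariant). Computing $\mathcal{D}(\widetilde\xi)$ and $\mathcal{D}(\eta)$ in the gauge \eqref{eqgkill}, the algebraic identities $|\widetilde\xi|_g^2=z^2-\cc$ and $|\eta|_g^2=1$ allow one to invert for $\partial_z\otimes\partial_z$, $\partial_t\otimes\partial_t$ and $\partial_\phi\otimes\partial_\phi$, whence
\[
D_f= \Big(-2(z^2-\cc)^{1/2}f+\tfrac{(z^2-\cc)^{3/2}}{z}\tfrac{df}{dz}\Big)\mathcal{D}(\widetilde\xi) - \Big(\tfrac{f}{z^2-\cc}+\tfrac1z\tfrac{df}{dz}\Big)\mathcal{D}(\eta).
\]
Introducing the conformal invariant $\X=|\widetilde\xi|_{\widehat g}/|\eta|_{\widehat g}$ (which equals $\sqrt{z^2-\cc}$ in the base gauge) and parametrising the free function as $f=\sqrt{\X}\,\widehat f(\X)$ converts the bracketed coefficients into the stated expressions involving $\widehat f$ and $\X$. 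This last reparametrisation is admissible because $f$ was an arbitrary function of $z$, hence of $\X$.

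Next, for $D_h$ one writes $\partial_t\otimes\partial_\phi+\partial_\phi\otimes\partial_t=\widetilde\xi\otimes\eta+\eta\otimes\widetilde\xi$ and absorbs $h$ into $\widehat h$ by $h=\widehat h\,|\widetilde\xi|_g^{-5/2}$, which yields \eqref{eqDhcov}; the power $5/2+5/2=5$ precisely matches the conformal weight $-5$ needed for $\widehat D=\Omega^{-5}D$, so the resulting expression transforms correctly for any conformal invariant $\widehat h$ (which one may take as a function of $\X$). For $D_{\widetilde\xi}$ and $D_\eta$ the key observation is that in the base gauge $\partial_z$ is orthogonal to both $\widetilde\xi$ and $\eta$ and is divergence-free. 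Defining a conformal class $\chi$ of vectors by imposing these two properties together with the conformal weight $-3$ (so that $\widehat\chi=\Omega^{-3}\chi$, a weight compatible with divergence-freeness under conformal rescaling) and norm $|\chi|_{\widehat g}=|\widetilde\xi|_{\widehat g}^{-1}|\eta|_{\widehat g}^{-2}$ fixes $\chi$ uniquely up to sign in every conformal gauge. A direct check in the base gauge then produces \eqref{eqD1D2cov}. The main obstacle will be justifying the normalisation and conformal weight of $\chi$: one has to verify that the prescription $|\chi|_{\widehat g}=|\widetilde\xi|_{\widehat g}^{-1}|\eta|_{\widehat g}^{-2}$ is consistent with the weight $-3$ (it is, since the norm of a weight-$(-3)$ vector scales as $\Omega^{-3}\cdot\Omega=\Omega^{-2}$, matching $(\Omega\cdot\Omega^2)^{-1}$), and that the resulting tensors indeed reproduce $D_{\widetilde\xi}$ and $D_\eta$ of Lemma \ref{lemmaTTadapt} in the base gauge.

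Finally, exhaustiveness is inherited from Lemma \ref{lemmaTTadapt}: any TT tensor $\widehat D$ satisfying the two KID equations in the gauge $\widehat g=\Omega^2 g$ gives, via $D=\Omega^5\widehat D$, a TT tensor in the base gauge solving the same KID equations, hence a linear combination of $\{D_{\widehat f},D_{\widehat h},D_{\widetilde\xi},D_\eta\}$; the covariant rewriting just established then shows $\widehat D$ itself is the same linear combination of $\{\widehat D_{\widehat f},\widehat D_{\widehat h},D_{\widetilde\xi},D_\eta\}$, completing the proof.
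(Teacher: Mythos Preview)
Your proposal is correct and follows essentially the same route as the paper: start from the explicit classification of Lemma \ref{lemmaTTadapt} in the fixed gauge \eqref{eqgkill}, rewrite $D_f$ as a combination of $\mathcal{D}(\widetilde\xi)$ and $\mathcal{D}(\eta)$ using the conformal invariant $\X$ and the reparametrisation $f=\sqrt{\X}\,\widehat f(\X)$, absorb the free function in $D_h$ via $h=\widehat h\,|\widetilde\xi|^{-5/2}$, and introduce the conformally weighted vector $\chi$ (orthogonal to $\widetilde\xi,\eta$, divergence-free, with the stated norm) to express $D_{\widetilde\xi}$ and $D_\eta$; exhaustiveness in every gauge then follows from conformal covariance of the TT and KID conditions. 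Your additional explicit check that the conformal weight $-3$ of $\chi$ is consistent with the prescribed norm is a nice touch that the paper leaves implicit.
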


% \begin{remark}
%  The previous solutions generalize to every class of conformal metrics beyond the one containing the flat cases. This, can be easily proven in the conformal gauge where both $\widetilde \xi$ and $\eta$ are Killing vector fields. The proof of the existence of such conformal gauge is not trivial. 
% \end{remark}

\begin{remark}
  The vector field $\chi$ defined in this Theorem is divergence-free. This property would have been difficult to guess (and even to prove) in the original Cartesian coordinate system.
\end{remark}

\begin{remark}  
  A corollary of this theorem is that the general solution of the
  $\Lambda$-vacuum Einstein field equation in four dimensions with a smooth conformally  flat  null infinity and admitting an axial symmetric and a second
  commuting Killing vecor can be parametrized by two functions of one variable and  two constants.  Recall that in the $\Lambda=0$ case, the general asympotically
  flat stationary and axially symmetric  solution of the Einstein field equations can be parametrized (in a neighbourhood of spacelike infinity, by two
  numerable sets of mass and angular multipole moments (satisfying appropriate
  convergence properties), see \cite{ace09}, \cite{BSch}, \cite{back} for details. There is an intriguing paralelism between
  the two situations, at least at the level  of crude counting of degrees of freedom.  This suggests that maybe in the
  $\Lambda>0$ case it is possible to define a set of multipole-type moments that characterizes de data at null infinity (and hence the spacetime), at least in the case  of a conformally flat null infinity. This is an interesting problem, but well beyond the scope of the present paper.
\end{remark}

\begin{remark}
  It is natural to ask whether Theorem \ref{theoTTcov}
  is general for TT-tensors admitting two commuting KIDs, $\widetilde \xi, \eta$, without the condition of $\eta$ being conformally axial. In Appendix $C$ of  \cite{Kdslike} one can explicitly find, for an arbitrary CKVF $\xi$, the set $\mathcal{C}(\xi)$ of elements that commute with $\xi$. Then, from a case by case analysis, one concludes that except in one special situation, for any linearly independent pair $\xi, \xi'$, with $\xi' \in \mathcal{C}(\xi)$ it is the case that there is a CAKVF $\eta \in \mathcal{C}(\xi)$ such that
  $\mbox{span} \{ \xi,\eta \} = \mbox{span} \{ \xi, \xi'\}$.
  Thus, all these cases
are covered by Theorem \ref{theoTTcov}. The exceptional case is when 
$\xi, \xi'$ are conformal to translations. It is immediate to solve the TT and KID equations for such a case directly in Cartesian coordinates.
%  \car{It turns out
% that the solution tensors do not admit a commuting \carnote{check if ``commuting'' is correct: This is not correct. Better remove?}  conformally axial KID vector unless  $D$ is identically zero, i.e. when the spacetime generated by the data is locally de Sitter. We observe that this provides a characterization result for the de Sitter spacetime (in four dimensions).}
%
%
%are immediate to solve in cartesian coordinates for this case and they have the remarkable property that they only possess a KID which is a CAKVF commuting with $\xi$ if $D = 0$, i.e. the de-Sitter case. The study of this case, as well as the general properties of the solution that we have obtained in Theorem \ref{theoTTcov} is interesting and we believe that should be treated in a forthcoming work.
\end{remark}
 
The solution given in Theorem \ref{theoTTcov} provides a large class of initial data, which we know must contain the so-called Kerr-de Sitter-like {\it class} with conformally flat $\mathscr{I}$
(see \cite{Kdslike} for precise definition and properties of this class), which in turn contains the Kerr-de Sitter {\it family} of spacetimes. It is interesting to identify this class within the general  solution given in Theorem \ref{theoTTcov}. The characterizing property of the Kerr-de Sitter-like class in the conformally flat case
is  $D = D(\xi)$ for some CKVF $\xi$, {where moreover, only the conformal class of $\xi$ matters to determine the family associated to the data.} Decomposing canonically
$\xi = \widetilde{\xi} + \mu \eta$, a straightforward computation yields
\begin{align}
  \mathcal{D}(\xi) & =
 % \frac{|\widetilde \xi|_{\widehat g}^5}{(|\widetilde \xi|_{\widehat g}^2 + \mu^2 |\eta|_{\widehat g}^2)^{5/2}} \mathcal{D}(\widetilde \xi) + \frac{\mu^2 |\eta|_{\widehat g}^5}{(|\widetilde \xi|_{\widehat g}^2 + \mu^2 |\eta|_{\widehat g}^2)^{5/2}} \mathcal{D}(\eta) + \frac{\mu |\widetilde \xi|_{\widehat g}^{5/2}|\eta|_{\widehat g}^{5/2}}{(|\widetilde \xi|_{\widehat g}^2 + \mu^2 |\eta|_{\widehat g}^2)^{5/2}} \widehat D_{\widehat{h}=1} \\ 
%  & =
  \frac{\X^5}{(\X^2 + \mu^2)^{5/2}} \mathcal{D}(\widetilde \xi) + \frac{\mu^2 }{(\X^2 + \mu^2 )^{5/2}} \mathcal{D}(\eta) + \frac{\mu \X^{5/2}}{(\X^2 + \mu^2)^{5/2}} \widehat D_{\widehat{h}=1},
\end{align}
which comparing with Theorem \ref{theoTTcov} yields the following corollary:
\begin{corollary}\label{corolKds}
 The Kerr-de Sitter-like class with conformally flat $\mathscr{I}$ is determined by the TT-tensor $D_{KdS}= \widehat D_f + \widehat D_{\widehat{h}}$ with 
 \begin{equation}
  \widehat f = -\frac{1}{3} \frac{\X^{3/2}}{(\X^2 + \mu^2)^{3/2}},\quad\quad  \widehat h =  \mu \frac{\X^{5/2}}{(\X^2 + \mu^2)^{5/2}}.
 \end{equation}
\end{corollary}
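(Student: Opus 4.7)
The plan is to verify that the general expansion of $\mathcal{D}(\xi)$ displayed immediately before the corollary coincides, coefficient by coefficient, with $\widehat D_{\widehat f}+\widehat D_{\widehat h}$ from Theorem~\ref{theoTTcov} for the proposed $\widehat f$ and $\widehat h$. By Theorem~\ref{theoTTcov}, any TT-tensor satisfying the two KID equations is a linear combination of $\widehat D_{\widehat f}$, $\widehat D_{\widehat h}$, $D_{\widetilde\xi}$ and $D_\eta$. The Kerr--de Sitter-like data $\mathcal{D}(\xi)$ clearly contains no $D_{\widetilde\xi}$, $D_\eta$ component (these involve the vector $\chi$ orthogonal to $\widetilde\xi$ and $\eta$, while $\mathcal{D}(\xi)$ lies in the plane spanned by $\widetilde\xi,\eta$). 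So the only task is to match the three remaining tensorial pieces $\mathcal{D}(\widetilde\xi)$, $\mathcal{D}(\eta)$ and $\widetilde\xi\otimes\eta+\eta\otimes\widetilde\xi$.

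First I would identify $\widehat h$. The cross term $\widetilde\xi\otimes\eta+\eta\otimes\widetilde\xi$ appears only in $\widehat D_{\widehat h}$, and since $\widehat D_{\widehat h}$ depends linearly on $\widehat h$, comparing with the third summand of the displayed formula for $\mathcal{D}(\xi)$ gives immediately
\[
\widehat h(\X)=\frac{\mu\,\X^{5/2}}{(\X^{2}+\mu^{2})^{5/2}}.
\]

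Next I would determine $\widehat f$. Matching the coefficients of $\mathcal{D}(\widetilde\xi)$ and $\mathcal{D}(\eta)$ yields the pair of ODEs
\begin{align*}
\X^{4}\frac{d}{d\X}\!\left(\frac{\widehat f(\X)}{\X^{3/2}}\right)&=\frac{\X^{5}}{(\X^{2}+\mu^{2})^{5/2}},\\
-\frac{1}{\X^{2}}\frac{d}{d\X}\!\left(\X^{3/2}\widehat f(\X)\right)&=\frac{\mu^{2}}{(\X^{2}+\mu^{2})^{5/2}}.
\end{align*}
This system is overdetermined but a priori consistent, since both $\widehat D_{\widehat f}$ and $\mathcal{D}(\xi)$ are TT and satisfy the KID equations for $\widetilde\xi$ and $\eta$, so the traceless and divergence-free conditions reduce the two equations to equivalent ones. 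The proposed $\widehat f=-\tfrac13\,\X^{3/2}/(\X^{2}+\mu^{2})^{3/2}$ can then be verified by direct substitution: one checks $\widehat f/\X^{3/2}=-\tfrac13(\X^{2}+\mu^{2})^{-3/2}$, so its derivative is $\X/(\X^{2}+\mu^{2})^{5/2}$, giving the first identity; and $\X^{3/2}\widehat f=-\tfrac13\,\X^{3}/(\X^{2}+\mu^{2})^{3/2}$ differentiates to $-\mu^{2}\X^{2}/(\X^{2}+\mu^{2})^{5/2}$, giving the second.

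There is really no serious obstacle: the proof is a term-by-term comparison and a short algebraic verification of two derivatives. The only thing worth highlighting is the consistency of the overdetermined system for $\widehat f$, which is automatic once one notes that both sides of the proposed equality are TT-tensors satisfying the same KID equations, so any mismatch could only lie in the three scalar coefficients, one of which (the $\widehat h$ slot) is fixed uniquely.
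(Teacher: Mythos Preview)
Your proof is correct and follows essentially the same approach as the paper: the corollary is obtained by comparing the displayed decomposition of $\mathcal{D}(\xi)$ with the generators in Theorem~\ref{theoTTcov}, and you carry out exactly this term-by-term identification (adding the explicit verification of the two derivatives, which the paper leaves implicit). One small imprecision: your reason for excluding $D_{\widetilde\xi}$ and $D_\eta$ is phrased as ``$\mathcal{D}(\xi)$ lies in the plane spanned by $\widetilde\xi,\eta$,'' but $\mathcal{D}(\xi)$ also contains a $g^\sharp$ piece with a $\partial_z\otimes\partial_z$ component; the correct observation is that $\mathcal{D}(\xi)$ has no off-diagonal $\partial_z\otimes\partial_t$ or $\partial_z\otimes\partial_\phi$ terms, which is what the $\chi$-tensors supply.
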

It is also of interest to identify the the Kerr-de Sitter family. To that aim
we combine the results in \cite{Kdslike} to those in the present paper to show that
this family corresponds to  $\cc < 0$. The classification of conformal classes
of $\xi$ in \cite{Kdslike}  is done in terms of the invariants
$\widehat c = -c_1$ and $\widehat k = - c_2$ together with
the rank parameter $r$, where $c_1$ and $c_2$ are the coefficients of the characteristic polynomial of the skew-symmetric endomorphism $F$ associated to $\xi$. In terms of these objects, it is shown in \cite{Kdslike} that the Kerr-de Sitter family corresponds to either
$\mathcal{S}_1 = \{\widehat k>0,~\widehat c \in \mathbb{R}$ and $r = 2 \}$, or $\mathcal{S}_2 = \{\widehat k=0,~\widehat c >0 $ and $r = 1\}$, the latter defining the Schwarzschild-de Sitter family. It is immediate to verify that, since (cf. Corollary \ref{corolcharac}) $\widehat k =  -\cc \mu^2 < 0$ and $\widehat c= -\cc - \mu^2$, then $\mathcal{S}_1 =  \{ \cc <0, \mu \neq 0 \}$ and $\mathcal{S}_2 = \{ \cc <0, \mu = 0 \}$ (the condition $\mu \neq 0$ implies $r=2$ and $\mu = 0$ implies $r = 1$). Thus, in terms of the classification developed in this paper, the Kerr-de Sitter
family corresponds  to $\cc < 0$. It is interesting that in the present scheme
we no longer need to specify the rank parameter to identify
the Kerr-de Sitter family (unlike in \cite{Kdslike}) and that the whole family is represented by an open domain. We emphazise that 
the dependence in $\cc$ in the solutions given in Theorem \ref{theoTTcov} and Corollary \ref{corolKds} is implicit through the norm of $\widetilde \xi$. 

% \car{In addition, as any family within the Kerr-de Sitter-like class (with conformally flat Scri) is determined by the conformal class of $\xi$, the structure of classes of skew-symmetric endomorphisms analized in Section \ref{seclorclass} (cf. Remark \ref{remarklimits}), which by Theorem \ref{theoconfclass} amounts to conformal classes of CKVFs, reveals the same structure within the data. From this idea, one can determine multiple limits of these spacetimes. For example, the limit of $\mathcal{R}_2$ when $\cc \rightarrow 0$ ($\mu = const.$) is either $\{ \cc' = 0, \mu' = \mu \neq 0 \}$ or $\{ \cc' = \mu, \mu' = 0 \}$. The former corresponds to the so called $a\rightarrow \infty$-KdS-limit-spacetime and was already analyzed as a limit case of Kerr-de Sitter in (citas). However, the latter corresponds to Kottler spacetime with hyperbolic spacelike slices (which must not be confused with Schwarzschild de Sitter, see e.g. \cite{Kdslike}), which to the author's knowledge, has not been described as a limit of Kerr-de Sitter before.}\carnote{I have to be sure that this last phrase is true, since the mass zero limit of gibbons may give Kottler and not $a\rightarrow \infty$ (although I think it is actually $a \rightarrow \infty$)}
% 
%  
% 
% 

% 
%  \bibliography{bibliografia.bib}{}
% \bibliographystyle{plain}  
%  

 \section*{Acknowledgements}

The authors acknowledge financial support under the projects
PGC2018-096038-B-I00
(Spanish Ministerio de Ciencia, Innovaci\'on y Universidades and FEDER)
and SA083P17 (JCyL). C. Pe\'on-Nieto also acknowledges the Ph.D. grant BES-2016-078094 (Spanish Ministerio de Ciencia, Innovaci\'on y Universidades).

\end{document}